\documentclass[a4paper, reqno, 11pt]{amsart} 
\usepackage{graphicx} 

\parskip=3pt

\textheight 24cm
\textwidth 16.5cm
\oddsidemargin 0pt
\evensidemargin 0pt
\topmargin -20pt

\usepackage{amssymb}

\usepackage{epsfig,bbm}  		
\usepackage{epic,eepic}         

\usepackage{bm}                 
\usepackage{amsfonts}
\usepackage{amsthm}
\usepackage{amsmath}
\usepackage{amscd}
\usepackage[latin2]{inputenc}
\usepackage{t1enc}
\usepackage[mathscr]{eucal}
\usepackage{indentfirst}
\usepackage{graphicx}
\usepackage{graphics}
\numberwithin{equation}{section}
\usepackage[margin=2.9cm]{geometry}
\usepackage{hyperref}
\usepackage{dsfont}
\usepackage{csquotes}
\usepackage{stmaryrd}
\usepackage{blindtext}
\usepackage{tikz-cd}
\usepackage{mathrsfs}
\usepackage{cite}
\usepackage{mathtools}
\usepackage{extarrows}
\usepackage{epstopdf}
\usepackage[T1]{fontenc}
\usepackage{braket}
\usepackage{yfonts}

\usepackage{tikz}
\usetikzlibrary{decorations.markings}
\usetikzlibrary{arrows.meta}
\usepackage{color}

\parskip=3pt

\textheight 24cm
\textwidth 16.5cm
\oddsidemargin 0pt
\evensidemargin 0pt
\topmargin -20pt

\setcounter{tocdepth}{3}

\let\oldtocsection=\tocsection
 
\let\oldtocsubsection=\tocsubsection
 
\let\oldtocsubsubsection=\tocsubsubsection
 
\renewcommand{\tocsection}[2]{\hspace{0em}\oldtocsection{#1}{#2}}
\renewcommand{\tocsubsection}[2]{\hspace{1em}\oldtocsubsection{#1}{#2}}
\renewcommand{\tocsubsubsection}[2]{\hspace{2em}\oldtocsubsubsection{#1}{#2}}

\theoremstyle{definition}
\newtheorem{definition}[equation]{Definition}
\newtheorem{example}[equation]{Example}
\newtheorem{proposition}[equation]{Proposition}
\newtheorem{lemma}[equation]{Lemma}
\newtheorem{theorem}[equation]{Theorem}
\newtheorem{remark}[equation]{Remark}
\newtheorem{corollary}[equation]{Corollary}

\newcommand{\sfgamma}{\mathsf{\Gamma}}
\newcommand{\cbrak}[1]{\llbracket #1 \rrbracket}
\newcommand{\ip}[1]{\langle #1 \rangle}
\newcommand{\cf}[1]{\mathcal{#1}}
\newcommand{\red}[1]{\underline{#1}}
\newcommand{\frk}[1]{\mathfrak{#1}}

\newcommand{\de}{\mathrm{d}}

\newcommand{\Tr}[1]{\:{\rm Tr}\,#1}

\newcommand{\e}{\mathrm{e}}

\newcommand{\IZ}{\mathbb{Z}}

\newcommand{\IR}{\mathbb{R}}

\newcommand{\IT}{\mathbb{T}}

\newcommand{\frg}{\mathfrak{g}}

\newcommand{\frh}{\mathfrak{h}}

\newcommand{\frt}{\mathfrak{t}}

\renewcommand{\Im}{\ensuremath{\mathrm{im}}}

\newcommand{\gric}{\mathrm{GRic}}
\newcommand{\ric}{\mathrm{Ric}}
\newcommand{\ad}{\mathsf{ad}}

\def\e{{\,\rm e}\,}

\newcommand{\midodot}{\text{\Large$\odot$}}

\newcommand{\cN}{{\mathcal N}}
\newcommand{\cM}{{\mathcal M}}
\newcommand{\cS}{{\mathcal S}}
\newcommand{\cB}{{\mathcal B}}

\newcommand{\cH}{{\mathcal H}}

\newcommand{\cQ}{{\mathcal Q}}

\newcommand{\cF}{{\mathcal F}}
\newcommand{\cD}{{\mathcal D}}
\newcommand{\cT}{{\mathcal T}}

\newcommand{\cV}{{\mathcal V}}

\newcommand{\cG}{{\mathcal G}}
\newcommand{\cU}{{\mathcal U}}

\newcommand{\sfK}{{\mathsf{K}}}

\newcommand{\sfS}{{\mathsf{S}}}

\newcommand{\sfT}{{\mathsf{T}}}

\newcommand{\sfH}{{\mathsf{H}}}
\newcommand{\sfG}{{\mathsf{G}}}

\newcommand{\sfOmega}{{\mathsf{\Omega}}}

\newcommand{\unit}{\mathds{1}}   			

\usepackage{enumitem}

\newcommand{\gr}{\text{gr}}
\newcommand{\ann}{\mathrm{Ann}}
\newcommand{\rel}{\dashrightarrow}
\newcommand{\mor}{\rightarrowtail}

\newcommand{\sfgammabas}{\sfgamma_{\text{bas}}}

\newcommand{\rk}{\mathrm{rk}}
\newcommand{\Coker}{\mathrm{CoKer}}
\newcommand{\Ker}{\mathrm{Ker}}

\newcommand{\Fi}{\varphi}
\renewcommand{\div}{{\rm div}}

\newtheoremstyle{case}{}{}{}{}{}{:}{ }{}
\theoremstyle{remark}

\makeatletter
\newcommand{\oset}[3][0ex]{%
  \mathrel{\mathop{#3}\limits^{
    \vbox to#1{\kern-3\ex@
    \hbox{$\scriptstyle#2$}\vss}}}}
\makeatother

\makeatletter
\newcommand{\osett}[3][0ex]{%
  \mathrel{\mathop{#3}\limits^{
    \vbox to#1{\kern-2\ex@
    \hbox{$\scriptstyle#2$}\vss}}}}
\makeatother

\makeatletter
\newcommand{\osettt}[3][0ex]{%
  \mathrel{\mathop{#3}\limits^{
    \vbox to#1{\kern-1\ex@
    \hbox{$\scriptstyle#2$}\vss}}}}
\makeatother


\makeatletter

\DeclareRobustCommand\mathflip[1]{%
    \mathpalette\@mathflip{#1}%
}

\newcommand\@mathflip[2]{%
    \mskip4mu
    \pdfsave
    \pdfsetmatrix{-1 0 .5 1}%
    \hb@xt@\z@{\hss$\m@th#1 #2$\hss}%
    \pdfrestore
    \mskip7mu
}

\DeclareRobustCommand\mathflipnu[1]{%
    \mathpalette\@mathflipnu{#1}%
}

\newcommand\@mathflipnu[2]{%
    \mskip2mu
    \pdfsave
    \pdfsetmatrix{-1 0 .5 1}%
    \hb@xt@\z@{\hss$\m@th#1 #2$\hss}%
    \pdfrestore
    \mskip8mu
}

\makeatother


\mathtoolsset{showonlyrefs}

\title[Courant Algebroid Relations, T-Dualities and Generalised Ricci Flow]{Courant Algebroid Relations, \\[4pt] T-Dualities and Generalised Ricci Flow}

\author[T.~C.~De Fraja]{Thomas C.~De Fraja}
\address[Thomas C.~De Fraja]
{Department of Mathematics and Maxwell Institute for Mathematical
  Sciences\\ Heriot-Watt
  University\\ Edinburgh EH14 4AS\\ United Kingdom}
\email{tcd2000@hw.ac.uk}

\author[V.~E.~ Marotta]{Vincenzo Emilio Marotta} 
\address[Vincenzo Emilio Marotta]
{{Dipartimento di Matematica e Geoscienze, Universit\`{a} di Trieste,   Via A. Valerio 12/1, 34127 Trieste, Italy}}
\email{vincenzoemilio.marotta@units.it}

\author[R.~J. Szabo]{Richard J.~Szabo}
  \address[Richard J.~Szabo]
  {Department of Mathematics and Maxwell Institute for Mathematical Sciences\\
  Heriot-Watt University\\
  Edinburgh EH14 4AS \\
  United Kingdom}
  \email{R.J.Szabo@hw.ac.uk}

\date{}

\begin{document}

\begin{abstract}
The  notion of Courant algebroid relation is used to introduce a definition of relation between divergence operators on Courant algebroids. By introducing invariant divergence operators, a notion of generalised T-duality between divergences is presented through an existence and uniqueness result for related divergence operators on T-dual pairs of exact Courant algebroids, which naturally incorporates the dilaton shift. When combined with the notion of generalised isometry, this establishes circumstances under which generalised Ricci tensors are related, proving that T-duality is compatible with generalised string background equations. This enables an analysis of the compatibility between T-duality and generalised Ricci flow, showing that the T-dual of a solution of generalised Ricci flow is also a solution of generalised Ricci flow. Our constructions are illustrated through many explicit examples.  
\end{abstract}

\maketitle

{\baselineskip=12pt
\tableofcontents
}

\section{Introduction}
\label{sec:intro}

 In this paper we will demonstrate the versatility of the notion of Courant algebroid relation by providing a framework for relating geometric objects such as divergence operators and generalised Ricci tensors on Courant algebroids. This is motivated by the compatibility between generalised Ricci flow, string background equations and the different flavours of T-duality. We will explore and deepen this link with a relational approach, which adds another building block to the construction of a ``category'' of Courant algebroids. 
 
 Let us start by giving a brief overview of the interplay between the three main constituents of the present paper: Courant algebroid relations, T-dualities and generalised Ricci flow.

\medskip

\subsection{Courant Algebroids and their Relations}~\\[5pt]
 Courant algebroids~\cite{Weinstein1997} may be regarded as smooth counterparts of quadratic Lie algebras. The main examples are provided by generalised tangent bundles $\IT M = TM\oplus T^*M$ of smooth manifolds $M$, known as exact Courant algebroids, which are the central objects of study in \emph{generalised geometry}. Exact Courant algebroids were first discussed in \cite{Courant1990} as a way to give a geometric description of the Dirac bracket appearing in constrained dynamical systems. The counterparts in generalised geometry of geometric structures such as metrics, complex structures and K{\"a}hler structures were discussed in detail in~\cite{Hitchin:2003cxu,gualtieri:tesi}.

A Courant algebroid is the vector bundle counterpart of a Lie algebra endowed with an adjoint-invariant pairing. More precisely, a Courant algebroid is a vector bundle $E$ over a smooth manifold $M$ endowed with a symmetric bilinear non-degenerate pairing $\ip{\, \cdot \, , \, \cdot \,}_E$, a bracket $\cbrak{\, \cdot \, ,\, \cdot \,}_E$ on the space of sections of $E$ satisfying the Jacobi identity, and a vector bundle map $\rho_E \colon E \to TM$ called the anchor. These data are required to satisfy a compatibility condition that can be regarded as invariance of the pairing with respect to the bracket. As a byproduct of this, the bracket does not need to be skew-symmetric and its symmetric part is governed by the pairing. Courant algebroids with skew-symmetric brackets have trivial anchor map $\rho_E$. 

Courant algebroid relations come into play when extending the category of Courant algebroids along the same lines as Weinstein's symplectic ``category'', in light of the interpretation of Courant algebroids as differential-graded symplectic manifolds. A Courant algebroid relation \cite{LiBland2009, Li-Bland:2011iaz, Vysoky2020hitchiker} is defined as an involutive (maximally) isotropic subbundle $R$ of the direct product of two Courant algebroids $E_1$ and $\overline{E}_2$, where $\overline{E}_2$ denotes the Courant algebroid with reflected pairing with respect to the Courant algebroid $E_2$. Because Courant algebroids are in one-to-one correspondence with degree two graded manifolds endowed with a degree $-2$ symplectic structure and a degree three integrable function~\cite{Roytenberg2002}, Courant algebroid relations should be linked  with genuine differential-graded Lagrangian correspondences in graded geometry.

One of the main goals of generalised geometry is to define  notions of differential geometry in this extended framework. For instance, one might wish to find the counterpart of Riemannian geometry in generalised geometry. The counterpart of Riemannian metrics is given by generalised metrics on exact Courant algebroids, i.e. maximally positive-definite subbundles (with respect to the pairing). Other structures associated with a metric can then be defined. For instance, connections on Courant algebroids have been extensively studied, see e.g. \cite{Gualtieri:2007bq,Garcia-Fernandez:2020ope,Streets:2024rfo}, and have even been explored in the context of Courant algebroid relations~\cite{Vysoky2020hitchiker}. Their caveat is the non-uniqueness of a generalised Levi-Civita connection which, in order to define further geometric structures such as Ricci tensors and scalar curvature, necessitates the introduction of divergence operators~\cite{Garcia-Fernandez:2016ofz}. 

In the present paper, we develop a relational approach to divergences as the basis for defining further generalised structures. This has the added benefit of bypassing the need to define Courant algebroid connections for transverse generalised metrics, a type of degenerate generalised metric used extensively in the construction of relations in \cite{DeFraja:2023fhe}. 

\medskip

\subsection{Generalised T-Duality}~\\[5pt]
 Courant algebroids $E\to M$ with generalised metrics and divergences further provide a useful geometric tool in string theory: their defining data is equivalent to that of two-dimensional non-linear sigma-models with target space $M$, as discussed in \cite{Severa-letters, Severa2015}, where a divergence operator naturally models the dilaton field~\cite{Garcia-Fernandez:2016ofz}. In this way fully geometric descriptions of T-dualities, which relate apparently distinct backgrounds that are equivalent from the sigma-model perspective, can be achieved with far reaching consequences; particularly notable special cases are given by abelian and Poisson-Lie T-dualities. The latest developments in this area, which include string backgrounds that are susceptible to more general notions of T-duality (see e.g.~\cite{Hull2009,Marotta2021born}),  necessitate the introduction of Courant algebroid relations~\cite{DeFraja:2023fhe}. 

In this rich framework, T-duality, which admits a precise description connecting Courant algebroid structures~\cite{cavalcanti2011generalized, Severa2015}, can find its natural interpretation, as expounded in \cite{Vysoky2020hitchiker, DeFraja:2023fhe}. In particular, a Courant algebroid relation preserves the Courant algebroid structures without requiring that the base manifolds be diffeomorphic, as is often the case in T-duality, in contrast to Courant algebroid isomorphisms. In this context, the role of generalised isometries, i.e. Courant algebroid relations linking a generalised metric to another one, is prominent in the characterisation of the geometric properties of T-duality~\cite{cavalcanti2011generalized, DeFraja:2023fhe, DeFraja:2025}.

Let us discuss Poisson-Lie T-duality in more detail, which originates as a non-isometric generalisation of non-abelian T-duality in the non-linear sigma-model~\cite{Klimcik1995}.
Let $E \to M$ be a Courant algebroid endowed with a generalised metric $V^+\subset E$, and set $V^-=(V^+)^\perp$. Let $\div$ be a divergence operator on $E$, i.e. a map from $\sfgamma(E)$ to $C^\infty(M)$ satisfying an anchored Leibniz-like rule. Let $\Fi \colon M' \to M$ be a smooth map. 

Then the Courant algebroid pullback $E' \coloneqq \Fi^* E$ of the Courant algebroid $E$ is the Courant algebroid $E'\to M'$ with the compatible structure maps
\begin{align}
\ip{\Fi^*e_1, \Fi^*e_2}_{E'} = \Fi^*\ip{e_1, &\, e_2}_E \quad , \quad
 \cbrak{\Fi^*e_1, \Fi^*e_2}_{E'} = \Fi^*\cbrak{e_1, e_2}_E \ , \\[4pt]
    &\Fi_* \big( \rho_{E'} (\Fi^*e)\big) =\rho_E(e) \ ,\label{it:rhoEprime}
\end{align}    
for all $e, e_1, e_2 \in \sfgamma(E)$. As discussed in \cite{LiBland2009}, an anchor map $\rho_{E'}:E'\to TM'$ satisfying Property~\eqref{it:rhoEprime} exists if and only if
\begin{align}
[\rho_{E'}(\Fi^*e_1) , \rho_{E'}(\Fi^*e_2)]_{TM'} = \rho_{E'}( \Fi^*\cbrak{e_1, e_2}_E ) \ ,
\end{align}
for all $e_1, e_2 \in \sfgamma(E)$, and $\ker(\rho_{E'})_{m'}$ is a coisotropic subspace of $E_{\Fi(m')}$, for any $m' \in M'$. 

In order to define equivariant Poisson-Lie T-duality, also known as Poisson-Lie T-duality of dressing cosets in the gauged non-linear sigma-model~\cite{Klimcik:1996np}, assume that the Courant algebroid $E \to M$ admits an action $\psi \colon \frg \to \sfgamma(E)$ induced by the action of a Lie group $\sfG$ with Lie algebra $\frg$. Further assume that this action pulls back by $\Fi$ to a $\frg$-action $\psi' \colon \frg \to \sfgamma(E')$ turning $E'$ into a $\sfG$-equivariant Courant algebroid. 
Then $\sfG$ acts freely and properly on $M'$ via the map $\rho_{E'}\circ\psi':\frg\to\sfgamma(TM')$.

We denote by $\red E'$ the reduced Courant algebroid over the quotient manifold $\cQ' = M'/ \sfG$. Suppose that the pullback  generalised metric $V' \coloneqq\Fi^*V^+$ and divergence $\div' \coloneqq \Fi^* \div$ are reducible by the $\frg$-action $\psi' \colon \frg \to \sfgamma(E')$, with reduced generalised metric $\red V'$ and divergence $\red \div'$ on $\red E'$. Then Poisson-Lie T-duality is the statement that the physically relevant properties of $(\red V', \red \div')$, such as the non-linear sigma-model and the generalised string background equations, can be completely formulated in terms of $(V^+, \div, \psi)$, without any reference to $\cQ'$ and $\red E'$~\cite{Klimcik1995,Klimcik:1996np,Severa:2016prq,Severa:2018pag}.  

A formulation of Poisson-Lie T-duality in terms of Courant algebroid relations, in a slightly different setting to the one presented above, can be found in the work of Vysok\'{y}~\cite{Vysoky2020hitchiker}. We will come back to Vysok\'{y}'s formulation in the main body of the paper.

\medskip

\subsection{Generalised Ricci Flow and String Backgrounds}~\\[5pt]
Geometric flows offer a natural way to explore topological and geometrical properties of manifolds by evolving a given initial manifold under the flow to canonical forms in the space of geometric data. The classic example is Hamilton's Ricci flow~\cite{Hamilton1982}, which is the analogue of the heat equation for a Riemannian metric and so is expected to improve the geometry of the underlying manifold; in this case canonical metrics include Einstein metrics, and more generally Ricci solitons. Geometric flows are also useful tools for constructing solutions of supergravity equations.

In string non-linear sigma-models, T-dualities are expected to be compatible in a non-trivial way with the one-loop renormalisation group flow, which corresponds geometrically to the generalised Ricci flow~\cite{Baraglia:2013wua,Streets:2017506,Severa:2016lwc,Garcia-Fernandez:2016ofz,Severa:2018pag,Pulmann:2020omk} of the target space metric, Kalb-Ramond field and dilaton. The flow equations are precisely the (generalised) $\cN=0$ supergravity equations of motion, also known as (generalised) string background equations, whose solutions are the physically viable (generalised) string backgrounds~\cite{Callan:1985ia,Oliynyk:2005ak}.
 Equivariant Poisson-Lie T-duality, as discussed in \cite{Severa2015,Severa:2016lwc, Severa:2018pag}, represents a particularly insightful instance in which T-dual backgrounds yield T-dual solutions of the generalised Ricci flow and string background equations.\footnote{Generalised T-dualities, such as Poisson-Lie T-duality, also preserve two-loop renormalisation group flows, though in a much more subtle and non-trivial way than at one-loop~\cite{Hassler:2020wnp}; in particular, the string background equations are not sufficient for study of the flows away from fixed points.}
 
 In this paper we are particularly interested in the interplay between Courant algebroid relations and generalised Ricci flow.  
We aim to study this in the framework of generalised T-duality developed in~\cite{DeFraja:2023fhe}, with an eye to deepening the understanding of the connection with T-duality developed in \cite{Severa:2018pag}.
 
Let us briefly discuss the basics of generalised Ricci flow following \cite{Severa:2018pag, Garcia-Fernandez:2020ope, Streets:2024rfo}.
The generalised Ricci tensor $\gric_{V^+, \div}$ on a Courant algebroid $E\to M$ for the pair $(V^+, \div)$ is given in \cite{Severa:2018pag} as the $C^\infty(M)$-bilinear map
\begin{align} 
     \gric_{V^+, \div} \colon \sfgamma(V^+) \times \sfgamma(V^-) \longrightarrow C^\infty(M)
\end{align}
defined by
\begin{align}\label{eq:gric}
     \gric_{V^+, \div}(v^+, w^-) \coloneqq \div \,\cbrak{v^+, w^-}_E^+ - \pounds_{\rho_E(w^-)}\, \div \, v^+ - \Tr_{V^+}(\cbrak{\cbrak{\, \cdot \, , w^-}_E^-\,,\, v^+}_E^+) \ ,
\end{align}
for all $v^+ \in \sfgamma(V^+)$ and $w^- \in \sfgamma(V^-),$ where $\pounds$ denotes the Lie derivative.

As already mentioned, one can also define generalised Ricci tensors via Courant algebroid connections. In fact, there are many definitions for the generalised Ricci tensor available in the literature, see e.g.~\cite{Gualtieri:2007bq,Garcia-Fernandez:2016ofz,Jurco2016courant,Streets:2024rfo}, however it has been recently shown in \cite{Cavalcanti:2024uky} that all of them are equivalent. In this paper we will focus on the expression given in Equation~\eqref{eq:gric}.
In \cite{Severa:2018pag} it was shown that the generalised Ricci tensor $\gric_{V^+, \div}$ on an exact Courant algebroid $E$ with \v{S}evera class $[H]$ coincides with the Ricci tensor \smash{${\rm Ric}_{g,H}$} of the Koszul connection \smash{$\nabla^{g,H}$} on $TM$ preserving the Riemannian metric $g$ defining $V^+,$ with skew-symmetric torsion $H.$

Following \cite{Garcia-Fernandez:2020ope, Streets:2024rfo} the full Ricci tensor of the pair $(V^+, \div)$ is given by
\begin{align}
     \overline{\gric}_{V^+, \div} \coloneqq \gric_{V^+, \div} - \gric_{V^-, \div} \ .
\end{align}
It can be used to define the generalised scalar curvature as
\begin{align}
     \mathrm{GR}_{V^+, \div} \coloneqq \Tr \big( \tau \, \overline{\gric}_{V^+, \div} \big) \ ,
\end{align}
where $\tau \in {\sf Aut}(E)$ is the involutive automorphism of $E$ that equivalently defines the generalised metric $V^+$; in the following we will use $\tau$ and $V^+$ interchangeably.
Both of these are central objects in the definition of the generalised Ricci flow equations.

Let us now assume that $M$ is a compact manifold. Denote by $\cM_E$ the space of generalised metrics on $E$, and by $\cH^\times$ the space of nowhere-vanishing half-densities on $M$, i.e. the invertible sections of the line bundle of half-densities on $M$. Following \cite{Streets:2024rfo}, the \emph{Einstein-Hilbert functional} is the function $\cS \colon \cM_E \times \cH^\times \to \IR$ given by
\begin{align} \label{eq:EHaction}
     \cS(\tau, \varkappa) = \int_M\, \mathrm{GR}_{\tau, \div_{\varkappa^2}}\,\varkappa^2 \ ,
\end{align}
where $\div_{\varkappa^2}$ is the divergence operator associated with the density $\varkappa^2$.
The \emph{generalised Ricci flow} is the gradient flow of $\tfrac{1}{2}\,{\rm grad}(\cS)$, where the gradient on $\cM_E \times \cH^\times$ is defined by the metric induced by integration over the measure $\varkappa^2$ and the generalised metric  (see \cite{Streets:2024rfo} for more details).

Explicitly, the generalised Ricci flow is given by 
\begin{align}
     \frac{\partial}{\partial  t}\tau =-2 \, \gric_{\tau, \div_{\varkappa^2}} \quad , \quad \frac{\partial}{\partial  t} \varkappa = -{\rm GR}_{\tau, \div_{\varkappa^2}} \, \varkappa \ .
\end{align}
In \cite{Streets:2024rfo} a formulation of the generalised Ricci flow equations in terms of general divergence operators $\div$ is provided by replacing these equations with
\begin{align}
   \frac{\partial}{\partial  t}\tau =-2 \, \gric_{\tau, \div} \quad , \quad  \frac{\partial}{\partial  t} \div = - \cD \, {\rm GR}_{\tau, \div} \ ,
\end{align}
where $\cD \colon C^\infty(M) \to \sfgamma(E)$ is the operator defined by the de Rham differential on $M$, the anchor $\rho_E$ and the pairing of the Courant algebroid $E$.
In this paper, we will be  interested in this latter formulation of the generalised Ricci flow.

The short-time existence and uniqueness of solutions of generalised Ricci flow for compact manifolds has been proven recently in \cite{Streets:2024rfo}.

The deep interplay between generalised Ricci flow and T-duality, in its different guises, has been discussed in \cite{Streets:2017506, Severa:2016lwc, Severa:2018pag}. These works are the main inspiration for the approach to generalised Ricci flow presented in this paper.
For instance, Poisson-Lie T-duality for the generalised Ricci tensor $\gric_{V^+, \div}$ is the statement that the generalised Ricci tensor of the pullback structures $(V', \div')$ to the Courant algebroid pullback $E'$ is the pullback of $\gric_{V^+, \div},$ see~\cite[Theorem 5.6]{Severa:2018pag}. A similar statement holds for the generalised scalar curvature. It follows that if generalised Ricci flow solutions exist at time $t$ for both $\gric_{V^+, \div}$ and \smash{$\gric_{V', \div'}$}, then the physically relevant properties of the solution $(V', \div')$ at time $t$ can be expressed in terms of the solution $(V^+, \div)$ at the same time $t$.

These features allow for a more in-depth discussion of the meaning of compatibility between Poisson-Lie T-duality and generalised Ricci flow.
In \cite{Severa:2018pag} a generalised metric $V^+$ on a $\sfG$-equivariant Courant algebroid $E$ is called \emph{admissible} if $V^+$ is $\frg$-invariant and $V^+ \subset \Im(\psi)^\perp$. The space of admissible generalised metrics on $E$ is denoted by $\cN_E \subset \cM_E$. Then as in \cite[Proposition~5.18]{Severa:2018pag}, the generalised Ricci tensor $\gric_{\, \cdot \, , \div}$, viewed as a vector field on $\cM_E$, is tangent to $\cN_E$.
Denote by $\varsigma \colon \cN_E \to \cM_{\red E'}$ the map that assigns to $V^+ \subset E$ the generalised metric $\red V' \subset \red E'$ obtained after pulling $V^+$ back by $\Fi$ and reducing it by the $\frg$-action. Then the map $\varsigma$ pushes the vector field $\gric_{\, \cdot \, , \div}$ forward to the vector field $\gric_{\, \cdot \, ,\red \div'}$ on $\cM_{\red E'}$, see \cite[Theorem~5.19]{Severa:2018pag}. 

This property is key to showing the following result.  Let $V^+ \subset E$ be a generalised metric admitting pullback $V' \subset E'$ which is admissible, together with a divergence operator $\div$ whose pullback $\div'$ is reducible. If $(V^+,\div)$ satisfies the \emph{generalised string background equations}
\begin{align}
    \gric_{V^+, \div} = 0 \quad , \quad {\rm GR}_{V^+, \div} = 0 \ ,
\end{align}
or in other words it is a fixed point of the generalised Ricci flow, then the pair $(\red V', \red \div')$ satisfies the generalised string background equations as well.  

This motivates studying the compatibility of geometric T-duality, as formulated in \cite{DeFraja:2023fhe}, with the generalised Ricci flow, whereby Courant algebroid relations play a key role and can lead to surprising new insights into generalised Ricci flow. 
In particular, the translation of the above formulation of Poisson-Lie T-duality to T-duality for torus bundles should imply that the physically relevant properties of the T-dual backgrounds are determined by the geometric structures on the transitive Courant algebroids defined on the base manifold of the correspondence space. We shall elucidate this expectation in detail within the present paper.

\medskip

\subsection{Outline and Summary of Results}~\\[5pt]
 A brief outline of this paper is as follows: in Section \ref{sect:CA} the basics of Courant algebroids and their relations is discussed, along with their application to give a very general definition of T-duality, which encompasses the known notions including non-isometric and non-abelian T-dualities. In Section \ref{sect:CAreldiv} the notion of Courant algebroid relation between divergence operators is introduced, and the properties of related divergences are discussed together with an application to T-duality. Section \ref{sect:genricci} deals with the interplay between Courant algebroid relations and generalised Ricci flow. Generalised isometries are shown to play a crucial role in proving that generalised Ricci tensors are related. Moreover, it is proven that the T-dual structure of a solution of generalised Ricci flow is also a solution of generalised Ricci flow. Section \ref{sect:examples} is dedicated to the construction of explicit examples of T-dual pairs of divergences and their compatibility with generalised Ricci flow.

In more detail, \textbf{Section \ref{sect:CA}} provides the definition of Courant algebroid, together with the basic properties of exact Courant algebroids and their classification. We describe the reduction of a Courant algebroid $E$ over $M$ by an isotropic subbundle $K$ to a Courant algebroid $\red E$ over the leaf space $\cQ$ of the foliation integrating the distribution $\rho_E(K)$, emphasising the notion of basic section of a Courant algebroid. Generalised metrics on Courant algebroids are further introduced. Finally, Courant algebroid relations are defined, as are their role in defining isometries for generalised metrics. 

These notions constitute the building blocks for geometric T-duality as discussed in \cite{DeFraja:2023fhe} and summarised in this section. Given a pair of isotropic subbundles $K_1$ and $K_2$ of a Courant algebroid $E\to M$ with equal rank, topological T-duality is defined as a particular Courant algebroid relation $R:\red E{}_1\rel\red E{}_2$ between the reduced Courant algebroids $\red E{}_1\to\cQ_1$ and $\red E{}_2\to\cQ_2$ over the corresponding leaf spaces. Starting from a generalised metric $\tau_1$ on $\red E{}_1$ which is invariant with respect to the bundle of isometries $D_1\subset T\cQ_1$, geometric T-duality is defined by constructing the unique generalised metric $\tau_2$ on $\red E{}_2$ that makes $R$ a generalised isometry.
A new local formulation of T-duality is also presented, extending the previous notion, that aids in explicit calculations and examples.

In \textbf{Section \ref{sect:CAreldiv}} we give a definition of Courant algebroid relation between divergence operators.

\begin{definition}[\textbf{Definition \ref{def:relateddiv}}]
    Let ${\rm div}_1$ and ${\rm div}_2$ be divergence operators on Courant algebroids $E_1\to M_1$ and $E_2\to M_2,$ respectively. Let $R \colon E_1 \rel E_2$ be a Courant algebroid relation supported on $C\subseteq M_1\times M_2$. Then $\div_1$ and $\div_2$ are \emph{$R$-related}, denoted ${\rm div}_1 \sim_R {\rm div}_2$, if ${\rm div}_1\, e_1 \sim_C {\rm div}_2 \,e_2$ for every $(e_1,e_2) \in \sfgamma(E_1 \times \overline E_2 ;R)$.
    \end{definition}  

Many examples are provided, among others reductions of divergences and Poisson-Lie T-duality. We also consider the interplay between the notion of related Courant algebroid connections and related divergences, establishing conditions under which a relation between connections induces a relation between the associated divergences. 

The key result of this section is an existence and uniqueness result for T-dual pairs of divergence operators. In order to prove the existence of the T-dual divergence, the notions of invariant sections of a Courant algebroid (with respect to a given distribution) and divergence operators are given. 

\begin{theorem}[\textbf{Theorem \ref{thm:tdualdivergence}}] 
Suppose that the distribution $D_1 \subset T\cQ_1$ is spanned pointwise by a (local) Lie subalgebra $\frk{k}_{\tau_1} \subset \sfgamma(D_1).$ Assume that $K_1 \cap K_2 = \set{0}$ and that the space of $D_1$-invariant sections $\sfgamma_{D_1}(\red E{}_1)$ spans $\red E{}_1$ pointwise.  If $\red E{}_1$ admits a divergence $\div{}_1$ which is (locally) compatible with $D_1,$ then there exists a unique divergence operator $\div{}_2$ on $\red E{}_2$ such that $\div{}_1 \sim_R \div{}_2$.
\end{theorem}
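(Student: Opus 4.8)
The plan is to construct $\div_2$ explicitly on $D_1$-invariant sections, use the spanning hypothesis to extend it by the Leibniz rule, and then verify $R$-relatedness section by section.

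First I would unravel what it means for $\div_1$ to be locally compatible with $D_1$: it should descend, via the leaf-space structure of the foliation integrating $\rho_{\red E_1}(D_1)$ together with the Lie subalgebra $\frk{k}_{\tau_1}$, to data that can be transported across the relation $R$. Concretely, since $R \colon \red E_1 \rel \red E_2$ is the T-duality relation arising (by the construction summarised in Section~\ref{sect:CA}) from the isotropic subbundles $K_1, K_2 \subset E$ with $K_1 \cap K_2 = \{0\}$, a pair $(\red e_1, \red e_2) \in \sfgamma(\red E_1 \times \overline{\red E_2}; R)$ corresponds to a single section of $E$ that is basic for both reductions. I would use the hypothesis that $\sfgamma_{D_1}(\red E_1)$ spans $\red E_1$ pointwise to reduce the definition of $\div_2$ to its values on such ``doubly basic'' sections: for $(\red e_1, \red e_2) \in \sfgamma(\red E_1 \times \overline{\red E_2}; R)$ with $\red e_1$ being $D_1$-invariant, set $\div_2 \,\red e_2 \coloneqq (\div_1 \,\red e_1)$ transported along the diffeomorphism between the local leaf spaces that $R$ induces (this is where the dilaton shift referred to in the abstract enters — the half-density weights on $\cQ_1$ and $\cQ_2$ differ by the relative rank/volume factor of $K_1$ versus $K_2$, so the transported function must be corrected accordingly).

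Next I would check this is well-defined: two $D_1$-invariant sections $\red e_1, \red e_1'$ with $(\red e_1, \red e_2), (\red e_1', \red e_2) \in \sfgamma(\ldots; R)$ differ by a section of $R \cap (\red E_1 \times 0)$, on which $\div_1$ must vanish by the local compatibility with $D_1$ — so the value of $\div_2 \,\red e_2$ is unambiguous. Then, for a general section $\red e_2 = \sum_i f_i\, \red e_2^{(i)}$ with each $\red e_2^{(i)}$ doubly basic, I would define $\div_2$ by forcing the anchored Leibniz rule $\div_2(f\,e) = f\,\div_2 e + \rho_{\red E_2}(e) f$, and check consistency of this definition across different such decompositions, again using that relations intertwine anchors and that $\div_1$ already satisfies its own Leibniz rule. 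Uniqueness is then immediate: any $R$-related divergence must agree with $\div_1$ on doubly basic sections (by definition of $\sim_R$) and agree with our formula elsewhere (by the Leibniz rule), and the spanning hypothesis forces these to determine $\div_2$ everywhere.

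The main obstacle I expect is verifying that the candidate $\div_2$ actually satisfies $\div_1 \sim_R \div_2$ on \emph{all} of $\sfgamma(\red E_1 \times \overline{\red E_2}; R)$, not merely on the doubly basic generators — i.e. that the Leibniz-extension is compatible with the relational constraint $\div_1 e_1 \sim_C \div_2 e_2$ for an arbitrary related pair. This requires showing that if $(e_1, e_2) \in \sfgamma(R)$ then $e_1$ and $e_2$ admit compatible decompositions into doubly basic pieces with $C$-related coefficient functions, which in turn rests on the involutivity of $R$ and the fact that $\frk{k}_{\tau_1}$ spans $D_1$ pointwise (so that $D_1$-invariant sections are closed under a suitable bracket and genuinely generate). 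I would handle this by working locally over a trivialising chart for the double fibration, where the basic sections can be written down explicitly and the bracket relations checked directly against the definition of the T-duality relation $R$.
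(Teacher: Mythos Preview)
Your overall strategy matches the paper's: lift to $K_1^\perp\cap K_2^\perp$, transport $\div_1$ along the relation on ``doubly basic'' sections, and extend by Leibniz. However, two concrete corrections are needed.

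First, there is \emph{no} dilaton-shift correction in the construction of $\div_2$. The definition is simply $\div_2\,\red e_2 \coloneqq \div_1\,\red e_1$ for $R$-related pairs coming from a common lift in $K_1^\perp\cap K_2^\perp$; the half-density shift you mention appears only \emph{a posteriori}, when one compares $\div_{\mu_{g_1}}$ with $\div_{\mu_{g_2}}$ (cf.\ Example~\ref{eg:circledilatonshift}), not in building the T-dual divergence itself. Relatedly, your well-definedness concern about $R\cap(\red E_1\times 0)$ is moot: under $K_1\cap K_2=\{0\}$ the relation $R$ is a Dirac structure with trivial kernel and cokernel, so there is nothing to check there. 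The genuine well-definedness issue is rather that $\div_1\,\red e_1\in C^\infty(\cQ_1)$ must be constant along the leaves of $\red\cF_1$ in order to descend to $C^\infty(\cQ_2)$; this is exactly what $D_1$-compatibility of $\div_1$ together with $D_1$-invariance of $\red e_1$ gives (Lemma~\ref{lem:compatiblediverg}).

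Second, and more substantively, you assume the existence of enough ``doubly basic'' sections without proving it. The paper isolates this as the key technical step, Lemma~\ref{lem:invsectionsarebasic}: if $\red e^\circ\in\sfgamma_{D_1}(\red E_1)$, then its lift $s(\red e^\circ)\in\sfgamma_{\rm bas}(K_1^\perp)$ via the unique duality splitting $s$ of \cite[Proposition~5.31]{DeFraja:2023fhe} actually lies in $\sfgamma_{\rm bas}(K_2^\perp)$. The proof uses the compatibility of adapted splittings $\sigma_1,\sigma_2$ (Definition~\ref{def:compatiblesplit}) to rewrite $\ad^{\sigma_1}_X$ as $\llbracket\sigma_2(X),\,\cdot\,\rrbracket$ on lifts, and then the condition $K_2^\perp\cap K_1\subset K_2\cap K_1=\{0\}$ to conclude. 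This is the step your ``work locally over a trivialising chart'' would have to supply, and it is not just a chart computation---it uses the global compatibility of the two adapted splittings. Once Lemma~\ref{lem:invsectionsarebasic} is in hand, the spanning hypothesis on $\sfgamma_{D_1}(\red E_1)$ does exactly what you say, and uniqueness follows from Lemma~\ref{lem:uniquereldiv}.
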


This result is thoroughly discussed for T-duality with a correspondence space. In particular, we give a generalisation of the Cavalcanti-Gualtieri isomorphism~\cite{cavalcanti2011generalized} for T-dual pairs of principal torus bundles to leaf spaces $\cQ_1$ and $\cQ_2$ which are arbitrary fibre bundles over a common base~$\cB$. Recall that T-dual pairs in the sense of \cite{bouwknegt2004tduality, cavalcanti2011generalized}
are given by principal $\sfT^k$-bundles $(\cQ_1, H_1)$ and $(\cQ_2, H_2)$ endowed with closed three-forms $H_1 \in \sfOmega_{\rm cl}^3(\cQ_1)$ and $H_2 \in \sfOmega^3_{\rm cl}(\cQ_2)$, respectively, satisfying a compatibility condition. It is shown in \cite[Theorem 3.1]{cavalcanti2011generalized} that such pairs induce an isomorphism of transitive Courant algebroids $\mathscr{F} \colon \IT \cQ_1 / \sfT^k \to \IT \cQ_2/ \sfT^k$ which is compatible with the twisted differentials on $\cQ_1$ and $\cQ_2$.
In the general setting of \cite{DeFraja:2023fhe}, we obtain

\begin{proposition}[\textbf{Proposition \ref{prop:transitiveTduality}}]
There exist transitive Courant algebroids \smash{$\red {\red E}{}_1$} and \smash{$\red {\red E}{}_2$} over $\cB$ together with a $C^\infty(\cB)$-module isomorphism \smash{$\mathscr{F} \colon \sfgamma(\red {\red E}{}_1) \to \sfgamma(\red {\red E}{}_2)$} induced by the T-duality relation~$R$.
\end{proposition}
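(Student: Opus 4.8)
The plan is to obtain the two transitive Courant algebroids as \emph{second} reductions of $\red E{}_1$ and $\red E{}_2$ along the vertical distributions of the fibrations $\cQ_i\to\cB$, and to construct $\mathscr F$ by transporting basic sections across the T-duality relation $R$ and then descending to $\cB$. First I would note that the bundle of isometries $D_i\subset T\cQ_i$ integrates to the fibration $\cQ_i\to\cB$, so it is an involutive distribution with regular leaf space $\cB$; lifting $D_i$ to an involutive isotropic subbundle of the transitive Courant algebroid $\red E{}_i$ and applying the reduction procedure recalled in Section~\ref{sect:CA} produces a Courant algebroid $\red{\red E}{}_i$ over $\cB$. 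Transitivity propagates: $\red E{}_i$ is a reduction of the exact Courant algebroid $E$ and hence has surjective anchor, and reducing further by an isotropic subbundle lying over $D_i$ leaves the anchor surjective onto $T\cB\cong T\cQ_i/D_i$. It is useful to record simultaneously that $\red{\red E}{}_1$ and $\red{\red E}{}_2$ are themselves reductions of $E$ — by isotropic subbundles $\widetilde K_i\supseteq K_i$ whose anchor images are the fibrewise tangent distribution of $M\to\cB$ — since this is the description in which $R$ interacts with the second reduction.

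Next I would define $\mathscr F$ on sections. Using the identification of the sections of a reduced Courant algebroid with the basic sections upstairs (Section~\ref{sect:CA}), a section $\red{\red e}{}_1\in\sfgamma(\red{\red E}{}_1)$ corresponds to a $D_1$-basic section $\red e{}_1$ of $\red E{}_1$. Since $\red e{}_1$ is in particular $D_1$-invariant, the correspondence-space structure of $R$, exactly as exploited in the proof of Theorem~\ref{thm:tdualdivergence}, yields a section $\red e{}_2$ of $\red E{}_2$ with $(\red e{}_1,\red e{}_2)\in\sfgamma(\red E{}_1\times\overline{\red E}{}_2;R)$; the transversality assumption $K_1\cap K_2=\set{0}$ renders $\red e{}_2$ unique, since $(\red e{}_1,\red e{}_2),(\red e{}_1,\red e{}_2')\in R$ forces $(0,\red e{}_2-\red e{}_2')\in R$, which vanishes. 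Because $R$ is supported on a submanifold of $\cQ_1\times_\cB\cQ_2$ and is a Courant algebroid relation — hence compatible with the anchors and with the brackets $\cbrak{\,\cdot\,,\,\cdot\,}$ — $D_1$-invariance of $\red e{}_1$ transfers to $D_2$-invariance of $\red e{}_2$, and one then checks that $\red e{}_2$ is in fact $D_2$-basic, so that it descends to $\red{\red e}{}_2\in\sfgamma(\red{\red E}{}_2)$; set $\mathscr F(\red{\red e}{}_1)\coloneqq\red{\red e}{}_2$.

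I would then verify that $\mathscr F$ is a $C^\infty(\cB)$-module isomorphism. Additivity is immediate from linearity of the fibres of $R$, and multiplying $\red e{}_1$ and $\red e{}_2$ by the pullbacks to $\cQ_1$ and $\cQ_2$ of a function $f\in C^\infty(\cB)$ preserves $R$-relatedness, giving $\mathscr F(f\,\red{\red e}{}_1)=f\,\mathscr F(\red{\red e}{}_1)$. Bijectivity follows by running the same construction for the transpose relation $R^\top\colon\red E{}_2\rel\red E{}_1$ — itself a T-duality relation with the roles of $K_1$ and $K_2$ exchanged — which yields $\mathscr G\colon\sfgamma(\red{\red E}{}_2)\to\sfgamma(\red{\red E}{}_1)$; the uniqueness above, again through $K_1\cap K_2=\set{0}$, gives $\mathscr G\circ\mathscr F=\mathrm{id}$ and $\mathscr F\circ\mathscr G=\mathrm{id}$. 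Since $R$ is involutive, $\mathscr F$ in fact intertwines the brackets of basic sections and is compatible with the pairings, so it upgrades to an isomorphism of transitive Courant algebroids over $\cB$; this is the general-setting analogue of \cite[Theorem~3.1]{cavalcanti2011generalized} that will permit the comparison of the twisted differentials on $\cQ_1$ and $\cQ_2$.

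The step I expect to be the main obstacle is the compatibility of $R$ with the second reduction, that is, that $R$ matches $D_1$-basic sections of $\red E{}_1$ bijectively with $D_2$-basic sections of $\red E{}_2$: the existence half relies on the correspondence-space description of $R$ and on there being enough $D_i$-invariant sections, mirroring the proof of Theorem~\ref{thm:tdualdivergence}, while checking that \emph{invariance} — rather than mere verticality of the anchor image — is preserved requires unwinding the bracket compatibility of $R$ against the defining condition of a $D_i$-invariant, resp.\ basic, section. A secondary point needing care is the well-definedness of $\red{\red E}{}_i$ as smooth transitive Courant algebroids, which rests on the regularity of the foliations involved; this is guaranteed by the standing hypothesis that $\cQ_1$ and $\cQ_2$ are fibre bundles over~$\cB$.
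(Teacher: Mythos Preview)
Your construction of $\red{\red E}{}_i$ via isotropic-subbundle reduction (Theorem~\ref{thm:reduction}) is not what the paper does and produces the wrong objects. The paper sets $\red{\red E}{}_i \coloneqq \red E{}_i / \red\cF{}_i$: this is simply the quotient of the vector bundle $\red E{}_i$ by the foliation of its base, with \emph{no} quotient of the fibre by an isotropic subbundle. Its space of sections is identified with the $C^\infty(\cB)$-module $\sfgamma_{D_i}(\red E{}_i)$ of $D_i$-invariant sections in the sense of Definition~\ref{def:invariantsection} (invariance under the $\ad^{\red\sigma_i}$-action), and its rank is $\rk(\red E{}_i)=2\dim\cQ_i$, which strictly exceeds $2\dim\cB$; hence $\red{\red E}{}_i$ is transitive but \emph{not} exact. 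By contrast, lifting $D_i$ to an isotropic subbundle and applying Theorem~\ref{thm:reduction} yields an exact Courant algebroid of rank $2\dim\cB$ over $\cB$. The paper's object is the correct generalisation of Cavalcanti--Gualtieri's $\IT\cQ_i/\sfT^k$, which is what the comparison with Definition~\ref{def:garcia-fernandezstreetsTdual} in Proposition~\ref{prop:garciastreetstdual} requires. A related symptom of the mismatch: you pass freely between ``$L_i$-basic'' and ``$D_i$-invariant'', but these are different notions with different section spaces.

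Once the correct $\red{\red E}{}_i$ are in place, the paper's argument is much shorter than yours. The Courant structure descends to $\sfgamma_{D_i}(\red E{}_i)$ because $D_i$-invariants are involutive (Lemma~\ref{cor:invariantsinvolutive}) and the pairing is preserved by $\ad^{\red\sigma_i}$ (Lemma~\ref{lem:pairingpreserved}); the anchor is surjective since $\sfgamma_{D_i}(\red E{}_i)$ spans $\red E{}_i$ pointwise by hypothesis. The isomorphism $\mathscr F$ is then nothing but the $C^\infty(\cB)$-module isomorphism $\sfgamma_{D_1}(\red E{}_1)\cong\sfgamma_{D_2}(\red E{}_2)$ already established in Proposition~\ref{prop:tau2isD2invariant}, realised by the map $\natural_2\circ s$. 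Your $R$-transport description of $\mathscr F$ is this same map in disguise, so the mechanism you propose for $\mathscr F$ is correct in spirit; the gap is that it should act between the full $D_i$-invariant section spaces, not between the smaller spaces arising from an isotropic reduction.
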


Using this result, it is shown that the definition of T-duality given in \cite{Garcia-Fernandez:2020ope}, for the case of principal torus bundles with $\sfT^k$-invariant generalised metrics and divergences endowed on their generalised tangent bundles, is recovered in our relational approach. The dilaton shift is moreover naturally explained in terms of related divergence operators. We shall discuss the compatibility with the twisted differentials in an upcoming work \cite{DeFraja:2025}, which in type~II string theory is used to describe the transformation of Ramond--Ramond fields under T-duality.

\textbf{Section \ref{sect:genricci}} deals with the compatibility between geometric T-duality and generalised Ricci tensors, and ultimately generalised Ricci flow, using the machinery developed in Section \ref{sect:CAreldiv}. We show that a pair of related Courant algebroids endowed with related generalised metrics and divergence operators give rise to related generalised Ricci tensors.

\begin{proposition}[\textbf{Proposition \ref{cor:relatedric}}] \label{prop:reldricintro}
If $R \colon (E_1, V_1^+) \rel (E_2, V_2^+)$ is a generalised isometry supported on a submanifold $C\subseteq M_1\times M_2$ which is a Dirac structure in $E_1 \times \overline{E}_2$, and $\div_1 \sim_R \div_2$, then \smash{$\mathrm{GRic}_{V_1^+, \div_1} \sim_R \mathrm{GRic}_{V_2^+, \div_2}$} and \smash{$\mathrm{GR}_{V_1^+, \div_1} \sim_C \mathrm{GR}_{V_2^+, \div_2}$}.
\end{proposition}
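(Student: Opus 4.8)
The plan is to unfold Definition~\ref{def:relateddiv} and formula \eqref{eq:gric} and check that each of the three terms of $\gric_{V^+,\div}$ is $C$-related, term by term. The standing structural facts I will use are all consequences of the hypotheses: since $R$ is a Courant algebroid relation it is involutive, so $R$-related sections have $R$-related brackets, and isotropic, so $R$-related sections have $\sim_C$-related pairings; since $R$ is supported on $C$ one has $\rho_{E_1\times\overline E_2}(R)\subseteq TC$, so that for any $R$-related pair $(e_1,e_2)$ the anchored field $(\rho_{E_1}e_1,\rho_{E_2}e_2)$ is tangent to $C$ and $\mathrm{pr}_i$-projects to $\rho_{E_i}e_i$; and $\div_1\sim_R\div_2$ gives $\div_1e_1\sim_C\div_2e_2$ for such pairs. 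Because $R$ is a generalised isometry which is moreover a Dirac structure (maximal isotropic), it splits fibrewise over $C$ as $R=R^+\oplus R^-$ with $R^\pm:=R\cap(V_1^\pm\times\overline{V_2^\pm})$, and a rank count (or the discussion of Section~\ref{sect:CA}) forces each $R^\pm$ to be maximal isotropic in $V_1^\pm\times\overline{V_2^\pm}$, hence the graph of a fibrewise isometric isomorphism $\phi^\pm\colon\mathrm{pr}_1^*V_1^\pm\to\mathrm{pr}_2^*V_2^\pm$ over $C$. In particular the $V^\pm$-component projections preserve $R$-relatedness of sections, and local frames of $R$-related sections of $V_1^\pm\times\overline{V_2^\pm}$ exist; since $\gric_{V^+,\div}$ is $C^\infty$-bilinear it suffices to verify the relation on such frames, so fix $R$-related pairs $(v_1^+,v_2^+)$ and $(w_1^-,w_2^-)$.

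For the first term of \eqref{eq:gric}: involutivity of $R$ gives that $\cbrak{v_1^+,w_1^-}_{E_1}$ and $\cbrak{v_2^+,w_2^-}_{E_2}$ are $R$-related, and the splitting $R=R^+\oplus R^-$ then makes the $V^+$-components $\cbrak{v_1^+,w_1^-}_{E_1}^+$ and $\cbrak{v_2^+,w_2^-}_{E_2}^+$ $R$-related; applying $\div_1\sim_R\div_2$ yields $\div_1\cbrak{v_1^+,w_1^-}_{E_1}^+\sim_C\div_2\cbrak{v_2^+,w_2^-}_{E_2}^+$. For the second term: $\div_1\sim_R\div_2$ gives $\div_1v_1^+\sim_C\div_2v_2^+$, i.e. the functions $\mathrm{pr}_1^*(\div_1v_1^+)$ and $\mathrm{pr}_2^*(\div_2v_2^+)$ agree on $C$; differentiating this identity of functions-on-$C$ along the $C$-tangent field determined by $(\rho_{E_1}w_1^-,\rho_{E_2}w_2^-)$, and using that this field $\mathrm{pr}_i$-projects to $\rho_{E_i}w_i^-$, gives $\pounds_{\rho_{E_1}(w_1^-)}\div_1v_1^+\sim_C\pounds_{\rho_{E_2}(w_2^-)}\div_2v_2^+$. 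This step is exactly where the support condition $\rho_{E_1\times\overline E_2}(R)\subseteq TC$ is indispensable.

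The third term is the crux. Consider the tensorial endomorphism $A_i:=\cbrak{\cbrak{\,\cdot\,,w_i^-}_{E_i}^-,v_i^+}_{E_i}^+$ of $V_i^+$. Iterating the facts above — involutivity, then the $V^-$-component splitting, then involutivity again, then the $V^+$-component splitting — shows that for any $R^+$-related pair $(a_1^+,a_2^+)$ one has $(A_1a_1^+,A_2a_2^+)\in R^+$, i.e. $\phi^+\circ A_1=A_2\circ\phi^+$ fibrewise over $C$. Since $\phi^+$ is a fibrewise isomorphism, $A_1|_{m_1}$ and $A_2|_{m_2}$ are conjugate for $(m_1,m_2)\in C$, hence have equal trace, so $\mathrm{Tr}_{V_1^+}(\cbrak{\cbrak{\,\cdot\,,w_1^-}_{E_1}^-,v_1^+}_{E_1}^+)\sim_C\mathrm{Tr}_{V_2^+}(\cbrak{\cbrak{\,\cdot\,,w_2^-}_{E_2}^-,v_2^+}_{E_2}^+)$. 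Adding the three contributions gives $\gric_{V_1^+,\div_1}(v_1^+,w_1^-)\sim_C\gric_{V_2^+,\div_2}(v_2^+,w_2^-)$, and by $C^\infty$-bilinearity $\gric_{V_1^+,\div_1}\sim_R\gric_{V_2^+,\div_2}$.

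Finally, $R$ is simultaneously a generalised isometry $(E_1,V_1^-)\rel(E_2,V_2^-)$ (since $V^-=(V^+)^\perp$ and $R$ is maximal isotropic), so the identical argument gives $\gric_{V_1^-,\div_1}\sim_R\gric_{V_2^-,\div_2}$, hence $\overline{\gric}_{V_1^+,\div_1}\sim_R\overline{\gric}_{V_2^+,\div_2}$; moreover $\phi:=\phi^+\oplus\phi^-$ is a fibrewise isometric isomorphism of $\mathrm{pr}_1^*E_1$ with $\mathrm{pr}_2^*E_2$ over $C$ intertwining the involutions $\tau_1$ and $\tau_2$. Since $\mathrm{GR}_{V^+,\div}=\mathrm{Tr}(\tau\,\overline{\gric}_{V^+,\div})$ is a canonical contraction built from $\overline{\gric}_{V^+,\div}$, $\tau$ and the pairing, evaluating it in an $R$-related local frame and using that $\phi$ preserves the pairing and intertwines $\tau_1,\overline{\gric}_{V_1^+,\div_1}$ with $\tau_2,\overline{\gric}_{V_2^+,\div_2}$ yields $\mathrm{GR}_{V_1^+,\div_1}\sim_C\mathrm{GR}_{V_2^+,\div_2}$. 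I expect the main obstacle to be the trace steps (third term and scalar curvature): one must be certain that the generalised-isometry-plus-Dirac hypothesis genuinely produces the fibrewise isometric isomorphisms $\phi^\pm$ — so that traces of $R$-intertwined endomorphisms can be compared at all — and that there are enough $R$-related sections over $C$ to detect the tensorial identities $\phi^\pm\circ A_1=A_2\circ\phi^\pm$ pointwise.
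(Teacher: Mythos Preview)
Your proof is correct and follows essentially the same approach as the paper's own argument: the paper isolates the three terms of \eqref{eq:gric} in a separate lemma (Lemma~\ref{lem:riccomponents}) and verifies each one exactly as you do, invoking the splitting $R=R^+\oplus R^-$ for term~(i), Lemma~\ref{lemma:relationlieder} for term~(ii), and the existence of $R$-related bases for the trace in term~(iii); it then deduces the scalar curvature relation by swapping $V^\pm$ and using the same trace compatibility. Your explicit identification of $R^\pm$ as the graph of a fibrewise isometric isomorphism $\phi^\pm$ is a slightly more detailed phrasing of what the paper calls ``$R$-related bases'' (the paper relies on the fact, recorded after Definition~\ref{defn:generalisedisometry2}, that a generalised isometry has trivial kernel and cokernel), but the content is identical.
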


This result implies, in particular, that our general notion of T-duality is compatible with the generalised string background equations. 

\begin{theorem}[\textbf{Theorem~\ref{cor:Tdualbackground}}]
    If a (locally) $D_1$-invariant pair $(V_1^+,\div_1)$ on $\red E{}_1$ satisfies the generalised string background equations, then so does the T-dual pair $(V_2^+,\div_2)$ on $\red E{}_2$.
\end{theorem}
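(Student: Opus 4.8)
The plan is to reduce the Theorem to Proposition~\ref{cor:relatedric} by exhibiting the T-duality relation $R \colon \red E{}_1 \rel \red E{}_2$ as a generalised isometry supported on a Dirac structure $C$, and then invoking the already-established existence and uniqueness of the T-dual divergence $\div{}_2$ from Theorem~\ref{thm:tdualdivergence}. Concretely, I would proceed as follows. First I would recall from Section~\ref{sect:CA} that geometric T-duality is, by construction, a generalised isometry: the generalised metric $\tau_2 = V_2^+$ on $\red E{}_2$ is \emph{defined} to be the unique one making $R \colon (\red E{}_1, V_1^+) \rel (\red E{}_2, V_2^+)$ a generalised isometry, so that hypothesis of Proposition~\ref{cor:relatedric} is automatic. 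Second, I would check that the support $C \subseteq \cQ_1 \times \cQ_2$ of the T-duality relation is a Dirac structure in $\red E{}_1 \times \overline{\red E}{}_2$; this should follow from the local description of T-duality given at the end of Section~\ref{sect:CA}, where (in the correspondence-space picture) $C$ is the fibre product $\cQ_1 \times_{\cB} \cQ_2$ and $R$ is the graph-type relation built from it, whose involutivity and maximal isotropy are part of the definition of a Courant algebroid relation.

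Third, given the $D_1$-invariant divergence $\div{}_1$ on $\red E{}_1$, I would apply Theorem~\ref{thm:tdualdivergence} to produce the unique $\div{}_2$ on $\red E{}_2$ with $\div{}_1 \sim_R \div{}_2$; here I would note that the standing hypotheses of the present Theorem — that $(V_1^+, \div_1)$ is (locally) $D_1$-invariant, that $D_1$ is spanned by a local Lie subalgebra $\frk{k}_{\tau_1}$, that $K_1 \cap K_2 = \{0\}$, and that $\sfgamma_{D_1}(\red E{}_1)$ spans $\red E{}_1$ pointwise — are exactly what Theorem~\ref{thm:tdualdivergence} requires, so $\div{}_2$ is well-defined. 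With $R$ a generalised isometry supported on the Dirac structure $C$ and $\div{}_1 \sim_R \div{}_2$, Proposition~\ref{cor:relatedric} yields $\mathrm{GRic}_{V_1^+,\div_1} \sim_R \mathrm{GRic}_{V_2^+,\div_2}$ and $\mathrm{GR}_{V_1^+,\div_1} \sim_C \mathrm{GR}_{V_2^+,\div_2}$.

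Fourth, I would translate the conclusion into the vanishing statement. The generalised string background equations for $(V_1^+,\div_1)$ read $\mathrm{GRic}_{V_1^+,\div_1}=0$ and $\mathrm{GR}_{V_1^+,\div_1}=0$. Because $\mathrm{GRic}_{V_1^+,\div_1} \sim_R \mathrm{GRic}_{V_2^+,\div_2}$ means that $\mathrm{GRic}_{V_2^+,\div_2}(v_2^+,w_2^-)$ is $C$-related to $\mathrm{GRic}_{V_1^+,\div_1}(v_1^+,w_1^-)$ for every related pair of sections, and since (by the spanning hypothesis for $\sfgamma_{D_1}(\red E{}_1)$ together with the fact that $R$ is a genuine relation with surjective projections onto both factors, as encoded in the local model) every section of $V_2^\pm$ arises — pointwise, and over $C$ — as the $R$-partner of a section on the $\red E{}_1$ side, the vanishing of the left-hand tensor forces the vanishing of the right-hand one. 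The same argument with $\sim_C$ in place of $\sim_R$ handles the scalar curvature $\mathrm{GR}$. Hence $(V_2^+,\div_2)$ satisfies the generalised string background equations.

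The main obstacle I anticipate is the surjectivity/spanning bookkeeping in the last step: being ``$R$-related'' only constrains $\mathrm{GRic}_{V_2^+,\div_2}$ on the image of $R$ and over the support $C$, so to conclude genuine vanishing of the tensor on all of $\red E{}_2$ one must know that the relation's projection to the $\red E{}_2$-factor is fibrewise surjective and that $C \to \cQ_2$ is a surjective submersion — this is where the local model of T-duality from Section~\ref{sect:CA} and the hypothesis that $D_1$-invariant sections span $\red E{}_1$ pointwise must be used carefully, and it is the place where an implicit regularity assumption (properness of the relevant group actions, constancy of ranks) is really being invoked. A secondary technical point is confirming that $C$ is a Dirac structure rather than merely a submanifold; if the general local formulation of T-duality does not already guarantee this, one would need to add it as a hypothesis or derive it from the fibre-product structure $\cQ_1 \times_\cB \cQ_2$, which I expect to be straightforward but should be stated explicitly.
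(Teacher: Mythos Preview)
Your approach is essentially the paper's: the paper invokes Corollary~\ref{cor:Tdualric} (which packages your steps 1--4) and then resolves the surjectivity concern via ${\rm pr}_2(\red C)=\varpi_2(M)=\cQ_2$, with fibrewise surjectivity onto $\red E{}_2$ coming for free from the trivial cokernel of a generalised isometry---so the $D_1$-spanning hypothesis is not needed at that final step. One terminological slip: it is $R$, not its support $C$, that is the Dirac structure in $\red E{}_1\times\overline{\red E}{}_2$; this is already established in the T-duality setup (Equation~\eqref{eqn:T-dualRelation} and \cite[Theorem~5.8]{DeFraja:2023fhe}), so no further check is needed.
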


The classic   examples are provided by Poisson-Lie T-duality and T-duality for torus bundles.
We further derive Buscher-like rules for the Ricci tensors in T-duality, which are useful for explicit calculational purposes. Explicitly, the Ricci tensor of the T-dual metric is determined by the Christoffel symbols and Ricci tensor of the original metric, as well as representatives of the \v{S}evera classes of the T-dual pair of exact Courant algebroids.

Proposition \ref{prop:reldricintro} represents the cornerstone of the main result of this paper: In our setting of geometric T-duality, given a unique solution of the generalised Ricci flow with isometries that are complete vector fields, there exists a unique T-dual solution of the generalised Ricci flow.

\begin{theorem}[\textbf{Theorem \ref{thm:compatibilityRicci}}] \label{thm:compRicciintro}
If there exists $T \in \IR_{>0}$ such that the family of pairs $(\tau_1(t), \div{}_1(t))$ on $\red E{}_1$ is a unique solution of the generalised Ricci flow for $t \in [0,T)$ with initial condition $(\tau_1(0), \div{}_1(0)) = (\tau_1, \div{}_1)$, and the Lie algebra of isometries $\frk{k}_{\tau_1}$ consists of complete vector fields, then there is a unique T-dual family of pairs $(\tau_2(t), \div{}_2(t))$ on $\red E{}_2$ that is a solution of the generalised Ricci flow. 
\end{theorem}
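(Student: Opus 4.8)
The plan is to combine the existence--uniqueness result for T-dual divergences (Theorem~\ref{thm:tdualdivergence}) with the compatibility of generalised Ricci tensors under generalised isometries (Proposition~\ref{cor:relatedric}), reducing the statement to an assertion about how the defining ODE of the generalised Ricci flow intertwines with the T-duality relation $R$. First I would fix $t\in[0,T)$ and observe that the hypothesis supplies a pair $(\tau_1(t),\div{}_1(t))$ on $\red E{}_1$; the crucial point is that $D_1$-invariance is preserved along the flow. This should follow because $\gric_{\,\cdot\,,\div_1}$, viewed as a vector field on $\cM_{\red E{}_1}$, is tangent to the submanifold of $D_1$-invariant metrics (the analogue of \cite[Proposition~5.18]{Severa:2018pag}), while the evolution $\partial_t\div_1=-\cD\,\mathrm{GR}_{\tau_1,\div_1}$ stays within $D_1$-compatible divergences since $\mathrm{GR}_{\tau_1,\div_1}$ is a $D_1$-invariant function and $\cD$ is built from $\rho_{\red E{}_1}$, the pairing and $\de$, all of which respect the invariance. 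Hence for every $t$ the pair $(\tau_1(t),\div{}_1(t))$ satisfies the hypotheses of Theorem~\ref{thm:tdualdivergence} (the standing assumptions $K_1\cap K_2=\{0\}$ and that $\sfgamma_{D_1}(\red E{}_1)$ spans $\red E{}_1$ pointwise are time-independent), so there is a unique $D_1$-T-dual divergence $\div{}_2(t)$ on $\red E{}_2$, and a unique T-dual generalised metric $\tau_2(t)$ making $R$ a generalised isometry, by the construction recalled in Section~\ref{sect:CA}.

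Next I would show that the family $t\mapsto(\tau_2(t),\div{}_2(t))$ so obtained actually solves the generalised Ricci flow on $\red E{}_2$. For this I invoke Proposition~\ref{cor:relatedric}: since $R$ is a generalised isometry supported on a Dirac structure $C$ and $\div{}_1(t)\sim_R\div{}_2(t)$ for each $t$, we get $\gric_{\tau_1(t),\div{}_1(t)}\sim_R\gric_{\tau_2(t),\div{}_2(t)}$ and $\mathrm{GR}_{\tau_1(t),\div{}_1(t)}\sim_C\mathrm{GR}_{\tau_2(t),\div{}_2(t)}$. The relation $R$ is essentially an isomorphism of the relevant $C^\infty$-modules of invariant/basic sections (this is the content of the T-duality construction and of Proposition~\ref{prop:transitiveTduality} in the torus-bundle case), and this isomorphism is $t$-independent. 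Therefore differentiating the relations $\tau_1(t)\sim_R\tau_2(t)$ and $\div{}_1(t)\sim_R\div{}_2(t)$ in $t$ commutes with applying $R$, so
\begin{align}
\frac{\partial}{\partial t}\tau_2(t) \;\sim_R\; \frac{\partial}{\partial t}\tau_1(t) = -2\,\gric_{\tau_1(t),\div{}_1(t)} \;\sim_R\; -2\,\gric_{\tau_2(t),\div{}_2(t)}\ ,
\end{align}
and likewise $\partial_t\div{}_2(t)\sim_R -\cD_{\red E{}_2}\,\mathrm{GR}_{\tau_2(t),\div{}_2(t)}$, using that $R$ intertwines the operators $\cD_{\red E{}_1}$ and $\cD_{\red E{}_2}$ (both are assembled from data preserved by a generalised isometry over a Dirac support). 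Since the module map induced by $R$ is injective, these relations upgrade to genuine equalities on $\red E{}_2$, which are exactly the generalised Ricci flow equations for $(\tau_2,\div{}_2)$.

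Uniqueness of the T-dual solution then follows on two levels: for each fixed $t$ the T-dual pair is unique by Theorem~\ref{thm:tdualdivergence} and the uniqueness in the geometric T-duality construction, and as a family it inherits the short-time uniqueness of generalised Ricci flow on the compact base established in \cite{Streets:2024rfo}, with initial datum the T-dual of $(\tau_1(0),\div{}_1(0))$. The role of the completeness hypothesis on the isometry Lie algebra $\frk k_{\tau_1}$ is to guarantee that the $D_1$-invariant structures integrate globally over the whole interval $[0,T)$ — without it the construction of $\red E{}_2$ and the leaf space $\cQ_2$ could break down at finite time — so I would use it precisely to ensure that the reduction producing $\red E{}_2$ and the invariant divergence $\div{}_2(t)$ is well-defined for all $t\in[0,T)$, not merely infinitesimally.

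The main obstacle I anticipate is the interchange of the time derivative with the relation $R$, i.e. verifying that $\partial_t$ and $\sim_R$ commute in a rigorous sense. This is not formal: $R$-relatedness of sections is a pointwise-on-$C$ condition, and one must check that the $t$-derivatives of related sections remain sections that are pairwise in $\sfgamma(E_1\times\overline E{}_2;R)$, which requires the $t$-dependence to be compatible with the (fixed) isotropic subbundle $R$ and with the invariance conditions defining the T-dual. Equivalently, one must confirm that the explicit isomorphism underlying geometric T-duality is differentiated only in its ``argument'' and not in its ``structure'', so that it is legitimate to pull the flow through it. I would handle this by working with the explicit local description of $R$ and the invariant frames from Section~\ref{sect:CA}, where the T-dual map is given by fixed linear-algebraic formulae in the invariant coordinates and the only $t$-dependence sits in the components of $\tau_i$ and $\div{}_i$; once phrased this way the interchange is immediate, and the remaining steps are the routine bookkeeping already licensed by Theorem~\ref{thm:tdualdivergence} and Proposition~\ref{cor:relatedric}.
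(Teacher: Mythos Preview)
Your argument has a genuine gap in the first step, namely in how $D_1$-invariance of $(\tau_1(t),\div_1(t))$ is maintained along the flow, and relatedly in your use of the completeness hypothesis. You propose that invariance persists because the generalised Ricci vector field is tangent to the submanifold of $D_1$-invariant pairs. Even granting that tangency, in an infinite-dimensional (PDE) setting this alone does not force an integral curve to remain in the submanifold; some additional argument is required. The paper supplies exactly this via the completeness assumption, but by a different mechanism than the one you describe: each $X\in\frk{k}_{\tau_1}$ integrates to a one-parameter family of Courant algebroid automorphisms $\boldsymbol\Phi_s$ preserving $(\tau_1,\div_1)$; by naturality of $\gric$ and $\mathrm{GR}$, the pullback $(\boldsymbol\Phi_s^*\tau_1(t),\boldsymbol\Phi_s^*\div_1(t))$ is again a solution with the same initial condition; and then the \emph{uniqueness} hypothesis forces $\boldsymbol\Phi_s^*\tau_1(t)=\tau_1(t)$ and $\boldsymbol\Phi_s^*\div_1(t)=\div_1(t)$ for all $t$. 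Differentiating in $s$ gives $\frk{k}_{\tau_1}\subset\frk{k}_{\tau_1(t)}$ and $D_1$-compatibility of $\div_1(t)$. This is the classical ``isometries persist under Ricci flow'' argument, and it is where completeness and uniqueness are actually consumed.

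Your stated role for completeness---to keep the reduction producing $\cQ_2$ and $\red E{}_2$ well-defined along the flow---is incorrect: the isotropic subbundles $K_1,K_2\subset E$, the leaf spaces $\cQ_1,\cQ_2$, the reduced Courant algebroids $\red E{}_1,\red E{}_2$, and the relation $R$ are all fixed, time-independent data of the T-duality setup. Only the pairs $(\tau_i(t),\div_i(t))$ evolve. The local version stated immediately after the theorem makes this visible: there completeness is dropped, and $D_1$-invariance for all $t$ is taken as an \emph{assumption} rather than derived.

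Once invariance for all $t$ is secured, your remaining steps---constructing $(\tau_2(t),\div_2(t))$ via Theorems~\ref{thm:maingeneral} and~\ref{thm:tdualdivergence}, invoking Proposition~\ref{cor:relatedric} (equivalently Corollary~\ref{cor:Tdualric}), and passing $\partial_t$ through the fixed relation $R$ using that $R$ has trivial kernel and cokernel---are correct and are precisely what the paper's terse final clause ``which by Corollary~\ref{cor:Tdualric} satisfies the generalised Ricci flow'' is abbreviating.
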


The local version of this result is stated as well and the case of correspondence spaces is discussed, showing that Theorem \ref{thm:compRicciintro} recovers the results of \cite{Streets:2017506} for torus bundles.

Finally, in \textbf{Section \ref{sect:examples}} many explicit examples are provided which go beyond the standard examples of T-duality for principal torus bundles and Poisson-Lie T-duality that are discussed throughout the rest of the paper. These include geometric T-duality for Hamilton's cigar soliton (known  in string theory as Witten's black hole) which produces a new generalised Ricci soliton, a new example involving a three-dimensional hyperbolic space, and two examples of geometric T-duality between non-principal circle bundles involving Klein bottles. 

\textbf{Appendix \ref{app:twistedaction}} at the end of the paper derives pertinent properties of our new definition of $\sigma$-adjoint action on an exact Courant algebroid for a given isotropic splitting $\sigma$. This was introduced in~\cite{DeFraja:2023fhe} and it plays a prominent role throughout the main text in the invariance requirements for our general notion of geometric T-duality.

\medskip

\subsection{Acknowledgements}~\\[5pt]
  We thank Dan Ag{\"u}ero, Alejandro Cabrera, Rui Loja Fernandes, Roberto Rubio, Pavol \v{S}evera and Fridrich Valach for helpful conversations and correspondence. The work of T.C.D. was supported by an EPSRC Doctoral Training Partnership Award. The work of V.E.M. was supported by PNRR MUR projects PE0000023-NQSTI and in part by the GACR Grant EXPRO 19-28268X. 
 
\medskip

\section{Courant Algebroids: Reduction, Relations and T-Duality}\label{sect:CA}

In this section we provide relevant background on Courant algebroids, reduction of Courant algebroids, and Courant algebroid relations. When combined with a notion of isometries of generalised metrics on Courant algebroids, we review the definition and main features of our general notion of T-duality introduced in~\cite{DeFraja:2023fhe}, and generalise it to a new local version which is useful in explicit examples and calculations.

\medskip

\subsection{Courant Algebroids}~\\[5pt]
 We provide a brief review of Courant algebroids and refer to \cite{Jurco2016courant, gualtieri:tesi} for more details.

\begin{definition}\label{def:CourantAlg}
A \emph{Courant algebroid} is a quadruple $(E,\llbracket\,\cdot\,,\,\cdot\,\rrbracket,\langle\,\cdot\,,\,\cdot\,\rangle ,\rho)$, where $E$ is a vector bundle over a manifold $M$
with a fibrewise non-degenerate bilinear pairing $\langle\,\cdot\,,\,\cdot\,\rangle \in\mathsf{\Gamma}(\midodot^2E^*),$ a vector bundle morphism 
$\rho \colon E \rightarrow TM $
called the \emph{anchor}, and an $\IR$-bilinear bracket operation   
\begin{equation} 
\llbracket\,\cdot\,,\,\cdot\,\rrbracket \colon \mathsf{\Gamma}(E) \times \mathsf{\Gamma}(E) \longrightarrow \mathsf{\Gamma}(E)
\end{equation}
called the \emph{Dorfman bracket}, which together satisfy
\begin{enumerate}[label=(\roman{enumi})]
    \item $\pounds_{\rho(e)}\langle e_1, e_2\rangle  = \langle\llbracket e,e_1\rrbracket, e_2\rangle  + \langle e_1, \llbracket e, e_2\rrbracket\rangle  \ ,$  \label{eqn:metric1} \\[-3mm]
    \item $\langle\llbracket e, e\rrbracket, e_1\rangle  = \tfrac{1}{2}\, \pounds_{\rho(e_1)}\langle e, e\rangle  \ ,$  \label{eqn:metric2}\\[-3mm]
    \item $\llbracket e, \llbracket e_1,e_2 \rrbracket  \rrbracket  = \llbracket \llbracket e,e_1\rrbracket  ,e_2 \rrbracket  +  \llbracket e_1, \llbracket e,e_2 \rrbracket  \rrbracket  \ , $ \label{eqn:Jacobiid}
\end{enumerate}
for all $e,e_1, e_2 \in \mathsf{\Gamma}(E)$, where $\pounds$ denotes the Lie derivative along a vector field on $M$. 
\end{definition}

\begin{remark}
The anchored Leibniz rule for the Dorfman bracket
\begin{align} \label{eqn:anchorLeibniz}
\llbracket e_1 ,f\,e_2 \rrbracket  = f\, \llbracket e_1, e_2 \rrbracket  + \big( \pounds_{\rho(e_1)} f\big)\,e_2 \ ,
\end{align}
for all $e_1,e_2\in\mathsf{\Gamma}(E)$ and $f\in C^\infty(M),$ follows from invariance~\ref{eqn:metric1} of the pairing.
The Jacobi identity~\ref{eqn:Jacobiid} and the anchored Leibniz rule \eqref{eqn:anchorLeibniz} imply that the anchor $\rho$ is a bracket homomorphism:
\begin{align}\label{eq:rhohomo}
\rho(\llbracket e_1, e_2\rrbracket)= [\rho(e_1), \rho(e_2)] \ ,
\end{align}
for all $e_1 , e_2 \in \mathsf{\Gamma}(E)$, where $[\,\cdot\,,\,\cdot\,]$ denotes the Lie bracket of vector fields. 
\end{remark}

Throughout this paper we will denote the Courant algebroid $(E,\llbracket\,\cdot\,,\,\cdot\,\rrbracket,\langle\,\cdot\,,\,\cdot\,\rangle ,\rho)$ simply by $E$ when there is no ambiguity in the structure maps. If clarity is required, we will sometimes use a subscript $_E$ to label the structures on a Courant algebroid $E$. If there are multiple Courant algebroids $E_i$ involved in the discussion, we label the structures on $E_i$ with a subscript $_i\,$.

\begin{example}[\textbf{Standard Courant Algebroids}] \label{ex:diffeostandard}
The (\emph{$H$-twisted}) \emph{standard Courant algebroid} over $M$ is the quadruple $(\IT M , \llbracket\,\cdot\,,\,\cdot\,\rrbracket_{H} ,\langle\,\cdot\,,\,\cdot\,\rangle_{\IT M}, {\rm pr}_1  )$, where 
\begin{align}
\IT M := TM \oplus T^*M
\end{align}
is the generalised tangent bundle, the Dorfman bracket is
\begin{align}\label{eqn:Hbracket}
    \llbracket X + \alpha, Y + \beta \rrbracket_H \coloneqq[X,Y] + \pounds_X \beta - \iota_Y\, \de \alpha + \iota_Y\, \iota_X H \ ,
\end{align}
for a closed three-form $H \in \mathsf{\Omega}^3_{\rm cl} (M)$, the pairing is given by
\begin{align}
    \ip{X+\alpha, Y+\beta} = \iota_Y \alpha + \iota_X\beta \ ,
\end{align} 
and the anchor ${\rm pr}_1$ is the projection to the first summand of $\IT M = TM \oplus T^*M$; here $\iota_X$ denotes the contraction of differential forms with the vector field $X\in\sfgamma(TM)$. From now on, we will denote the $H$-twisted standard Courant algebroid over $M$ by $(\IT M, H).$
\end{example}

\begin{remark}\label{rem:Courantcond}
For any Courant algebroid $E$ over $M$ there is an `adjoint' map $\rho^*\colon T^*M\to E$ given by
\begin{align}\label{eqn:rhostar}
\langle\rho^*(\alpha),e\rangle  \coloneqq \iota_e\, \rho^{\rm t}(\alpha) \ ,
\end{align}
for all $\alpha\in \mathsf{\Omega}^1(M)$ and $e\in\mathsf{\Gamma}(E)$, where $\rho^{\rm t}\colon T^*M\to E^*$ is the transpose of $\rho.$ 
The map $\rho^*$ induces a map $\cD\colon C^\infty(M)\to\mathsf{\Gamma}(E)$ defined by $$\cD f=\rho^*(\de f)\ ,$$ for all $f\in C^\infty(M)$, which obeys a derivation-like rule and is the natural generalisation of the exterior derivative in the Courant algebroid $E$. Then item~\ref{eqn:metric2} of Definition~\ref{def:CourantAlg} is equivalent to
\begin{align}\label{eq:bracketcD}
    \llbracket e,e\rrbracket = \tfrac12\,\cD\langle e,e\rangle \ ,
\end{align}
which together with Equation~\eqref{eqn:anchorLeibniz} imply the additional Leibniz rule
\begin{align} \label{eqn:Leibnizfirst}
    \llbracket f\,e_1,e_2\rrbracket = f\,\llbracket e_1,e_2\rrbracket -\big(\pounds_{\rho(e_2)} f\big)\,e_1 + \langle e_1,e_2\rangle \, \cD f \ .
\end{align}
\end{remark}

Applying Equation~\eqref{eq:rhohomo} to Equation~\eqref{eq:bracketcD} it follows that
\begin{align}
 \rho \circ \rho^* = 0 \ ,   
\end{align}
and this motivates 

\begin{definition}
A Courant algebroid $E$ over $M$ is \emph{transitive} if the anchor map $\rho$ is surjective, and \emph{exact} if the chain complex
\begin{align}\label{eqn:exCou}
0 \longrightarrow T^*M \xlongrightarrow{\rho^*} E \xlongrightarrow{\rho} TM \longrightarrow 0
\end{align}
is a short exact sequence of vector bundles.
\end{definition}  

An isotropic splitting\footnote{We will often drop the adjective `isotropic' and simply refer to $\sigma:TM\to E$ as a splitting for brevity.} $\sigma\colon TM \to E$ of the short exact sequence \eqref{eqn:exCou} defines an isomorphism $E\cong(\IT M,H_\sigma)$ with closed three-form $H_\sigma \in \mathsf{\Omega}_{\rm cl}^3(M)$ given by
\begin{align}\label{eqn:severaclass}
    H_\sigma(X,Y,Z) =  \ip{\llbracket\sigma(X),\sigma(Y)\rrbracket, \sigma(Z)}
\end{align}
for $X,Y,Z \in \mathsf{\Gamma}(TM)$, which measures the lack of integrability of the subbundle $\Im(\sigma)\subset E$ with respect to the Dorfman bracket. 

For the standard Courant algebroid $\IT M = TM \oplus T^*M$, any automorphism $\boldsymbol\Phi\in\mathsf{Aut}(\IT M)$  can be written in terms of a diffeomorphism $\Fi \in {\sf Diff}(M)$ and a $B$-field transformation such that 
\begin{align}
\boldsymbol\Phi = \big(\Fi_* + (\Fi^{-1})^*\big) \circ \e^B \ \colon \ \IT M \longrightarrow \IT M \ . 
\end{align}
Here $B$ is a two-form on $M$ regarded as a map $B^\flat \colon TM \to T^*M$, and $\e^{B}\, \colon \IT M \rightarrow \IT M$ is the vector bundle morphism given by
\begin{align}
\e^{B}\, (X + \alpha) = X + B^\flat(X) + \alpha \ ,
\end{align}
for all $X \in \sfgamma(TM)$ and $\alpha \in \sfgamma(T^*M)$. 

\begin{remark}[\textbf{\v{S}evera's Classification}\cite{Severa-letters}] 
Since the difference between two splittings $\sigma$ and $\sigma'$ defines a two-form $B\in \mathsf{\Omega}^2(M)$ via $(\sigma-\sigma')(X) = \rho^*(\iota_X B)$, the closed three-form $H_\sigma$ defined by Equation \eqref{eqn:severaclass} is shifted to $H_\sigma+\de B$ under a change of splitting. Hence there is a well-defined cohomology class $[H_\sigma]\in {\sf H}^3(M,\IR)$ associated to the exact sequence \eqref{eqn:exCou} which completely determines the Courant algebroid structure. This is called the \emph{\v{S}evera class} of the exact Courant algebroid; a representative $H\in\sfOmega^3_{\rm cl}(M)$ of the \v{S}evera class will sometimes be called the \emph{Kalb-Ramond flux} of the exact Courant algebroid. A general discussion can be found in~\cite[Section~2.2]{Garcia-Fernandez:2020ope}.
\end{remark}

Let $L$ be a subbundle of $E$ supported on a submanifold $N \subset M$. We denote by $\sfgamma(E; L)$ the space of sections of $E$ whose restriction to the submanifold $N$ becomes a section of $L$.
 
\begin{definition}
 Let $(E,\llbracket\,\cdot\,,\,\cdot\,\rrbracket,\langle\,\cdot\,,\,\cdot\,\rangle,\rho)$   be a Courant algebroid over $M$. An \emph{almost Dirac structure}  is a maximally isotropic subbundle $L \subset E.$ An almost Dirac structure is  a \emph{Dirac structure} if it is also involutive with respect to the Dorfman bracket. For a  smooth submanifold  $N \subset M$, a \emph{Dirac structure $L$ supported on $N$} is a maximally isotropic involutive subbundle $L \subset E \rvert_N,$ i.e.~$\cbrak{\sfgamma(E; L), \sfgamma(E; L)} \subseteq \sfgamma(E; L)$, such that $\rho(L) \subseteq TN.$ 
\end{definition}

\medskip

\subsection{Reduction of Courant Algebroids}~\\[5pt]
 We briefly explain the reduction of Courant algebroids by isotropic subbundles, following \cite{Zambon2008reduction}. We begin with the notion of basic sections for an isotropic subbundle $K$ of a Courant algebroid $E$. Let $K^\perp$ denote the annihilator of $K$ with respect to the pairing on $E$.

\begin{definition}
    The space of sections of $K^\perp$ which are \emph{basic with respect to} $K$ is given by
    \begin{align}
        \mathsf{\Gamma}_{\text{bas}}(K^\perp) \coloneqq \set{e \in \mathsf{\Gamma}(K^\perp) \ \vert \ \llbracket \mathsf{\Gamma}(K) ,e \rrbracket \subset \mathsf{\Gamma}(K)} \ .
    \end{align}
\end{definition}

When $\sfgamma_{\text{bas}}(K^\perp)$ spans $K^\perp$ pointwise, we say that $K^\perp$ has enough basic sections. In this instance $K$ is also an involutive subbundle, hence $\rho(K)\subset TM$ is an integrable distribution.

The reduction of Courant algebroids by a foliation is the content of the Bursztyn-Cavalcanti-Gualtieri-Zambon Theorem, see \cite{Bursztyn2007reduction, Zambon2008reduction, Bursztyn:2023onv}.

\begin{theorem}\label{thm:reduction}
    Let $E$ be a Courant algebroid over $M$, and let $K$ be an isotropic subbundle of $E$.
    Suppose that $\rho(K^\perp) = TM$, and that $K^\perp$ has enough basic sections. Assume that the quotient of $M$ by the foliation $\cf F$ integrating the regular distribution $\rho(K)$ is smooth, inducing the unique surjective submersion $\varpi \colon M \to \cf Q$. Then there is a Courant algebroid $\red E$ over $\cf Q$ fitting into the pullback diagram
    \[
\begin{tikzcd}
K^\perp / K \arrow{rr}{} \arrow[swap]{dd}{} & & \red E \arrow[]{dd}{} \\ & & \\
M \arrow[]{rr}{\varpi} & & \cf Q 
\end{tikzcd}
\]
of vector bundles. The reduced Courant algebroid $\red E$ is exact if $E$ is exact. 
\end{theorem}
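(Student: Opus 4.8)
The plan is to realise $\red E$ as the descent to the leaf space $\cf Q$ of the vector bundle $K^\perp/K\to M$, to transport the Courant algebroid structure of $E$ along basic sections, and then to read off exactness by a rank count. Since $K^\perp$ has enough basic sections, $K$ is involutive, so $\rho(K)$ is the tangent distribution of the foliation $\cf F$, whose quotient $\cf Q$ is smooth with submersion $\varpi\colon M\to\cf Q$ by hypothesis. Note that $K\subseteq K^\perp$, that $K$ is exactly the radical of the restriction of $\ip{\,\cdot\,,\,\cdot\,}$ to $K^\perp$, and hence that $K^\perp/K$ inherits a fibrewise non-degenerate pairing from $E$.

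First I would carry out the descent of the bundle. Using invariance~\ref{eqn:metric1} and involutivity of $K$ one checks $\cbrak{\sfgamma(K),\sfgamma(K^\perp)}\subseteq\sfgamma(K^\perp)$, so that the Dorfman bracket induces on $K^\perp/K$ a flat $K$-connection $\nabla_k[e]\coloneqq\big[\cbrak{k,e}\big]$ --- it is $C^\infty(M)$-linear in $k$ modulo $\sfgamma(K)$ by Equation~\eqref{eqn:Leibnizfirst}, a connection in $e$ by~\eqref{eqn:anchorLeibniz}, and flat by the Jacobi identity~\ref{eqn:Jacobiid} --- whose flat sections are exactly the classes of sections basic with respect to $K$. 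Since $\rho(K)$ annihilates leafwise-constant functions, $\sfgammabas(K^\perp)/\sfgamma(K)$ is a module over $\varpi^{*}C^\infty(\cf Q)\subseteq C^\infty(M)$; it is projective of constant rank $\rk E-2\,\rk K$ because $\sfgammabas(K^\perp)$ spans $K^\perp$ pointwise and non-vanishing modulo $K$ propagates along the leaves by flatness of $\nabla$. By Serre--Swan it is the module of sections of a vector bundle $\red E\to\cf Q$, and the tautological comparison morphism $\varpi^{*}\red E\to K^\perp/K$ is an isomorphism; this is the asserted pullback square. (This is the standard descent of a foliated bundle with flat leafwise structure, cf.~\cite{Zambon2008reduction}.)

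Next I would transport the three structure maps, using basic representatives $e_1,e_2\in\sfgammabas(K^\perp)$. For all $k\in\sfgamma(K)$ one has $\pounds_{\rho(k)}\ip{e_1,e_2}=\ip{\cbrak{k,e_1},e_2}+\ip{e_1,\cbrak{k,e_2}}=0$ by~\ref{eqn:metric1} and $\ip{K,K^\perp}=0$, so $\ip{e_1,e_2}$ is leafwise constant and descends to $\ip{\,\cdot\,,\,\cdot\,}_{\red E}\in C^\infty(\cf Q)$, which is non-degenerate by the previous step. One also checks $\cbrak{e_1,e_2}\in\sfgammabas(K^\perp)$ --- membership in $K^\perp$ uses~\ref{eqn:metric1} and~\eqref{eq:bracketcD}, while basic-ness uses the Jacobi identity~\ref{eqn:Jacobiid} --- so its class defines $\cbrak{\,\cdot\,,\,\cdot\,}_{\red E}$, and $[\rho(k),\rho(e_1)]=\rho\big(\cbrak{k,e_1}\big)\in\rho(K)=T\cf F$ shows $\rho(e_1)$ is $\cf F$-projectable, descending to $\rho_{\red E}(e_1)\in\sfgamma(T\cf Q)$. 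All three operations are visibly unchanged when the $e_i$ are altered by sections of $K$, and are $C^\infty(\cf Q)$-(bi)linear; the Courant axioms~\ref{eqn:metric1}--\ref{eqn:Jacobiid} for $\red E$ then follow from the corresponding identities for $E$ evaluated on basic sections, since pullback along the submersion $\varpi$ is injective on functions. This produces the Courant algebroid $\red E$ over $\cf Q$.

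Finally, assume $E$ is exact, so $\rk E=2\dim M$ and $\ker\rho=\rho^{*}(T^{*}M)$. If $\rho^{*}\alpha\in K$ for some one-form $\alpha$, then $\iota_{\rho(e)}\alpha=\ip{\rho^{*}\alpha,e}=0$ for every $e\in\sfgamma(K^\perp)$, and $\rho(K^\perp)=TM$ forces $\alpha=0$; hence $K\cap\ker\rho=0$, so $\rk K=\rk\rho(K)=\dim M-\dim\cf Q$ and $\rk\red E=\rk E-2\,\rk K=2\dim\cf Q$. Since $\rho_{\red E}$ is surjective (as $\rho$ surjects onto $TM$ and $TM$ onto $T\cf Q$), $\rho_{\red E}^{*}$ is injective, and $\rho_{\red E}\circ\rho_{\red E}^{*}=0$, the rank count forces the sequence $0\to T^{*}\cf Q\to\red E\to T\cf Q\to 0$ with maps $\rho_{\red E}^{*}$ and $\rho_{\red E}$ to be exact, i.e.\ $\red E$ is exact. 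I expect the main obstacle to be the descent in the second paragraph: showing precisely that ``enough basic sections'' together with smoothness of $\cf Q$ genuinely yield a smooth vector bundle over $\cf Q$ whose pullback is $K^\perp/K$, and that the transported operations do not depend on the chosen basic representatives. Once the bundle and its structure maps are in place, checking the Courant axioms and exactness is essentially formal.
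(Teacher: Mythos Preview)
Your proposal is correct and follows the standard approach of the references \cite{Bursztyn2007reduction, Zambon2008reduction} to which the paper defers; the paper itself does not supply a proof but simply cites the Bursztyn--Cavalcanti--Gualtieri--Zambon theorem. Your descent via the flat $K$-representation on $K^\perp/K$, transport of the pairing, bracket and anchor along basic sections, and the rank count for exactness are exactly the ingredients in those works, so there is nothing to compare.
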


We write $$\natural:K^\perp\longrightarrow \red E$$ for the vector bundle morphism covering $\varpi:M\to\cQ$ given by the quotients by $K$ and $\cF$.
We denote the anchor map inherited from $\rho:E\to TM$ by $\red\rho:\red E\to T\cQ$ through the commutative diagram
 \[
\begin{tikzcd}
K^\perp \arrow{rr}{\rho} \arrow[swap]{dd}{\natural\,} & &  TM \arrow[]{dd}{\varpi_*} \\ & & \\
\red E \arrow[]{rr}{\red\rho} & & T\cf Q 
\end{tikzcd}
\]

Reducibility of an exact Courant algebroid also plays a role in the choice of a splitting that is compatible with the reduction procedure.

\begin{definition}\label{defn:adaptedsplittings}
Let $E$ be an exact Courant algebroid over $M$ with an isotropic subbundle $K$. A splitting $\sigma \colon  TM \to E$ is \emph{adapted to $K$} if
\begin{enumerate}[label= (\alph{enumi}), labelwidth=0pt]
\item the image of $\sigma$ is isotropic,\\[-3mm]
\item $\sigma(TM)\subset K^\perp \ ,$\\[-3mm]
\item $\sigma(X)\in \mathsf{\Gamma}_{\mathrm{bas}}(K^\perp)$ for any vector field $X$ on $M$ which is projectable to $\cQ$.
\end{enumerate} 
\end{definition}

If $\sigma$ is an adapted splitting, then
\begin{align}
    \sigma(T\cF) = \sigma\big(\rho(K)\big)\subseteq K \ ,
\end{align}
by~\cite[Remark~5.2]{Zambon2008reduction}.
The condition of having enough basic sections for the reduction to be possible is closely related to the existence of adapted splittings: by \cite[Proposition 5.5]{Zambon2008reduction}, splittings adapted to $K$ exist if and only if $\mathsf{\Gamma}_{\rm{bas}}(K^\perp)$ spans $K^\perp$ pointwise. In that case a splitting $\sigma$ induces an isotropic splitting $\red\sigma:T\cQ\to\red E$ of the reduced Courant algebroid $\red E$ over $\cQ$ by~\cite[Proposition~5.7]{Zambon2008reduction}.

\medskip

\subsection{Courant Algebroid Relations}\label{ssec:CArelations}~\\[5pt]
 We give a brief introduction to Courant algebroid relations, which were first introduced in \cite{LiBland2009, Li-Bland:2011iaz}. For more details, see \cite{Vysoky2020hitchiker}. 

For two Courant algebroids $E_1$ and $E_2$ over manifolds $M_1$ and $M_2$, respectively, the product $(E_1\times E_2,\llbracket\,\cdot\,,\,\cdot\,\rrbracket ,\langle\,\cdot\,,\,\cdot\,\rangle,\rho)$ is a Courant algebroid over $M_1\times M_2$, with the Courant algebroid structures defined by
\begin{align}
    \rho(e_1, e_2) & \coloneqq (\rho_{1}(e_1), \rho_{2}(e_2)) \ , \\[4pt] \ip{(e_1, e_2), (e_1', e_2')} & \coloneqq \ip{e_1, e_1'}_{1} \circ {\rm pr}_1 + \ip{e_2, e_2'}_{2} \circ {\rm pr}_2 \ , \label{eqn:pairingprod} \\[4pt]
    \llbracket(e_1,e_2),(e_1',e_2')\rrbracket &\coloneqq (\llbracket e_1,e_1'\rrbracket_{1},\llbracket e_2, e_2'\rrbracket_{2}) \ ,
\end{align}
where ${\rm pr}_i\colon M_1\times M_2 \to M_i$ are the projection maps for $i=1,2$; for the sake of brevity, we will usually not write the projections explicitly in the following.
For a Courant algebroid $E$, denote by $\overline{E}$ the Courant algebroid $(E,\llbracket\,\cdot\,,\,\cdot\,\rrbracket_E ,-\langle\,\cdot\,,\,\cdot\,\rangle_E,\rho_E)$.

\begin{definition} \label{def:CArel}
Let $E_1$ and $E_2$ be Courant algebroids over $M_1$ and $M_2$ respectively.

A \emph{Courant algebroid relation from $E_1$ to $E_2$}, denoted $R\colon  E_1 \dashrightarrow E_2,$ is an isotropic involutive subbundle $R \subseteq E_1 \times \overline{E}_2$ supported on a submanifold $C\subseteq M_1 \times M_2$ such that $\rho(R^\perp) \subset TC.$

If $C$ is the graph $\gr(\varphi)$ of a smooth map $\varphi \colon M_1 \to M_2$, then $R$ is a \emph{Courant algebroid morphism over $\varphi$}, denoted $R\colon E_1 \rightarrowtail E_2$.

If $R$ is further the graph $\gr(\boldsymbol\Phi)$ of a vector bundle map $\boldsymbol\Phi \colon E_1 \to E_2$ covering $\varphi\colon M_1 \to M_2$, then $\boldsymbol\Phi$ is a \emph{classical Courant algebroid morphism over $\varphi$}.

By viewing a Courant algebroid relation $R$ as a subset of $E_2 \times \overline{E}_1$, we obtain the \emph{transpose relation} $R^\top\colon E_2 \dashrightarrow E_1$, whose support is denoted $C^\top\subseteq M_2\times M_1$, and similarly the \emph{transpose morphism} $R^\top\colon E_2 \rightarrowtail E_1$ when $\varphi$ is a diffeomorphism. 
\end{definition}

In this paper we will mostly consider Courant algebroid relations that are maximally isotropic, i.e.~Dirac structures, as in \cite{LiBland2009, Li-Bland:2011iaz}. Motivations for the more general Definition \ref{def:CArel} can be found in \cite{Vysoky2020hitchiker}.

\begin{example}
    Let $C \subset M_1 \times M_2$ be a submanifold and define the subbundle of $T^* (M_1 \times M_2)$ supported on $C$ given by
    \begin{align}
        \overline{\ann}(TC) = \set{(\alpha_1, -\alpha_2) \in T^*M_1 \times T^*M_2 \ | \ (\alpha_1, \alpha_2) \in \ann(TC)}  \ .
    \end{align}
     Then 
    \begin{align}
        R\coloneqq TC \oplus \overline{\ann}(TC) \ \colon \ \IT M_1 \rel \IT M_2 
    \end{align}
    is a Courant algebroid relation between the standard Courant algebroids $\IT M_1$ and $\IT M_2$ which is also a Dirac structure. If the standard Courant algebroids are twisted by Kalb-Ramond fluxes $H_1 \in \sfOmega_{\rm cl}^3(M_1)$ and $H_2 \in \sfOmega_{\rm cl}^3(M_2),$ respectively, then $R$ is a Courant algebroid relation when $(H_1, - H_2) \rvert_C = (0,0)$.
\end{example}

Defining relations among geometric objects associated to related Courant algebroids starts at the primitive level of the submanifold $C\subseteq M_1\times M_2$.

\begin{definition}
    Two functions $f_1\in C^\infty(M_1)$ and $f_2 \in C^\infty(M_2)$ are \emph{$C$-related}, denoted $f_1 \sim_C f_2$, if $f_1(c_1) = f_2(c_2)$ for all $(c_1,c_2) \in C$, that is,
 \begin{align}
 {\rm pr}_1^*\, f_1 \big\rvert_C = {\rm pr}_2^*\, f_2 \big\rvert_C \ .   
\end{align}
\end{definition}

For brevity, we will sometimes loosely write $f_1=f_2$ for $C$-related functions when no confusion can arise. The natural extension of this notion to sections of the associated vector bundles leads to

\begin{definition}
 Let $R \colon E_1 \dashrightarrow E_2$ be a Courant algebroid relation supported on $C\subseteq M_1\times M_2$. Two sections $e_1 \in \mathsf{\Gamma}(E_1)$ and $e_2 \in \mathsf{\Gamma}(E_2)$ are $R$-\emph{related}, denoted $e_1 \sim_R e_2,$ if $(e_1, e_2) \in \mathsf{\Gamma}(E_1 \times \overline{E}_2 ; R).$ 
\end{definition}

A useful result we will make use of is

\begin{lemma} \label{lemma:relationlieder}
    If $e_1 \sim_R e_2$ and $f_1 \sim_C f_2$, then $\pounds_{\rho_1(e_1)}f_1 \sim_C \pounds_{\rho_2(e_2)}f_2$.
\end{lemma}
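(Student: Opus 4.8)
The statement $\pounds_{\rho_1(e_1)}f_1 \sim_C \pounds_{\rho_2(e_2)}f_2$ unpacks, by definition of $C$-relatedness, to the assertion that the function $\pounds_{\rho_1(e_1)}f_1$ on $M_1$ and the function $\pounds_{\rho_2(e_2)}f_2$ on $M_2$ agree when pulled back to $C$, i.e. ${\rm pr}_1^*(\pounds_{\rho_1(e_1)}f_1)\big\rvert_C = {\rm pr}_2^*(\pounds_{\rho_2(e_2)}f_2)\big\rvert_C$. The natural way to prove this is to reduce everything to vector fields on $C$ and invoke the defining property of a Courant algebroid relation, namely that $\rho(R^\perp)\subseteq TC$. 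The plan is as follows. First I would recall that $e_1 \sim_R e_2$ means $(e_1,e_2)\in\sfgamma(E_1\times\overline{E}_2;R)$, so the restriction of $(e_1,e_2)$ to $C$ lands in $R$; hence in particular $(e_1,e_2)\rvert_C \in R\rvert_C \subseteq R^\perp\rvert_C$ (every isotropic subbundle is contained in its annihilator). Applying the anchor $\rho = (\rho_1,\rho_2)$ of the product Courant algebroid $E_1\times\overline{E}_2$ and using $\rho(R^\perp)\subseteq TC$, we get that $(\rho_1(e_1),\rho_2(e_2))\rvert_C$ is a section of $TC$.

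Next I would use the fact that $TC\subseteq T(M_1\times M_2) = TM_1\times TM_2$ sits compatibly with the two projections: a tangent vector $(X_1,X_2)\in TC$ at a point $(c_1,c_2)$ is precisely one for which $X_1 = ({\rm pr}_1)_* (X_1,X_2)$ and $X_2 = ({\rm pr}_2)_*(X_1,X_2)$ are the images under the two submersions ${\rm pr}_i\rvert_C\colon C\to M_i$ of a single tangent vector to $C$. Concretely, writing $\iota\colon C\hookrightarrow M_1\times M_2$ for the inclusion, the section $(\rho_1(e_1),\rho_2(e_2))\rvert_C$ of $TC$ is $\iota$-related to the vector fields $\rho_1(e_1)$ on $M_1$ and $\rho_2(e_2)$ on $M_2$ via the respective projections. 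Then for any function $f$ on $M_1\times M_2$ and any vector field $Y$ on $C$ that is $({\rm pr}_i\rvert_C)$-related to a vector field $Z_i$ on $M_i$, one has the standard identity $\pounds_Y\big((\iota^*\circ{\rm pr}_i^*) f\big) = \iota^*\,{\rm pr}_i^*(\pounds_{Z_i} f)$. Applying this with $f = f_1$ and $i=1$, and with $f = f_2$ and $i=2$, and using the hypothesis $f_1\sim_C f_2$, i.e. ${\rm pr}_1^* f_1\rvert_C = {\rm pr}_2^* f_2\rvert_C$ (so that the same function on $C$ is being differentiated by the same vector field on $C$ on both sides), yields
\begin{align}
\iota^*\,{\rm pr}_1^*(\pounds_{\rho_1(e_1)} f_1) = \pounds_{(\rho_1(e_1),\rho_2(e_2))\rvert_C}\big({\rm pr}_1^* f_1\rvert_C\big) = \pounds_{(\rho_1(e_1),\rho_2(e_2))\rvert_C}\big({\rm pr}_2^* f_2\rvert_C\big) = \iota^*\,{\rm pr}_2^*(\pounds_{\rho_2(e_2)} f_2) \ ,
\end{align}
which is exactly $\pounds_{\rho_1(e_1)}f_1 \sim_C \pounds_{\rho_2(e_2)}f_2$.

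\textbf{Main obstacle.} The only genuinely nontrivial point is the middle step establishing that the product anchor of $(e_1,e_2)\rvert_C$ is tangent to $C$ and therefore descends to a well-defined vector field on $C$ that simultaneously projects to $\rho_1(e_1)$ and $\rho_2(e_2)$; this is where the clause $\rho(R^\perp)\subseteq TC$ in the definition of a Courant algebroid relation (Definition~\ref{def:CArel}) is used in an essential way, together with $R\subseteq R^\perp$. Everything after that is the routine functoriality of Lie derivatives under smooth maps applied to the two submersions ${\rm pr}_1\rvert_C$ and ${\rm pr}_2\rvert_C$, combined with the defining equality ${\rm pr}_1^*f_1\rvert_C = {\rm pr}_2^*f_2\rvert_C$ of $C$-related functions. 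I would take care to phrase the relatedness of vector fields correctly (a vector field on $C$ can be $({\rm pr}_i\rvert_C)$-related to a vector field on $M_i$ even though ${\rm pr}_i\rvert_C$ need not be surjective or injective in general, but here it suffices to know pointwise that the value of the $C$-vector field maps to the value of $\rho_i(e_i)$ along $C$, which is what tangency to $C$ provides).
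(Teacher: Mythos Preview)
Your proof is correct and follows essentially the same approach as the paper: both arguments hinge on the observation that $(\rho_1(e_1),\rho_2(e_2))\rvert_C$ is tangent to $C$ (which you derive from $R\subseteq R^\perp$ and $\rho(R^\perp)\subseteq TC$, while the paper simply says ``$R$ is compatible with the anchor''), and then both differentiate the equality ${\rm pr}_1^*f_1\rvert_C = {\rm pr}_2^*f_2\rvert_C$ along this tangent vector field. The only cosmetic difference is that the paper phrases the last step via the flow of $(X_1,X_2)$ and differentiates $f_1(\Phi_1^t(c_1)) = f_2(\Phi_2^t(c_2))$ at $t=0$, whereas you invoke the functoriality of the Lie derivative under the projections ${\rm pr}_i\rvert_C$; these are two ways of saying the same thing.
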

\begin{proof}
    Let $X_i = \rho_i(e_i)$ for $i=1,2$. Since $R$ is compatible with the anchor, $(X_1,X_2)_c \in T_cC$ for every $c = (c_1,c_2) \in C$. Let $(\Phi^t_1, \Phi^t_2)$ be the flow of $(X_1,X_2)$ and $\epsilon>0$ such that $(\Phi^t_1, \Phi^t_2)(c) \in C$ for $t\in(-\epsilon, \epsilon)$. Then $f_1(\Phi_1^t(c_1)) = f_2(\Phi_2^t(c_2))$ since $f_1\sim_C f_2$, and the result follows by differentiating this equality with respect to $t$ at $t=0$.
\end{proof}

\begin{example}[\textbf{Reduction of Courant Algebroids}]\label{eg:reductionrel}
Consider the reduction of Courant algebroids in Theorem \ref{thm:reduction}. This gives the Courant algebroid morphism $Q(K) \colon E \rightarrowtail {\red E}$ over $\varpi$,
where $Q(K)$ is defined pointwise by
    \begin{align}
    Q(K)_{(m,\varpi(m))}=\set{(e,\natural(e))\, | \,  e\in K^\perp_m} \ \subset \ E_m\times \red{\overline{E}}{}_{\varpi(m)} \ ,
\end{align}
for any $m \in M.$ Then $Q(K)$ is a Dirac structure in $E \times \overline{\red E}$ supported on $\gr(\varpi)\subset M \times \cQ$. Any basic function on $M$ has a $\gr(\varpi)$-related function on $\cQ$. Similarly, any basic section of $K^\perp$ is $Q(K)$-related to a section of $\red E$.
\end{example}

\begin{definition} \label{def:kernelsR}
    Let $R \colon E_1 \dashrightarrow E_2$ be a Courant algebroid relation supported on $C\subseteq M_1\times M_2$. The \emph{kernel} $\Ker(R) \subseteq {\rm pr}_1^*(E_1)$ of $R$ is given by
    \begin{align}
        \Ker(R)_{{\rm pr}_1(c)} = \set{ e_1 \in (E_1)_{{\rm pr}_1(c)} \ | \ (e_1, 0 ) \in R_c} \ ,
    \end{align}
    for all $c \in C.$  The \emph{cokernel} $\Coker(R) \subseteq {\rm pr}_2^*(E_2)$ of $R$ is the kernel of the transpose relation $R^\top \colon E_2 \dashrightarrow E_1$:
    \begin{align}
       \Coker(R)_{{\rm pr}_2(c)} = \Ker(R^\top)_{{\rm pr}_2(c)} = \set{ e_2 \in (E_2)_{{\rm pr}_2(c)} \ | \ (0, e_2) \in R_c} \ , 
    \end{align}
    for all $c \in C.$
\end{definition}

\begin{example}
In the setting of Example \ref{eg:reductionrel}, the kernel and  cokernel of $Q(K)$ are given by $$\Ker\big(Q(K)\big)_{m}= K_{m} \quad , \quad \Coker\big(Q(K)\big)_{\varpi(m)} = \set{0^{\red E}_{\varpi(m)}} \ , $$ for all $(m,\varpi(m))\in\gr(\varpi)$, where $0^{\red E}:\cQ\to\red E$ is the zero section. This mimics a quotienting behaviour by $K$ for the morphism $Q(K)$.
\end{example}

\begin{definition}
Let $R\colon E_1 \rel E_2$ and $R' \colon  E_2 \rel E_3$ be Courant algebroid relations supported on $C$ and $C'$, respectively. The \emph{composition} $R' \circ R$ is the subset of $E_1 \times \overline{E}_3$ given by
\begin{align}\label{eqn:reldefinition}
R'\circ R = \set{(e_1,e_3) \in E_1 \times \overline{E}_3 \ | \ (e_1, e_2) \in R \ , \ (e_2,e_3) \in R' \ \text{ for some } e_2\in E_2} \ .
\end{align}
\end{definition}

According to \cite{Vysoky2020hitchiker}, the composition $R' \circ R$ is a Courant algebroid relation if $R$ and $R'$ \emph{compose cleanly}, i.e. $R'\circ R$ is a submanifold of $E_1 \times \overline{E}_3$ such that the induced map\footnote{Here $p \colon E_1 \times \overline{E}_2 \times E_2 \times \overline{E}_3 \to E_1 \times \overline{E}_3$ is the projection to the first and the fourth factor of the Cartesian product, and similarly for $\pi \colon M_1 \times M_2 \times M_2 \times M_3 \to M_1 \times M_3$ below.} $p\colon R' \diamond R \to R' \circ R$
is a smooth surjective submersion, where 
    $R' \diamond R \coloneqq (R \times R') \cap \big(E_1 \times \Delta(E_2) \times \overline{E}_3\big)$ and $\Delta(E_2)$ is the diagonal embedding of $E_2$ into $\overline E_2\times E_2$. Equivalently, $C'\circ C$ is a submanifold of $M_1 \times M_3$ such that the induced map $\pi\colon C' \diamond C \to C' \circ C$ is a smooth surjective submersion and the rank of the linear map $p_c\colon (R' \diamond R)_c \to (R' \circ R)_{\pi(c)}$ is independent of $c \in C' \diamond C$; here $C'\circ C$ and $C' \diamond C$ are defined analogously to $R'\circ R$ and $R' \diamond R$, respectively. In other words, these are the necessary conditions to ensure that $R' \circ R$ is a vector subbundle of $E_1 \times \overline{E}_3$ supported on the submanifold $C' \circ C \subset M_1 \times M_3$.

\medskip

\subsection{Generalised Metrics and Isometries}~\\[5pt]
 We  introduce generalised metrics for Courant algebroids and an extended notion of isometry between generalised metrics stated in terms of Courant algebroid relations, see \cite{gualtieri:tesi, Jurco2016courant, Vysoky2020hitchiker} for more details.
 
\begin{definition}\label{defn:generalisedmetric}
A \emph{generalised metric} on a Courant algebroid $E $ is an automorphism $\tau\colon E \to E$ covering the identity with $\tau^2 = \unit$ which, together with the pairing $\ip{\,\cdot\,,\,\cdot\,}$, defines a positive-definite fibrewise metric on $E$ given by
\begin{align}\label{eqn:generalisedmetrictau}
    \cG (e, e') = \ip{e, \tau(e')} \ ,
\end{align}
for $e, e'\in E$. Equivalently, a generalised metric on $E$ is a maximally positive-definite (with respect to $\ip{\,\cdot \, , \, \cdot\,}$) subbundle $V^+ \subset E;$ the subbundles $V^+$ and $V^-:=(V^+)^\perp$ give an orthogonal decomposition $E = V^+\oplus V^-$ into the eigenbundles
\begin{align}
    V^\pm = \ker(\tau\mp\unit) \ .
\end{align}
We denote the Courant algebroid $E$ endowed with a generalised metric $\tau$ or $V^+$ by $(E, \tau)$ or $(E,V^+)$. 
\end{definition}

\begin{example}[\textbf{Exact Courant Algebroids}]
     A generalised metric $\tau$ on a (twisted) standard Courant algebroid $E=\IT M$ corresponds to a pair $(g, b)$, where $g$ is a Riemannian metric on $M$ and $b \in \sfOmega^2(M)$, see \cite[Proposition~2.40]{Garcia-Fernandez:2020ope}. In this case $$V^\pm = \gr(\pm\, g+b) \ ,$$ that is, $X+\alpha\in V^\pm$ if and only if $\alpha=\iota_X(\pm\,g+b)$. The corresponding fibrewise metric on $\IT M = TM\oplus T^*M$ is given by
     \begin{align}\label{eqn:generalisedmetric}
    \cG =
    \begin{pmatrix}
    g - b\,g^{-1}\,b & b\, g^{-1} \\
    -g^{-1}\, b & g^{-1}
    \end{pmatrix} \ .
\end{align}

Changing the splitting of an exact Courant algebroid $E\cong(\IT M,H)$ sends $b$ to $b- B$ and $H$ to $H+\de B$,
so there is a preferred identification with $b=0$. It follows that a generalised metric $\tau$ on an exact Courant algebroid $E\to M$ is equivalent to a pair $(g,\sigma)$, where $g$ is a Riemannian metric on $M$ and $\sigma:TM\to E$ is an isotropic splitting.
     We will sometimes denote this Courant algebroid and generalised metric pair by $\big(E,(g,b)\big)$, when the context is clear. If $b=0$ we denote this simply by~$(E,g)$.
\end{example}

\begin{definition}\label{defn:generalisedisometry2}
Let $R \colon E_1 \rel E_2$ be a Courant algebroid relation supported on a submanifold $C\subseteq M_1 \times M_2$. Let $V_1^+$ and $V_2^+$ be generalised metrics on $E_1$ and $E_2$ respectively, and set $R^\pm=R\cap(V_1^\pm \times V_2^\pm)\rvert_C$. Then $R$ is a \emph{generalised isometry between $V_1^+$ and $V_2^+$} if
\begin{align}\label{eqn:Rdecomp}
    R_c = R^+_c \oplus R^-_c \ ,
\end{align}
for each $c \in C$.
\end{definition}

We will work under the assumption that $R^\pm=R\cap(V_1^\pm \times V_2^\pm)\rvert_C$ are subbundles, hence the splitting \eqref{eqn:Rdecomp} holds globally.

This definition can be equivalently formulated by considering the corresponding automorphisms $\tau_1 \in {\sf Aut}(E_1)$ and $\tau_2 \in {\sf Aut}(E_2)$, and setting $\tau \coloneqq \tau_1 \times \tau_2 \in {\sf Aut}(E_1 \times E_2).$ Then a Courant algebroid relation $R \colon E_1 \rel E_2$ supported on $C \subseteq M_1 \times M_2$ is a generalised isometry if $\tau(R)= R,$ see~\cite{Vysoky2020hitchiker, DeFraja:2023fhe}. When $R=\gr(\boldsymbol\Phi)$ for a classical Courant algebroid isomorphism $\boldsymbol\Phi:E_1\to E_2$, this recovers the usual notion of isometry.

It is shown in \cite[Proposition 4.17]{DeFraja:2023fhe} that a generalised isometry $R$ has trivial kernel $\Ker(R)$ and cokernel $\Coker(R)$.

\medskip

\subsection{Geometric T-Duality}~\\[5pt] \label{subsect:geomTdual}
 Let $E$ be an exact Courant algebroid over $M$ endowed with isotropic subbundles $K_1$ and $K_2$. Here and throughout the rest of the paper we will assume that 
\begin{align} \label{eqn:rankK}
\rk (K_1) = \rk(K_2) \ .    
\end{align}
Assume as well that $K_1^\perp$ and $K_2^\perp$ have enough basic sections, and denote by $\cF_i$ the foliation of $M$ given by the integral manifolds of the distribution $\rho(K_i).$ Suppose that the leaf spaces $\cQ_i=M / \cF_i$ have a smooth structure, hence there are unique surjective submersions $\varpi_i \colon M \to \cQ_i$ for $i=1,2,$ which we summarise in the bisubmersion diagram
\begin{equation}
    \begin{tikzcd}
  &\arrow[swap]{dl}{\varpi_1} M \arrow{dr}{\varpi_2}&   \\
\cQ_1 & & \cQ_2 
 \end{tikzcd}
\end{equation}

We consider the setting in which the diagram
\begin{equation}
\begin{tikzcd}\label{cd:T-dualitycd}
  &\arrow[tail,swap]{dl}{Q(K_1)} E \arrow[tail]{dr}{Q(K_2)}&   \\
\red E{}_1 \arrow[dashed]{rr}{R}& & \red E{}_2 
 \end{tikzcd}
 \end{equation}
defines a  Courant algebroid relation 
\begin{align}
 R =  Q(K_2) \circ Q(K_1)^\top \ \colon \ \red E{}_1 \dashrightarrow \red E{}_2   
\end{align} 
supported on the submanifold 
\begin{align}
    {\red C}=\set{(\varpi_1(m),\varpi_2(m)) \, | \, m \in M} \ \subset \ \cQ_1 \times \cQ_2 \ .
\end{align}

It was shown in \cite[Theorem 5.8]{DeFraja:2023fhe} that if $\red C$ is a smooth manifold and $T\cF_1 \cap T\cF_2$ has constant rank, then $R$ is a Courant algebroid relation if and only if $K_1 \cap K_2$ has constant rank. Then $R$ is a Dirac structure  in $\red E_1 \times \overline{\red E}_2$ supported on $\red C\,$, which is given explicitly by
\begin{align}\label{eqn:T-dualRelation}
    R = \set{(\natural_{1}(e), \natural_{2}(e))\,|\, e \in K_1^\perp \cap K_2^\perp} \ ,
\end{align}
where $\natural_i \colon  K_i^\perp \to \bigr( K_i^\perp /K_i \bigl) / \cF_i = \red E{}_i$ denotes the quotient map  for $i=1,2$.
Note that there is no general result guaranteeing the smoothness of $\red C\,$; it was shown in \cite[Lemma~5.7]{DeFraja:2023fhe}  that $\red C$ is a smooth manifold if $\cQ_1$ and $\cQ_2$ are fibred over the same manifold $\cB,$ in which case $\red C$ is the fibred product $\red C = \cQ_1 \times_\cB \cQ_2.$

In this instance $R$ is said to be a \emph{T-duality relation} between the reduced Courant algebroids $\red E{}_1$ and $\red E{}_2$~\cite{DeFraja:2023fhe}. It describes a topological T-duality between the manifolds $\cQ_1$ and $\cQ_2$ (generically endowed with Kalb-Ramond fluxes). Geometric T-duality is defined in \cite{DeFraja:2023fhe} by introducing generalised metrics $ V_1^+$ and $ V_2^+$, or equivalently $\tau_1$ and $\tau_2$, into this picture.

\begin{definition} \label{def:geomTdual}
Two Courant algebroids endowed with generalised metrics $(\red E{}_1,  V_1^+)$ and $(\red E{}_2,  V_2^+)$ fitting into Diagram \eqref{cd:T-dualitycd} are \emph{geometrically T-dual} if $R$ is a generalised isometry.
\end{definition}

The problem of transporting a generalised metric $ V_1^+ \subset \red E{}_1$ to $\red E{}_2$ such that $R$ becomes a generalised isometry is addressed in \cite{DeFraja:2023fhe}. In order to discuss this here, let us introduce some preliminary notions. 

We introduce a new action 
\begin{align} 
\ad^\sigma:\sfgamma(E)\times\sfgamma(E)\longrightarrow\sfgamma(E)
\end{align}
on an exact Courant algebroid $E$ for a given splitting $\sigma \colon TM\to E$, called the \emph{$\sigma$-adjoint action}. It is defined by
\begin{align} \label{eqn:sigmatwistact}
    {\sf ad}^\sigma_e\, e' \coloneqq \llbracket e , e' \rrbracket - \rho^*\,{\sigma}^*\llbracket e , {\sigma}(\rho(e')) \rrbracket \ ,
\end{align}
for all $e, \, e' \in \sfgamma(E).$
If $X\in \sfgamma(TM)$, we denote $\ad^\sigma_X \coloneqq \ad^\sigma_{\sigma(X)}$.
This was introduced in \cite[Definition~5.21]{DeFraja:2023fhe}; the role of $\ad^\sigma_X$ for $X\in\sfgamma(TM)$ is to essentially remove the contribution from the representative $H$ of the \v{S}evera class of $E$ to the action of the Courant algebroid on itself, see~\cite[Remark~5.23]{DeFraja:2023fhe} for details. See Appendix~\ref{app:twistedaction} for the properties of the $\sigma$-adjoint action ${\sf ad}^\sigma$.
 
The subbundle $$D_1 \coloneqq T {\red\cF}{}_1 = \varpi_{1*}(T \cF_2)$$ of $T\cQ_1$, induced by the subbundle $K_2 \subset E$, is called the \emph{distribution of T-duality directions}.
It plays the role of the bundle of isometries over $\cQ_1$.

\begin{definition} \label{def:D1invariance}
    Let $\red \sigma_1 \colon T\cQ_1 \to \red E{}_1 $ be the splitting of the exact Courant algebroid $\red E{}_1$ over $\cQ_1$ induced by an isotropic  splitting $\sigma_1 \colon TM \to E$ of $E$ which is adapted to $K_1\subset E$.   
    A generalised metric $V_1^+$ on $\red E{}_1$ is \emph{invariant with respect to $D_1$}, or simply \emph{$D_1$-invariant}, if there are generators $\set{\red X{}_1,\dots, \red X{}_{r_1}}\subset \sfgamma(T\cQ_1)$ of the $C^\infty(\cQ_1)$-submodule $\sfgamma(D_1)$ such that
    \begin{align}\label{eqn:genmetricinvariance}
        \ad^{\red \sigma_1}_{\red X{}_k} (\red w{}_1 ) \ \in \ \mathsf{\Gamma}(V_1^+) \ ,
    \end{align}
    for $k=1,\dots,r_1$ and $\red w{}_1\in \mathsf{\Gamma}(V_1^+)$. The Lie subalgebra $\frk{k}_{\tau_1} \coloneqq \text{Span}_{\IR}\set{\red X{}_1,\dots, \red X{}_{r_1}}$ of $\sfgamma(T\cQ_1)$ is the \emph{isometry algebra of $D_1$} or the \emph{Lie algebra of Killing vector fields for $V_1^+$}.
\end{definition}

\begin{remark}[\textbf{Invariant Differential Forms}]
    We denote by \smash{$\sfOmega_{D_1}^\bullet(\cQ_1)$} the space of $D_1$-invariant differential forms on $\cQ_1$, i.e. forms $\red\omega \in \sfOmega^\bullet(\cQ_1)$ such that \smash{$\pounds_{\red X} \, \red \omega = 0$} for all $\red X \in \frk{k}_{\tau_1}.$ In particular, we denote by $C^\infty_{D_1}(\cQ_1)$ the space of $D_1$-invariant functions; they are identified with the basic functions $C_{\rm bas}^\infty(\cQ_1,\red\cF{}_1)$ on $\cQ_1$ with respect to the foliation $\red \cF{}_1$.
\end{remark}

\begin{definition} \label{def:compatiblesplit}
    Let $V_1^+$ be a $D_1$-invariant generalised metric on $\red E{}_1$ with isometry algebra $\frk{k}_{\tau_1}$. Adapted splittings $\sigma_i:TM\to E$ with respect to $K_i$ for $i=1,2$ are \emph{compatible} if they satisfy
    \begin{align} \label{eqn:compatiblesplits}
        \llbracket \sigma_2  (X) , e \rrbracket = \ad^{\sigma_1}_X\, e \ ,
    \end{align}
    for every $X \in  \frk{k}_2$ and $e\in \mathsf{\Gamma}(E)$, where
    \begin{align}
       \frk{k}_2 \coloneqq \varpi_{1*}^{-1}(\frk{k}_{\tau_1}) \cap \mathsf{\Gamma}\big(\rho(K_2)\big) \ .
    \end{align}
\end{definition}

Recall that the difference between any two splittings is given by $(\sigma_1 - \sigma_2)(X) = \rho^*(\iota_X B)$, for some $B \in \sfOmega^2(M)$. By~\cite[Remark~5.25]{DeFraja:2023fhe}, Equation \eqref{eqn:compatiblesplits} then yields the invariance condition $\pounds_X B = 0$ for all $X \in \frk{k}_2$. 

We can now state the main result proven in \cite{DeFraja:2023fhe}. It provides global conditions under which a generalised metric $V_1^+$ on $\red E{}_1$ can be transported to a unique generalised metric on $\red E{}_2$ such that the relation $R$ is a generalised isometry.

\begin{theorem}\label{thm:maingeneral}
    Let $E$ be an exact Courant algebroid over $M$ endowed with isotropic subbundles $K_1$ and $K_2$ of the same rank such that $K_1^\perp$ and $K_2^\perp$ have enough basic sections, giving T-duality related Courant algebroids $\red E{}_1$ and $\red E{}_2$ with T-duality relation $R$. Assume further that the bisubmersion involutivity condition
    \begin{align} \label{eqn:bisubmersioncd}
\big[\mathsf{\Gamma}\bigl(\ker(\varpi_{1*})\bigr) , \mathsf{\Gamma}\big(\ker(\varpi_{2*})\big)\big] \ \subseteq  \ \mathsf{\Gamma}\bigl(\ker(\varpi_{1*})\bigr) + \mathsf{\Gamma}\bigl(\ker(\varpi_{2*})\bigr)
\end{align}
    holds, and that compatible adapted splittings $\sigma_1$ and $\sigma_2$ are given. Finally, assume that $\red E{}_1$ is endowed with a $D_1$-invariant generalised metric $V_1^+$. 
    Then the following are equivalent:
    \begin{enumerate}[label = (\roman{enumi})]
        \item\label{item:main1} $K_2^\perp \cap K_1 \subseteq K_2 \ . $ \\[-3mm]
        \item\label{item:main2} $K_2 \cap K_1^\perp \subseteq K_1 \ . $ \\[-3mm]
        \item\label{item:main3} There exists a unique generalised metric $V_2^+$ on $\red E{}_2$ such that $R$ is a generalised isometry between $V_1^+$ and $V_2^+$, i.e. $(\red E{}_1, V_1^+)$ and $(\red E{}_2, V_2^+)$ are geometrically T-dual.
    \end{enumerate}
\end{theorem}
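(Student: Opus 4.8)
The plan is to prove the cycle $\ref{item:main1}\Leftrightarrow\ref{item:main2}$, then $\ref{item:main1}\wedge\ref{item:main2}\Rightarrow\ref{item:main3}$, and finally $\ref{item:main3}\Rightarrow\ref{item:main1}$ (whence also $\ref{item:main2}$). The equivalence $\ref{item:main1}\Leftrightarrow\ref{item:main2}$ is pointwise linear algebra. Since $K_1$ and $K_2$ are isotropic one has $K_1\cap K_2\subseteq K_i\subseteq K_i^\perp$, so $\ref{item:main1}$ is equivalent to the equality $K_1\cap K_2^\perp=K_1\cap K_2$ and $\ref{item:main2}$ to $K_2\cap K_1^\perp=K_1\cap K_2$. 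Now $K_1\cap K_2^\perp$ is the kernel of the bundle map $K_1\to K_2^*$ induced by the pairing on $E$, and $K_2\cap K_1^\perp$ is the kernel of its fibrewise transpose $K_2\to K_1^*$; a linear map and its transpose have equal rank, and $\rk(K_1)=\rk(K_2)$ by \eqref{eqn:rankK}, so these two kernels have equal rank at every point. As each contains $K_1\cap K_2$, equality on one side holds iff it does on the other, giving $\ref{item:main1}\Leftrightarrow\ref{item:main2}$.

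For $\ref{item:main3}\Rightarrow\ref{item:main1}$ and $\ref{item:main2}$: a generalised isometry has $\Ker(R)=0$ and $\Coker(R)=0$ by \cite[Proposition 4.17]{DeFraja:2023fhe}. Using the explicit description \eqref{eqn:T-dualRelation} of $R$, together with the fact — read off from the pullback diagram of Theorem~\ref{thm:reduction} — that $\natural_i$ has fibrewise kernel $K_i$, one finds that $\Coker(R)$ is the image under $\natural_2$ of $K_1\cap K_2^\perp$ and $\Ker(R)$ the image under $\natural_1$ of $K_2\cap K_1^\perp$. Their vanishing then reads $K_1\cap K_2^\perp\subseteq K_2$ and $K_2\cap K_1^\perp\subseteq K_1$, i.e.\ exactly $\ref{item:main1}$ and $\ref{item:main2}$.

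The substantial implication is $\ref{item:main1}\wedge\ref{item:main2}\Rightarrow\ref{item:main3}$. By the computation just made, $\ref{item:main1}$ and $\ref{item:main2}$ give $\Ker(R)=\Coker(R)=0$; since $R$ is a Dirac structure in $\red E{}_1\times\overline{\red E}{}_2$, triviality of both kernel and cokernel forces ${\rm pr}_1|_R$ and ${\rm pr}_2|_R$ to be fibrewise isomorphisms onto $\red E{}_1$ and $\red E{}_2$ (in particular $\dim\cQ_1=\dim\cQ_2$), so over each $c=(q_1,q_2)\in\red C$ the fibre $R_c$ is the graph of a linear isomorphism $\Psi_c\colon(\red E{}_1)_{q_1}\to(\red E{}_2)_{q_2}$. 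As $R$ is isotropic in $\red E{}_1\times\overline{\red E}{}_2$, $\Psi_c$ intertwines the pairings; hence it sends $(V_1^+)_{q_1}$ and $(V_1^-)_{q_1}$ to, respectively, a positive-definite and a negative-definite subspace of $(\red E{}_2)_{q_2}$, mutually orthogonal. Granting for the moment that $V_2^\pm|_{q_2}:=\Psi_c((V_1^\pm)_{q_1})$ is independent of the choice of $c$ over $q_2$ and varies smoothly, we obtain $V_2^-=(V_2^+)^\perp$ and $V_2^+\oplus V_2^-=\red E{}_2$, so $V_2^+$ is a generalised metric, while $R=\set{(e_1,\Psi(e_1))\,|\,e_1\in\red E{}_1}$ splits along the $V_1^\pm$-eigenbundles as $R^+\oplus R^-$, making $R$ a generalised isometry between $V_1^+$ and $V_2^+$. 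Uniqueness is formal: for any competing metric $\widetilde V_2^+$ making $R$ an isometry, decomposing an arbitrary $(e_1,\Psi(e_1))\in R$ with $e_1\in V_1^+$ according to the resulting splitting of $R$ and projecting to $\red E{}_1$ forces, via $V_1^+\cap V_1^-=\set{0}$ and $\Coker(R)=0$, that $\Psi(e_1)\in\widetilde V_2^+$; so $V_2^+\subseteq\widetilde V_2^+$, and equality of ranks yields $V_2^+=\widetilde V_2^+$.

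The one non-formal point — the main obstacle, and the place where $D_1$-invariance of $V_1^+$, compatibility of the adapted splittings $\sigma_1,\sigma_2$, and the bisubmersion involutivity \eqref{eqn:bisubmersioncd} all enter — is the descent of $V_2^+$ from $\red C$ to $\cQ_2$: the fibre of $\red C\to\cQ_2$ over $q_2$ projects onto a leaf $\ell$ of the foliation $\red\cF{}_1$ of $\cQ_1$ with tangent distribution $D_1$, and one must show that $\Psi_c((V_1^+)_{q_1})$ is the same subspace of $(\red E{}_2)_{q_2}$ for every $c=(q_1,q_2)$ with $q_1\in\ell$. The variation of $\Psi_c$ along $\ell$ is controlled by the $\sigma$-adjoint action $\ad^{\red\sigma_1}_{\red X}$ for $\red X$ in the isometry algebra $\frk{k}_{\tau_1}$ spanning $\sfgamma(D_1)$ — condition \eqref{eqn:compatiblesplits} ensuring that the two sides of $R$ transform consistently, and \eqref{eqn:bisubmersioncd} that the two foliations fit together so that $\red C$ fibres as needed — and the defining property \eqref{eqn:genmetricinvariance} of $D_1$-invariance says precisely that $V_1^+$ is stable under these actions, whence $\Psi_c(V_1^+)$ is constant along $\ell$; smoothness of the descended subbundle then follows from that of $R$ and of the projections. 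A more computational route to this step, convenient in examples, fixes compatible splittings, writes $V_1^\pm=\gr(\pm\,g_1)$ for a $D_1$-invariant metric $g_1$ on $\cQ_1$, and produces $V_2^+$ explicitly as the graph of a Buscher-type tensor built from $g_1$ and representatives of the \v{S}evera classes, making smoothness and constant rank manifest. I expect essentially all the difficulty of the theorem to be concentrated in this descent; everything else is linear algebra and bookkeeping.
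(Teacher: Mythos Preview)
The paper does not give a self-contained proof of this theorem; it simply cites \cite[Theorem~5.27]{DeFraja:2023fhe}. From the proof of the local version (Theorem~\ref{thm:mainlocal}) one can infer the structure of the cited argument: rather than working on the reduced spaces, one lifts $V_1^+$ to a \emph{pre-transverse generalised metric} $W_2\subset E$ upstairs on $M$, and the crux is to verify that $W_2$ is $K_2$-transverse, i.e.\ $\llbracket\sfgamma(K_2),\sfgamma(W_2)\rrbracket\subset\sfgamma(W_2)$; this is precisely where $\frk{k}_{\tau_1}$ and the compatibility of splittings enter (via $\sigma_2(\frk{k}_2)$ spanning $K_2$). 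Once transversality holds, $W_2$ reduces to the desired $V_2^+$ on $\red E{}_2$, and the conditions \ref{item:main1}--\ref{item:main2} enter through the duality splitting of \cite[Proposition~5.31]{DeFraja:2023fhe}.

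Your route is genuinely different: you stay downstairs, observe that under \ref{item:main1}--\ref{item:main2} the Dirac structure $R$ is fibrewise the graph of a pairing-preserving isomorphism $\Psi_c$, push $V_1^+$ across, and then argue descent along the fibres of $\red C\to\cQ_2$. The linear-algebraic parts --- the equivalence $\ref{item:main1}\Leftrightarrow\ref{item:main2}$ via transpose ranks, the identification of $\Ker(R)$ and $\Coker(R)$, and uniqueness --- are correct and cleaner than what one would extract from the cited proof. However, the descent step carries all the weight, and you only sketch it: the claim that ``the variation of $\Psi_c$ along $\ell$ is controlled by $\ad^{\red\sigma_1}_{\red X}$'' is exactly the statement that needs proof, and making it rigorous essentially forces you to lift sections back to $K_1^\perp\cap K_2^\perp\subset E$ and check $K_2$-basicness there --- which is the content of the transverse-metric argument (compare Lemma~\ref{lem:invsectionsarebasic}, which carries out precisely this lift for sections rather than metrics). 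So your approach is conceptually transparent and would make a good alternative exposition, but as written the substantive step is asserted rather than demonstrated, and filling it in converges on the paper's method.
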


The condition \eqref{eqn:bisubmersioncd} implies that there are unique foliations $\red \cF{}_1$ of $\cQ_1$, whose tangent distribution is $D_1$, and $\red \cF{}_2$ of $\cQ_2$ which are Hausdorff-Morita equivalent, see \cite{Garmendia2018}.

\begin{remark} \label{rmk:tdualsplitiso}
 Consider the setting of Theorem~\ref{thm:maingeneral} with the isomorphism 
 \begin{align}
 \Psi_1 = \sigma_1 \oplus \tfrac{1}{2}\,\rho^* \colon TM \oplus T^*M \longrightarrow E \ .
 \end{align}
 Recall that the difference between two isotropic splittings $\sigma_1$ and $\sigma_2$ is given by $(\sigma_1 - \sigma_2)(X)= \rho^*(\iota_X B)$, for some $B \in \sfOmega^2(M)$ and for all $X \in \sfgamma(TM)$. Since $\sigma_i$ is adapted to $K_i$, it then follows that
 \begin{align}
 \Psi_1^{-1}(K_1) = T\cF_1 \quad , \quad \Psi_1^{-1}(K_2)= \e^{-B}\,(T\cF_2) \ . 
 \end{align}
Thus, as shown in \cite[Theorem 5.44]{DeFraja:2023fhe}, given a $D_1$-invariant generalised metric on $\red E{}_1$, a T-dual generalised metric on $\red E{}_2$ exists if the conditions $\pounds_X B = 0$, for all $X \in \frk{k}_2$, and $T\cF_1 \cap T\cF_2 \subseteq \ker(B^\flat)$ hold.
\end{remark}

The classic example of the construction of Theorem \ref{thm:maingeneral} is given by the Cavalcanti-Gualtieri formulation of T-duality~\cite{cavalcanti2011generalized}, see \cite[Subsection 6.1]{DeFraja:2023fhe}. 

\medskip

\subsection{Local Geometric T-Duality}~\\[5pt]
It is possible to generalise the considerations of Subsection~\ref{subsect:geomTdual} to the case in which a generalised metric has only locally defined Killing vectors. 

\begin{definition}\label{def:localinvmetric}
    A generalised metric $V_1^+$ on $\red E{}_1$ is \emph{locally $D_1$-invariant} if there are local charts $\set{U^{(i)}}_{i\in I}$ covering $\cQ_1$ and generators \smash{$\set{\red X{}_1^{(i)}, \dots, \red X{}_{r_1}^{(i)}}\subset \sfgamma(U^{(i)},T\cQ_1) $} of $\sfgamma(U^{(i)},D_1)$ such that
    \begin{align}
        \ad_{\red X{}_k^{(i)}}^{\red \sigma_1}(\red w{}_1) \ \in \ \sfgamma\big(U^{(i)},V_1^+\big) \ ,
    \end{align}
    for every $k = 1,\dots,r_1$, $\red w{}_1 \in \sfgamma(U^{(i)},V_1^+)$ and $i \in I$.
\end{definition}

Note that $\text{Span}_{\IR}\set{\red X{}_1^{(i)}, \dots, \red X{}_{r_1}^{(i)}} = \text{Span}_{\IR}\set{\red X{}_1^{(j)}, \dots, \red X{}_{r_1}^{(j)}}$ on overlaps $U^{(ij)}=U^{(i)}\cap U^{(j)}$. Hence the isometry algebra $\frk{k}_{\tau_1}$ is well-defined, and so is $\frk{k}_2 \coloneqq (\varpi_1)_*^{-1}(\frk{k}_{\tau_1}) \, \cap \, \mathsf{\Gamma}(\rho(K_2))$. 
Denote by \smash{$\frk{k}^{(i)}_{\tau_1} \coloneqq \frk{k}_{\tau_1}|_{U^{(i)}} = \text{Span}_{\IR}\set{\red X{}_1^{(i)}, \dots, \red X{}_{r_1}^{(i)}}$}. If $U^{(i)}$ is a cover of $\cQ_1$, then $\cU^{(i)}\coloneqq \varpi_1^{-1}(U^{(i)})$ is a cover of $M$, hence we can also define \smash{$\frk{k}^{(i)}_2 \coloneqq \frk{k}_2|_{\cU^{(i)}}$}.

Similarly, we can speak about when splittings are locally compatible.  

\begin{definition}
    Adapted splittings $\sigma_l:TM\to E$ with respect to $K_l$ for $l=1,2$ are \emph{locally compatible} if they satisfy
    \begin{align}
        \llbracket \sigma_2(X^{(i)}) , e \rrbracket = \ad_{X^{(i)}}^{\sigma_1}\, e \ ,
    \end{align}
    for every $X^{(i)} \in \frk{k}^{(i)}_2$, $e \in \sfgamma(E)$ and $i\in I$.
\end{definition}

\begin{theorem}\label{thm:mainlocal}
    Theorem \ref{thm:maingeneral} holds in the local case, i.e. we can replace compatibility of splittings and $D_1$-invariance with their local counterparts.
\end{theorem}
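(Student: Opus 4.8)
The plan is to obtain Theorem~\ref{thm:mainlocal} from Theorem~\ref{thm:maingeneral} by a localisation-and-gluing argument, exploiting the fact that everything in Theorem~\ref{thm:maingeneral} apart from the invariance and compatibility hypotheses is already local in nature. First I would observe that conditions~\ref{item:main1} and~\ref{item:main2} are pointwise statements about the subbundles $K_1,K_2\subseteq E$ and mention neither the generalised metric nor the splittings; hence each of them holds globally if and only if it holds over every member of an open cover, and their mutual equivalence survives the passage from global to local hypotheses. So it suffices to establish the equivalence of~\ref{item:main1} (equivalently~\ref{item:main2}) with the local form of~\ref{item:main3} by localising, applying Theorem~\ref{thm:maingeneral} chart by chart, and then gluing the outputs into a global statement.

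Second, I would fix a cover $\set{U^{(i)}}_{i\in I}$ of $\cQ_1$ as in Definition~\ref{def:localinvmetric}, refining it so that it is adapted to the Hausdorff--Morita equivalent foliations $\red\cF{}_1$ and $\red\cF{}_2$ furnished by the bisubmersion involutivity condition~\eqref{eqn:bisubmersioncd}; concretely, I want each $U^{(i)}$ to be $\red\cF{}_1$-saturated and the corresponding open set $W^{(i)}\subseteq\cQ_2$ to be $\red\cF{}_2$-saturated with $\cU^{(i)}\coloneqq\varpi_1^{-1}(U^{(i)})=\varpi_2^{-1}(W^{(i)})$. Over such a chart the local $D_1$-invariance of $V_1^+$ and the local compatibility of $\sigma_1,\sigma_2$ become the honest hypotheses of Theorem~\ref{thm:maingeneral}, while the remaining hypotheses (equal ranks, enough basic sections, the condition~\eqref{eqn:bisubmersioncd}) restrict to opens unchanged. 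Theorem~\ref{thm:maingeneral} then applies to the restricted T-duality data and gives: \ref{item:main1} holds over $\cU^{(i)}$ if and only if there is a unique generalised metric $V_2^{+,(i)}$ on $\red E{}_2\rvert_{W^{(i)}}$ making $R\rvert_{W^{(i)}}$ a generalised isometry between $V_1^+\rvert_{U^{(i)}}$ and $V_2^{+,(i)}$.

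Third comes the gluing. Assuming~\ref{item:main1}, so that each $V_2^{+,(i)}$ exists, I would show $V_2^{+,(i)}=V_2^{+,(j)}$ on $W^{(i)}\cap W^{(j)}$. Since a generalised isometry has trivial kernel and cokernel~\cite[Proposition~4.17]{DeFraja:2023fhe}, for each $c=(q_1,q_2)\in\red C$ the fibre $R_c$ is the graph of a linear isomorphism $\phi_c\colon(\red E{}_1)_{q_1}\to(\red E{}_2)_{q_2}$, and (by a rank count, using that $\red E{}_1$ and $\red E{}_2$ have equal rank) the generalised isometry condition is equivalent to $\phi_c\big((V_1^+)_{q_1}\big)=(V_2^{+,(i)})_{q_2}$ for every such $c$ with $q_1\in U^{(i)}$. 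For $q_2\in W^{(i)}\cap W^{(j)}$ the $\red\cF{}_1$-leaf $L=\varpi_1\big(\varpi_2^{-1}(q_2)\big)$ is contained in $U^{(i)}\cap U^{(j)}$ by saturation, so choosing any $c=(q_1,q_2)\in\red C$ with $q_1\in L$ gives $(V_2^{+,(i)})_{q_2}=\phi_c\big((V_1^+)_{q_1}\big)=(V_2^{+,(j)})_{q_2}$. Hence the $V_2^{+,(i)}$ patch to a global generalised metric $V_2^+$ on $\red E{}_2$, which makes $R$ a generalised isometry because that property is checked pointwise on $\red C$; uniqueness of $V_2^+$ follows since any global solution restricts over each $W^{(i)}$ to a local solution, which is unique by Theorem~\ref{thm:maingeneral}. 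Running these implications backwards, a global $V_2^+$ making $R$ a generalised isometry restricts over each $U^{(i)}$ to a solution of the local problem, forcing~\ref{item:main1} over each $\cU^{(i)}$ and hence globally; this closes the chain \ref{item:main1}$\Leftrightarrow$\ref{item:main2}$\Leftrightarrow$\ref{item:main3} in the local setting.

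The main obstacle I anticipate is not any single step but the bookkeeping of the reduction: one must check that restricting the whole T-duality package --- the exact Courant algebroid $E$, its isotropic subbundles $K_1,K_2$, the reduced Courant algebroids $\red E{}_1,\red E{}_2$, the relation $R$, and the compatible adapted splittings --- to the open sets $U^{(i)}$ and $W^{(i)}$ really produces data to which Theorem~\ref{thm:maingeneral} literally applies, and in particular arranging the cover so that $\varpi_1^{-1}(U^{(i)})$ is simultaneously $\cF_1$- and $\cF_2$-saturated (this is where the Hausdorff--Morita equivalence from~\eqref{eqn:bisubmersioncd} is used). Once the cover is adapted in this way the gluing is immediate from the uniqueness clause of Theorem~\ref{thm:maingeneral} and saturation, with no monodromy argument along leaves required. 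An equivalent route, which I would mention as a remark, is simply to rerun the proof of Theorem~\ref{thm:maingeneral} verbatim, replacing each use of a global Killing vector field by its restriction to the chart on which it is defined; the localisation-and-gluing argument above is the version that treats Theorem~\ref{thm:maingeneral} as a black box.
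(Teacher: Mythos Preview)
Your alternative route at the very end --- rerun the proof of Theorem~\ref{thm:maingeneral} verbatim, noting that the invariance and compatibility hypotheses only enter at a step that can be checked locally --- is exactly what the paper does. The paper's proof is a single sentence: the isometry algebra $\frk{k}_{\tau_1}$ only appears when one verifies that the pre-transverse generalised metric $W_2$ constructed in \cite[Theorem~5.27]{DeFraja:2023fhe} is in fact $K_2$-transverse, i.e.\ that $\llbracket\sfgamma(K_2),\sfgamma(W_2)\rrbracket\subset\sfgamma(W_2)$; this is a local condition, and $\sigma_2(\frk{k}_2)$ spans $K_2$ locally, so local Killing generators suffice.

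Your primary strategy, by contrast, attempts to treat Theorem~\ref{thm:maingeneral} as a black box over each $U^{(i)}$ and then glue. The obstacle you flag is real and, I think, not fully resolved in your sketch. To apply Theorem~\ref{thm:maingeneral} on a chart you need the full reduction package over $\cU^{(i)}=\varpi_1^{-1}(U^{(i)})$, which forces $\cU^{(i)}$ to be $\cF_2$-saturated, i.e.\ $U^{(i)}$ to be $\red\cF{}_1$-saturated. But the cover in Definition~\ref{def:localinvmetric} is the one on which the local Killing generators $\red X{}_k^{(i)}$ are defined, and saturating $U^{(i)}$ with respect to $\red\cF{}_1$ can carry you outside that domain; ``refining'' the cover goes the wrong direction. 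The Hausdorff--Morita equivalence relates the leaf spaces of $\red\cF{}_1$ and $\red\cF{}_2$ but does not by itself supply a common cover by sets that are simultaneously saturated and small enough to support the local Killing data --- indeed, nothing in the hypotheses prevents leaves of $\red\cF{}_1$ from being noncompact or from failing to sit inside any single $U^{(i)}$. So the black-box localisation does not go through without additional assumptions on the foliation, whereas the paper's approach avoids the issue entirely by opening the box.
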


\begin{proof}
    The proof is exactly the same as in \cite[Theorem 5.27]{DeFraja:2023fhe}: the algebra $\frk{k}_{\tau_1}$ only enters when one shows that the pre-transverse generalised metric $W_2$ is in fact a $K_2$-transverse generalised metric. However, the condition $\llbracket K_2, W_2 \rrbracket \subset W_2$ can be checked locally, and $\sigma_2(\frk{k}_2)$ spans $K_2$ locally.
\end{proof}

\medskip

\subsubsection{Buscher Rules}~\\[5pt]
    Let $\cB$ be the leaf space of the foliated manifold $(\cQ_1, \red \cF{}_1)$, and suppose that a trivialising open cover $U^{(i)}$ for $\cQ_1 \to \cB$ is adapted to the foliation. In the case that $\varpi_2 ( \cU^{(i)} )$ defines smooth trivialising charts for $\cQ_2$ adapted to $\cF_2$, we get smooth charts for $\red C$ trivialising $R$, and a local description of T-duality. The Buscher rules can then be obtained as in \cite[Remark~6.15]{DeFraja:2023fhe}. 
    
    \begin{example}[\textbf{Circle Bundles}]\label{rmk:buscher}
    Let $\cQ_1$ and $\cQ_2$ be T-dual circle bundles over the same base $\cB$, with trivialising charts $\cV^{(i)}$ for $\cB$, and let $U^{(i)} \coloneqq \cV^{(i)} \times \sfS^1$. Let $\theta_1$ and $\theta_2$ be connections on $\cQ_1$ and  $\cQ_2$, respectively. Set $M = \cQ_1 \times_{\cB} \cQ_2$ with the two-form $$B = \varpi_1^* \theta_1 \wedge \varpi_2^* \theta_2 \ , $$ 
    which satisfies the conditions of Remark~\ref{rmk:tdualsplitiso}.
    
    Then we get a T-duality relation 
    \begin{align}
    R \colon (\IT \cQ_1, H_1) \rel (\IT \cQ_2, H_2)
    \end{align}
    supported on $\red C = M$, where $\varpi_1^*H_1 - \varpi_2^* H_2 = \de B$.  
    Locally, if $H_1 = \de b_1$ for a Kalb-Ramond field $b_1 \in \sfOmega^2(U^{(i)})$, then $R$ is given by\footnote{Note that $\cV^{(i)}$ also trivialise the support $\red C = M$ of the relation $R$.}
    \begin{align}\label{eqn:Rlocally}
    \sfgamma\big(\cV^{(i)},R\big) = \text{Span}_{C^\infty(\cV^{(i)})}\set{(\partial_\alpha, \partial_\alpha - (b_1)_{\alpha \theta}\,\partial_{\theta_2})\,,\, (\de x^\alpha, \de x^\alpha)\,,\, (\partial_{\theta_1}, \theta_2)\,,\, (\theta_1, \partial_{\theta_2})} \ ,
\end{align}
where $\partial_\alpha$ with $\alpha=1,\dots, \dim \cB$ are horizontal coordinate vector fields on $U^{(i)}$, \smash{$\partial_{\theta_l}$} are vector fields dual to $\theta_l$ for $l=1,2$, while $\theta$ labels the coordinates of the circle direction for both $\cQ_1$ and $\cQ_2$.

This yields the Buscher rules: the components of the T-dual generalised metric $(g_2, b_2)$ on $\cQ_2$ can be written in terms of the $\sfS^1$-invariant generalised metric $(g_1, 0)$ on $\cQ_1$ as
\begin{align}\label{eqn:buscherrules}
\begin{split}
    (g_2)_{\theta \theta} = \frac{1}{(g_1)_{\theta \theta}} \ \ , & \ \ 
    (g_2)_{\alpha  \theta} = \frac{(b_1)_{\alpha \theta}}{(g_1)_{\theta \theta}} \ \ , \ \ (g_2)_{\alpha\beta} = (g_1)_{\alpha\beta} - \frac{ (g_1)_{\alpha \theta}\, (g_1)_{\beta \theta} - (b_1)_{\alpha \theta}\, (b_1)_{\beta \theta}}{(g_1)_{\theta \theta}}  \ , \\[4pt]
    (b_2)_{\alpha \theta} &= \frac{(g_1)_{\alpha\theta}}{(g_1)_{\theta \theta}} \ \ , \ \ (b_2)_{\alpha \beta} = (b_1)_{\alpha\beta} - \frac{(b_1)_{\alpha\theta}\, (g_1)_{\beta \theta} - (b_1)_{\beta \theta}\, (g_1)_{\alpha\theta}}{(g_1)_{\theta \theta}} \ .
    \end{split}
\end{align}
\end{example}

\section{Divergence Operators and Courant Algebroid Relations} \label{sect:CAreldiv}

In this section we shall discuss how Courant algebroid relations are instrumental to defining a robust notion of morphisms between divergence operators on Courant algebroids.  We will define the notion of related divergence operators on Courant algebroids admitting a Courant algebroid relation in the sense of \cite{LiBland2009, Vysoky2020hitchiker}. We will further apply this definition to the setting of T-duality given in \cite{DeFraja:2023fhe} and Poisson-Lie T-duality as described in \cite{Severa-letters, Vysoky2020hitchiker, Severa:2018pag}. 

\medskip

\subsection{Divergence Operators on Courant Algebroids}~\\[5pt]
Let us recall the definition of divergence operator on a Courant algebroid \cite{Alekseev2001, Garcia-Fernandez:2016ofz}. For further details on divergence operators on Courant algebroids, see \cite{Severa:2018pag, Garcia-Fernandez:2016ofz, Garcia-Fernandez:2020ope} and references therein.

\begin{definition}
Let $E$ be a Courant algebroid over a manifold $M$. A \emph{divergence operator}, or simply a \emph{divergence}, on $E$ is an $\IR$-linear map
\begin{align}
    \div \colon \sfgamma(E) \longrightarrow C^\infty(M)
\end{align}
satisfying an anchored Leibniz rule
\begin{align} \label{eqn:leibnizdiv}
    \div(f \, e) = f \, \div \, e + \pounds_{\rho(e)} f \ ,
\end{align}
for all $e \in \sfgamma(E)$ and $f \in C^\infty(M).$
\end{definition}

Divergence operators on a Courant algebroid $E$ form an affine space over $\sfgamma(E)$: given any two divergences $\div$ and $\div'$ on $E$, there exists a uniquely determined section $e_\div \in \sfgamma(E)$ such that
\begin{align}
    \div \, e - \div' \, e = \ip{e_\div, e} \ , 
\end{align}
for all $e \in \sfgamma(E).$

\begin{example}[\textbf{Densities}] \label{eg:mudiv}
    Let $E$ be a Courant algebroid over $M$ and let $\mu$ be a nowhere-vanishing density on $M$. The divergence operator $\div_\mu$ on $E$ associated with $\mu$ is defined by
    \begin{align}\label{eqn:mudivergences}
        \div_{\mu} \, e \coloneqq \mu^{-1}\,\pounds_{\rho(e)}\mu \ ,
    \end{align}
    for all $e \in \sfgamma(E)$.
\end{example}

\begin{remark} \label{rmk:associatedsection}
    Consider a Courant algebroid $E$ over $M$ endowed with a divergence operator  $\div$. Suppose that $M$ admits a nowhere-vanishing density $\mu$. Then there is a section $e_{\div} \in \sfgamma(E)$ induced by the difference
\begin{align}
    \div \, e - \div_{\mu}\, e =\ip{e_{\div}, e} \ ,
\end{align}
for all $e \in \sfgamma(E).$ The adjoint action $${\sf ad}_{{\div}} \coloneqq \cbrak{e_{\div} \, , \, \cdot\,} \ : \ \sfgamma(E)\longrightarrow\sfgamma(E)$$ associated with the divergence operator $\div$ is independent of the choice of density $\mu,$ see \cite{Severa:2018pag, Garcia-Fernandez:2016ofz}.
\end{remark}

\medskip

\subsubsection{The Dilaton}~\\[5pt] \label{sub:dilaton}
One application of divergence operators is to provide a definition of the missing piece of a generalised string background $(g,H)$ on a Courant algebroid, namely the dilaton. We follow \cite{Severa:2018pag, Garcia-Fernandez:2020ope} and \cite[Definition 3.7]{Streets:2024rfo} for the basic definitions and the main properties of dilatons.

\begin{definition}
A \emph{dilaton} on an exact Courant algebroid $E$ over $M$ with generalised metric $\tau$ is a divergence $\text{div}_\phi $ such that
\begin{align}
    \text{div}_\phi \, e - \div_{\mu_g} \, e = \langle \rho^*(\de \phi) , e \rangle \ ,
\end{align}
for some $\phi \in C^\infty(M)$ and for all $e \in \sfgamma(E),$ where $(g, \sigma)$ is the pair of a Riemannian metric $g$ with induced volume form $\mu_g$ on $M$ and an isotropic splitting $\sigma:TM\to E$ associated with the generalised metric $\tau$. The function $\phi$ is a \emph{dilaton field}.
\end{definition}

\begin{example}
Let $H \in \sfOmega^3_{\rm cl}(M)$ and let $\IT M$ be an $H$-twisted standard Courant algebroid over an oriented manifold $M,$  endowed with a generalised metric given by $V^+= \gr(g)$ where $g$ is a Riemannian metric on $M$. Any divergence $\div$ on $\IT M$ is associated with some section $X_\div + \alpha_\div \in \sfgamma(\IT M)$ by 
\begin{align}
    \div - \div_{\mu_g} = \ip{ X_\div+\alpha_\div , \, \cdot \, } \ ,
\end{align}
where $\mu_g$ is the volume form associated to $g$. Then the divergence operator $\div$ is \emph{compatible} with $V^+$, i.e. $\mathsf{ad}_\div(\sfgamma(V^+)) \subseteq \sfgamma(V^+)$, if and only if
\begin{align}
    \pounds_{X_\div} g = 0 \quad , \quad \de \alpha_\div - \iota_{X_\div} H =0 \ .
\end{align}

If we take $$\div = \div_{\e^{-2\phi}\, \mu_g} \ , $$  where $\phi \in C^\infty(M),$ then $$X_\div = 0 \quad , \quad \alpha_\div = -2\,\de \phi 
 \ , $$ which are the usual equations for the appearance of the dilaton field $\phi$ in supergravity. Thus $ \div_{\e^{-2\phi}\,\mu_g} = \div_{-2\phi}$ in this example.
\end{example}

\medskip

\subsection{Related Divergence Operators}~\\[5pt]
We will now give a notion of when two divergence operators are related by a Courant algebroid relation.

\begin{definition} \label{def:relateddiv}
Let ${\rm div}_1$ and ${\rm div}_2$ be divergence operators on Courant algebroids $E_1$ and $E_2$ over manifolds $M_1$ and $M_2$, respectively. Let $R \colon E_1 \rel E_2$ be a Courant algebroid relation supported on a submanifold $C\subseteq M_1\times M_2$. Then $\div_1$ and $\div_2$ are \emph{$R$-related}, denoted ${\rm div}_1 \sim_R {\rm div}_2$, if ${\rm div}_1\, e_1 \sim_C {\rm div}_2\, e_2$ for every $(e_1,e_2) \in \sfgamma(E_1 \times \overline E_2 ;R)$.
\end{definition}

Let us check that this notion of related divergence operators on Courant algebroids is well-defined.

\begin{lemma}
 Relations between divergences in the sense of Definition \ref{def:relateddiv} are well-defined: if $(e_1,e_2), (e_1',e_2') \in \sfgamma(E_1 \times \overline{E}_2 ; R)$ such that $(e_1, e_2)_c = (e_1', e_2')_c$ for all $c \in C,$ then $\div_1\, e_1 \sim_C \div_2\, e_2$ if and only if $\div_1\, e_1' \sim_C \div_2\, e_2'$.
\end{lemma}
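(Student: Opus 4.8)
The plan is to reduce the problem to showing that if $(e_1,e_2)$ and $(e_1',e_2')$ are both sections of $R$ agreeing at every point of $C$, then the difference $\div_1\, e_1 - \div_1\, e_1'$ and the difference $\div_2\, e_2 - \div_2\, e_2'$ are $C$-related, and in fact both restrict to zero on $C$. First I would observe that since $(e_1,e_2)_c = (e_1',e_2')_c$ for all $c\in C$, the difference $(e_1 - e_1', e_2 - e_2')$ is a section of $E_1\times\overline E_2$ whose restriction to $C$ vanishes identically. Locally near a point of $C$ we may therefore write $e_1 - e_1' = \sum_a f_a\, \tilde e_1^a$ and $e_2 - e_2' = \sum_a f_a\, \tilde e_2^a$ for suitable sections $\tilde e_i^a$ of $E_i$ and functions $f_a \in C^\infty(M_i)$ with $f_a|_C = 0$; more carefully, since $C$ is a submanifold one can choose a single collection of functions $f_a$ cutting out $C$ locally and express both differences in terms of them, because a section vanishing on $C$ lies in the submodule generated by the ideal of $C$.

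Next I would use $\IR$-linearity of the divergence operators together with the anchored Leibniz rule \eqref{eqn:leibnizdiv}. For each term, $\div_1(f_a\,\tilde e_1^a) = f_a\,\div_1\,\tilde e_1^a + \pounds_{\rho_1(\tilde e_1^a)} f_a$. Restricting to $C$: the first summand $f_a\,\div_1\,\tilde e_1^a$ vanishes on $C$ because $f_a|_C = 0$. For the second summand I invoke the compatibility of $R$ with the anchor, exactly as in the proof of Lemma \ref{lemma:relationlieder}: the relevant vector fields are tangent to $C$, so a function vanishing on $C$ has vanishing derivative along them when restricted to $C$ — more precisely, $\pounds_{\rho_1(\tilde e_1^a)} f_a\big|_C$ depends only on $f_a|_C = 0$ and hence is zero. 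Thus $\div_1(e_1 - e_1')\big|_C = 0$, and identically $\div_2(e_2 - e_2')\big|_C = 0$.

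Finally I would assemble the conclusion: $\div_1\, e_1\big|_C = \div_1\, e_1'\big|_C$ and $\div_2\, e_2\big|_C = \div_2\, e_2'\big|_C$ (after pulling back along the appropriate projections), so ${\rm pr}_1^*(\div_1\, e_1)|_C = {\rm pr}_2^*(\div_2\, e_2)|_C$ if and only if ${\rm pr}_1^*(\div_1\, e_1')|_C = {\rm pr}_2^*(\div_2\, e_2')|_C$, which is precisely the statement $\div_1\, e_1\sim_C\div_2\, e_2 \iff \div_1\, e_1'\sim_C\div_2\, e_2'$. The main obstacle I anticipate is the bookkeeping in the local decomposition step: one must ensure that a global section of $E_i$ vanishing on the (possibly non-closed, non-properly-embedded) submanifold $C$ can be written, at least locally and after a partition of unity, as a combination of sections with coefficients in the vanishing ideal of $C$, so that the Leibniz-rule argument applies term by term; this is a standard fact but needs to be stated cleanly, and one should remark that the argument is local on $C$ and the vanishing of the restriction is a pointwise condition, so no global issues actually arise. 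An alternative, slicker route avoiding the decomposition entirely is to note that $\div_i$ is a first-order differential operator, so $\div_i(e_i - e_i')$ at a point $c\in C$ depends only on the $1$-jet of $e_i - e_i'$ at $c$ in directions transverse to — and along — $C$; combined with anchor-compatibility forcing the symbol to act only tangentially to $C$, where $e_i - e_i'$ vanishes to first order along $C$ in the tangential directions, one gets the vanishing on $C$ directly. I would present the decomposition argument as the primary proof and perhaps mention the jet viewpoint as a remark.
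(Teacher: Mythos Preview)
Your argument has a genuine gap at the key step. You claim that $\pounds_{\rho_1(\tilde e_1^a)} f_a\big|_C$ vanishes because ``the relevant vector fields are tangent to $C$'', invoking anchor compatibility. But the sections $\tilde e_1^a$ in your decomposition are \emph{arbitrary} sections of $E_1$ chosen only to absorb the vanishing ideal of $\mathrm{pr}_1(C)$; they are not sections of $R$, so anchor compatibility $\rho(R)\subset TC$ says nothing about $\rho_1(\tilde e_1^a)$. Since the $f_a$ cut out $\mathrm{pr}_1(C)$, their differentials are typically nonzero in normal directions, and $\pounds_{\rho_1(\tilde e_1^a)} f_a$ need not vanish there. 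Concretely, on $M_1=\IR^2$ with $\mathrm{pr}_1(C)=\{y=0\}$ and $e_1-e_1'=y\,\partial_y$, one has $\div_{\de x\wedge\de y}(y\,\partial_y)=1$ on $\{y=0\}$. Your ``slicker'' jet argument has the same flaw: the symbol of $\div_i$ at $\xi\in T^*M_i$ acts by $e\mapsto\xi(\rho_i(e))$, and nothing forces this to vanish for conormal $\xi$ unless $e$ comes from $R$.

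The paper sidesteps this by working on the product $M_1\times M_2$ rather than on each factor. One combines the two divergences into an operator $\widetilde{\div}$ on $\sfgamma(E_1\times\overline E_2)$ and expands $r=(e_1,e_2)$ and $r'=(e_1',e_2')$ in a local frame $\{r^i\}\subset\sfgamma(E_1\times\overline E_2;R)$, i.e.\ sections whose restriction to $C$ lies in $R$. Writing $r=\sum_i f_i\,r^i$ and $r'=\sum_i f_i'\,r^i$, the hypothesis gives $f_i=f_i'$ on $C$, and the Leibniz rule produces derivative terms $\pounds_{\rho(r^i)}(f_i-f_i')$. Now anchor compatibility genuinely applies: $\rho(r^i)|_C\in\sfgamma(TC)$, so these derivatives vanish on $C$. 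The point is that the cancellation is between the $E_1$ and $E_2$ contributions simultaneously, visible only on the product; neither $\div_1(e_1-e_1')$ nor $\div_2(e_2-e_2')$ vanishes separately on $C$ in general, but their difference (suitably pulled back) does.
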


\begin{proof}
    Define $\div \colon \sfgamma(E_1) \times \sfgamma(E_2) \to C^\infty(M_1) \times C^\infty(M_2)$ by $\div = (\div_1, \div_2)$, and extend this to $\widetilde \div \colon \sfgamma(E_1 \times \overline{E}_2) \to C^\infty(M_1 \times M_2)$. 
    Suppose that $r,r' \in \sfgamma(E_1 \times \overline{E}_2 ; R)$ with $r(c) = r'
    (c)$ for all $c\in C$. Then for any open subset $U \subset M_1 \times M_2,$ there is a frame $\set{r^i}\subset \sfgamma(U,E_1\times \overline{E}_2;R)$ such that $r = \sum_i\, f_i\, r^i$ and $r' = \sum_i\, f'_i\, r^i$ with $f_i,f_i'\in C^\infty(U)$ satisfying $f_i(c) = f'_i(c)$ for all $c\in C\cap U$. Denoting the anchor of $E_1 \times \overline{E}_2$ by $\rho$, we then find
    \begin{align}
        \widetilde \div \, r = \sum_i\, \big( f_i \,\widetilde\div \, r^i + \pounds_{\rho(r^i)} f_i \big) 
        = \sum_i\, \big( f'_i\, \widetilde\div \, r^i + \pounds_{\rho(r^i)} f'_i \big) = \widetilde \div \, r' 
    \end{align}
    on $C\cap U$, where we used compatibility of the relation with the anchor which implies that $\rho(r)\in \sfgamma\big(T(M_1 \times M_2);TC\big)$ for $r \in \sfgamma(E_1 \times \overline{E}_2;R)$. Thus, on $C$, $\widetilde\div$ is independent of the chosen section of $\sfgamma(E_1 \times \overline{E}_2; R)$, and hence $\div_1 \sim_R \div_2$ is well-defined.
\end{proof}

\begin{example}[\textbf{Classical Courant Algebroid Isomorphisms}]
  Let $E_1 \to M_1$ and $E_2 \to M_2$ be Courant algebroids. Let $\boldsymbol\Phi \in {\sf Hom}(E_1, E_2)$ be a Courant algebroid isomorphism covering a diffeomorphism $\Fi \in \mathsf{Diff}(M_1, M_2).$ If $\div_1$ is a divergence operator on $E_1,$ then $E_2$ is endowed with a divergence operator $\div_2$ given by 
  \begin{align} \label{eqn:divIso}
      \div_2 \coloneqq (\Fi^{-1})^* \circ \div_1 \circ \boldsymbol\Phi^{-1} \ .
  \end{align}
  It is easy to show that $\div_2$ defined in Equation~\eqref{eqn:divIso} satisfies the anchored Leibniz rule~\eqref{eqn:leibnizdiv}.
    The divergence operators $\div_1$ and $\div_2$ are $\gr(\boldsymbol\Phi)$-related.
\end{example}

\begin{example}[\textbf{Courant Algebroid Morphisms}]
    Let $R \colon E_1 \mor {E}_2$ be a Courant algebroid morphism over a smooth map $\Fi \colon  M_1 \to M_2$. Then the condition of Definition \ref{def:relateddiv} for $\div_1$ and $\div_2$ to be $R$-related divergences becomes
    \begin{align}\label{eqn:classicalCARdiv}
        \div_1 \, e_1 =\Fi^*(\div_2 \, e_2) \ ,
    \end{align}
    for all $(e_1, e_2) \in \sfgamma(E_1 \times \overline{E}_2 ; R).$

    Consider the divergences given by
\begin{align}
    {\rm div}_{\mu_i} \, e_i = \mu_i^{-1}\, \pounds_{\rho_{i}(e_i)} \mu_i \ ,   
\end{align}
    where $\mu_i$ are nowhere-vanishing densities on $M_i$ for $i=1,2$. Then Equation \eqref{eqn:classicalCARdiv} implies that $\div_{\mu_1} \sim_R \div_{\mu_2}$ if and only if $\mu_1 = \varphi^* \mu_2$.
\end{example}

\begin{example}[\textbf{Reduction by Isotropic $\sfG$-Actions}] \label{eg:isoreddiv}
    Let $E$ be a Courant algebroid over $M$ endowed with an isotropic action of a Lie algebra $\frg$, i.e. with a linear map $\chi \colon \frg \to \sfgamma(E)$ preserving the brackets such that $K = \chi(\frg)$ is an isotropic  subbundle of $E$. Let $\sfG$ be the connected Lie group integrating $\frg$, and suppose it acts freely and properly on $M$. Thus $\cQ = M / \sfG$ is a smooth manifold, with smooth quotient map  $\varpi:M\to\cQ$. Theorem~\ref{thm:reduction} then constructs the reduced Courant algebroid $\red E$ over $\cQ$ such that $\sfgamma(\red E) \cong \sfgamma_{\sfG}(K^\perp)\,/\, \sfgamma_{\sfG}(K)$, where the subscript ${}_\sfG$ denotes the $\sfG$-invariant sections. 

    A divergence operator $\div$ on $E$ is \emph{$\sfG$-equivariant} if it obeys
    \begin{align} \label{eqn:Ginvdiv}
        \pounds_{\rho(\chi(x))} \div \, e = \div\, \cbrak{\chi(x), e} \quad , \quad \div\, \chi(x) = - \Tr_{\frg}( {\sf ad}_x) \ , 
    \end{align}
    for all $e \in \sfgamma(E)$ and $x \in \frg$. Then by \cite[Proposition 5.11]{Severa:2018pag}, there exists a divergence operator $\red \div$ on the reduced Courant algebroid $\red E\,$, since $\div$ restricts to a map $\sfgamma_{\sfG}(E) \to C^\infty_{\sfG}(M)$ by the first condition of Equation~\ref{eqn:Ginvdiv} and $\div \,  \sfgamma_{\sfG}(K) = \set{0}$ by the second condition. In other words, the reduced divergence operator $\red \div$ satisfies
    \begin{align}
        \varpi^* (\red \div \, \red e) = \div \,  e \ ,
    \end{align}
    for any pair $(e, \red e) \in \sfgamma(E \times \overline{\red E}; Q(K))$, where $Q(K)$ is the reduction morphism constructed as in Example~\ref{eg:reductionrel}. Thus $\div \sim_{Q(K)} \red \div$.
\end{example}

\begin{example}[\textbf{Heterotic Reduction}]\label{eg:Heteroticreddiv}
    Let us consider the same setting as in Example \ref{eg:isoreddiv}, with a few variations appropriate to applications in heterotic string theory~\cite{Baraglia:2013wua}: suppose that $K= \chi(\frg)$ satisfies $K \cap K^\perp= \set{0}$ and that $\div$ only satisfies the first condition of Equation~\eqref{eqn:Ginvdiv}. Then $\div$ still restricts to a map $\sfgamma_{\sfG}(E) \to C^\infty_{\sfG}(M)$ and, because of our assumption on $K$, it descends to a divergence $\red \div$ on the reduced Courant algebroid $\red E = K^\perp /\, \sfG$ constructed in~\cite{Bursztyn2007reduction,Baraglia:2013wua}. Again the divergences $\div$ and $\red \div$ are $Q(K)$-related.
\end{example}

\begin{example}[\textbf{Poisson-Lie T-Duality}] \label{eg:vysokypoissonlie}
    As discussed in Section~\ref{sec:intro}, a particularly relevant example comes from the description of Poisson-Lie T-duality, as given in \cite{Severa2015}.
In the setting of \cite[Section~4.4]{Vysoky2020hitchiker}, let $E$ be a $\sfG$-equivariant exact Courant algebroid over a manifold $M$ with a free and proper action of a connected Lie group $\sfG$, whose Lie algebra $\frg$ is endowed with a split signature pairing $\ip{\,\cdot\,,\, \cdot\,}_\frg$ and an extended action $\chi \colon \frg \to \sfgamma(E)$. Suppose that $K=\chi(\frg)$ satisfies $K\cap K^\perp = \set{0}$, and let $\varpi_\sfG \colon M \to \cB = M/\sfG$ be the quotient map.

    Let $\frh\subset\frg$ be a Lie subalgebra which integrates to a closed connected subgroup $\sfH\subset\sfG$ such that $K_\sfH=\chi(\frh)$ is an isotropic  subbundle of $E$. Let $\varpi_\sfH \colon M \to \cQ = M/\sfH$ be the quotient map. Then there is a Courant algebroid morphism $R(\sfH) \colon \red E{}_\sfH \mor \red E$ over $\Fi$ between the reduced Courant algebroids
\begin{equation}
    \red E = K^\perp /\, \sfG \quad , \quad \red E{}_\sfH = \frac{K_\sfH^\perp}{K_\sfH} \, \Big / \, \sfH \ , 
\end{equation}
where $\red E\to\cB$ is a transitive Courant algebroid such that \smash{$\red E\,/\,\Im(\,\red\rho^*) = TM/\sfG$} and $\varphi:\cQ\to\cB$ is the canonical surjective submersion induced by the subgroup inclusion $\sfH\subset\sfG$, which fits into the commutative diagram
\begin{equation}
\begin{tikzcd}
    & M \arrow[dd,"\varpi_\sfG"] \arrow[dl,"\varpi_\sfH"']  \\
   \cQ \arrow[swap,dr, "\Fi"] & \\ & \cB
\end{tikzcd}
\end{equation}

If $\frh$ is a Lagrangian subalgebra, i.e. $\frh^\perp = \frh$, then $K_\sfH^\perp = K_\sfH\oplus K^\perp$ so that $K_\sfH^\perp/K_\sfH\cong K^\perp$ and $\red E{}_\sfH\cong K^\perp/\sfH$. Then the reduced Courant algebroid $\red E{}_{\sfH}\to\cQ$ is exact and is naturally the Courant algebroid pullback $\red E{}_{\sfH} \cong \varphi^*\red E$ of $\red E\to M$ (see Section~\ref{sec:intro}). It follows that $R(\sfH) = \gr(\boldsymbol\Phi_\sfH)$ for a classical Courant algebroid morphism $$\boldsymbol\Phi_\sfH \colon \red E{}_\sfH \longrightarrow \red E$$ over $\varphi$~\cite[Example 4.31]{Vysoky2020hitchiker}. If $\frh'$ is any other Lagrangian subalgebra of $\frg$ with integrating Lie group $\sfH'$, then also $R(\sfH') = \gr(\boldsymbol\Phi_{\sfH'})$ for  a classical Courant algebroid morphism $\boldsymbol\Phi_{\sfH'}:\red E{}_{\sfH'}\to\red E$ over the surjective submersion $\varphi':\cQ'\to\cB$, where $\cQ'=M/\sfH'$. 

The composition $$R_{\sfH,\sfH'} \coloneqq \gr(\boldsymbol\Phi_{\sfH'})^\top\circ \gr(\boldsymbol\Phi_\sfH) \ : \ \red E{}_\sfH\rel\red E{}_{\sfH'}$$ is defined and  supported on the fibred product $\cQ \times_\cB \cQ'$; it is equal to the fibred product $\red E{}_\sfH \times_{\red E} \red E{}_{\sfH'}$. Note that if $\frh\cap\frh'=\set{0}$, then $(\frg,\frh,\frh')$ is a Manin triple and $(\sfH,\sfH')$ is a dual pair of Poisson-Lie groups. Generally, the relation $R_{\sfH,\sfH'}$ is called a \emph{Poisson-Lie T-duality relation} (with spectators $\cB$), which we may view diagrammatically as
\begin{equation}
    \begin{tikzcd}
    & \red E   & \\
   \red E{}_\sfH \arrow[ur,"\boldsymbol\Phi_\sfH"] \arrow[rr,dashed, "R_{\sfH,\sfH'}"] & & \red E{}_{\sfH'} \arrow[ul,"\boldsymbol\Phi_{\sfH'}"']
\end{tikzcd}
\end{equation}
When $\sfG$ is abelian, it coincides with the T-duality relation for torus bundles~\cite{DeFraja:2023fhe} (see Subsection~\ref{sec:torusbundles} below).

In the spirit of \cite[Example 5.9]{Vysoky2020hitchiker}, if $\div$ is a $\sfG$-equivariant divergence on $E$ as in Example \ref{eg:Heteroticreddiv}, then $\red E$ inherits a divergence operator $\red \div\,$, so that $\div_\sfH \coloneqq \boldsymbol\Phi_\sfH^* (\red \div)$ and $  \div_{\sfH'} \coloneqq \boldsymbol\Phi_{\sfH'}^*( \red \div)$ define divergences on $\red E{}_\sfH$ and $ \red E{}_{\sfH'}$ respectively. It then follows that $\div_\sfH \sim_{R_{\sfH,\sfH'}} \div_{\sfH'}$.     
\end{example}

Recall that divergences on a Courant algebroid $E$ form an affine space over $\sfgamma(E).$ This leads to

\begin{lemma} \label{lem:relsectionsdiv}
    Let $R \colon E_1 \rel E_2$ be a Courant algebroid relation supported on $C\subseteq M_1\times M_2$, and let $\div_1 \sim_R \div_2$ be $R$-related divergence operators. Let $\div_i' -\div_i = \ip{e_{\div_i}\,,\, \cdot\,}_i $ for some $e_{\div_i} \in \sfgamma(E_i)$, for $i=1,2$. Then $\div_1' \sim_R \div_2'$ if and only if $(e_{\div_1},e_{\div_2}) \in \sfgamma(E_1 \times \overline{E}_2 ; R^\perp)$. If the relation $R$ is a Dirac structure, then $e_{\div_1}\sim_R e_{\div_2}$.
\end{lemma}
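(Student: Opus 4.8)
The plan is to unwind both sides of the claimed equivalence into a single pointwise condition on the pairing of $E_1\times\overline E_2$, and then to invoke maximal isotropy for the last assertion. First I would fix an arbitrary $R$-related pair $(e_1,e_2)\in\sfgamma(E_1\times\overline E_2;R)$. By hypothesis $\div_i'\,e_i-\div_i\,e_i=\ip{e_{\div_i},e_i}_i$ for $i=1,2$, while $\div_1\,e_1\sim_C\div_2\,e_2$ because $\div_1\sim_R\div_2$. Since $\sim_C$ is additive (it is equality of pulled-back functions along ${\rm pr}_1,{\rm pr}_2$ restricted to $C$), subtracting the latter $C$-relation shows that $\div_1'\sim_R\div_2'$ holds precisely when
\begin{align}
\ip{e_{\div_1},e_1}_1\sim_C\ip{e_{\div_2},e_2}_2
\end{align}
for every $(e_1,e_2)\in\sfgamma(E_1\times\overline E_2;R)$, i.e.\ ${\rm pr}_1^*\ip{e_{\div_1},e_1}_1\big|_C={\rm pr}_2^*\ip{e_{\div_2},e_2}_2\big|_C$.

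Next I would recognise the difference of the two sides as a value of the product pairing. By the definition of the pairing on $E_1\times\overline E_2$ (Equation~\eqref{eqn:pairingprod}, with the pairing on $\overline E_2$ being $-\ip{\,\cdot\,,\,\cdot\,}_2$) one has
\begin{align}
\ip{(e_{\div_1},e_{\div_2}),(e_1,e_2)}_{E_1\times\overline E_2}=\ip{e_{\div_1},e_1}_1\circ{\rm pr}_1-\ip{e_{\div_2},e_2}_2\circ{\rm pr}_2 \ ,
\end{align}
so the condition above is exactly that $\ip{(e_{\div_1},e_{\div_2}),(e_1,e_2)}_{E_1\times\overline E_2}$ vanishes on $C$ for all $(e_1,e_2)\in\sfgamma(E_1\times\overline E_2;R)$. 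Since $R$ is a subbundle of $(E_1\times\overline E_2)\big|_C$, every $v\in R_c$ is the value at $c$ of some local section of $R\to C$, which extends to a local section of $E_1\times\overline E_2\to M_1\times M_2$; hence such sections span $R_c$ for each $c\in C$, and the vanishing above is equivalent to $(e_{\div_1},e_{\div_2})_c\in R_c^\perp$ for all $c\in C$, i.e.\ $(e_{\div_1},e_{\div_2})\in\sfgamma(E_1\times\overline E_2;R^\perp)$. This establishes the stated equivalence.

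Finally, when $R$ is a Dirac structure it is maximally isotropic, so $R^\perp=R$, and the condition $(e_{\div_1},e_{\div_2})\in\sfgamma(E_1\times\overline E_2;R^\perp)$ becomes $(e_{\div_1},e_{\div_2})\in\sfgamma(E_1\times\overline E_2;R)$, which is by definition $e_{\div_1}\sim_R e_{\div_2}$. I do not expect a genuine obstacle: the proof is essentially bookkeeping, and the only points needing care are tracking the sign introduced by $\overline E_2$ in the product pairing and the routine verification that sections of $E_1\times\overline E_2$ restricting to $R$ on $C$ span $R$ pointwise (extend a local bundle frame of $R\to C$ to local sections of $E_1\times\overline E_2$).
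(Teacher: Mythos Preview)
Your proposal is correct and follows essentially the same approach as the paper's own proof: both reduce $\div_1'\sim_R\div_2'$ to the $C$-relation $\ip{e_{\div_1},e_1}_1\sim_C\ip{e_{\div_2},e_2}_2$, rewrite this as the vanishing of the product pairing $\ip{(e_{\div_1},e_{\div_2}),(e_1,e_2)}$ on $C$, and conclude $(e_{\div_1},e_{\div_2})\in\sfgamma(E_1\times\overline E_2;R^\perp)$, with the Dirac case following from $R^\perp=R$. Your write-up is slightly more explicit about the sign in $\overline E_2$ and the pointwise spanning of $R$ by extendable sections, but the argument is the same.
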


\begin{proof}
    Let $e' = (e_1',e_2') \in \sfgamma(E_1 \times \overline{E}_2 ; R)$. Since $\div_{1} \sim_R \div_{2}$, it follows that $\div_1' \sim_R \div_2'$ if and only if $\ip{e_{\div_1}, e_1'}_1 \sim_C \ip{e_{\div_2}, e_2'}_2$. Thus, for any $c = (m_1, m_2) \in C$, this requires
    \begin{align}
        0 = \ip{e_{\div_1},e_1'}_{1}(m_1) - \ip{e_{\div_2},e_2'}_{2}(m_2) = \ip{(e_{\div_1}, e_{\div_2})\,,\, e'}(c) \ .
    \end{align}
    Since $e'$ was arbitrary, it follows that $(e_{\div_1},e_{\div_2}) \in \sfgamma(E_1 \times \overline{E}_2; R^\perp)$. If $R$ is a Dirac structure, then $R^\perp= R.$
\end{proof}

When the relation $R$ has trivial kernel and cokernel, there is a uniqueness result for divergences given by

\begin{lemma} \label{lem:uniquereldiv}
    Let $R \colon E_1\rel E_2$ is a Dirac structure with trivial kernel and cokernel, supported on $C\subseteq M_1\times M_2$ such that ${\rm pr}_2(C) = M_2$. Let $\div_1$ be a divergence on $E_1$. Then there is at most one divergence $\div_2$ on $E_2$ such that $\div_1 \sim_R \div_2$.
\end{lemma}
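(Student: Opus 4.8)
The plan is to reduce everything to the affine structure of the space of divergences together with Lemma~\ref{lem:relsectionsdiv} and the vanishing of the cokernel. First I would let $\div_2$ and $\div_2'$ be two divergence operators on $E_2$, both $R$-related to $\div_1$, and invoke the fact (recorded just after the definition of divergence) that there is a unique section $e_{\div_2} \in \sfgamma(E_2)$ with $\div_2'\, e_2 - \div_2\, e_2 = \ip{e_{\div_2}, e_2}_2$ for all $e_2 \in \sfgamma(E_2)$. The goal then becomes to show $e_{\div_2} = 0$.

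Next I would apply Lemma~\ref{lem:relsectionsdiv} to the $R$-related pair $\div_1 \sim_R \div_2$, taking the trivial modification $\div_1' = \div_1$ on the $E_1$-side (so $e_{\div_1} = 0$) and the modification $\div_2' = \div_2 + \ip{e_{\div_2}, \,\cdot\,}_2$ on the $E_2$-side. Since $\div_1 = \div_1' \sim_R \div_2'$ by hypothesis, the lemma yields $(0, e_{\div_2}) \in \sfgamma(E_1 \times \overline{E}_2 ; R^\perp)$; because $R$ is a Dirac structure it is maximally isotropic, so $R^\perp = R$ and hence $(0, e_{\div_2}) \in \sfgamma(E_1 \times \overline{E}_2 ; R)$.

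Finally I would unwind Definition~\ref{def:kernelsR}: the section condition $(0, e_{\div_2})_c \in R_c$ for every $c \in C$ says exactly that $e_{\div_2}({\rm pr}_2(c)) \in \Coker(R)_{{\rm pr}_2(c)}$ for all $c \in C$. Triviality of the cokernel forces $e_{\div_2}$ to vanish along ${\rm pr}_2(C)$, and the hypothesis ${\rm pr}_2(C) = M_2$ upgrades this to $e_{\div_2} = 0$ on all of $M_2$, whence $\div_2' = \div_2$. (The assumption on the kernel $\Ker(R)$ is not needed for this direction; it would be relevant only for the companion existence statement.) There is no real obstacle here: the argument is a direct bookkeeping of the definitions, and the only point demanding a hypothesis is the last passage from vanishing along ${\rm pr}_2(C)$ to vanishing everywhere on $M_2$, which is precisely where surjectivity of ${\rm pr}_2\rvert_C$ onto $M_2$ is used.
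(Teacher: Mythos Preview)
Your argument is correct. It differs from the paper's proof in how you force $e_{\div_2}=0$: the paper computes directly $\ip{e_{\div_2},e_2}_2(m_2)=0$ for every $e_2\in(E_2)_{m_2}$ by finding, for each such $e_2$, an $e_1$ with $(e_1,e_2)\in R_{(m_1,m_2)}$; this surjectivity of $R$ onto the fibres of $E_2$ is obtained from the Dirac condition together with \emph{both} the trivial kernel and the trivial cokernel (a rank count). You instead route the computation through Lemma~\ref{lem:relsectionsdiv} with $e_{\div_1}=0$, land $(0,e_{\div_2})$ inside $R$ by maximal isotropy, and then read off $e_{\div_2}\in\Coker(R)$ pointwise. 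Your route is slightly cleaner and, as you note, shows that the hypothesis $\Ker(R)=\{0\}$ is actually superfluous for the uniqueness direction; the paper's more elementary pairing argument pays for its directness by invoking that extra hypothesis.
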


\begin{proof}
    Suppose that divergence operators $\div_2$ and $\div_2'$ on $E_2$ are $R$-related to $\div_1$. Then there is a section $e \in \sfgamma(E_2)$ such that $\div_2 - \div_2' = \ip{e, \,\cdot\,}_2$. Thus at a point $m_2\in M_2$ with $(m_1,m_2) \in C$ for some $m_1\in M_1$, we get
    \begin{align}
        0 = (\div_1\, e_1 - \div_1\, e_1)(m_1) = (\div_2\, e_2 - \div'_2\, e_2)(m_2) = \ip{e, e_2}_2(m_2)
    \end{align}
    for all $e_2 \in (E_2)_{m_2}$, with $e_1 \in (E_1)_{m_1}$ such that $(e_1, e_2) \in R_{(m_1, m_2)}$. The Dirac condition, combined with the trivial kernel and cokernel condition, ensure that we can find such an element $e_1 \in (E_1)_{m_1}$ for all $m_2  \in M_2$ and $e_2 \in (E_2)_{m_2}$. It follows that $e=0$.
\end{proof}

Let us consider two $R$-related Courant algebroids $E_1$ and $E_2$ endowed, respectively, with divergence operators $\div_1$ and $\div_2$. Suppose that $M_1$ and $M_2$ admit nowhere-vanishing densities $\mu_1$ and $\mu_2$. Then we get sections $e_{\div_1}$ and $e_{\div_2}$ of $E_1$ and $E_2$, respectively, defined as in Remark \ref{rmk:associatedsection}.

\begin{proposition}
Suppose that the divergence operators $(\div_1, e_{\div_1})$ and $(\div_2, e_{\div_2})$ on the Courant algebroids $E_1$ and $E_2$ are $R$-related, where $R$ is  a Dirac structure in $ E_1 \times \overline{E}_2$.
Then the derivations ${\sf ad}_{\div_1} \coloneqq \cbrak{e_{\div_1} \,,\, \cdot\,}_1$ and ${\sf ad}_{\div_2} \coloneqq \cbrak{e_{\div_2} \,,\, \cdot\,}_2$  are $R$-related.  
\end{proposition}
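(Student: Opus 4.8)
The plan is to deduce the statement directly from the involutivity of the Courant algebroid relation $R$. First I would make the target precise: asserting that $\mathsf{ad}_{\div_1}$ and $\mathsf{ad}_{\div_2}$ are $R$-related means that for every $(e_1,e_2) \in \sfgamma(E_1 \times \overline{E}_2 ; R)$ one has $\mathsf{ad}_{\div_1}\, e_1 \sim_R \mathsf{ad}_{\div_2}\, e_2$, i.e. $\big(\cbrak{e_{\div_1}, e_1}_1 , \cbrak{e_{\div_2}, e_2}_2\big) \in \sfgamma(E_1 \times \overline{E}_2 ; R)$. This is well-posed even though each $e_{\div_i}$ depends on the auxiliary density $\mu_i$, since the derivation $\mathsf{ad}_{\div_i} = \cbrak{e_{\div_i}, \,\cdot\,}_i$ is independent of that choice by Remark~\ref{rmk:associatedsection}.

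The single nontrivial input to extract from the hypotheses is $e_{\div_1} \sim_R e_{\div_2}$, that is, $(e_{\div_1}, e_{\div_2}) \in \sfgamma(E_1 \times \overline{E}_2 ; R)$. This is part of (or equivalent to) the assumption that the pairs $(\div_1, e_{\div_1})$ and $(\div_2, e_{\div_2})$ are $R$-related: from $\div_1 \sim_R \div_2$ together with $\div_{\mu_1} \sim_R \div_{\mu_2}$, Lemma~\ref{lem:relsectionsdiv} applied to the pairs $(\div_i , \div_{\mu_i})$ yields $(e_{\div_1}, e_{\div_2}) \in \sfgamma(E_1 \times \overline{E}_2 ; R^\perp)$, and $R^\perp = R$ because $R$ is a Dirac structure.

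The remainder is a one-line computation. On the product Courant algebroid the Dorfman bracket acts componentwise, and passing from $E_2$ to $\overline{E}_2$ only reflects the pairing while leaving the bracket unchanged, so $\cbrak{(e_{\div_1}, e_{\div_2}), (e_1, e_2)} = \big(\cbrak{e_{\div_1}, e_1}_1 , \cbrak{e_{\div_2}, e_2}_2\big)$. Since $R$ is a Courant algebroid relation it is an involutive isotropic subbundle supported on $C$, which by definition means $\cbrak{\sfgamma(E_1 \times \overline{E}_2 ; R), \sfgamma(E_1 \times \overline{E}_2 ; R)} \subseteq \sfgamma(E_1 \times \overline{E}_2 ; R)$, the compatibility $\rho(R^\perp) \subseteq TC$ from Definition~\ref{def:CArel} being exactly what makes this bracket restrict to $C$. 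Feeding $(e_{\div_1}, e_{\div_2})$ and an arbitrary $(e_1, e_2) \in \sfgamma(E_1 \times \overline{E}_2 ; R)$ into this bracket then gives $\mathsf{ad}_{\div_1}\, e_1 \sim_R \mathsf{ad}_{\div_2}\, e_2$, which is the claim. I do not anticipate a genuine obstacle: the Dirac hypothesis does real work only in upgrading the relatedness of the divergences to $e_{\div_1} \sim_R e_{\div_2}$ through Lemma~\ref{lem:relsectionsdiv}, and the one point to handle with a little care is invoking the correct notion of involutivity for a subbundle supported on a submanifold, which is already packaged into the definition of a Courant algebroid relation.
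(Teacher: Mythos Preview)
Your proposal is correct and follows essentially the same route as the paper: both take $e_{\div_1}\sim_R e_{\div_2}$ as the input (the paper reads this directly off the hypothesis on the pairs, while you additionally unpack it via Lemma~\ref{lem:relsectionsdiv}), and then conclude by applying involutivity of $R$ to the componentwise Dorfman bracket on $E_1\times\overline{E}_2$. The only cosmetic difference is that you spell out the interpretation of the hypothesis and the well-definedness of $\mathsf{ad}_{\div_i}$ more explicitly than the paper does.
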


\begin{proof}
Since $e_{\div_1} \sim_R e_{\div_2},$ by definition $(e_{\div_1}, e_{\div_2}) \in \sfgamma(E_1 \times \overline{E}_2 ; R)$. By definition of the Dorfman bracket on the product Courant algebroid $E_1 \times \overline{E}_2$ we get
\begin{align}
    \cbrak{(e_{\div_1}, e_{\div_2}) \,,\, (\, \cdot \, ,\, \cdot \, )} =(\cbrak{e_{\div_1},\, \cdot \,}_1\,,\, \cbrak{e_{\div_2}, \, \cdot \,}_2) = ({\sf ad}_{\div_1}, {\sf ad}_{\div_2}) \ . 
\end{align}
Since $R$ is involutive, it follows that
\begin{align}
\cbrak{(e_{\div_1}, e_{\div_2}) , (e_1', e_2')} = ({\sf ad}_{\div_1}e'_1, {\sf ad}_{\div_2}e'_2) \ \in \ \sfgamma(E_1 \times \overline{E}_2 ; R) \ ,   
\end{align}
for all $(e'_1, e'_2) \in \sfgamma(E_1 \times \overline{E}_2 ; R).$ Hence ${\sf ad}_{\div_1} \sim_R {\sf ad}_{\div_2}$ in the sense of \cite[Definition 6.1]{Vysoky2020hitchiker} extended to covariant differential operators on $E_1$ and $E_2$.
\end{proof}

Note that this result holds only if $R$ is a Dirac structure, since it relies on the fact that the sections $e_{\div_1}$ and $e_{\div_2}$ are $R$-related. 

\medskip

\subsubsection{Divergences from Connections}~\\[5pt]
A broad class of divergence operators is obtained via Courant algebroid connections. A \emph{Courant algebroid connection} $\nabla^E$ on a Courant algebroid $E\to M$ is an $\IR$-bilinear map 
\begin{align}
\nabla^E \colon \sfgamma(E) \times \sfgamma(E) \longrightarrow \sfgamma(E) \ ,
\end{align}
denoted $(e_1,e_2)\mapsto\nabla^E_{e_1}e_2$, which satisfies
\begin{align}
    \nabla^E_{f \, e_1} e_2 =  f\, \nabla^E_{e_1} e_2 \quad , \quad
    \nabla^E_{e_1} (f \, e_2) = f\, \nabla^E_{e_1} e_2 + \left( \pounds_{\rho(e_1)} f \right)\, e_2 \ ,
\end{align}
together with the metric compatibility condition
\begin{align}
     \langle\nabla^E_e e_1, e_2\rangle  + \langle e_1, \nabla^E_e e_2\rangle = \pounds_{\rho(e)}\langle e_1, e_2\rangle   \ ,
\end{align}
for all $e, \, e_1, \, e_2 \in \sfgamma(E)$ and $f \in C^\infty(M)$.

Following \cite{Garcia-Fernandez:2016ofz, Garcia-Fernandez:2020ope}, the divergence operator of a Courant algebroid connection $\nabla^E\colon \sfgamma(E) \times \sfgamma(E) \to \sfgamma(E) $ is defined by
\begin{align} \label{eqn:divergconnection}
    \div_{\nabla^E} (e) \coloneqq \Tr(\nabla^E e) \ ,
\end{align}
for $e \in \sfgamma(E),$ where the trace operator $\Tr \in \sfgamma( {\sf End}(E)^*)$ is given on decomposable endomorphisms by $\Tr(\alpha \otimes e) \coloneqq \alpha(e)$ for  $\alpha \in \sfgamma(E^*)$ and $e \in \sfgamma(E).$ We shall discuss the circumstances under which divergences defined by Equation \eqref{eqn:divergconnection} are related as in Definition \ref{def:relateddiv} if the underlying Courant algebroid connections are related.

Courant algebroid connections $\nabla^1$ on $E_1$ and $\nabla^2$ on $E_2$ are \emph{$R$-related}, denoted $\nabla^1 \sim_R \nabla^2$, if $\nabla^1_{e_1} e'_1 \sim_R \nabla^2_{e_2}e'_2$ for any $(e_1, e_2), \, (e'_1, e'_2) \in \sfgamma(E_1 \times \overline{E}_2 ; R)$; see \cite{Vysoky2020hitchiker} for more details. 

\begin{proposition} \label{prop:reldivfromcon}
  Let $R \colon E_1 \dashrightarrow E_2$ be a Courant algebroid relation supported on $C\subseteq M_1\times M_2$ with trivial kernel $\Ker(R)$ and cokernel $\Coker(R)$ (see Definition \ref{def:kernelsR}), which is a Dirac structure in $E_1 \times \overline{E}_2$. Let $\nabla^1 \sim_R \nabla^2$ be $R$-related Courant algebroid connections on $E_1$ and $E_2$. Then $\div_{\nabla^1} \sim_R \div_{\nabla^2}$.  
\end{proposition}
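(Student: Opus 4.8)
The plan is to reduce the statement to a fibrewise identity along $C$ and then work in a local frame of $R$ adapted to the relation, in which the traces defining $\div_{\nabla^1}$ and $\div_{\nabla^2}$ can be compared term by term. The structural input is the following: since $R$ is a Dirac structure in $E_1\times\overline E_2$ with $\Ker(R)=\Coker(R)=\{0\}$, the projections $\mathrm{pr}_i\colon R\to E_i$ are fibrewise isomorphisms over $C$, i.e.\ $\mathrm{pr}_i|_{R_c}\colon R_c\to (E_i)_{\mathrm{pr}_i(c)}$ is bijective for every $c\in C$. Indeed, triviality of the kernel and cokernel forces $\mathrm{pr}_1|_{R_c}$ and $\mathrm{pr}_2|_{R_c}$ to be injective, while maximal isotropy of $R$ gives $2\,\rk(R)=\rk(E_1)+\rk(E_2)$, so these injections are bijections and $\rk(E_1)=\rk(E_2)=\rk(R)$; denote this common rank by $n$. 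Hence near any $c_0\in C$ there is a local frame $\{r^a=(r_1^a,r_2^a)\}_{a=1}^n\subset\sfgamma(E_1\times\overline E_2;R)$ of $R$ whose component families $\{r_1^a\}$ and $\{r_2^a\}$ restrict along $C$ to fibrewise frames of $E_1$ and $E_2$ respectively; let $\{\beta^1_a\}$ and $\{\beta^2_a\}$ be the corresponding dual coframes along $C$.

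Next I would use that $\div_{\nabla^i}(e)=\Tr(\nabla^i e)$ is the trace of the $C^\infty(M_i)$-linear endomorphism $e'\mapsto\nabla^i_{e'}e$ of $E_i$ — the connection identity $\nabla^i_{f\,e'}e=f\,\nabla^i_{e'}e$ provides the $C^\infty$-linearity — so that it may be evaluated in any frame. Thus for $c=(m_1,m_2)\in C$ near $c_0$,
\[
(\div_{\nabla^1}e_1)(m_1)=\sum_{a=1}^n\beta^1_a\big((\nabla^1_{r_1^a}e_1)(m_1)\big)\ ,\qquad
(\div_{\nabla^2}e_2)(m_2)=\sum_{a=1}^n\beta^2_a\big((\nabla^2_{r_2^a}e_2)(m_2)\big)\ .
\]
The elementary observation to isolate is that $\beta^1_a$ and $\beta^2_a$ match on $R$: if $(v_1,v_2)\in R_c$ and $(v_1,v_2)=\sum_b\lambda^b\,r^b(c)$, then $v_1=\sum_b\lambda^b\,r_1^b(m_1)$ and $v_2=\sum_b\lambda^b\,r_2^b(m_2)$, whence $\beta^1_a(v_1)=\lambda^a=\beta^2_a(v_2)$ for every $a$.

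Then I would invoke the hypothesis $\nabla^1\sim_R\nabla^2$, applied to the pairs $(r_1^a,r_2^a)$ and $(e_1,e_2)$, all of which lie in $\sfgamma(E_1\times\overline E_2;R)$: it yields $\nabla^1_{r_1^a}e_1\sim_R\nabla^2_{r_2^a}e_2$, i.e.\ $\big((\nabla^1_{r_1^a}e_1)(m_1),(\nabla^2_{r_2^a}e_2)(m_2)\big)\in R_c$ for every $c=(m_1,m_2)\in C$ near $c_0$. Feeding this into the coframe identity term by term and summing over $a$ gives $(\div_{\nabla^1}e_1)(m_1)=(\div_{\nabla^2}e_2)(m_2)$. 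Since $c_0\in C$ was arbitrary this shows $\div_{\nabla^1}e_1\sim_C\div_{\nabla^2}e_2$, and since $(e_1,e_2)\in\sfgamma(E_1\times\overline E_2;R)$ was arbitrary this gives $\div_{\nabla^1}\sim_R\div_{\nabla^2}$.

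The main obstacle I anticipate is the first step — showing that a frame of $R$ projects to honest fibrewise frames of $E_1$ and $E_2$ over $C$, which is precisely where all three hypotheses (Dirac, trivial kernel, trivial cokernel) enter, through the rank count — together with the minor bookkeeping that the components $r_i^a$ are a priori sections of the pullback bundles $\mathrm{pr}_i^*E_i$ rather than of $E_i$, so the traces must be read fibrewise along $C$ and not as endomorphism traces over all of $M_i$; frame-independence of the trace makes this harmless, but it merits a careful statement.
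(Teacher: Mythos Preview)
Your proof is correct and follows essentially the same approach as the paper: both arguments pick a local frame for $R\to C$, use the Dirac condition together with triviality of $\Ker(R)$ and $\Coker(R)$ to obtain $R$-related local frames of $E_1$ and $E_2$ along $C$, and then compare the matrix entries of the endomorphisms $e'\mapsto\nabla^i_{e'}e_i$ in these frames. Your write-up is in fact more explicit than the paper's about the rank count justifying the frame construction, and about the subtlety that the frames live in pullback bundles over $C$.
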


\begin{proof}
Because of $C^\infty(M_i)$-linearity in the first entry of $\nabla^i$ for $i=1,2$, we can define linear maps $A^1_{e_1} = \nabla^1 e_1 \colon \sfgamma(E_1) \to \sfgamma(E_1)$, for any $e_1 \in \sfgamma(E_1),$ and $A^2_{e_2} = \nabla^2 e_2 \colon \sfgamma(E_2) \to \sfgamma(E_2),$ for any $e_2 \in \sfgamma(E_2)$. Thus if $(e_1, e_2) \in \sfgamma(E_1 \times \overline{E}_2;R),$ then
$A^1_{e_1} (e_1') \sim_R A^2_{e_2}(e_2')$
for any pair $(e_1', e_2') \in \sfgamma(E_1 \times \overline{E}_2 ; R).$ Since $R$ is a Dirac structure with trivial kernel and cokernel, there are $R$-related local frames $\set{\psi^1_i}$ for $E_1$ and $\set{\psi^2_i}$ for $E_2$ coming from a choice of any local frame for the bundle $R \to C$.\footnote{As in \cite{Vysoky2020hitchiker}, we always assume that $C$ is an embedded submanifold of $M_1\times M_2$.} At any point $c  \in C$, it follows that $(A^1_{e_1})^j_k = (A^2_{e_2})^j_k$ with respect to the $R$-related bases $\set{\psi^1_i}$ and $\set{\psi^2_i}$. Therefore
$\Tr_1(\nabla^1 e_1)_{{\rm pr}_1(c)} = \Tr_2(\nabla^2 e_2)_{{\rm pr}_2(c)}$, for all $c \in C$.
\end{proof}

Note that the trace operator in Equation \eqref{eqn:divergconnection} constrains the type of Courant algebroid relations that are suitable to bridge the notion of related connections with related divergences. On the other hand, for the relevant applications of this paper, the assumptions of Proposition~\ref{prop:reldivfromcon} always hold since the T-duality relation will always be a generalised isometry that is also a Dirac structure.

\medskip

\subsubsection{The Dilaton Shift}~\\[5pt]
 The natural step to take now is to see how dilatons, discussed in Subsection~\ref{sub:dilaton}, behave when their corresponding divergences are related. In particular, we address how a generalised isometry yields a natural definition for the dilaton shift, see for instance \cite{Jurco2018}.

For $i=1,2$, let $E_i\to M_i$ be exact Courant algebroids endowed with generalised metrics $\tau_i$ and divergence operators $\div_i$. Define sections $e_{\div_i}\in\sfgamma(E_i)$ by $\langle e_{\div_i},\,\cdot\,\rangle_i=\div_i-\div_{\mu_{g_i}}$, where $\mu_{g_i}$ is the volume form on $M_i$ induced by the generalised metric $\tau_i$. Suppose that $R \colon (E_1 , \tau_1) \dashrightarrow (E_2, \tau_2)$ is a generalised isometry which is a Dirac structure in $E_1\times\overline E_2$ such that $\div_1\sim_R\div_2$. 

For any $(e, {}^Re) \in \sfgamma(E_1\times \overline{E}_2; R)$, we then find
\begin{align}
    \ip{e_{\div_1},e}_1+\div_{\mu_{g_1}}\, e = \div_1\, e = \div_2\, {}^Re = \ip{e_{\div_2},{}^Re}_2+\div_{\mu_{g_2}}\, {}^Re \ .
\end{align}
Write $e_{\div_2} - {}^Re_{\div_1} =: e_{\tt dil}\in\sfgamma(E_2)$, where ${}^Re_{\div_1}$ is the unique section of $E_2$ with $(e_{\div_1}, {}^Re_{\div_1}) \in \sfgamma(E_1 \times \overline{E}_2 ; R)$. Since $R$ is a Dirac structure, we then get
\begin{align}
    \ip{e_{\div_2}, {}^Re}_2 + \div_{\mu_{g_2}}\,{}^Re &= \ip{{}^Re_{\div_1}, {}^Re}_2 + \ip{e_{\tt dil}, {}^Re}_2 + \div_{\mu_{g_2}}\,{}^Re \\[4pt]
    &= \ip{e_{\div_1}, e}_1 + \ip{e_{\tt dil}, {}^Re}_2 + \div_{\mu_{g_2}}\,{}^Re \ .
\end{align}
This leaves
\begin{align}\label{eqn:dilatonshift}
    \ip{e_{\tt dil},{}^Re}_2 = \div_{\mu_{g_1}}\,e - \div_{\mu_{g_2}}\, {}^Re \ ,
\end{align}
which measures the failure of the divergences $\div_{\mu_{g_1}}$ and $\div_{\mu_{g_2}}$ from being $R$-related.

\begin{definition}
    The \emph{dilaton shift} is the section 
    \begin{align} 
    e_{\tt dil} = e_{\div_2} - {}^Re_{\div_1}
    \end{align}
    of $E_2$, given by Equation \eqref{eqn:dilatonshift} for $(e,{}^Re)\in\sfgamma(E_1\times\overline E_2;R)$.
\end{definition}

\medskip

\subsection{Invariant Divergence Operators}~\\[5pt]
 In order to introduce a suitable notion of invariance for divergence operators, let us first characterise the sections of a Courant algebroid $ E\to M$ which are invariant with respect to a distribution $D\subset TM.$ In the setting of T-duality, $D_1\subset T\cQ_1$ is the isometry distribution for a generalised metric on $\red E{}_1$, which is induced by the reduction of Courant algebroids $Q(K_2):E\mor\red E{}_2$.
The $D_1$-invariant sections are the sections of $\red E{}_1$ which lift to basic sections of $K_2^\perp$ in the setting of Theorem \ref{thm:maingeneral}. Thus we provide a notion inspired by Definition \ref{def:D1invariance}. 

\begin{definition} \label{def:invariantsection}
Let $ E$ be an exact Courant algebroid over $M$
and let $D \subset TM$ be a distribution. Suppose that there exists a Lie subalgebra $\frk{k}$ of $\sfgamma(D)$ that spans $D$ pointwise.  
Given a splitting $ \sigma:TM\to E$, the space of \emph{$D$-invariant sections} of $ E$ is given by
\begin{align}
    \sfgamma_{D}( E) := \set{ e \in \sfgamma( E) \ | \ {\sf ad}^{ \sigma}_{ X}\,  e = 0  \ \ \ \text{for all} \  X \in \frk{k}} \ .
\end{align}

Given an open cover $\set{U^{(i)}}_{i\in I}$ of $M$ and a local presentation $\set{\frk{k}^{(i)}}_{i\in I}$ of $\frk{k}$, the space of \emph{locally $D$-invariant sections} of $ E$ is
\begin{align}
    \sfgamma^{(i)}_{D}( E) := \set{ e \in \sfgamma( E) \ | \ {\sf ad}^{ \sigma}_{ X^{(i)}}\,  e = 0 \ \ \ \text{for all} \  X^{(i)} \in \frk{k}^{(i)}} \ .
\end{align}
\end{definition}

\begin{remark} 
  The existence of a splitting $ \sigma$ such that Definition \ref{def:invariantsection} can be stated represents a key requirement and a non-trivial condition to guarantee.
  If there exists a splitting $ \sigma$ such that Definition \ref{def:invariantsection} holds, then there exists a subset of the affine space of splittings given by all the splittings that differ by a $D$-invariant two-form on $M$ that can be used to state  Definition \ref{def:invariantsection} equivalently.    
\end{remark}

The space of (locally) $D$-invariant sections is involutive, see Lemma \ref{cor:invariantsinvolutive}, and is a $C^\infty_{D}(M)$-module, 
with \smash{$C^\infty_{D}(M) \cong C^\infty_{\text{bas}}(M,\cF)$}, where $\cF$ is the foliation of $M$ that integrates the distribution~$D$. 

 Definition \ref{def:invariantsection} recovers the usual notion of $\frg$-invariant sections for the action of a Lie algebra $\frg$. In that instance $\ad_{\#_{M} (x)}^{\sigma_\chi} = \llbracket\chi(x), \,\cdot\,\rrbracket$ for $x\in\frg$, where $\chi:\frg\to\sfgamma(E)$ is a trivially extended action in the sense of \cite[Definition 2.12]{Bursztyn2007reduction}, with Lie algebra action on $TM$ denoted by $\#_{M} \colon \frg \to \sfgamma(T M)$, and $\sigma_\chi$ is a  splitting adapted to $\chi(\frg)$ such that $\Im(\sigma_\chi\circ \#_{M}) = \Im(\chi)$.  Further motivation for Definition~\ref{def:invariantsection} comes from the desire to have a notion of invariance even when there is no Lie or Courant algebra action on the Courant algebroid $ E.$

We shall also provide the counterpart, in this setting, of the notion of equivariant divergence operator with respect to a group action on a Courant algebroid, see e.g. \cite[Equation (13)]{Severa:2018pag}.

\begin{definition} \label{def:K2compdiv}
   Let $D \subset TM$ be a distribution which is  spanned pointwise by a (local) Lie subalgebra $\frk{k} \subset \sfgamma(D).$ A divergence operator $\div$ on $ E$ is (\emph{locally}) \emph{compatible with} $D$ if
    \begin{align} \label{eqn:adjointdivact}
         \pounds_{ X} \, \div \, e =  \div \, {\sf ad}^{ \sigma}_{ X}\,  e \ ,
    \end{align}
    for every $ e \in \sfgamma( E)$ and $ X \in \frk{k}$ (respectively $ X \in \frk{k}^{(i)} $ for all $i \in I$).
\end{definition}

\begin{lemma} \label{lem:compatiblediverg}
If $\div$ is a (locally) $D$-compatible divergence, then $\pounds_{ X}  \, \div \, e^\circ = 0$ for all $ e^\circ \in \sfgamma_{D}( E)$.
Conversely, if the space of (locally) $D$-invariant sections spans $E$ pointwise and $\pounds_{ X} \, \div \,  e^\circ = 0$ for all $ X \in \frk{k}$ and $e^\circ \in \sfgamma_{D}( E)$, then $\div$ is (locally) compatible with $D$.
\end{lemma}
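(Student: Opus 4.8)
The plan is to prove the two implications separately. The forward implication is immediate: if $\div$ is (locally) compatible with $D$ and $e^\circ$ is a $D$-invariant section, then $\ad^\sigma_X e^\circ = 0$ for $X \in \frk{k}$ (resp.\ $\frk{k}^{(i)}$) by Definition \ref{def:invariantsection}, so $\pounds_X\,\div\, e^\circ = \div\,\ad^\sigma_X e^\circ = 0$ directly from Definition \ref{def:K2compdiv} and $\IR$-linearity of $\div$.

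For the converse I would first isolate two properties of the $\sigma$-adjoint action, both following from Equation \eqref{eqn:sigmatwistact} (and recorded in Appendix \ref{app:twistedaction}): an anchored-Leibniz-type rule $\ad^\sigma_X(f\,e') = f\,\ad^\sigma_X e' + (\pounds_X f)\,e'$ for $f \in C^\infty(M)$ — a consequence of \eqref{eqn:anchorLeibniz} together with $C^\infty(M)$-linearity of $\sigma$, $\rho^*$, $\sigma^*$ and isotropy of $\sigma$ — and the identity $\rho(\ad^\sigma_X e') = [X,\rho(e')]$, which follows from $\rho\circ\sigma = \mathrm{id}$, the bracket-homomorphism property of $\rho$, and $\rho\circ\rho^* = 0$. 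The second of these, applied to $\ad^\sigma_X e^\circ = 0$, shows that $[X,\rho(e^\circ)] = 0$ for any $D$-invariant $e^\circ$ and $X \in \frk{k}$, hence that $\pounds_X$ and $\pounds_{\rho(e^\circ)}$ commute on $C^\infty(M)$.

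Then, assuming the $D$-invariant sections span $E$ pointwise and $\pounds_X\,\div\, e^\circ = 0$ for all $X\in\frk{k}$ and all $D$-invariant $e^\circ$, I would verify $\pounds_X\,\div\, e = \div\,\ad^\sigma_X e$ for arbitrary $e\in\sfgamma(E)$ and $X\in\frk{k}$. Since $\div$, $\ad^\sigma_X$ and $\pounds_X$ are local operators, it is enough to work near a fixed point, where the spanning hypothesis lets me choose $D$-invariant sections $e^\circ_1,\dots,e^\circ_r$ ($r=\rk E$) forming a local frame on some neighbourhood $U$, so that $e = \sum_a f_a\,e^\circ_a$ on $U$. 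Using the Leibniz rule for $\ad^\sigma$ and $\ad^\sigma_X e^\circ_a = 0$ gives $\ad^\sigma_X e = \sum_a(\pounds_X f_a)\,e^\circ_a$, so \eqref{eqn:leibnizdiv} yields $\div\,\ad^\sigma_X e = \sum_a\big((\pounds_X f_a)\,\div\, e^\circ_a + \pounds_{\rho(e^\circ_a)}(\pounds_X f_a)\big)$; expanding the left-hand side with \eqref{eqn:leibnizdiv} and using $\pounds_X\,\div\, e^\circ_a = 0$ gives $\pounds_X\,\div\, e = \sum_a\big((\pounds_X f_a)\,\div\, e^\circ_a + \pounds_X\pounds_{\rho(e^\circ_a)} f_a\big)$. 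The two agree because $\pounds_X$ and $\pounds_{\rho(e^\circ_a)}$ commute on functions, by the previous paragraph. This gives compatibility on $U$ and hence everywhere, and the locally $D$-invariant case is identical with $\frk{k}^{(i)}$ and $U^{(i)}$ throughout.

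The main thing to be careful about is the second property of $\ad^\sigma$ and its consequence that $\rho$ sends $D$-invariant sections to $\frk{k}$-invariant vector fields: without the commutation $[\pounds_X,\pounds_{\rho(e^\circ_a)}] = 0$ the two computations would differ by $\sum_a \pounds_{[X,\rho(e^\circ_a)]} f_a$, so this is precisely where the argument uses $D$-invariance of the $e^\circ_a$ beyond $\ad^\sigma_X e^\circ_a = 0$. Everything else is a routine frame expansion using the anchored Leibniz rules, so I do not expect a genuine obstacle.
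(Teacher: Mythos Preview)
Your proof is correct and follows essentially the same route as the paper: both directions are handled identically, expanding an arbitrary section in a local frame of $D$-invariant sections, using the Leibniz rule for $\ad^\sigma$ (Lemma~\ref{lem:sigmaadleibniz}) and the anchored Leibniz rule for $\div$, and matching the two sides via the commutation $[\pounds_X,\pounds_{\rho(e^\circ_a)}]=0$. If anything you are slightly more explicit than the paper in tracing that commutation back to $\rho(\ad^\sigma_X e^\circ)=[X,\rho(e^\circ)]=0$; one small wording quibble is your phrase ``beyond $\ad^\sigma_X e^\circ_a=0$'' in the last paragraph, since the commutation is in fact a direct consequence of that vanishing, not an additional use of $D$-invariance.
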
  

\begin{proof}
The first statement follows easily from Equation~\eqref{eqn:adjointdivact}.

For the second statement, assume that $ E$ is  spanned pointwise by $\sfgamma_{D}( E)$, i.e. any section $e \in \sfgamma( E)$ can be expressed as $e = \sum_i\, f_i\, e^\circ_i$ where $\set{f_i}\subset C^\infty(M)$  and $\set{e^\circ_i} \subset \sfgamma_{D}( E)$,  and that $\pounds_X \, \div \, e^\circ = 0$ for all  $X \in \frk{k}$ and $e^\circ \in \sfgamma_{D}( E)$. Then
    \begin{align}
     \pounds_X\,\div\,e =   \pounds_X \,  \div \, \Big( \sum_i\, f_i\, e^\circ_i \Big) &=\sum_i\, \big( (\pounds_X f_i)\,  \div\, e^\circ_i + f_i\, \pounds_X \, \div\, e^\circ_i + \pounds_X\, \pounds_{\rho(e^\circ_i)}f_i \big) \\[4pt]
        &=\sum_i\, \big( (\pounds_X f_i )\, \div\, e^\circ_i + \pounds_{\rho(e^\circ_i)}\, \pounds_X f_i \big) = \div\, \Big( \sum_i \, (\pounds_X f_i)\,e^\circ_i \Big) \ .
    \end{align}
    We now use the Leibniz rule for the $\sigma$-adjoint action (see Lemma~\ref{lem:sigmaadleibniz}) to show that
    \begin{align}
        \ad_X^\sigma\,e = \ad^{ \sigma}_X\Big( \sum_i\, f_i\, e^\circ_i\Big) = \sum_i\, (\pounds_X f_i)\,e^\circ_i  \ ,
    \end{align}    
    and the result follows. This calculation can also be carried out locally.
\end{proof}

\begin{example}
Let $ E$ be an exact Courant algebroid over $M$ endowed with the divergence operator $\div_\mu$ associated to a nowhere-vanishing density $\mu$ on $M$ which is $D$-invariant, i.e. $\pounds_{ X} \mu =0$ for all $ X \in \frk{k}.$ Then $\div_\mu$ is compatible with $D$ in the sense of Definition \ref{def:K2compdiv}. 

For instance, let $g$ be a $D$-invariant Riemannian metric on $M$ with corresponding Hodge operator $\star_g$, and take $\mu = \mu_{g}= \star_g 1$. Then $\pounds_{ X}\circ \star_g=\star_g\circ\pounds_X,$ i.e. $\mu_g$ is $D$-invariant as well, and
\begin{align} \label{eqn:divmugD1}
\pounds_{ X}\, \div_{\mu_g}\, e^\circ = 0   
\end{align}
for any $e^\circ \in \sfgamma_{D}( E)$ and  $ X \in \frk{k}$.
\end{example}

\begin{lemma}\label{lem:edivinvariant}
    Let $E\to M$ be an exact Courant algebroid endowed with a $D$-invariant generalised metric, and any divergence operator $ \div$. Set $\div - \div_{\mu_g} =  \ip{e_{\text{div}},\,\cdot\,}$, where $\mu_g$ is the volume form on $M$ induced by the generalised metric. Then
    $\pounds_X\, \div \, e^\circ = 0,$ for all $e^\circ \in \sfgamma_{D}(E)$ and $X \in \frk{k}$, if and only if $e_{\text{div}} \in \sfgamma_{D}( E)$.
    If $\sfgamma_{D}( E)$ spans $E$ pointwise, then $\div$ is moreover compatible with $D$. 
\end{lemma}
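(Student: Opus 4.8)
The plan is to expand $\div$ about the canonical divergence $\div_{\mu_g}$ and reduce the whole statement to one identity relating $\pounds_X$, the pairing and the $\sigma$-adjoint action. First, since the generalised metric on the exact Courant algebroid $E$ is $D$-invariant, the Riemannian metric $g$ it determines in the preferred ($b=0$) splitting $\sigma$ satisfies $\pounds_X g=0$ for all $X\in\frk{k}$ (in that splitting $\ad^\sigma_X$ acts on $\IT M$ as the untwisted Lie derivative, under which $V^+=\gr(g)$ is preserved iff $\pounds_X g=0$), so the volume density $\mu_g$ is $D$-invariant and $\div_{\mu_g}$ is compatible with $D$ by the preceding example; in particular $\pounds_X\,\div_{\mu_g}\,e^\circ=0$ for every $X\in\frk{k}$ and $e^\circ\in\sfgamma_D(E)$ (Equation~\eqref{eqn:divmugD1}). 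Writing $\div\,e^\circ=\div_{\mu_g}\,e^\circ+\ip{e_{\text{div}},e^\circ}$ and applying $\pounds_X$ therefore leaves $\pounds_X\,\div\,e^\circ=\pounds_X\ip{e_{\text{div}},e^\circ}$. Now I would invoke the derivation property of the $\sigma$-adjoint action with respect to the pairing established in Appendix~\ref{app:twistedaction}, namely $\pounds_X\ip{e_1,e_2}=\ip{\ad^\sigma_X e_1,e_2}+\ip{e_1,\ad^\sigma_X e_2}$ for $X\in\sfgamma(TM)$; taking $e_1=e_{\text{div}}$, $e_2=e^\circ$ and using $\ad^\sigma_X e^\circ=0$ gives the master equation
\begin{align}
\pounds_X\,\div\,e^\circ=\ip{\ad^\sigma_X e_{\text{div}},\,e^\circ}\qquad\text{for all }X\in\frk{k},\ e^\circ\in\sfgamma_D(E)\ .
\end{align}

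From this the ``if'' implication is immediate: if $e_{\text{div}}\in\sfgamma_D(E)$ then $\ad^\sigma_X e_{\text{div}}=0$ and the right-hand side vanishes. For the ``only if'' implication I would use that $\sfgamma_D(E)$ spans $E$ pointwise: if $\pounds_X\,\div\,e^\circ=0$ for all $e^\circ\in\sfgamma_D(E)$, then $\ip{\ad^\sigma_X e_{\text{div}},\,e^\circ}=0$ for a pointwise-spanning family of sections, so non-degeneracy of the pairing forces $\ad^\sigma_X e_{\text{div}}=0$ for every $X\in\frk{k}$, i.e.\ $e_{\text{div}}\in\sfgamma_D(E)$. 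Under the same spanning hypothesis the condition $\pounds_X\,\div\,e^\circ=0$ is exactly the input required by the converse direction of Lemma~\ref{lem:compatiblediverg}, which then gives that $\div$ is compatible with $D$. The local case follows verbatim, replacing $\frk{k}$ by the local presentations $\frk{k}^{(i)}$ over a cover $\{U^{(i)}\}$ and invariance by local invariance throughout.

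The only nonroutine ingredient is the derivation identity used above. It is the $\sigma$-adjoint counterpart of axiom~\ref{eqn:metric1} of Definition~\ref{def:CourantAlg} evaluated at $e=\sigma(X)$, and holds because, although $\ad^\sigma$ is \emph{not} a derivation of the pairing for general elements, the correction term $\rho^*\sigma^*\llbracket\sigma(X),\sigma(\rho(\,\cdot\,))\rrbracket$ in the definition~\eqref{eqn:sigmatwistact} of $\ad^\sigma$ symmetrises to $\tfrac12\,\pounds_X\ip{\sigma(\rho(\,\cdot\,)),\sigma(\rho(\,\cdot\,))}$, which vanishes since $\sigma$ is isotropic. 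It is also worth flagging that the pointwise-spanning hypothesis on $\sfgamma_D(E)$ is genuinely used for the ``only if'' direction and the compatibility clause, not merely for the last sentence of the statement.
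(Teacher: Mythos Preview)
Your argument is correct and lands on the same master identity as the paper, namely $\pounds_X\,\div\,e^\circ=\ip{\ad^\sigma_X e_{\text{div}},\,e^\circ}$. The paper derives this by hand---expanding $\pounds_X\ip{e_{\text{div}},e^\circ}$ via axiom~\ref{eqn:metric1}, substituting $\llbracket\sigma(X),e^\circ\rrbracket=\rho^*\sigma^*\llbracket\sigma(X),\sigma\rho(e^\circ)\rrbracket$ from the $D$-invariance of $e^\circ$, and then moving the $\rho^*\sigma^*$ across using adjunction and isotropy of $\sigma$---whereas you invoke Lemma~\ref{lem:pairingpreserved} directly, which is cleaner and avoids repeating that computation inline.

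Two small remarks. First, your parenthetical that $\ad^\sigma$ ``is \emph{not} a derivation of the pairing for general elements'' is incorrect: Lemma~\ref{lem:pairingpreserved} establishes exactly that $\ad^\sigma_e$ preserves the pairing for \emph{all} $e\in\sfgamma(E)$, so your key identity holds without restricting to $e=\sigma(X)$. Second, your flag that the ``only if'' direction genuinely uses the pointwise-spanning hypothesis on $\sfgamma_D(E)$ is well taken; the paper's proof simply asserts ``this last expression vanishes if and only if $e_{\text{div}}\in\sfgamma_D(E)$'' without isolating that step, so your version is more careful on this point.
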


\begin{proof}
For any $e^\circ \in \sfgamma_{D}( E)$ and $X \in \frk{k}$, using Equation \eqref{eqn:divmugD1} we compute
    \begin{align}
        \pounds_X\, \div \, e^\circ &= \pounds_X\, \div_{\mu_g}\, e^\circ + \pounds_{X} \ip{e_{\text{div}}, e^\circ} \\[4pt]
        &= \ip{\llbracket \sigma(X), e_{\text{div}}\rrbracket, e^\circ} + \ip{e_{\text{div}}, \llbracket \sigma(X), e^\circ \rrbracket}\\[4pt]
        &= \ip{\llbracket \sigma(X), e_{\text{div}}\rrbracket, e^\circ} + \ip{e_{\text{div}}, \rho^*\sigma^*\llbracket \sigma(X),  \sigma \,\rho(e^\circ)\, \rrbracket}\\[4pt]
        &= \ip{\llbracket \sigma(X), e_{\text{div}}\rrbracket, e^\circ} - \ip{\llbracket \sigma(X), \sigma\, \rho(e_{\text{div}}) \rrbracket, \sigma\, \rho(e^\circ)} + \pounds_X \ip{\sigma\, \rho( e_{\text{div}}), \sigma\, \rho(e^\circ)}\\[4pt]
        &= \ip{\llbracket \sigma(X), e_{\text{div}}\rrbracket - \rho^* \,\sigma^*\llbracket \sigma(X), \sigma\, \rho(e_{\text{div}})\rrbracket), e^\circ} \ ,
    \end{align}
where in the second equality we used $ \rho( \sigma( X))= X,$ in the third equality we used the $D$-invariance of $e^\circ$, and in the fourth equality we used item~\ref{eqn:metric1} of Definition \ref{def:CourantAlg}, together with the adjunction of $\sigma^*$ and $\rho^*$ with respect to the pairing. This last expression vanishes if and only if $e_{\text{div}} \in \sfgamma_{D}( E)$, establishing the first statement.

The second statement follows from Lemma \ref{lem:compatiblediverg}.
\end{proof}

\medskip

\subsection{T-Duality for Divergence Operators}~\\[5pt] \label{sect:Tdualitydiv}
 We shall now prove a  result similar to Theorem~\ref{thm:maingeneral} for divergence operators, under the condition
 \begin{align}
     K_1 \cap K_2 = \set{0} \ ,
 \end{align}
 which we assume throughout this subsection. 
For this, we first need a property of invariant sections in the T-duality framework, namely that $D_1$-invariant sections of $\red E{}_1$ lift to basic sections of $K_2^\perp$.

\begin{lemma}\label{lem:invsectionsarebasic}
    Under the assumptions of Theorem \ref{thm:maingeneral}, let $s \colon K_1^\perp / K_1 \to K_1^\perp$ be the unique\footnote{By \cite[Proposition 5.31]{DeFraja:2023fhe}, $s$ is unique because of the assumption that $K_1 \cap K_2 = \set{0}$.} splitting of the short exact sequence
    \begin{align}
  0  \longrightarrow K_1 \longrightarrow K_1^\perp \longrightarrow K_1^\perp/K_1 \longrightarrow 0 \ .
 \end{align}
    If $\red e^\circ \in \sfgamma_{D_1}(\red E{}_1)$, then $s(\red e^\circ)\in \sfgamma_{\text{bas}}(K_2^\perp)$, where $s(\red e^\circ)$ is the lift of $\red e^\circ$ to $\sfgamma_{\rm bas}(K_1^\perp)$ using the $C^\infty(\cQ_1)$-module isomorphism $\sfgamma(\red E{}_1) \cong \sfgammabas(K_1^\perp) / \sfgamma(K_1)$ as in \cite[Proposition 5.31]{DeFraja:2023fhe}.
\end{lemma}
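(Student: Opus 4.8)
The plan is to verify the two properties that define membership of $s(\red e^\circ)$ in $\sfgammabas(K_2^\perp)$: that $s(\red e^\circ)$ takes values in $K_2^\perp$, and that $\cbrak{\sfgamma(K_2),s(\red e^\circ)}\subseteq\sfgamma(K_2)$. First I would identify the image of $s$. By \cite[Proposition~5.31]{DeFraja:2023fhe} the hypothesis $K_1\cap K_2=\set{0}$ makes $s$ unique; concretely, non-degeneracy of the pairing gives $K_1^\perp+K_2^\perp=(K_1\cap K_2)^\perp=E$, so a rank count exhibits $K_1^\perp\cap K_2^\perp$ as a subbundle of $K_1^\perp$ of the same rank as $K_1^\perp/K_1$, while the equivalent conditions $K_2^\perp\cap K_1\subseteq K_2$ and $K_2\cap K_1^\perp\subseteq K_1$ of Theorem~\ref{thm:maingeneral}, together with $K_1\cap K_2=\set{0}$, force $K_1\cap K_2^\perp=\set{0}=K_2\cap K_1^\perp$; hence $K_1^\perp\cap K_2^\perp$ is a complement of $K_1$ in $K_1^\perp$, so $\Im(s)=K_1^\perp\cap K_2^\perp$. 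This gives $s(\red e^\circ)\in\sfgamma(K_2^\perp)$ at once, and since $K_1$ is involutive the representative $s(\red e^\circ)$ of $\red e^\circ$ supplied by the module isomorphism $\sfgamma(\red E{}_1)\cong\sfgammabas(K_1^\perp)/\sfgamma(K_1)$ is still basic with respect to $K_1$; thus $s(\red e^\circ)\in\sfgammabas(K_1^\perp)\cap\sfgamma(K_2^\perp)$.

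Next I would reduce $\cbrak{\sfgamma(K_2),s(\red e^\circ)}\subseteq\sfgamma(K_2)$ to a statement about generators of $K_2$. Because $K_2$ is involutive and $s(\red e^\circ)\in\sfgamma(K_2^\perp)$, invariance of the pairing gives $\cbrak{\sfgamma(K_2),s(\red e^\circ)}\subseteq\sfgamma(K_2^\perp)$ for free. Writing a local section of $K_2$ as a $C^\infty$-combination of the sections $\sigma_2(X)$ with $X\in\frk{k}_2$ — possible because $\sigma_2(\frk{k}_2)$ spans $K_2$ locally, as in the proof of Theorem~\ref{thm:mainlocal} — and applying the Leibniz rule~\eqref{eqn:Leibnizfirst} together with $\ip{\sigma_2(X),s(\red e^\circ)}=0$ (which annihilates the $\cD f$ term, since $\sigma_2(X)\in\sfgamma(K_2)$ while $s(\red e^\circ)\in\sfgamma(K_2^\perp)$), the claim reduces to proving $\cbrak{\sigma_2(X),s(\red e^\circ)}\in\sfgamma(K_2)$ for every $X\in\frk{k}_2$.

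For this step, fix $X\in\frk{k}_2$, so that $\sigma_2(X)\in\sfgamma(K_2)$ because $\sigma_2$ is adapted to $K_2$, and $\red X\coloneqq\varpi_{1*}(X)\in\frk{k}_{\tau_1}$ by definition of $\frk{k}_2$. Using compatibility of the adapted splittings (Definition~\ref{def:compatiblesplit}) I would rewrite $\cbrak{\sigma_2(X),s(\red e^\circ)}=\ad^{\sigma_1}_X\,s(\red e^\circ)$, then show that $\ad^{\sigma_1}_X\,s(\red e^\circ)$ is again a basic section of $K_1^\perp$ whose image under the reduction morphism $\natural_1$ is the reduced $\red\sigma_1$-adjoint action, $\natural_1\big(\ad^{\sigma_1}_X\,s(\red e^\circ)\big)=\ad^{\red\sigma_1}_{\red X}\,\red e^\circ$. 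Since $\red e^\circ\in\sfgamma_{D_1}(\red E{}_1)$ and $\red X\in\frk{k}_{\tau_1}$, Definition~\ref{def:invariantsection} makes the right-hand side vanish, so $\ad^{\sigma_1}_X\,s(\red e^\circ)\in\ker(\natural_1)=\sfgamma(K_1)$; combined with $\cbrak{\sigma_2(X),s(\red e^\circ)}\in\sfgamma(K_2^\perp)$ from the previous paragraph and $K_1\cap K_2^\perp=\set{0}$, this forces $\cbrak{\sigma_2(X),s(\red e^\circ)}=0\in\sfgamma(K_2)$, completing the argument; the locally $D_1$-invariant case runs verbatim over each chart $U^{(i)}$.

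The step I expect to be the genuine obstacle is the reduction identity $\natural_1\big(\ad^{\sigma_1}_X\,s(\red e^\circ)\big)=\ad^{\red\sigma_1}_{\red X}\,\red e^\circ$, i.e. the compatibility of the $\sigma$-adjoint action with Courant algebroid reduction. Establishing it requires tracking each constituent of $\ad^{\sigma_1}$ — the Dorfman bracket and the term $\rho^*\sigma_1^*\cbrak{\sigma_1(X),\sigma_1(\rho(s(\red e^\circ)))}$ — through the quotients by $K_1$ and $\cF_1$ on basic sections: that $\sigma_1$, being adapted to $K_1$, sends basic one-forms to basic sections of $K_1^\perp$ and satisfies $\sigma_1(T\cF_1)\subseteq K_1$; that $\rho^*$ of a basic one-form lies in $K_1^\perp$ and remains basic; that $\rho(s(\red e^\circ))$ is $\varpi_1$-projectable; and that $\natural_1$ intertwines the Dorfman brackets on basic sections and satisfies $\natural_1\circ\sigma_1=\red\sigma_1\circ\varpi_{1*}$ on projectable vector fields. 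These all rest on the adapted-splitting axioms of Definition~\ref{defn:adaptedsplittings}, the reduction construction of \cite{Zambon2008reduction} and Theorem~\ref{thm:reduction}, and the properties of $\ad^\sigma$ collected in Appendix~\ref{app:twistedaction}. Everything else is rank counting and Leibniz-rule bookkeeping, and goes through identically in the local setting.
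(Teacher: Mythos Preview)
Your proposal is correct and follows essentially the same route as the paper's proof: identify $\Im(s)=K_1^\perp\cap K_2^\perp$, use compatibility of the splittings to rewrite $\cbrak{\sigma_2(X),s(\red e^\circ)}$ as $\ad^{\sigma_1}_X\,s(\red e^\circ)$, show this lies in $\sfgamma(K_1)$ via the reduction identity $\natural_1\big(\ad^{\sigma_1}_X\,s(\red e^\circ)\big)=\ad^{\red\sigma_1}_{\red X}\,\red e^\circ=0$, then conclude from $K_1\cap K_2^\perp=\set{0}$ and extend by the Leibniz rule. The paper establishes your reduction identity by separately tracking the Dorfman bracket and the $\rho^*\sigma_1^*$ correction term through $\natural_1$ and $s$, which is exactly the decomposition you identify as the genuine obstacle; your outline of the ingredients (adapted-splitting axioms, $\natural_1$ as bracket homomorphism, $\natural_1\circ\sigma_1=\red\sigma_1\circ\varpi_{1*}$, projectability of $\rho(s(\red e^\circ))$) matches what the paper uses.
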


\begin{proof}
    Denote for simplicity by $s \colon \sfgamma(\red E{}_1) \to \sfgammabas(\Im(s))$ the map given by the composition of the lift from $\sfgamma(\red E_1)$ to $\sfgamma_{\rm bas}(K_1^\perp)/ \sfgamma(K_1)$ and the duality splitting of \cite[Proposition 5.31]{DeFraja:2023fhe}, and note that $\natural_1 \circ s = \unit_{\red E{}_1}$. Since $\natural_1$ is a bracket homomorphism between $\cbrak{\,\cdot\, ,\, \cdot\,}$ and $\cbrak{\,\cdot\, ,\, \cdot\,}_1$, it then follows that
    \begin{align}
        \natural_1\big(\llbracket s(\red \sigma{}_1 (\red X)), s(\red e^\circ)\rrbracket - s \llbracket \red \sigma{}_1 (\red X), \red e^\circ\rrbracket_1\big) = 0 \ ,
    \end{align}
    for all $\red e^\circ \in \sfgamma_{D_1}(\red E{}_1)$ and $\red X \in \frk{k}_{\tau_1},$ hence
    \begin{align} \label{eqn:diffinK1}
        \llbracket s(\red \sigma{}_1 (\red X)), s(\red e^\circ)\rrbracket - s \llbracket \red \sigma{}_1 (\red X), \red e^\circ\rrbracket_1 \ \in \ \sfgamma(K_1) \ .
    \end{align}
    
    Let us show that 
    \begin{align} 
    s\big(\red \rho{}_1^*\, \red \sigma{}_1^*\llbracket \red \sigma{}_1(\red X), \red \sigma{}_1( \red \rho{}_1(\red e^\circ)) \rrbracket_1\big) = \rho^*\, \sigma_1^*\llbracket \sigma_1(X) , \sigma_1(\rho(s(\red e^\circ)))\rrbracket \ ,
    \end{align}
    where $X$ is the canonical lift of $\red X$ to $\frk{k}_2 \subset \sfgamma(TM)$ (see Definition \ref{def:compatiblesplit}). The difference is again in $\sfgamma (\ker(\natural_1) ) = \sfgamma(K_1)$. Under the anchor $\rho$ the difference lands in $T\cF_1 \cap T\cF_2 = \set{0}$. But $\sigma_1 \circ \rho$ is the identity on $K_1$ by~\cite[Lemma 2.34]{DeFraja:2023fhe}, hence the difference vanishes.
    
     We also have $s(\red\sigma{}_1(\red X)) = \sigma_1(X)$, for all $\red X \in \frk{k}_{\tau_1}$. Putting everything together we get
    \begin{align}
        \llbracket s(\red \sigma{}_1 (\red X)), s(\red e^\circ)\rrbracket - s \llbracket \red \sigma{}_1 (\red X), \red e^\circ\rrbracket_1 &= \llbracket s(\red \sigma{}_1 (\red X)), s(\red e^\circ)\rrbracket - s\big(\red \rho{}_1^*\, \red \sigma{}_1^*\llbracket \red \sigma{}_1(\red X), \red \sigma{}_1( \red \rho{}_1(\red e^\circ)) \rrbracket_1\big)\\[4pt]
        &= \llbracket \sigma_1 (X), s(\red e^\circ)\rrbracket - \rho^* \, \sigma_1^*\llbracket \sigma_1(X) , \sigma_1(\rho(s(\red e^\circ))\rrbracket \\[4pt]
        &=\llbracket \sigma_2(X), s(\red e^\circ) \rrbracket \ ,
    \end{align}
    where for the first equality we use the $D_1$-invariance of $\red e^\circ$, while for the last equality we use the compatibility of the splittings (see Definition \ref{def:compatiblesplit}).

    Now $\sigma_2(X) \in \sfgamma(K_2)$, and $s(\red e^\circ) \in \sfgamma(K_1^\perp \cap K_2^\perp)$. Since $\llbracket \sfgamma(K_2), \sfgamma(K_2^\perp) \rrbracket \subset \sfgamma(K_2^\perp)$, it follows from Equation \eqref{eqn:diffinK1} that 
    \begin{align}
    \llbracket \sigma_2(X), s(\red e^\circ) \rrbracket \ \in \ \sfgamma(K_2^\perp \cap K_1) = \sfgamma ( K_2 \cap K_1 ) = \set{0} \ .
    \end{align}
    Using the Leibniz rule for the Dorfman bracket, this extends to all elements of $\sfgamma(K_2)$, and hence $s(\red e^\circ)$ is basic with respect to~$K_2$.
\end{proof}

We have just shown how $D_1$-invariant sections of $\red E{}_1$ can be lifted to $K_2$-basic sections. This property is crucial for the behaviour of divergence operators under T-duality.

\begin{theorem}\label{thm:tdualdivergence}
Under the assumptions of Theorem \ref{thm:maingeneral}, suppose that $D_1 \subset T\cQ_1$ is  spanned pointwise by a (local) Lie subalgebra $\frk{k}_{\tau_1} \subset \sfgamma(D_1).$ Assume further that $K_1 \cap K_2 = \set{0}$ and that $\sfgamma_{D_1}(\red E{}_1)$ spans $\red E{}_1$ pointwise.  If $\red E{}_1$ admits a divergence $\div{}_1$ which is (locally) compatible with $D_1,$ then there exists a unique divergence operator $\div{}_2$ on $\red E{}_2$ such that $\div{}_1 \sim_R \div{}_2$.
\end{theorem}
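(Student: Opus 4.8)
The plan is to construct $\div_2$ explicitly on the $D_1$-invariant sections of $\red E{}_1$, transport them to $\red E{}_2$ via the T-duality relation $R$, and then use the $C^\infty(\cQ_2)$-module structure together with the anchored Leibniz rule to extend the definition to all of $\sfgamma(\red E{}_2)$. The uniqueness part is already essentially settled: since $R$ is a Dirac structure with trivial kernel and cokernel (by \cite[Proposition 4.17]{DeFraja:2023fhe}, as $R$ is a generalised isometry) and is supported on $\red C$ with ${\rm pr}_2(\red C) = \cQ_2$, Lemma \ref{lem:uniquereldiv} applies directly to give at most one $R$-related divergence. So the real content is existence.

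\textbf{Construction of $\div_2$.} First I would observe that, by Lemma \ref{lem:invsectionsarebasic}, every $\red e^\circ \in \sfgamma_{D_1}(\red E{}_1)$ lifts to a section $s(\red e^\circ) \in \sfgamma_{\rm bas}(K_1^\perp \cap K_2^\perp)$, so that $\natural_2(s(\red e^\circ))$ is a well-defined section of $\red E{}_2$; I will call it ${}^R\!\red e^\circ$, noting that $(\red e^\circ, {}^R\!\red e^\circ) \in \sfgamma(\red E{}_1 \times \overline{\red E}{}_2; R)$ by the explicit form \eqref{eqn:T-dualRelation} of $R$. Because $\div_1$ is (locally) compatible with $D_1$, Lemma \ref{lem:compatiblediverg} gives $\pounds_{\red X}\,\div_1\,\red e^\circ = 0$ for all $\red X \in \frk{k}_{\tau_1}$, i.e. $\div_1\,\red e^\circ$ is a basic function on $\cQ_1$ with respect to $\red\cF{}_1$. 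Since $\red C = \cQ_1 \times_\cB \cQ_2$ with $\cB$ the common leaf space, such a basic function is $\red C$-related to a unique function on $\cQ_2$, which I will denote $\div_2\,{}^R\!\red e^\circ$; this is forced by the requirement $\div_1 \sim_R \div_2$. Now extend: since $\sfgamma_{D_1}(\red E{}_1)$ spans $\red E{}_1$ pointwise and the isomorphism $\sfgamma(\red E{}_1) \cong \sfgammabas(K_1^\perp)/\sfgamma(K_1)$ intertwines with the corresponding span on the $\red E{}_2$ side (via $\natural_2 \circ s$), the transported sections ${}^R\!\red e^\circ$ span $\red E{}_2$ pointwise, so any $\red e_2 \in \sfgamma(\red E{}_2)$ can be written as $\red e_2 = \sum_i f_i\,{}^R\!\red e_i^\circ$ with $f_i \in C^\infty(\cQ_2)$, and I set
\begin{align}
    \div_2\,\red e_2 \coloneqq \sum_i\,\big(f_i\,\div_2\,{}^R\!\red e_i^\circ + \pounds_{\red\rho_2({}^R\!\red e_i^\circ)} f_i\big) \ .
\end{align}

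\textbf{Well-definedness and the relation.} The key checks are: (i) this formula is independent of the chosen decomposition $\red e_2 = \sum_i f_i\,{}^R\!\red e_i^\circ$; (ii) it satisfies the anchored Leibniz rule \eqref{eqn:leibnizdiv}; and (iii) the resulting $\div_2$ is genuinely $R$-related to $\div_1$. For (i) I would run the same argument as in the well-definedness lemma after Definition \ref{def:relateddiv}: two decompositions differ by relations among the spanning sections with coefficients that must be compatible via the anchored Leibniz rule, using that $\pounds_{\red\rho_2}$ and the function values of $\div_2$ on spanning sections are already pinned down; here I would also invoke Lemma \ref{lemma:relationlieder} to match Lie derivatives along $R$-related vector fields with $\red C$-related functions. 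Claim (ii) is immediate from the defining formula. For (iii), given any $(\red e_1, \red e_2) \in \sfgamma(\red E{}_1 \times \overline{\red E}{}_2; R)$, I decompose $\red e_2 = \sum_i f_i\,{}^R\!\red e_i^\circ$; on $\red C$ the section $\red e_1$ then equals $\sum_i (\,{}^{R}\!f_i)\,\red e_i^\circ$ where ${}^R\!f_i \sim_{\red C} f_i$ (using triviality of kernel/cokernel to pull the decomposition back through $R$), and then
\begin{align}
    \div_1\,\red e_1 = \sum_i\,\big({}^{R}\!f_i\,\div_1\,\red e_i^\circ + \pounds_{\red\rho_1(\red e_i^\circ)}({}^R\!f_i)\big) \sim_{\red C} \sum_i\,\big(f_i\,\div_2\,{}^R\!\red e_i^\circ + \pounds_{\red\rho_2({}^R\!\red e_i^\circ)} f_i\big) = \div_2\,\red e_2 \ ,
\end{align}
term by term, using $\div_1\,\red e_i^\circ \sim_{\red C} \div_2\,{}^R\!\red e_i^\circ$ by construction and Lemma \ref{lemma:relationlieder} for the derivative terms.

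\textbf{Main obstacle.} I expect the delicate point to be (i)/(iii): ensuring that the $\red C$-relatedness is preserved when passing from the $D_1$-invariant spanning sections to arbitrary sections, i.e. that the decomposition $\red e_2 = \sum_i f_i\,{}^R\!\red e_i^\circ$ can be lifted to a compatible decomposition of $\red e_1$ on $\red C$ with $\red C$-related coefficients. This is exactly where the hypotheses $K_1 \cap K_2 = \{0\}$ (giving the unique splitting $s$ of Lemma \ref{lem:invsectionsarebasic}, hence a clean bijection between spanning sets), the trivial kernel/cokernel of $R$ (letting one invert the relation on the relevant sections), and the pointwise spanning by $\sfgamma_{D_1}(\red E{}_1)$ all conspire; assembling these into a clean argument, rather than any single calculation, is the crux. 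The local version of the statement is handled by running the entire argument over the charts $U^{(i)}$ of Definition \ref{def:localinvmetric}, exactly as in the proof of Theorem \ref{thm:mainlocal}.
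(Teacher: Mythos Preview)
Your overall strategy matches the paper's: transport $\div_1$ along the $D_1$-invariant sections, whose lifts are $K_2$-basic by Lemma~\ref{lem:invsectionsarebasic}, and then extend by the anchored Leibniz rule; uniqueness via Lemma~\ref{lem:uniquereldiv} is exactly right. The difference is \emph{where} you run the construction. You work directly on $\cQ_2$, defining $\div_2$ on the transported sections ${}^R\!\red e^\circ$ and extending over $C^\infty(\cQ_2)$. The paper instead lifts everything to the correspondence space $M$: it defines an intermediate operator $\delta$ on $\sfgamma(\Im(s_1)) = \sfgamma(K_1^\perp\cap K_2^\perp)$ by $\delta(s_1(e)) = \varpi_1^*(\div_1\,\red e)$, extended over $C^\infty(M)$ by the Leibniz rule with anchor $\rho|_{\Im(s_1)}$, and then shows that $\delta$ restricted to $K_2$-basic sections lands in $\varpi_2^*C^\infty(\cQ_2)$, so it descends to $\div_2$ on all of $\sfgamma(\red E{}_2)$ at once. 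The advantage of working upstairs is that both $\varpi_1^*C^\infty(\cQ_1)$ and $\varpi_2^*C^\infty(\cQ_2)$ sit inside $C^\infty(M)$, so no ``pulling coefficients across $\red C$'' is needed.

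This is precisely the point where your steps (i) and (iii), as written, have a gap. In (iii) you decompose $\red e_2 = \sum_i f_i\,{}^R\!\red e_i^\circ$ with $f_i\in C^\infty(\cQ_2)$ and then claim there exist ${}^R\!f_i\in C^\infty(\cQ_1)$ with ${}^R\!f_i \sim_{\red C} f_i$. But a function on $\cQ_2$ is $\red C$-related to a function on $\cQ_1$ only if it is constant along the fibres of ${\rm pr}_1:\red C\to\cQ_1$, i.e.\ only if it is $D_2$-invariant; for a generic decomposition this fails. The same issue arises in (i): to check that a relation $\sum_i h_i\,{}^R\!\red e_i^\circ = 0$ with $h_i\in C^\infty(\cQ_2)$ forces the corresponding Leibniz expression to vanish, you want to compare with $\div_1$ on $\cQ_1$, which again requires the $h_i$ to transport. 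You correctly flag this as the main obstacle and list the right hypotheses, but you do not supply the mechanism. There are two clean fixes. The first is the paper's: go to $M$ and use $\delta$, where the coefficient issue evaporates because $C^\infty(M)$ is the common ambient ring. The second, closer to your sketch, is to pass from a spanning set to a \emph{local frame} $\{\red e_a^\circ\}\subset\sfgamma_{D_1}(\red E{}_1)$: then $\{{}^R\!\red e_a^\circ\}$ is a local frame for $\red E{}_2$, the coefficients $f_a$ are unique (making (i) immediate), and for any $R$-related pair the pointwise graph property of $R$ forces $f_a(q_2)$ to depend only on $q_1$ via $\red e_1(q_1)$ and the frame, so the required ${}^R\!f_a$ exist and are smooth. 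Either route closes the argument.
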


\begin{proof}
Let $\div{}_1 $ be a divergence on $\red E{}_1$. By \cite[Proposition 5.31]{DeFraja:2023fhe}, there is a unique splitting $s_1 \colon K_1^\perp/ K_1 \to K_1^\perp$ of the short exact sequence
\begin{align}\label{eqn:dualitysplitting}
    0 \longrightarrow K_1 \longrightarrow K_1^\perp \longrightarrow K_1^\perp / K_1 \longrightarrow 0 
\end{align}
such that $\Im(s_1)\subset K_1^\perp \cap K_2^\perp$. Since $K_1\cap K_2 = \set{0}$ and $\rk(K_1)=\rk(K_2)$, it follows that $\Im(s_1)= K_1^\perp \cap K_2^\perp$.

Define the map $\delta \colon \sfgamma\big(\Im (s_1)\big) \to \varpi_1^*\big(C^\infty (\cQ_1)\big)$ by
\begin{align}
    \delta(s_1(e)) = \div{}_1 \, \red e \ ,
\end{align}
where $e$ is the element corresponding to $\red e\in\sfgamma(\red E{}_1)$ using the identifications $\sfgamma(\red E{}_1) \cong \sfgamma_{\text{bas}}(K_1^\perp) / \sfgamma(K_1)$ and $C^\infty_{\text{bas}}(M,\cF_1) \cong C^\infty(\cQ_1)$.  Note that $\delta$ behaves like a divergence operator, i.e. it satisfies a Leibniz-like rule, because $\varpi_{1*}(\rho(s_1(e)))= \red \rho{}_1 (\red e).$ 
Using Definition \ref{def:K2compdiv},  the divergence $\delta$ satisfies $\pounds_X\, \delta (s_1(e)) = 0 $, for any $X \in \frk{k}_2$. One extends this to the whole of $\sfgamma\big( \Im(s_1) \big)$ by the Leibniz rule.

Any section $e \in \sfgamma \big( \Im(s_1) \big) \cap \sfgammabas(K_2^\perp) $ can be written as $e = \sum_i\, f_i\, s_1(\red e{}_i)$, where $\red e{}_i \in \sfgamma(\red E{}_1)$ and $f_i \in \varpi_2^* \big(C^\infty(\cQ_2)\big)$. 
Likewise any vector field $X \in \sfgamma \big( \rho(K_2) \big)$ can be written as $X =\sum_j\, g_j\, X_j$, where $X_j \in \frk{k}_2$ and $g_j \in C^\infty (M)$. Thus
\begin{align}
   \pounds_X\, \delta (e) &= \sum_{i,j}\, \Big( g_j\,\pounds_{X_j} \big( f_i\, \, \delta ( s_1(\red e{}_i)) \big)+ g_j\, \pounds_{X_j} \, \pounds_{ \rho(s_1(\red e{}_i))} f_i \Big)\\[4pt]
    &= \sum_{i,j}\, \big( g_j\, (\pounds_{X_j} f_i)\, \delta (s_1(\red e{}_i)) + g_j\, f_i\, \pounds_{X_j}\, \delta (s_1(\red e{}_i)) \\[-10pt]
    & \hspace{5cm} + g_j\,\pounds_{\rho(s_1(\red e{}_i))}\, \pounds_{X_j} f_i + g_j\,\pounds_{[X_j,\rho{}(s_1(\red e{}_i))]} f_i \big) = 0 \ .
\end{align}
The final term vanishes since $[\varpi_{1*} (X_j), \red \rho{}_1(\red e{}_i)] = 0$, hence $[X_j,\rho (s_1 (\red e{}_i))] \in \sfgamma \big(\rho(K_1) \big)$, whereas
\begin{align}
[X_j,\rho(s_1(\red e{}_i))] = \rho(\llbracket\sigma_2(X_j), s_1(\red e{}_i)\rrbracket) \ \in \ \sfgamma \big( \rho(K_2^\perp \cap K_1) \big)\subset \sfgamma \big( \rho(K_2) \big) \ .
\end{align}
Hence the restriction of $\delta$ to $\sfgamma \big(\Im(s_1) \big) \cap \sfgammabas(K_2^\perp)$ maps to $\varpi_2^* \big(C^\infty(\cQ_2)\big)$.

By \cite[Proposition 5.31]{DeFraja:2023fhe} there also exists a unique splitting $s_2:K_2^\perp/K_2\to K_2^\perp$ of the short exact sequence 
\begin{align}
    0 \longrightarrow K_2 \longrightarrow K_2^\perp \longrightarrow K_2^\perp / K_2 \longrightarrow 0
\end{align}
induced by $K_2^\perp$ and $K_2$. As for $s_1$, its image is $\Im(s_2)= K_1^\perp \cap K_2^\perp$.
Thus we may define a divergence operator $\div{}_2 \colon \sfgamma(\red E{}_2) \to C^\infty(\cQ_2)$ by
\begin{align}
    \div{}_2 \, \red e{}_2 = \delta \big(e_2 \big) \ ,
\end{align}
where $e_2$ is the lift of $\red e{}_2\in\sfgamma(\red E{}_2)$ to $\sfgamma_{\text{bas}}(K^\perp_2)$ using the unique splitting $s_2$ and the identifications $\sfgamma(\red E{}_2) \cong \sfgamma_{\text{bas}}(K_2^\perp)/\sfgamma(K_2)$ and $C^\infty_{\text{bas}}(M,\cF_2) \cong C^\infty(\cQ_2)$.

From Lemma~\ref{lem:invsectionsarebasic} it follows that any $K_2$-basic section $e$ of $\Im(s_1)$ is written as $e =\sum_i\, f_i\, s_1(\red e{}_i)$, where $\red e{}_i \in \sfgamma_{D_1}(\red E{}_1)$ and $f_i\in \varpi_2^* \big(C^\infty(\cQ_2)\big)$. 
Hence $e$ is also  basic with respect to $K_1$ when $f_i \in \varpi^*_1 \big(C^\infty(\cQ_1)\big)\,\cap\, \varpi_2^* \big(C^\infty(\cQ_2)\big)$. Thus $R$-related sections $ \psi{}_1 \sim_R  \psi{}_2$ are given by $ \psi{}_1 =\sum_i\, f_i\, \red e{}_i$ and $\psi{}_2 =\sum_i \, \varpi_{2*} \big( \varpi_1^*(f_i) \big)\,\natural_2 (s_1(\red e{}_i))$, for $f_i \in C^\infty_{D_1}(\cQ_1)$ and $\red e{}_i \in \sfgamma_{D_1}(\red E{}_1)$. By the above construction, $\div{}_1\, \psi{}_1 = \div{}_2\, \psi{}_2$.

Uniqueness follows from Lemma \ref{lem:uniquereldiv}.

The local version follows the same construction, by noting that the invariance of $\delta(e)$ can be checked locally.
\end{proof}

\begin{remark}[\textbf{Local T-duality for Divergences}] \label{rmk:tdualdivlocal}
   These results also hold in the local case, that is, if $V_1^+$ is locally $D_1$-invariant in the sense of Definition \ref{def:localinvmetric}. This is because, in the proof of Lemma \ref{lem:invsectionsarebasic}, checking whether $s(\red e^\circ)$ is basic with respect to $K_2$ for $\red e^\circ \in \sfgamma_{D_1}(\red E{}_1)$ can be performed on locally defined sections of $K_2$. Similarly, the invariance of $\delta(e)$ with respect to $T\cF_2=\rho(K_2)$ can be checked locally. It follows that Theorem \ref{thm:tdualdivergence} also holds in the local case.
\end{remark}

\medskip

\subsubsection{Correspondence Spaces}\label{sec:correspondencediv}~\\[5pt]
 We shall discuss how the results of this subsection apply to the most prominent example of T-duality, that of two fiber bundles $\cQ_1$ and $\cQ_2$ over the same base manifold $\cB$. This setup can be depicted in the commutative correspondence diagram
\begin{equation} \label{cd:correspondencespace}
        \begin{tikzcd}
             & M \arrow[swap]{dl}{\varpi_1} \arrow{dr}{\varpi_2} & \\
            \cQ_1 \arrow[swap]{dr}{\pi_1} &  & \cQ_2 \arrow{dl}{\pi_2} \\
             & \cB & 
        \end{tikzcd}
    \end{equation}
    
Because of the universal property of fibred products, we can take the correspondence space to be $$M = \cQ_1 \times_{\cB} \cQ_2 \ . $$ In particular, $T\cF_1 \cap T\cF_2 = \set{0}$, with $\varpi_1$ and $\varpi_2$ the canonical projection maps. Since $\rho$ is injective on $K_1$ and $K_2$ (see~\cite[Lemma~2.34]{DeFraja:2023fhe}), it follows that $K_1 \cap K_2 = \set{0}$.
The particular case of principal torus bundles is covered extensively in \cite{Bouwknegt2003topology, cavalcanti2011generalized, Garcia-Fernandez:2020ope} and will be treated as a special case of our general formalism in Subsection~\ref{sec:torusbundles} below.

\begin{proposition}\label{prop:tau2isD2invariant}
Suppose that the assumptions of Theorem \ref{thm:maingeneral} are satisfied, with T-duality relation $R\colon (\red E{}_1, \tau_1) \rel (\red E{}_2, \tau_2)$. Assume further that there is a spanning set of vector fields $\set{X_1,\dots,X_{r_2}}\subset T\cF_1$ which are projectable by $\cF_2$ such that
\begin{align} \label{eqn:tdualcond}
    \llbracket \sigma_1(X_i), s(\red e^\circ) \rrbracket = \ad^{\sigma_2}_{X_i}\, s(\red e^\circ)
\end{align}
for every $i=1,\dots,r_2$ and $\red e^\circ \in \sfgamma_{D_1}(\red E{}_1)$, where $s:K_1^\perp/K_1\to K_1^\perp$ is the unique splitting from Lemma~\ref{lem:invsectionsarebasic}.
Then the T-dual generalised metric $\tau_2$ is invariant with respect to the distribution $$D_2:=T\red\cF{}_2 = \varpi_{2*}(T\cF_1) \ , $$ with Lie algebra $\frk{k}_{\tau_2} := \text{Span}_{\IR}\set{\varpi_{2*}X_1,\dots,\varpi_{2*}X_{r_2}}\subset\sfgamma(T\cQ_2)$. Furthermore, there is an isomorphism $\sfgamma_{D_1}(\red E{}_1) \cong \sfgamma_{D_2}(\red E{}_2)$ of $C^\infty(\cB)$-modules.
\end{proposition}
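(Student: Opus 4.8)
We work in the correspondence-space setting of this subsection, where $M=\cQ_1\times_\cB\cQ_2$ so that $T\cF_1\cap T\cF_2=\set{0}$ and $K_1\cap K_2=\set{0}$. The plan is to produce an explicit \emph{lift-and-descend} isomorphism. Let $s\colon K_1^\perp/K_1\to K_1^\perp$ be the unique splitting of Lemma~\ref{lem:invsectionsarebasic} (so $\Im(s)=K_1^\perp\cap K_2^\perp$, using $K_1\cap K_2=\set{0}$ and $\rk K_1=\rk K_2$), and define $\Theta\colon\sfgamma_{D_1}(\red E{}_1)\to\sfgamma(\red E{}_2)$ by $\Theta(\red e^\circ)=\natural_2(s(\red e^\circ))$; this descends to $\cQ_2$ because $s(\red e^\circ)\in\sfgammabas(K_2^\perp)$ by Lemma~\ref{lem:invsectionsarebasic}. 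First I would show $\Theta$ lands in $\sfgamma_{D_2}(\red E{}_2)$. Since $\sigma_1$ is adapted to $K_1$ and $X_i\in T\cF_1=\rho(K_1)$ one has $\sigma_1(X_i)\in\sfgamma(K_1)$, while $s(\red e^\circ)$ is basic with respect to $K_1$, so hypothesis~\eqref{eqn:tdualcond} gives $\ad^{\sigma_2}_{X_i}s(\red e^\circ)=\llbracket\sigma_1(X_i),s(\red e^\circ)\rrbracket\in\sfgamma(K_1)$. Expanding the definition of $\ad^{\sigma_2}$, and using that $\sigma_2(X_i)$ and $s(\red e^\circ)$ both lie in $\sfgammabas(K_2^\perp)$ (the former because $X_i$ is $\cF_2$-projectable and $\sigma_2$ is adapted to $K_2$) together with the fact that $\sigma_2^*$ sends $K_2^\perp$ into $\ann(T\cF_2)$, one sees the same element also lies in $\sfgamma(K_2^\perp)$. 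Hence $\ad^{\sigma_2}_{X_i}s(\red e^\circ)\in\sfgamma(K_1\cap K_2^\perp)$, which is contained in $\sfgamma(K_2)$ by the inclusion $K_2^\perp\cap K_1\subseteq K_2$ of Theorem~\ref{thm:maingeneral}, hence in $\sfgamma(K_1\cap K_2)=\set{0}$. Since $\natural_2$ is compatible with the bracket, anchor, pairing and the identification $\sigma_2\leftrightarrow\red\sigma_2$, it intertwines $\ad^{\sigma_2}$ with $\ad^{\red\sigma_2}$ on $K_2$-basic data, so $\ad^{\red\sigma_2}_{\varpi_{2*}X_i}\Theta(\red e^\circ)=\natural_2\big(\ad^{\sigma_2}_{X_i}s(\red e^\circ)\big)=0$, i.e. $\Theta(\red e^\circ)\in\sfgamma_{D_2}(\red E{}_2)$.

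Next I would check that $\Theta$ is an isomorphism of $C^\infty(\cB)$-modules. Linearity over $C^\infty(\cB)$ holds because $s$ is $C^\infty(M)$-linear, the identification $\sfgamma(\red E{}_1)\cong\sfgammabas(K_1^\perp)/\sfgamma(K_1)$ is $C^\infty_{\text{bas}}(M,\cF_1)$-linear, and functions pulled back from $\cB$ are both $\cF_1$- and $\cF_2$-basic, hence descend along $\natural_2$. Injectivity: if $\Theta(\red e^\circ)=0$ then $s(\red e^\circ)\in\sfgamma(K_2)\cap\Im(s)=\sfgamma(K_2\cap K_1^\perp)\subseteq\sfgamma(K_1)$ by the inclusion $K_2\cap K_1^\perp\subseteq K_1$ of Theorem~\ref{thm:maingeneral}, so $s(\red e^\circ)\in\sfgamma(K_1\cap K_2)=\set{0}$ and $\red e^\circ=\natural_1(s(\red e^\circ))=0$. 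For surjectivity I would exploit the symmetry of the hypotheses: once $\tau_2$ is known to be $D_2$-invariant (proved below), hypothesis~\eqref{eqn:tdualcond} plays for the direction $2\to1$ exactly the role that Definition~\ref{def:compatiblesplit} plays for $1\to2$ --- the $\cF_2$-projectable fields $X_i\in T\cF_1$ are the canonical lifts to $M$ of a spanning set of $\frk{k}_{\tau_2}$ --- so Lemma~\ref{lem:invsectionsarebasic} applies verbatim with $1$ and $2$ interchanged. It yields the unique splitting $s_2\colon K_2^\perp/K_2\to K_2^\perp$ with $\Im(s_2)=K_1^\perp\cap K_2^\perp$ and $s_2(\sfgamma_{D_2}(\red E{}_2))\subseteq\sfgammabas(K_1^\perp)$, so $\Xi(\red f^\circ):=\natural_1(s_2(\red f^\circ))$ defines a map $\sfgamma_{D_2}(\red E{}_2)\to\sfgamma_{D_1}(\red E{}_1)$. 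Because each $\natural_i$ restricts at $m\in M$ to a linear isomorphism $(K_1^\perp\cap K_2^\perp)_m\xrightarrow{\ \sim\ }(\red E{}_i)_{\varpi_i(m)}$ (the kernel $K_i\cap K_1^\perp\cap K_2^\perp$ vanishes by Theorem~\ref{thm:maingeneral}), and because $s_2\big(\Theta(\red e^\circ)\big)=s(\red e^\circ)$ (as $s(\red e^\circ)$ is $K_2$-basic), one gets $\Xi\circ\Theta=\unit$ and $\Theta\circ\Xi=\unit$. Hence $\Theta$ is the desired $C^\infty(\cB)$-module isomorphism.

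It remains to upgrade the $D_2$-invariance from invariant sections to the subbundle $V_2^+$. Since $R$ is a generalised isometry, $R=R^+\oplus R^-$ with $R^\pm=R\cap(V_1^\pm\times V_2^\pm)$, so for $\red w^\circ\in\sfgamma_{D_1}(\red E{}_1)\cap\sfgamma(V_1^+)$ the pair $(\red w^\circ,\Theta(\red w^\circ))$ lies in $R^+$ and therefore $\Theta(\red w^\circ)\in\sfgamma(V_2^+)$. The operators $\ad^{\red\sigma_1}_{\red X}$, $\red X\in\frk{k}_{\tau_1}$, restrict to a flat partial connection along $\red\cF{}_1$ on the $D_1$-invariant subbundle $V_1^+$, so $V_1^+$ is locally spanned by leafwise-parallel, i.e. $D_1$-invariant, sections; by the pointwise bijectivity of $\Theta$ and the pointwise isomorphism $V_1^+\cong V_2^+$ furnished by $R^+$, $\Theta$ carries such a local frame to a local frame of $V_2^+$ consisting of $D_2$-invariant sections. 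Writing an arbitrary $\red w_2\in\sfgamma(V_2^+)$ locally as a $C^\infty(\cQ_2)$-combination of this frame and using the Leibniz rule for the $\sigma$-adjoint action (Lemma~\ref{lem:sigmaadleibniz}), one gets $\ad^{\red\sigma_2}_{\red Y}\red w_2\in\sfgamma(V_2^+)$ for every $\red Y\in\frk{k}_{\tau_2}$, which is precisely $D_2$-invariance of $\tau_2$ with isometry algebra $\frk{k}_{\tau_2}=\text{Span}_{\IR}\set{\varpi_{2*}X_1,\dots,\varpi_{2*}X_{r_2}}$.

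The step I expect to be the main obstacle is the surjectivity of $\Theta$, specifically making the ``interchange $1\leftrightarrow2$'' argument rigorous: Lemma~\ref{lem:invsectionsarebasic}, the compatibility of splittings, and the two inclusions of Theorem~\ref{thm:maingeneral} are all stated asymmetrically, and one must verify that hypothesis~\eqref{eqn:tdualcond}, combined with the $D_2$-invariance of $\tau_2$, genuinely reproduces every ingredient of that lemma in the exchanged roles (in particular that $s_2$ really is the relevant duality splitting and that the canonical lifts of $\frk{k}_{\tau_2}$ span $\rho(K_1)$ pointwise). Should the symmetry argument prove fragile, a fallback is a rank count: $\Theta$ is an injective $C^\infty(\cB)$-linear map between finitely generated projective modules of equal rank --- both equal $\rk\red E{}_1-\rk D_1=\rk\red E{}_2-\rk D_2$, as follows from the fibrewise isomorphisms $(K_1^\perp\cap K_2^\perp)_m\cong(\red E{}_i)_{\varpi_i(m)}$ together with the Hausdorff--Morita equivalence of $\red\cF{}_1$ and $\red\cF{}_2$ --- so $\Theta$ is automatically onto.
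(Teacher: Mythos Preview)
Your approach to the key step --- showing $\Theta(\red e^\circ)\in\sfgamma_{D_2}(\red E{}_2)$ --- is correct and takes a genuinely different, more structural route than the paper. The paper works explicitly with the two-form $B$ defined by $(\sigma_1-\sigma_2)(X)=\rho^*(\iota_XB)$: writing $s(\red Y^\circ+\red\eta^\circ)=Y+\eta$ with $\eta(Z)=B(Y,Z)$ for $Z\in\sfgamma(T\cF_2)$, it computes that hypothesis~\eqref{eqn:tdualcond} forces $\iota_Y\pounds_{X_i}B=0$, and then uses non-degeneracy of $B$ on $T\cF_1\otimes T\cF_2$ to deduce $[X_i,Y]=0$, from which $D_2$-invariance of $\natural_2\big(s(\red e^\circ)\big)$ follows. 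Your argument instead traps $\ad^{\sigma_2}_{X_i}s(\red e^\circ)$ in $K_1$ (via the hypothesis and $K_1$-basicness) and in $K_2^\perp$ (via $K_2$-basicness of the inputs and the behaviour of $\sigma_2^*,\rho^*$), hence in $K_1\cap K_2^\perp\subseteq K_2$, hence in $K_1\cap K_2=\set{0}$. This is coordinate-free and cleaner; the paper's computation, by contrast, extracts the concrete geometric fact $[X_i,Y]=0$, which is independently useful.

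Two points to flag. First, your rank formula in the fallback is wrong: $D_1$-invariance is a leafwise constancy condition, not a pointwise constraint, so $\sfgamma_{D_1}(\red E{}_1)$ corresponds to a bundle over $\cB$ of rank $\rk\red E{}_1$, not $\rk\red E{}_1-\rk D_1$. The argument survives because $\rk\red E{}_1=2\dim\cQ_1=2\dim\cQ_2=\rk\red E{}_2$ (using $\rk K_1=\rk K_2$) and $\Theta$ is fibrewise injective between bundles of equal rank. Second, the paper is itself terse on surjectivity, simply asserting that ``by virtue of the generalised isometry $R$ and the uniqueness of the lift'' the map $\natural_2\circ s$ is an isomorphism; your symmetry argument is indeed delicate (the hypothesis~\eqref{eqn:tdualcond} is only tested on $s(\sfgamma_{D_1}(\red E{}_1))$, not on all of $\Im(s_2)$), so the corrected rank/fibrewise argument is the safer route and arguably more complete than what the paper writes. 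Your final step --- deducing $D_2$-invariance of $V_2^+$ from a $D_2$-invariant local frame via Lemma~\ref{lem:sigmaadleibniz} --- matches the paper's conclusion, which just says ``by taking $\red Y^\circ+\red\eta^\circ\in\sfgamma(V_1^+)$''.
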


\begin{proof}
Let $(\IT M,H_1)$ and $(\IT M,H_2)$ be the twisted standard Courant algebroids which are isomorphic to $E$ via the isomorphisms induced by the compatible adapted splittings $\sigma_1$ and $\sigma_2$, respectively. Define $B\in\sfOmega^2(M)$ by $$\sigma_1(X)-\sigma_2(X)=\rho^*(\iota_XB)$$ for $X\in\sfgamma(TM)$. For every \smash{$\red Y^\circ + \red \eta^\circ \in\sfgamma_{D_1}(\IT \cQ_1)$}, let $\eta = \varpi_1^*\, \red \eta^\circ$ and $Y + \eta = s(\red Y^\circ + \red \eta^\circ)$. In other words, $Y$ is the unique lift of $\red Y^\circ$ such that
    \begin{align}
        \eta(Z) = B(Y,Z) \ ,
    \end{align}
    for all  $Z \in \sfgamma(T\cF_2)$.

Equation \eqref{eqn:tdualcond} then reads
    \begin{align}
        \iota_{Y}\, \iota_{X_i}\, \de B + \iota_{Y} \,\pounds_{X_i} B=\llbracket - \iota_{X_i}B, Y + \eta \rrbracket &= \mathrm{pr}_2(\llbracket \sigma_2(X_i), \sigma_2(Y) \rrbracket_{H_1})\\[4pt]
        &= \mathrm{pr}_2(\llbracket \sigma_1(X_i), \sigma_1(Y) \rrbracket_{H_2}) = \iota_{Y}\,\iota_{X_i}\, \de B \ ,
    \end{align}
    hence $\iota_{Y}\pounds_{X_i} B = 0$ for every $i=1,\dots,r_2$ and $\red Y^\circ + \red \eta^\circ \in\sfgamma_{D_1}(\IT \cQ_1)$.
    Thus
    \begin{align}
        0=\pounds_{X_i} \big(\eta(Z) - B(Y,Z)\big) &=  \pounds_{X_i}\big(\eta(Z)\big) - (\pounds_{X_i}B)(Y,Z) - B([X_i,Y], Z)- B(Y,[X_i,Z])\\[4pt]
        &=(\pounds_{X_i}\eta)(Z) + \eta([X_i,Z]) - B([X_i,Y], Z)- \eta([X_i,Z])\\[4pt]
        & = -B([X_i,Y], Z) \ ,
    \end{align}
    for all $Z\in\sfgamma(T\cF_2)$. 
    But $[X_i,Y] \in \sfgamma(T\cF_1)$ since $Y$ is $\cF_1$-projectable, and $B$ is non-degenerate on $T\cF_1 \otimes T\cF_2$ since $K_1\cap K_2=\set{0}$, hence $[X_i, Y] = 0$. 
    
    It therefore follows that
    \begin{align}
        \ad^{\red \sigma{}_2}_{\varpi_{2*} X_i}\, \natural_2 \big( s(\red Y^\circ+ \red \eta^\circ) \big)= 0 \ .
    \end{align}
    By virtue of the generalised isometry $R$ and the uniqueness of the lift, it thus follows that the map $\natural_2 \circ s \colon \sfgamma_{D_1}(\IT \cQ_1) \to \sfgamma_{D_2}(\IT \cQ_2)$ is an isomorphism. Hence we obtain the isomorphism of $C^\infty(\cB)$-modules
    $\sfgamma_{D_1}(\red E{}_1) \cong \sfgamma_{D_2}(\red E{}_2)$.
    By taking $\red Y^\circ + \red\eta^\circ \in \sfgamma(V_1^+)$, it then follows that $V_2^+$ is $D_2$-invariant. 
\end{proof}

\begin{proposition} \label{prop:transitiveTduality}
Under the conditions of Proposition \ref{prop:tau2isD2invariant}, there exists transitive Courant algebroids $\red {\red E}{}_1$ and  $\red {\red E}{}_2$ over $\cB$ together with a $C^\infty(\cB)$-module isomorphism $\mathscr{F} \colon \sfgamma(\red {\red E}{}_1) \to \sfgamma(\red {\red E}{}_2)$ induced by the T-duality relation $R$.
\end{proposition}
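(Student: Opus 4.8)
The plan is to take $\red{\red E}{}_1$ and $\red{\red E}{}_2$ to be the vector bundles over $\cB$ whose sheaves of sections are the spaces of invariant sections $\sfgamma_{D_1}(\red E{}_1)$ and $\sfgamma_{D_2}(\red E{}_2)$ of Definition~\ref{def:invariantsection}, and to let $\mathscr{F}$ be the $C^\infty(\cB)$-module isomorphism $\sfgamma_{D_1}(\red E{}_1)\cong\sfgamma_{D_2}(\red E{}_2)$ already constructed in Proposition~\ref{prop:tau2isD2invariant}. First I would record that $\sfgamma_{D_1}(\red E{}_1)$ is a module over $C^\infty_{D_1}(\cQ_1)\cong C_{\rm bas}^\infty(\cQ_1,\red\cF{}_1)\cong C^\infty(\cB)$ and is involutive for the Dorfman bracket (Lemma~\ref{cor:invariantsinvolutive}). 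Under the standing assumptions of the correspondence-space setting --- in particular $K_1\cap K_2=\set{0}$, the hypothesis (as in Theorem~\ref{thm:tdualdivergence}) that $\sfgamma_{D_1}(\red E{}_1)$ spans $\red E{}_1$ pointwise, and the fact that $\red\cF{}_1$ is a simple foliation of $\cQ_1$ with leaf space $\cB$ --- this $C^\infty(\cB)$-module is finitely generated and projective: locally over $\cB$ the foliation $\red\cF{}_1$ is trivial and $\frk{k}_{\tau_1}$-invariant sections are determined by their restriction to a slice, so they are pulled back from a trivial bundle over $\cB$. By Serre--Swan it is therefore $\sfgamma(\red{\red E}{}_1)$ for a vector bundle $\red{\red E}{}_1\to\cB$ of fibre rank $\rk(\red E{}_1)$, and the same argument, applied to $\red E{}_2$ using the $D_2$-invariance of $\tau_2$ together with the isomorphism of Proposition~\ref{prop:tau2isD2invariant} to carry over the spanning property, produces $\red{\red E}{}_2\to\cB$.

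Next I would transport the structure maps of $\red E{}_1$ to $\red{\red E}{}_1$. The anchor $\red\rho{}_1$ carries an invariant section $e^\circ$ to an $\red\cF{}_1$-projectable vector field: for $X\in\frk{k}_{\tau_1}$ one has, by $D_1$-invariance, $\llbracket\red\sigma{}_1(X),e^\circ\rrbracket=\ad^{\red\sigma_1}_X e^\circ+\red\rho{}_1^*\big(\red\sigma{}_1^*\llbracket\red\sigma{}_1(X),\red\sigma{}_1(\red\rho{}_1(e^\circ))\rrbracket\big)\in\Im(\red\rho{}_1^*)=\ker(\red\rho{}_1)$, hence $[X,\red\rho{}_1(e^\circ)]=\red\rho{}_1\big(\llbracket\red\sigma{}_1(X),e^\circ\rrbracket\big)=0$, so $\red\rho{}_1$ descends to an anchor $\red{\red\rho}{}_1\colon\red{\red E}{}_1\to T\cB$. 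The pairing restricts because, for $e^\circ,w^\circ\in\sfgamma_{D_1}(\red E{}_1)$ and $X\in\frk{k}_{\tau_1}$, item~\ref{eqn:metric1} of Definition~\ref{def:CourantAlg} together with the isotropy of $\red\sigma{}_1$ and the properties of $\ad^{\red\sigma_1}$ (Appendix~\ref{app:twistedaction}) force $\pounds_X\ip{e^\circ,w^\circ}_1=0$; non-degeneracy of the restricted pairing then follows from pointwise spanning. The Dorfman bracket restricts by involutivity of $\sfgamma_{D_1}(\red E{}_1)$, and axioms~\ref{eqn:metric1}--\ref{eqn:Jacobiid} are inherited, so $\red{\red E}{}_1$ is a Courant algebroid over $\cB$; it is transitive since $\red\rho{}_1(\sfgamma_{D_1}(\red E{}_1))$ spans $T\cQ_1$ pointwise --- using exactness of $\red E{}_1$ and pointwise spanning of $\sfgamma_{D_1}(\red E{}_1)$ --- and these projectable vector fields push forward to a pointwise spanning set of $T\cB$. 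The identical construction on $\red E{}_2$, with its $D_2$-invariant generalised metric, produces the transitive Courant algebroid $\red{\red E}{}_2$ over $\cB$.

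It then remains to identify $\mathscr{F}$. Under the identifications of the previous paragraph, the $C^\infty(\cB)$-module isomorphism of Proposition~\ref{prop:tau2isD2invariant} is precisely a map $\mathscr{F}\colon\sfgamma(\red{\red E}{}_1)\to\sfgamma(\red{\red E}{}_2)$, and it is induced by the T-duality relation $R$ in the following sense: writing $s_i$ for the unique splittings of Lemma~\ref{lem:invsectionsarebasic}, whose common image is $\Im(s_1)=\Im(s_2)=K_1^\perp\cap K_2^\perp$, one has $\mathscr{F}(e^\circ_1)=e^\circ_2$ exactly when the lifts $s_1(e^\circ_1)$ and $s_2(e^\circ_2)$ agree in $K_1^\perp\cap K_2^\perp$, equivalently when $e^\circ_1\sim_R e^\circ_2$ by~\eqref{eqn:T-dualRelation}. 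Being a $C^\infty(\cB)$-module isomorphism, $\mathscr{F}$ is covered by a vector bundle isomorphism $\red{\red E}{}_1\to\red{\red E}{}_2$ over $\cB$, which is the assertion.

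The main work is in the second paragraph: checking that the restricted pairing and bracket really assemble into a \emph{transitive} Courant algebroid over $\cB$, rather than merely a bundle carrying some algebraic operations. This turns on the compatibility between $D_i$-invariance --- which is phrased through the $\sigma$-adjoint action $\ad^{\red\sigma_i}$ --- and the native Courant algebroid structure of $\red E{}_i$; the identities needed are those collected in Appendix~\ref{app:twistedaction}, and the pointwise-spanning hypothesis on invariant sections is what lets one push non-degeneracy of the pairing and surjectivity of the anchor down from $\red E{}_i$ to $\red{\red E}{}_i$. The only other non-formal point is the realisation of $\sfgamma_{D_i}(\red E{}_i)$ as the sections of a genuine vector bundle over the leaf space $\cB$, which relies on $\red\cF{}_i$ being simple and on the spanning assumption.
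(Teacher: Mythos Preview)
Your proposal is correct and follows essentially the same strategy as the paper: define $\red{\red E}{}_i$ via the $C^\infty(\cB)$-module $\sfgamma_{D_i}(\red E{}_i)$, push down the anchor, pairing and bracket using the invariance properties from Appendix~\ref{app:twistedaction} (the paper cites Lemmas~\ref{lem:pairingpreserved} and~\ref{cor:invariantsinvolutive} directly), deduce transitivity from pointwise spanning, and take $\mathscr{F}$ to be the isomorphism of Proposition~\ref{prop:tau2isD2invariant}. The only stylistic difference is that the paper writes $\red{\red E}{}_1=\red E{}_1/\red\cF{}_1$ as a quotient bundle rather than invoking Serre--Swan, and is terser about projectability of $\red\rho{}_1(e^\circ)$ and non-degeneracy of the descended pairing; your extra detail on these points is welcome and does not change the argument.
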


\begin{proof}
Define a vector bundle $\red{\red E}{}_1$ over $\cB$ by
\begin{align}
    \red{\red E}{}_1 = \red E{}_1 / \red \cF{}_1 \ .
\end{align}
The space of sections of $\red{\red E}{}_1$ is identified with the space $\sfgamma_{D_1}(\red E{}_1)$ as a module over $C^\infty_{D_1}(\cQ_1) \cong C^\infty(\cB)$. This gives $\red{\red E}{}_1$ the structure of a Courant algebroid as follows.
The involutivity of $\sfgamma_{D_1}(\red E{}_1)$ established in Lemma \ref{cor:invariantsinvolutive} gives the bracket structure on $\red{\red E}{}_1$, while the pairing descends to $\cB$ by Lemma~\ref{lem:pairingpreserved}. The anchor \smash{$\red{\red \rho}{}_1:\red{\red E}{}_1\to T\cB$} is defined by \smash{$\red{\red \rho}{}_1(\red{\red e}) = \pi_{1*}\,\red \rho{}_1(\red e^\circ)$}, where $\red\rho{}_1(\red e^\circ)$ is projectable to $\cB$ since $\red e^\circ \in \sfgamma_{D_1}(\red E{}_1)$. Since $\sfgamma_{D_1}(\red E{}_1)$ spans $\red E{}_1$ pointwise, it follows that \smash{$\red{\red \rho}{}_1$} is surjective, and hence that \smash{$\red{\red E}{}_1$} is a transitive Courant algebroid. 

By a completely analogous argument using the Hausdorff-Morita equivalent foliation $\red\cF{}_2$ of $\cQ_2$, there is also an isomorphism \smash{$\sfgamma(\red{\red E}{}_2) \cong \sfgamma_{D_2}(\red E{}_2)$}, where
\begin{align}
    \red{\red E}{}_2 = \red E{}_2 / \red \cF{}_2
\end{align}
is a transitive Courant algebroid over $\cB$.
By Proposition \ref{prop:tau2isD2invariant}, there is an isomorphism $\sfgamma_{D_1}(\red E{}_1) \cong \sfgamma_{D_2}(\red E{}_2)$. The chain of isomorphisms
\begin{align}
    \sfgamma(\red{\red E}{}_1) \cong \sfgamma_{D_1}(\red E{}_1) \cong \sfgamma_{D_2}(\red E{}_2) \cong \sfgamma(\red{\red E}{}_2)
\end{align}
defines the isomorphism $\mathscr{F} \colon \sfgamma(\red{\red E}{}_1) \to \sfgamma(\red{\red E}{}_2)$.
\end{proof}

We can now show that T-dual divergences induce isomorphic divergences, under $\mathscr{F}$,  on the transitive Courant algebroids $\red{\red E}{}_1$ and $\red{\red E}{}_2$.

\begin{lemma} \label{lem:isomTdualdiv}
Under the assumptions of Theorem \ref{thm:tdualdivergence}, let \smash{$\red{\red \div}{}_1$} be a divergence on \smash{$\red{\red E}{}_1$}. Then there exists a divergence operator $\red{\red \div}{}_2$ on $\red{\red E}{}_2$ such that
\begin{align} \label{eqn:divergencesbase}
    \red{\red \div}{}_1 \, \red{\red e} = \red{\red \div}{}_2 \, \mathscr{F}(\red{\red e}) \ , 
\end{align}
for all $\red{\red e} \in \sfgamma(\red{\red E}{}_1)$.
\end{lemma}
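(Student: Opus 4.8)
The plan is to realise $\red{\red\div}{}_2$ as the $\red\cF{}_2$-reduction of the $R$-dual on $\red E{}_2$ of a suitable lift of $\red{\red\div}{}_1$ to $\red E{}_1$. Throughout I use the identifications $\sfgamma(\red{\red E}{}_i)\cong\sfgamma_{D_i}(\red E{}_i)$ and $C^\infty(\cB)\cong C^\infty_{D_i}(\cQ_i)$ from Propositions \ref{prop:transitiveTduality} and \ref{prop:tau2isD2invariant}, under which $\mathscr F$ becomes the composition $\sfgamma(\red{\red E}{}_1)\cong\sfgamma_{D_1}(\red E{}_1)\xrightarrow{\natural_2\circ s_1}\sfgamma_{D_2}(\red E{}_2)\cong\sfgamma(\red{\red E}{}_2)$, where $s_1$ is the unique splitting $K_1^\perp/K_1\to K_1^\perp$ with image $K_1^\perp\cap K_2^\perp$ from Lemma \ref{lem:invsectionsarebasic}.

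First I would lift $\red{\red\div}{}_1$ to $\red E{}_1$. Through the identifications it is a map $\delta_1\colon\sfgamma_{D_1}(\red E{}_1)\to C^\infty_{D_1}(\cQ_1)$ obeying an anchored Leibniz rule along $\red\rho{}_1$. Since $\sfgamma_{D_1}(\red E{}_1)$ spans $\red E{}_1$ pointwise, extending by the rule \eqref{eqn:leibnizdiv} on decompositions $e=\sum_i f_i\,\red e^\circ_i$ with $f_i\in C^\infty(\cQ_1)$ and $\red e^\circ_i\in\sfgamma_{D_1}(\red E{}_1)$ produces a divergence $\div{}_1$ on $\red E{}_1$; well-definedness is the same local-frame argument used to extend $\delta$ in the proof of Theorem \ref{thm:tdualdivergence}, and the extension is unique because two divergences agreeing on a pointwise-spanning set differ by a section annihilating that set, hence by $0$. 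By construction $\pounds_{\red X}\,\div{}_1\,\red e^\circ=0$ for all $\red e^\circ\in\sfgamma_{D_1}(\red E{}_1)$ and $\red X\in\frk{k}_{\tau_1}$, so Lemma \ref{lem:compatiblediverg} shows $\div{}_1$ is compatible with $D_1$. Theorem \ref{thm:tdualdivergence} then yields the unique divergence $\div{}_2$ on $\red E{}_2$ with $\div{}_1\sim_R\div{}_2$.

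Next I would show $\div{}_2$ is compatible with $D_2$. For $\red e^\circ\in\sfgamma_{D_1}(\red E{}_1)$ the proof of Theorem \ref{thm:tdualdivergence} gives $\red e^\circ\sim_R\natural_2(s_1(\red e^\circ))$, and by Proposition \ref{prop:tau2isD2invariant} one has $\natural_2(s_1(\red e^\circ))\in\sfgamma_{D_2}(\red E{}_2)$, with $\natural_2\circ s_1$ exhausting $\sfgamma_{D_2}(\red E{}_2)$. From $\div{}_1\sim_R\div{}_2$ we obtain $\div{}_1\,\red e^\circ\sim_{\red C}\div{}_2\,\natural_2(s_1(\red e^\circ))$, while $\div{}_1\,\red e^\circ\in C^\infty_{D_1}(\cQ_1)$. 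Since in the correspondence-space setting $\red C=\cQ_1\times_\cB\cQ_2$ with $\varpi_1,\varpi_2$ the projections and $\red\cF{}_1,\red\cF{}_2$ Hausdorff--Morita equivalent with common leaf space $\cB$, a function on $\cQ_2$ that is $\red C$-related to an $\red\cF{}_1$-basic function on $\cQ_1$ is itself $\red\cF{}_2$-basic: its pullback along the surjective submersion $\varpi_2$ equals $\bar f\circ(\pi_2\circ\varpi_2)$ for the induced $\bar f\in C^\infty(\cB)$, hence it descends through $\pi_2$. Therefore $\pounds_{\red Y}\,\div{}_2\,\tilde e^\circ=0$ for all $\tilde e^\circ\in\sfgamma_{D_2}(\red E{}_2)$ and $\red Y\in\frk{k}_{\tau_2}$, and since $\sfgamma_{D_2}(\red E{}_2)$ spans $\red E{}_2$ pointwise, Lemma \ref{lem:compatiblediverg} gives that $\div{}_2$ is compatible with $D_2$.

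Finally, $D_2$-compatibility of $\div{}_2$ means its restriction to $\sfgamma_{D_2}(\red E{}_2)\cong\sfgamma(\red{\red E}{}_2)$ takes values in $C^\infty_{D_2}(\cQ_2)\cong C^\infty(\cB)$ and obeys the anchored Leibniz rule along $\red{\red\rho}{}_2$, so it defines a divergence $\red{\red\div}{}_2$ on $\red{\red E}{}_2$ exactly as in the reduction of Example \ref{eg:isoreddiv}. Equation \eqref{eqn:divergencesbase} is then immediate: if $\red{\red e}\leftrightarrow\red e^\circ\in\sfgamma_{D_1}(\red E{}_1)$, then $\mathscr F(\red{\red e})\leftrightarrow\natural_2(s_1(\red e^\circ))$, and $\red{\red\div}{}_1\,\red{\red e}=\div{}_1\,\red e^\circ$ while $\red{\red\div}{}_2\,\mathscr F(\red{\red e})=\div{}_2\,\natural_2(s_1(\red e^\circ))$, which are the $\red C$-related basic functions representing the same element of $C^\infty(\cB)$. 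The main obstacle is the third step, and in particular the transfer of basicness across $\red C$, which is precisely where the correspondence-space hypothesis $M=\cQ_1\times_\cB\cQ_2$ and the Hausdorff--Morita equivalence of the reduced foliations are essential; the rest is bookkeeping with the module identifications already established. One could also short-circuit the first three steps by setting $\red{\red\div}{}_2:=\red{\red\div}{}_1\circ\mathscr F^{-1}$ and noting that $\mathscr F$ intertwines the anchors $\red{\red\rho}{}_1$ and $\red{\red\rho}{}_2$ (both being pushforwards of $\rho\circ s_1$ along the common projection $\pi_i\circ\varpi_i\colon M\to\cB$), but the route above has the advantage of exhibiting $\red{\red\div}{}_1$ and $\red{\red\div}{}_2$ as reductions of the genuinely $R$-related pair $\div{}_1\sim_R\div{}_2$.
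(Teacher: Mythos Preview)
Your argument is correct and follows essentially the same route as the paper: lift $\red{\red\div}{}_1$ to a $D_1$-compatible divergence on $\red E{}_1$ (the paper calls this $\pi_1^*\,\red{\red\div}{}_1$), apply Theorem~\ref{thm:tdualdivergence} to obtain $\div{}_2$, check that $\div{}_2$ is $D_2$-compatible, and push down to $\red{\red E}{}_2$. Your treatment is in fact more careful than the paper's at the third step: the paper simply asserts ``By Lemma~\ref{lem:compatiblediverg}, it follows that $\div{}_2$ is $D_2$-compatible'', whereas you spell out the missing ingredient, namely that $\div{}_2$ sends $\sfgamma_{D_2}(\red E{}_2)$ into $C^\infty_{D_2}(\cQ_2)$ because the $\red C$-relation transfers $\red\cF{}_1$-basicness to $\red\cF{}_2$-basicness via the common base $\cB$. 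Your closing remark that one could bypass the whole construction by setting $\red{\red\div}{}_2:=\red{\red\div}{}_1\circ\mathscr F^{-1}$ and checking that $\mathscr F$ intertwines the anchors is also valid, and arguably the most direct proof of the bare statement; as you note, the longer route has the merit of exhibiting the base divergences as reductions of an $R$-related pair upstairs.
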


\begin{proof}
  The pullback $\pi_1^*\,\red{\red \div}{}_1$ defines a $D_1$-compatible divergence on $\red E{}_1$. By Theorem \ref{thm:tdualdivergence}, there is a T-dual divergence $\div{}{}_2$ on $\red E{}_2$.  By Lemma \ref{lem:compatiblediverg}, it follows that $\div{}{}_2$ is $D_2$-compatible, and hence defines a divergence $\red{\red \div}{}_2 \coloneqq \pi_{2*}\,\div{}{}_2$ on $\red{\red E}{}_2$ which by construction satisfies Equation~\eqref{eqn:divergencesbase}.
\end{proof}

\medskip

\subsubsection{Principal Torus Bundles}\label{sec:torusbundles}~\\[5pt]
We may specialise to the case of principal torus bundles, where $\red\cF{}_1$ and $\red\cF{}_2$ are given by the orbits of principal actions of the torus group $\sfT^k$ of rank $k$ on $\cQ_1$ and $\cQ_2$, respectively. We recall the notion of T-duality from~\cite[Definition~10.13]{Garcia-Fernandez:2020ope}.

\begin{definition}\label{def:garcia-fernandezstreetsTdual}
    Let $\red E{}_1 \to \cQ_1$ and $\red E{}_2 \to \cQ_2$ be $\sfT^k$-equivariant exact Courant algebroids, where $\cQ_1$ and $\cQ_2$ are principal $\sfT^k$-bundles over the same base $\cB$. Suppose that they are T-dual in the sense of Cavalcanti-Gualitieri \cite{cavalcanti2011generalized}; then there is an isomorphism $\mathscr{R} \colon \red E{}_1/\sfT^k \to \red E{}_2 / \sfT^k$ of transitive Courant algebroids over $\cB$. A $\sfT^k$-invariant pair $(\tau_2, \div{}_2 )$ on $\red E{}_2$ is \emph{T-dual} via $\mathscr{R}$ to a $\sfT^k$-invariant pair $(\tau_1, \div{}_1)$ on $\red E{}_1$ if 
    \begin{align} 
        \mathscr{R} \circ \pi_1{}_* \tau_1 = \pi_2{}_* \tau_2 \circ \mathscr{R} \quad , \quad \pi_1{}_*\, \div{}_1 = \mathscr{R}^*\,  \pi_2{}_*\, \div{}_2 \ , \label{eqn:divGarcia} 
    \end{align}
    where $\pi_i \colon \cQ_i \to \cB$ with $i=1,  2$ are the bundle projections.
\end{definition}

Here the notion of $\sfT^k$-invariance coincides with our sense: they are equivalent with the identification of $\frk{k}_{\tau_1} = \frk{k}_{\tau_2} = \frt^k$ with the Lie algebra of the torus group $\sfT^k$.

\begin{proposition} \label{prop:garciastreetstdual}
    If $(\red E{}_2, \tau_2, \div{}_2)$ is T-dual  to $(\red E{}_1, \tau_1, \div{}_1)$ in the sense of Definition \ref{def:garcia-fernandezstreetsTdual}, then there is a generalised isometry $R \colon (\red E{}_1, \tau_1) \rel (\red E{}_2, \tau_2)$ such that $\div{}_1 \sim_R \div{}_2$. 
\end{proposition}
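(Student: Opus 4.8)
The plan is to recognise the hypothesis that $(\red E{}_2,\tau_2,\div{}_2)$ is T-dual to $(\red E{}_1,\tau_1,\div{}_1)$ in the sense of Definition~\ref{def:garcia-fernandezstreetsTdual} as a special case of the correspondence-space set-up of Subsection~\ref{subsect:geomTdual}, and then to read off the conclusion from the uniqueness clauses of Theorems~\ref{thm:maingeneral} and \ref{thm:tdualdivergence} together with Lemma~\ref{lem:isomTdualdiv}. Concretely, I would first take the correspondence space to be $M = \cQ_1\times_\cB\cQ_2$, a principal $\sfT^k\times\sfT^k$-bundle over $\cB$, and invoke the Cavalcanti-Gualtieri construction~\cite{cavalcanti2011generalized} (see \cite[Subsection~6.1]{DeFraja:2023fhe}) to produce an exact Courant algebroid $E$ over $M$ with isotropic subbundles $K_1,K_2$ obtained by lifting the two torus actions, whose reductions recover $\red E{}_1$ and $\red E{}_2$ and whose associated T-duality relation $R = Q(K_2)\circ Q(K_1)^\top$ descends to the Cavalcanti-Gualtieri isomorphism $\mathscr{R}$; in the language of Proposition~\ref{prop:transitiveTduality} this amounts to saying $\red{\red E}{}_i = \red E{}_i/\sfT^k$ and that the module isomorphism $\mathscr{F}$ constructed there coincides with $\mathscr{R}$.

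Next I would verify that the hypotheses of Theorems~\ref{thm:maingeneral} and \ref{thm:tdualdivergence} are met. The subbundles $K_i^\perp$ have enough basic sections because the actions are principal; the bisubmersion involutivity condition~\eqref{eqn:bisubmersioncd} is immediate since the two vertical distributions on $M$ commute; compatible adapted splittings $\sigma_1,\sigma_2$ exist upon choosing principal connections on $\cQ_1$ and $\cQ_2$, exactly as in Remark~\ref{rmk:tdualsplitiso} and Example~\ref{rmk:buscher}, and with this choice the condition~\eqref{eqn:tdualcond} of Proposition~\ref{prop:tau2isD2invariant} holds as well; and one of the equivalent conditions of Theorem~\ref{thm:maingeneral}, say~\ref{item:main1}, holds because it is precisely the compatibility condition built into the notion of a Cavalcanti-Gualtieri T-dual pair. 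Since $\cQ_1$ and $\cQ_2$ are fibred over the same base, $T\cF_1\cap T\cF_2 = \set{0}$, and as $\rho$ is injective on $K_1$ and $K_2$ by \cite[Lemma~2.34]{DeFraja:2023fhe} also $K_1\cap K_2 = \set{0}$. By the remark after Definition~\ref{def:garcia-fernandezstreetsTdual}, $\sfT^k$-invariance of $\tau_1$ is $D_1$-invariance in the sense of Definition~\ref{def:D1invariance} with $\frk{k}_{\tau_1}=\frt^k$, while $\sfgamma_{D_1}(\red E{}_1)$ spans $\red E{}_1$ pointwise by averaging a local frame over the torus; similarly, the equivariance of $\div{}_1$ implicit in $\sfT^k$-invariance of the pair $(\tau_1,\div{}_1)$ translates, using the properties of the $\sigma$-adjoint action from Appendix~\ref{app:twistedaction} and the fact that $\sigma_1$ is adapted, into $D_1$-compatibility of $\div{}_1$ in the sense of Definition~\ref{def:K2compdiv}.

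With these checks in place, Theorem~\ref{thm:maingeneral} produces a unique generalised metric $\tau_2'$ on $\red E{}_2$ making $R$ a generalised isometry between $\tau_1$ and $\tau_2'$, and Theorem~\ref{thm:tdualdivergence} produces a unique divergence $\div{}_2'$ on $\red E{}_2$ with $\div{}_1\sim_R\div{}_2'$. To conclude, I would argue $\tau_2'=\tau_2$ and $\div{}_2'=\div{}_2$ by uniqueness of $\sfT^k$-invariant structures downstairs: both pairs are $\sfT^k$-invariant and hence determined by their pushforwards to $\red E{}_i/\sfT^k$ over $\cB$; Proposition~\ref{prop:transitiveTduality} identifies the pushforward of $\tau_2'$ under $\mathscr{F}=\mathscr{R}$ with $\mathscr{R}\circ\pi_{1*}\tau_1\circ\mathscr{R}^{-1}$, which equals $\pi_{2*}\tau_2$ by the first equation in~\eqref{eqn:divGarcia}, whence $\tau_2'=\tau_2$; and Lemma~\ref{lem:isomTdualdiv} gives $\pi_{1*}\div{}_1=\mathscr{R}^*\,\pi_{2*}\div{}_2'$, which together with the second equation in~\eqref{eqn:divGarcia} and Lemma~\ref{lem:uniquereldiv} forces $\div{}_2'=\div{}_2$. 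The step I expect to be the main obstacle is the identification of the Cavalcanti-Gualtieri isomorphism $\mathscr{R}$ with the relational module isomorphism $\mathscr{F}$ of Proposition~\ref{prop:transitiveTduality} — that is, checking that the reduction-then-relation construction on $M=\cQ_1\times_\cB\cQ_2$ literally reproduces the Cavalcanti-Gualtieri map on the transitive Courant algebroids over $\cB$ — together with the translation of the Garcia-Fernandez--Streets equivariance conditions on $\div{}_1$ into $D_1$-compatibility; once these identifications are pinned down, the proposition follows formally from the uniqueness statements already established.
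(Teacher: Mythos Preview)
Your proposal is correct and follows essentially the same route as the paper: the paper's proof cites \cite[Lemma~6.5]{DeFraja:2023fhe} for the existence of the generalised isometry $R$ (which you instead reconstruct by verifying the hypotheses of Theorem~\ref{thm:maingeneral}), and then concludes via Proposition~\ref{prop:transitiveTduality} and Lemma~\ref{lem:isomTdualdiv} with the identification $\mathscr{R}=\mathscr{F}$ --- exactly the identification you flag as the key step. Your version is more explicit in unpacking the uniqueness arguments that the paper leaves implicit.

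One small point: your invocation of Lemma~\ref{lem:uniquereldiv} at the very end to force $\div{}_2'=\div{}_2$ is not quite the right tool, since that lemma would require already knowing that $\div{}_1\sim_R\div{}_2$. What actually closes the argument is the simpler observation that a $\sfT^k$-invariant divergence on $\red E{}_2$ is uniquely determined by its pushforward to $\red E{}_2/\sfT^k$; once $\pi_{2*}\div{}_2'=\pi_{2*}\div{}_2$ follows from comparing Lemma~\ref{lem:isomTdualdiv} with the second equation in~\eqref{eqn:divGarcia}, equality upstairs is immediate.
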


\begin{proof}
    The generalised isometry $R$ is constructed in \cite[Lemma 6.5]{DeFraja:2023fhe}. The result then follows immediately from Proposition \ref{prop:transitiveTduality} and Lemma \ref{lem:isomTdualdiv} with $\mathscr{R} = \mathscr{F}$.
\end{proof}

\begin{example}[\textbf{Circle Bundles}]\label{eg:circledilatonshift}
    Let $\cQ_i$ be circle bundles over a common base $\cB$, for $i=1,2$, and let $M = \cQ_1 \times_{\cB} \cQ_2$. Let $\theta_i$ be connections on the $\sfS^1$-bundles $\pi_i \colon \cQ_i\to \cB$, and $\partial_{i}$ the global vector fields defining the circle fibres. Let $g_1$ be a metric on $\cQ_1$ of the form $$g_1 = g \, \theta_1 \otimes \theta_1 + \pi_1^* \, g_{\cB} \ , $$ where $g \in C^\infty(\cB)$ is an everywhere-positive function, and $g_\cB$ is a Riemannian metric on $\cB$. Then $g_1$ is $\partial_{1}$-invariant (i.e. $\sfS^1$-invariant), and hence we may construct the metric $$g_2 = g^{-1} \, \theta_2 \otimes \theta_2 + \pi_2^* \, g_\cB$$ on $\cQ_2$ such that the T-duality relation $R \colon (\IT \cQ_1,0, g_1) \rel (\IT \cQ_2,0, g_2)$ is a generalised isometry. 
    
    The corresponding densities are
    \begin{align}
        \mu_{g_1} = \sqrt{g} \, |\theta_1 \wedge \mu_{g_\cB}| \quad, \quad \mu_{g_2} = \sqrt{g^{-1}} \, |\theta_2 \wedge \mu_{g_\cB}|
    \end{align}
    and one calculates $$\div{}_{\mu_{g_1}}\, e_1 = \sqrt{g^{-1}} \, \pounds_{\rho_1(e_1)} \sqrt{g} \quad , \quad \div{}_{\mu_{g_2}}\, e_2 = \sqrt{g} \, \pounds_{\rho_2(e_2)}\sqrt{g^{-1}} \ , $$ for all $e_i \in \sfgamma(\IT \cQ_i)$. If $X\in \sfgamma(T\cB)$, then\footnote{With a slight abuse of notation, here we consider the pullback of $X$ to $\cQ_i$ by $\pi_i$.} $X \sim_ R X$, but $\div{}_{\mu_{g_1}}\,X \neq \div{}_{\mu_{g_2}}\,X$ in general, and hence the divergence operators are not $R$-related. By Equation \eqref{eqn:dilatonshift}, this necessitates a dilaton shift.

    Indeed, by imposing Equation \eqref{eqn:dilatonshift} for $X + \xi \in \sfgamma(\IT\cB)$, we find
    \begin{align}
        \ip{Y_{\tt dil} + \eta_{\tt dil}, X + \xi  + \partial_{2}+ \theta_2}_2 = \sqrt{g^{-1}} \,  X(\sqrt{g}) - \sqrt{g} \, X(\sqrt{g^{-1}}\,)  = X(\log g) \ .
    \end{align}
    This gives $$Y_{\tt dil} = 0 \quad , \quad \eta_{\tt dil} = \de \log g \ , $$ which is the usual dilaton shift. A similar argument can be used for the dilaton shift of any T-dual pair of torus bundles.
\end{example}

\medskip

\section{Generalised Ricci Flow and Courant Algebroid Relations} \label{sect:genricci}

In this section we discuss the problem of whether the generalised Ricci flow is compatible with geometric T-duality of Definition~\ref{def:geomTdual}, as well as  Theorems~\ref{thm:maingeneral} and~\ref{thm:tdualdivergence}. In other words, provided with the short-time existence of solutions of generalised Ricci flow  for  
$(\red E{}_1, V_1^+,\div_1),$ as developed in \cite{Streets:2024rfo}, we investigate whether the T-dual generalised metric $V_2^+$ of Theorem~\ref{thm:maingeneral} and divergence $\div_2$ of Theorem~\ref{thm:tdualdivergence} on $\red E{}_2$ at a given time is a solution of generalised Ricci flow as well. Our starting point will be the introduction of a suitable notion of relation for generalised Ricci tensors, which enables a proof of the compatibility of T-duality with generalised string background equations.

\medskip

\subsection{Generalised Ricci Tensor and Scalar Curvature}~\\[5pt]
 Let us introduce the main object of our study in this section, the \emph{generalised Ricci tensor}, following the definition from \cite{Severa:2018pag}.
 
\begin{definition}
    Let $E$ be a Courant algebroid over $M$ endowed with a generalised metric $V^+$ and a divergence operator $\div$. The \emph{generalised Ricci tensor} of the pair $(V^+, \div)$ is the map
    \begin{align}
    \gric_{V^+, \div} \colon \sfgamma(V^+) \times \sfgamma(V^-) \longrightarrow C^\infty(M)
\end{align}
given by
\begin{align}
    \gric_{V^+, \div}(v^+, w^-) \coloneqq \div\, \cbrak{v^+, w^-}^+ - \pounds_{\rho(w^-)}\, \div \, v^+ - \Tr_{V^+}(\cbrak{\cbrak{\, \cdot \, , w^-}^-, v^+}^+) \ ,
\end{align}
for $v^+\in\sfgamma(V^+)$ and $w^-\in\sfgamma(V^-)$, 
where $\cbrak{\,\cdot \, , \,  \cdot\,}^\pm$ denotes the projection of the Dorfman bracket to the subbundle $V^\pm$. We will equivalently use the notation $\gric_{\tau, \div},$ where $\tau$ is the automorphism of $E$ defined by the generalised metric $V^+$. 

By interchanging the roles of $V^+$ and $V^-$, we similarly define $\gric_{V^-, \div}$.
\end{definition}

It is shown in \cite[Proposition 3.2]{Severa:2018pag} that $\gric_{V^+, \div} \in \sfgamma(V^{+*} \otimes V^{-*})$, i.e. $\gric_{V^+, \div}$ is a covariant tensor on the subbundles $V^+$ and $ V^-$. There are many definitions of generalised Ricci tensor available in the literature, however it has been shown in \cite{Cavalcanti:2024uky} that all of them are equivalent. We choose the present formulation in order to discuss how generalised Ricci tensors relate via generalised isometries which are also relations between divergence operators.

\begin{remark}\label{rem:exactRicci}
    When $E$ is an exact Courant algebroid, the generalised Ricci tensor is related to the classical Ricci tensor $\ric_g$ of the Riemannian metric $g$ induced by $V^+$, as well as the Kalb-Ramond flux $H\in\sfOmega^3_{\rm cl}(M)$ and the divergence $\div_{\mu_g}$: If $\div -\div_{\mu_g} = \ip{e,\,\cdot\,}$ with $e=e^++e^-\in\sfgamma(V^+\oplus V^-)$, we denote $\Fi^\pm = g\big(\rho(e^\pm), \,\cdot\,\big)$. Then for $X,Y\in\sfgamma(TM)$, it follows from \cite[Proposition 3.30]{Garcia-Fernandez:2020ope} that
\begin{align}\label{eqn:gricclassically}
    \gric_{V^\pm,\div}(X^\pm, Y^\mp) = \big(\ric_g - \tfrac 14\, H^2 \mp \tfrac 12\, \de^* H \pm \nabla^\pm \Fi^\pm \big)(X,Y) \ ,
\end{align}
where $X^\pm = X \pm g(X,\,\cdot\,)$, $H^2(X,Y) = g(\iota_X H, \iota_Y H)$, $\de^*$ is the codifferential corresponding to the exterior differential $\de$, and $$\nabla^\pm = \nabla^g\pm\tfrac12\,g^{-1}\,H$$ are the Bismut connections\footnote{That is, $\nabla^\pm$ is the unique metric compatible connection with skew-symmetric torsion $\pm H$.} with $\nabla^g$ the Levi-Civita connection of $g$.
\end{remark}

Following \cite{Streets:2024rfo} we introduce the \emph{full generalised Ricci tensor} $\overline{\mathrm{GRic}}_{V^+,\div}$ defined by
\begin{align}
    \overline{\gric}_{V^+,\div} \coloneqq \mathrm{GRic}_{V^+,\div} - \mathrm{GRic}_{V^-, \div} \ . 
\end{align}
In~\cite[Definition 3.52]{Garcia-Fernandez:2020ope} a pair $(V^+, \div)$ on a Courant algebroid $E$ is called a \emph{generalised Einstein pair} if $\overline{\gric}_{V^+,\div}=0$.

The full generalised Ricci tensor also allows for a definition of the generalised scalar curvature in this setting \cite[Definition 4.9]{Streets:2024rfo}.

\begin{definition}
    The \emph{generalised scalar curvature} of the pair $(V^+, \div)$ is the function $\mathrm{GR}_{V^+, \div}\in C^\infty(M)$ given by
    \begin{align}
        \mathrm{GR}_{V^+, \div} \coloneqq \Tr \big( \tau \ \overline{\gric}_{V^+, \div} \big) \ ,
    \end{align}
    where $\tau$ is the automorphism of $E$ associated with the generalised metric $V^+$.
\end{definition}

A relevant property of the generalised scalar curvature, proven in \cite{Severa:2018pag,Streets:2024rfo}, is that it can be written as
    \begin{align}
    \mathrm{GR}_{V^+, \div} = \mu^{-1/2}\, \Delta_{V^+}\mu^{1/2} + \ip{e_{\div}, e_{\div}} + 2\, \div_{\mu}\, e_{\div} \ ,
\end{align}
for some everywhere-positive density $\mu$ on $M$, where $e_{\div}\in\sfgamma(E)$ is given by $\ip{e_{\div},\,\cdot\,} = \div - \div_\mu$, while $\Delta_{V^+}$ is the Laplace operator defined by the generalised metric $V^+$ and a Levi-Civita Courant algebroid connection $\nabla^E$ on $E$ for $V^+$. Note that $$\Delta_{V^+} f = \div_{\nabla^E} \big( \tau (\cD f) \big) \ ,$$ for all $f \in C^\infty(M)$, and that the generalised scalar curvature does not depend on the choice of density $\mu$.

\begin{remark} \label{rmk:genstringbackground}
    In the setting of Remark \ref{rem:exactRicci}, on an exact Courant algebroid $E$ with generalised metric $V^+$ and associated Riemannian metric $g$, as well as with the dilaton field $ \phi\in C^\infty(M)$, the generalised scalar curvature is given by
    \begin{align}
        \mathrm{GR}_{V^+, \div_{\e^{-2\phi}\, \mu_g}} = \mathrm{R}_g - \tfrac{1}{12}\, \|H\|^2_g  - 4\e^{\phi}\,\Delta_g \e^{-\phi} \ .
    \end{align}
    Here $\mathrm{R}_g$ is the classical scalar curvature and $\Delta_g$ the Laplace operator associated with $g$, while the Hodge norm of the Kalb-Ramond flux $H\in\sfOmega_{\rm cl}^3(M)$ is given by $$\|H\|^2_g \ \mu_g = H\wedge\star_g\,H \ . $$ When $M$ is compact, the corresponding Einstein-Hilbert functional \eqref{eq:EHaction} recovers the familiar action functional for $\cN=0$ supergravity in the string frame. 
\end{remark}

Remark \ref{rmk:genstringbackground} together with the discussion below Equation~\eqref{eq:EHaction} motivates

\begin{definition} \label{def:genstringbackground}
The \emph{generalised string background equations} for a pair $(V^+,\div)$ on $E$ are given by $$\gric_{V^+,\div} = 0 \quad , \quad {\rm GR}_{V^+, \div} = 0 \ . $$
\end{definition}

\medskip

\subsection{Related Generalised Ricci Tensors}~\\[5pt]
 In order to discuss Courant algebroid relations between generalised Ricci tensors, we weaken the notion of \cite[Definition 3.7]{Vysoky2020hitchiker} by considering tensors defined on subbundles of related Courant algebroids.
 
\begin{definition} \label{def:relatedtensors}
    Let $R \colon E_1\rel E_2$ be a Courant algebroid relation supported on $C\subseteq M_1\times M_2$. Let \smash{$\cT_i \in \sfgamma\big( V_i^{(1)*} \otimes \cdots \otimes V_i^{(k)*}\big)$}, for $i=1,2$, be covariant $k$-tensors defined on subbundles \smash{$V_i^{(j)} \subset E_i$} for $j = 1,\dots,k$. Then $\cT_1$ and $\cT_2$ are \emph{$R$-related}, denoted $\cT_1 \sim_R \cT_2$, if for every $(c_1,c_2) \in C$ and every $k$-tuple \smash{$\big(e_1^{(j)}, e_2^{(j)}\big) \in R_{(c_1,c_2)} \cap\big(V_1^{(j)} \times V_2^{(j)}\big)_{(c_1,c_2)}$} they satisfy
    \begin{align}
        \cT_1 \big(e_1^{(1)},\dots,e_1^{(k)}\big)(c_1) = \cT_2 \big(e_2^{(1)},\dots,e_2^{(k)}\big)(c_2) \ .
    \end{align}
\end{definition}

With this notion we can now establish when generalised Ricci tensors are related by a generalised isometry.

\begin{lemma}\label{lem:riccomponents}
Let $R \colon (E_1,V_1^+) \dashrightarrow (E_2,V_2^+) $ be a generalised isometry supported on $C \subseteq M_1 \times M_2$ which is a Dirac structure in $E_1 \times \overline{E}_2$. Let $\div_1$ and $\div_2$ be divergence operators on $E_1$ and $E_2$, respectively, such that $\div_1 \sim_R \div_2$. Then
\begin{enumerate}[label=(\roman{enumi})]
    \item\label{item:Gricterm1} $\div_1\, \llbracket v_1^+, v_1^- \rrbracket_1^+ \sim_C \div_2\, \llbracket v_2^+, v_2^- \rrbracket_2^+$ \ , \\[-3mm]
    \item\label{item:Gricterm2} $\pounds_{\rho_1(v_1^-)}\, \div_1\, v_1^+ \sim_C \pounds_{\rho_2(v_2^-)}\, \div_2\, v_2^+$ \ , \\[-3mm]
    \item\label{item:Gricterm3} $\Tr_{V_1^+}( \llbracket \llbracket \, \cdot \,, v_1^- \rrbracket_1^-, v_1^+ \rrbracket_1^+ ) \sim_C \Tr_{V_2^+}( \llbracket \llbracket \, \cdot \,, v_2^- \rrbracket_2^-, v_2^+ \rrbracket_2^+ )$ \ ,
\end{enumerate}
for any $(e_1, e_2) \in \sfgamma(E_1 \times \overline E_2; R)$ which under Definition \ref{defn:generalisedisometry2} decomposes in $\sfgamma(R)$ as
$$(e_1,e_2) \big\rvert_C = (v_1^+, v_2^+) + (v_1^-, v_2^-) \ , $$ with $(v_1^\pm, v_2^\pm) \in \sfgamma(R^\pm)$.
\end{lemma}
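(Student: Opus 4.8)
The plan is to derive all three statements from two stability principles for $R$, applied repeatedly to the product Courant algebroid $E_1\times\overline E_2$. Throughout I fix the \emph{global} sections $v_i^\pm=\tfrac12\big(e_i\pm\tau_i(e_i)\big)\in\sfgamma(V_i^\pm)$, whose restrictions to $C$ recover the stated decomposition: since $(e_1,e_2)|_C\in\sfgamma(R)$ and $(\tau_1\times\tau_2)(R)=R$, one has $(v_1^\pm,v_2^\pm)=(e_1,e_2)^\pm\in\sfgamma(E_1\times\overline E_2;R)$ with restriction to $C$ in $\sfgamma(R^\pm)$. The two principles are: \textbf{(P1)} for $(a_1,a_2),(b_1,b_2)\in\sfgamma(E_1\times\overline E_2;R)$ one has $(\llbracket a_1,b_1\rrbracket_1,\llbracket a_2,b_2\rrbracket_2)\in\sfgamma(E_1\times\overline E_2;R)$ by involutivity of the Dirac structure $R$, $(a_1^\pm,a_2^\pm)\in\sfgamma(E_1\times\overline E_2;R)$ because $\tau_1\times\tau_2$ preserves $R$, and $\langle a_1,b_1\rangle_1\sim_C\langle a_2,b_2\rangle_2$ by isotropy of $R$; and \textbf{(P2)} $\div_1\,a_1\sim_C\div_2\,a_2$ for $(a_1,a_2)\in\sfgamma(E_1\times\overline E_2;R)$, which is exactly the hypothesis $\div_1\sim_R\div_2$.

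For item \ref{item:Gricterm1} I would apply \textbf{(P1)} to $(v_1^+,v_2^+)$ and $(v_1^-,v_2^-)$ and then take the $V^+$-projection, giving $\big(\llbracket v_1^+,v_1^-\rrbracket_1^+,\llbracket v_2^+,v_2^-\rrbracket_2^+\big)\in\sfgamma(E_1\times\overline E_2;R)$, i.e. $\llbracket v_1^+,v_1^-\rrbracket_1^+\sim_R\llbracket v_2^+,v_2^-\rrbracket_2^+$; then \textbf{(P2)} yields item \ref{item:Gricterm1}. For item \ref{item:Gricterm2}, \textbf{(P2)} gives $\div_1\,v_1^+\sim_C\div_2\,v_2^+$, and Lemma~\ref{lemma:relationlieder} applied with the $R$-related sections $v_1^-\sim_R v_2^-$ and the $C$-related functions $\div_1\,v_1^+\sim_C\div_2\,v_2^+$ produces $\pounds_{\rho_1(v_1^-)}\,\div_1\,v_1^+\sim_C\pounds_{\rho_2(v_2^-)}\,\div_2\,v_2^+$.

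Item \ref{item:Gricterm3} is the substantive one. First I would check that $A_i\colon\sfgamma(V_i^+)\to\sfgamma(V_i^+)$, $A_i(w)=\llbracket\llbracket w,v_i^-\rrbracket_i^-,v_i^+\rrbracket_i^+$, is $C^\infty(M_i)$-linear: feeding in $f\,w$ and using \eqref{eqn:Leibnizfirst} with $\langle V_i^+,V_i^-\rangle_i=0$, the $\cD f$ terms vanish and the $\pounds f$ terms are annihilated by the subsequent $V^\mp$-projections, so $\Tr_{V_i^+}(A_i)\in C^\infty(M_i)$ is the trace of a genuine vector-bundle endomorphism. Since $R$ is a generalised isometry it has trivial kernel and cokernel by \cite[Proposition~4.17]{DeFraja:2023fhe}; together with $R=R^+\oplus R^-$ being a Dirac structure this forces the projections $\mathrm{pr}_i|_{R^+}\colon R^+\to V_i^+|_C$ to be vector-bundle isomorphisms, so I may set $\Lambda_c:=(\mathrm{pr}_2|_{R^+})_c\circ(\mathrm{pr}_1|_{R^+})_c^{-1}\colon(V_1^+)_{c_1}\to(V_2^+)_{c_2}$ for $c=(c_1,c_2)\in C$. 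The crux is then the intertwining identity $\Lambda_c\circ(A_1)_{c_1}=(A_2)_{c_2}\circ\Lambda_c$: given $u\in(V_1^+)_{c_1}$, extend $(u,\Lambda_c u)\in(R^+)_c$ to $(\psi^1,\psi^2)\in\sfgamma(E_1\times\overline E_2;R)$ with $\psi^i\in\sfgamma(V_i^+)$, and apply \textbf{(P1)} repeatedly — to $\llbracket(\psi^1,\psi^2),(v_1^-,v_2^-)\rrbracket$, to its $V^-$-projection, to the bracket of that with $(v_1^+,v_2^+)$, and to its $V^+$-projection — to conclude $A_1\psi^1\sim_R A_2\psi^2$, i.e. $\big((A_1)_{c_1}u,(A_2)_{c_2}\Lambda_c u\big)\in(R^+)_c$, which is the asserted identity. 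Conjugate endomorphisms have equal trace, so $(\Tr_{V_1^+}A_1)(c_1)=(\Tr_{V_2^+}A_2)(c_2)$ for every $(c_1,c_2)\in C$, which is item \ref{item:Gricterm3}.

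I expect the main obstacle to be this last step: establishing that $\mathrm{pr}_i|_{R^+}$ are genuine bundle isomorphisms onto $V_i^+|_C$ — fibrewise injectivity is immediate from triviality of the kernel and cokernel, but the rank count (using that $R$ is maximally isotropic and splits as $R^+\oplus R^-$) is what upgrades this to an isomorphism — and then checking that the relatedness argument transports the whole endomorphism $A_i$, not merely individual sections, via $\Lambda_c$. Items \ref{item:Gricterm1} and \ref{item:Gricterm2} are then immediate from involutivity and isotropy of the Dirac structure $R$, the generalised-isometry property $(\tau_1\times\tau_2)(R)=R$, the hypothesis $\div_1\sim_R\div_2$, and Lemma~\ref{lemma:relationlieder}.
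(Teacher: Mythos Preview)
Your proposal is correct and follows essentially the same approach as the paper: items \ref{item:Gricterm1} and \ref{item:Gricterm2} are handled identically (stability of $R$ under the product $\tau$, involutivity, $\div_1\sim_R\div_2$, and Lemma~\ref{lemma:relationlieder}), and for item \ref{item:Gricterm3} the paper likewise reduces to showing that the trace of an endomorphism intertwined by $R$ is $C$-related, arguing via $R$-related bases of $V_i^+$ rather than your explicit conjugating isomorphism $\Lambda_c$ --- the two formulations are equivalent. Your additional care in checking $C^\infty$-linearity of $A_i$ and in spelling out why $\mathrm{pr}_i|_{R^+}$ is a bundle isomorphism (via trivial kernel/cokernel plus the rank count from $R=R^+\oplus R^-$ being Lagrangian) makes explicit what the paper leaves implicit.
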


\begin{proof}
\ref{item:Gricterm1} follows from the fact that if $e_1 \sim_R e_2$, then the corresponding projections satisfy $e_1^\pm \sim_R e_2^\pm$ by Equation~\eqref{eqn:Rdecomp}.

    \ref{item:Gricterm2} follows from Lemma \ref{lemma:relationlieder}, since $R$ is compatible with the anchors.
    
    \ref{item:Gricterm3} follows from the same argument as for \ref{item:Gricterm1}, and the fact that the trace is compatible with $R$ in the following sense. At any point $c = (c_1, c_2) \in C$, let $A_i \colon (V_i^+)_{c_i} \to (V_i^+)_{c_i}$ for $i=1,2$ be linear maps such that $A_1(v_1) \sim_R A_2(v_2)$ for every pair $(v_1,v_2)$ such that $v_1 \sim_R v_2$. Then with respect to any given bases \smash{$\set{v^{(i)}_1, \dots ,v^{(i)}_n}$} of $V_i^+$ such that \smash{$v^{(i)}_j \sim_R v^{(i)}_j$}, the corresponding matrix elements are equal, \smash{$(A_1)^{j}_k = (A_2)^{j}_k$}, hence $\Tr_{V_1^+} (A_1) = \Tr_{V_2^+} (A_2)$.
\end{proof}

\begin{proposition}\label{cor:relatedric}
    If $R \colon (E_1, V_1^+) \rel (E_2, V_2^+)$ is a generalised isometry supported on $C \subseteq M_1 \times M_2$ which is a Dirac structure in $E_1 \times \overline{E}_2$, and $\div_1 \sim_R \div_2$, then \smash{$\mathrm{GRic}_{V_1^+, \div_1} \sim_R \mathrm{GRic}_{V_2^+, \div_2}$} and $\mathrm{GR}_{V_1^+, \div_1} \sim_C \mathrm{GR}_{V_2^+, \div_2}$.
\end{proposition}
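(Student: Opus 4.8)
The plan is to derive both conclusions from Lemma~\ref{lem:riccomponents}, essentially by assembling its three parts, treating the generalised Ricci tensor first and then the scalar curvature.

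For the generalised Ricci tensor, recall that the generalised isometry hypothesis means $R=R^+\oplus R^-$ over $C$, with $R^\pm=R\cap(V_1^\pm\times V_2^\pm)\rvert_C$ subbundles. To verify $\gric_{V_1^+,\div_1}\sim_R\gric_{V_2^+,\div_2}$ in the sense of Definition~\ref{def:relatedtensors}, I would fix $c=(c_1,c_2)\in C$ together with pairs $(e_1^+,e_2^+)\in R_c^+$ and $(e_1^-,e_2^-)\in R_c^-$, extend these fibrewise data to sections $(v_1^\pm,v_2^\pm)\in\sfgamma(E_1\times\overline E_2;R^\pm)$, and set $(e_1,e_2)\coloneqq(v_1^++v_1^-,v_2^++v_2^-)\in\sfgamma(E_1\times\overline E_2;R)$, whose decomposition in $\sfgamma(R)$ over $C$ is then exactly $(v_1^+,v_2^+)+(v_1^-,v_2^-)$. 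Lemma~\ref{lem:riccomponents} shows that each of the three constituent terms in the defining formula for $\gric$ is $C$-related between the two sides for this data, hence so is their alternating combination, i.e. $\gric_{V_1^+,\div_1}(v_1^+,v_1^-)\sim_C\gric_{V_2^+,\div_2}(v_2^+,v_2^-)$. Since $\gric$ is $C^\infty$-bilinear by~\cite[Proposition~3.2]{Severa:2018pag}, evaluating at $c$ and using $v_i^\pm(c_i)=e_i^\pm$ yields $\gric_{V_1^+,\div_1}(e_1^+,e_1^-)(c_1)=\gric_{V_2^+,\div_2}(e_2^+,e_2^-)(c_2)$, which is the required identity; as $c$ and the chosen data were arbitrary, $\gric_{V_1^+,\div_1}\sim_R\gric_{V_2^+,\div_2}$ follows.

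The same argument applies with the roles of $V^+$ and $V^-$ interchanged, which is legitimate since $R$ is equally a generalised isometry between $V_1^-$ and $V_2^-$ and the hypothesis $\div_1\sim_R\div_2$ does not involve the metrics; this gives $\gric_{V_1^-,\div_1}\sim_R\gric_{V_2^-,\div_2}$, and hence the full generalised Ricci tensor $\overline{\gric}_{V_1^+,\div_1}=\gric_{V_1^+,\div_1}-\gric_{V_1^-,\div_1}$ is $R$-related to $\overline{\gric}_{V_2^+,\div_2}$, regarded as covariant $2$-tensors on $E_1$ and $E_2$ (which vanish on the $V^+\times V^+$ and $V^-\times V^-$ blocks). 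For the scalar curvature I would use that $\mathrm{GR}_{V^+,\div}=\Tr\big(\tau\,\overline{\gric}_{V^+,\div}\big)$ is a full contraction of $\overline{\gric}_{V^+,\div}$ with the automorphism $\tau$ and the pairing, and combine three facts: (i) $\overline{\gric}_{V_1^+,\div_1}\sim_R\overline{\gric}_{V_2^+,\div_2}$, just established; (ii) $\tau_1\times\tau_2$ preserves $R$, which is exactly the generalised isometry condition, so $\tau_1$ and $\tau_2$ are $R$-related as endomorphisms; and (iii) isotropy of $R$ in $E_1\times\overline E_2$ forces $\ip{e_1,e_1'}_1=\ip{e_2,e_2'}_2$ whenever $(e_1,e_2),(e_1',e_2')\in R$, so the pairings, and their inverses, are $R$-related. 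Because a generalised isometry has trivial kernel and cokernel by~\cite[Proposition~4.17]{DeFraja:2023fhe}, at each $c\in C$ a basis of $R_c$ projects to $R$-related bases of $(E_1)_{c_1}$ and $(E_2)_{c_2}$; expressing the contraction $\Tr\big(\tau\,\overline{\gric}\big)$ in such bases, (i)--(iii) make every entry agree at $c$, so $\mathrm{GR}_{V_1^+,\div_1}(c_1)=\mathrm{GR}_{V_2^+,\div_2}(c_2)$, i.e. $\mathrm{GR}_{V_1^+,\div_1}\sim_C\mathrm{GR}_{V_2^+,\div_2}$. This is the trace-compatibility mechanism already used in part~\ref{item:Gricterm3} of Lemma~\ref{lem:riccomponents}, now applied to the ambient bundle rather than to $V^+$.

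The only genuinely delicate point I anticipate is the bookkeeping in this last step: one must keep careful track of the sign reversal of the pairing on $\overline E_2$ and of how the index raised by the inverse pairing interacts with $\tau$, so that the component identities in the chosen bases actually close up; the $\gric$ part, by contrast, is an immediate corollary of Lemma~\ref{lem:riccomponents} together with tensoriality. An alternative route for the scalar curvature would be to relate termwise the three summands of the identity $\mathrm{GR}_{V^+,\div}=\mu^{-1/2}\,\Delta_{V^+}\mu^{1/2}+\ip{e_\div,e_\div}+2\,\div_\mu\,e_\div$, but this requires comparing the Laplacians of $R$-related Levi-Civita Courant algebroid connections, which is more laborious than the basis argument sketched above.
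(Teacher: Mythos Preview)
Your proposal is correct and follows the same approach as the paper. The paper's own proof is extremely terse --- it simply says the result follows immediately from Lemma~\ref{lem:riccomponents} and from the analogous relation $\gric_{V_1^-,\div_1}\sim_R\gric_{V_2^-,\div_2}$ obtained by exchanging the roles of $V_i^+$ and $V_i^-$ --- whereas you have spelled out the details, particularly the trace argument for the scalar curvature using $R$-related bases (which is indeed the same mechanism as item~\ref{item:Gricterm3} of Lemma~\ref{lem:riccomponents}, applied to the ambient bundle).
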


\begin{proof}
    This follows immediately from Lemma~\ref{lem:riccomponents} and from $$\mathrm{GRic}_{V_1^-, \div_1} \sim_R \mathrm{GRic}_{V_2^-, \div_2} \ , $$ which can be shown using Lemma \ref{lem:riccomponents} by exchanging the roles of $V_i^+$ and $V_i^-$ for $i=1,2$. 
\end{proof}

Note that we do not need to require that the divergence operators are compatible with the generalised metrics.
Moreover, the same proof carries over when considering divergence operators coming from related Courant algebroid connections, as in Proposition \ref{prop:reldivfromcon}.

\medskip

\subsection{T-Duality for Generalised Ricci Tensors and String Backgrounds}~\\[5pt]
The relevant application of Proposition~\ref{cor:relatedric} for us is when $R$ is a T-duality relation.

\begin{corollary}\label{cor:Tdualric}
    Under the assumptions of Theorem \ref{thm:tdualdivergence}, if $\red E{}_1$ admits a divergence $\div_1$ which is (locally) compatible with $D_1$, then the generalised Ricci tensors \smash{$\gric_{V_1^+,\div_1}$} and \smash{$\gric_{V_2^+,\div_2}$} are T-duality related, where $\div_2$ is the divergence given by Theorem \ref{thm:tdualdivergence}. Similarly, the generalised scalar curvatures \smash{${\rm GR}_{V_1^+,\div_1}$} and \smash{${\rm GR}_{V_2^+,\div_2}$} are T-duality related.
\end{corollary}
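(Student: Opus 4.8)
The plan is to deduce Corollary \ref{cor:Tdualric} directly from Proposition \ref{cor:relatedric} by verifying that, in the T-duality setting of Theorem \ref{thm:tdualdivergence}, all hypotheses of Proposition \ref{cor:relatedric} are met. There are three things to check: that $R$ is a generalised isometry between $V_1^+$ and $V_2^+$, that its support $\red C \subseteq \cQ_1\times\cQ_2$ is a Dirac structure in $\red E{}_1\times\overline{\red E}{}_2$, and that $\div_1\sim_R\div_2$.

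First I would invoke the hypotheses of Theorem \ref{thm:tdualdivergence}, which subsume those of Theorem \ref{thm:maingeneral}: since $\red E{}_1$ carries a (locally) $D_1$-invariant generalised metric $V_1^+$ and the equivalent conditions \ref{item:main1}--\ref{item:main3} of Theorem \ref{thm:maingeneral} hold (using $K_1\cap K_2=\{0\}$, which forces $K_2^\perp\cap K_1\subseteq K_1$ trivially), there is a unique generalised metric $V_2^+$ on $\red E{}_2$ making $R$ a generalised isometry. This is exactly item \ref{item:main3}. Next, by \cite[Theorem~5.8]{DeFraja:2023fhe} the relation $R$ of Equation \eqref{eqn:T-dualRelation} is a Dirac structure in $\red E{}_1\times\overline{\red E}{}_2$ — it is maximally isotropic and involutive — so the Dirac-structure hypothesis of Proposition \ref{cor:relatedric} holds. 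Finally, the existence of a (locally) $D_1$-compatible divergence $\div_1$ on $\red E{}_1$, together with the standing assumptions $K_1\cap K_2=\{0\}$ and that $\sfgamma_{D_1}(\red E{}_1)$ spans $\red E{}_1$ pointwise, gives by Theorem \ref{thm:tdualdivergence} a unique divergence $\div_2$ on $\red E{}_2$ with $\div_1\sim_R\div_2$. With all three ingredients in place, Proposition \ref{cor:relatedric} yields $\gric_{V_1^+,\div_1}\sim_R\gric_{V_2^+,\div_2}$ and ${\rm GR}_{V_1^+,\div_1}\sim_{\red C}{\rm GR}_{V_2^+,\div_2}$, which is the claim.

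For the local version one replaces $D_1$-invariance and compatibility of splittings with their local counterparts, appealing to Theorem \ref{thm:mainlocal} and Remark \ref{rmk:tdualdivlocal} in place of Theorems \ref{thm:maingeneral} and \ref{thm:tdualdivergence}; the argument is otherwise verbatim, since the $\red C$-relatedness of the scalar curvature and the $R$-relatedness of the Ricci tensor are pointwise statements along $\red C$.

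The main (and essentially only) obstacle is bookkeeping: making sure the hypotheses of Proposition \ref{cor:relatedric} really are a sub-collection of those of Theorem \ref{thm:tdualdivergence}, in particular that the $\red C$ appearing here is a submanifold (guaranteed when $\cQ_1,\cQ_2$ are fibred over a common base $\cB$, giving $\red C=\cQ_1\times_\cB\cQ_2$, as in \cite[Lemma~5.7]{DeFraja:2023fhe}) and that $R$ being a generalised isometry which is simultaneously a Dirac structure is automatic in this setup. There is no genuinely new computation: the corollary is a packaging of Theorems \ref{thm:maingeneral} and \ref{thm:tdualdivergence} with Proposition \ref{cor:relatedric}.
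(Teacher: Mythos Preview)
Your proposal is correct and matches the paper's approach: the corollary is an immediate application of Proposition~\ref{cor:relatedric} once the hypotheses of Theorems~\ref{thm:maingeneral} and~\ref{thm:tdualdivergence} (and hence a generalised isometry $R$ which is a Dirac structure with $\div_1\sim_R\div_2$) are in place. One minor slip: condition~\ref{item:main1} of Theorem~\ref{thm:maingeneral} reads $K_2^\perp\cap K_1\subseteq K_2$, not $\subseteq K_1$, and it is not forced by $K_1\cap K_2=\{0\}$ alone --- you should simply take one of the equivalent conditions \ref{item:main1}--\ref{item:main3} as part of the standing geometric T-duality assumptions (since $V_2^+$ is already being invoked).
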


Recalling Definition~\ref{def:genstringbackground}, we can then immediately infer that generalised T-dualities preserve solutions of generalised string background equations.

\begin{theorem}\label{cor:Tdualbackground}
    Under the assumptions of Theorem \ref{thm:tdualdivergence}, if a $D_1$-invariant pair $(V_1^+,\div_1)$ is a solution of the generalised string background equations on $\red E{}_1$, then the T-dual pair $(V_2^+,\div_2)$ is a solution of the generalised string background equations on $\red E{}_2$. More generally, if $(V_1^+,\div_1)$ is a generalised Einstein pair on $\red E{}_1$, then $(V_2^+,\div_2)$ is a generalised Einstein pair on~$\red E{}_2$.
\end{theorem}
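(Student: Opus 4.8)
The plan is to deduce Theorem \ref{cor:Tdualbackground} directly from Corollary \ref{cor:Tdualric}, which is itself an immediate consequence of Proposition \ref{cor:relatedric} applied to the T-duality relation $R \colon (\red E{}_1, V_1^+) \rel (\red E{}_2, V_2^+)$. Recall that under the assumptions of Theorem \ref{thm:tdualdivergence} the relation $R$ is a generalised isometry (by Theorem \ref{thm:maingeneral}) and a Dirac structure supported on $\red C \subseteq \cQ_1\times\cQ_2$, and that Theorem \ref{thm:tdualdivergence} furnishes a unique divergence $\div_2$ on $\red E{}_2$ with $\div_1\sim_R\div_2$. Hence Proposition \ref{cor:relatedric} gives $\gric_{V_1^+,\div_1}\sim_R\gric_{V_2^+,\div_2}$ and $\mathrm{GR}_{V_1^+,\div_1}\sim_{\red C}\mathrm{GR}_{V_2^+,\div_2}$.

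The first claim then follows because the projections $\varpi_i$ (or rather $\pi_i$) make $\red C$ surject onto $\cQ_2$, so that $\sim_{\red C}$-triviality transfers. Concretely: if $\gric_{V_1^+,\div_1}=0$, then for every $(c_1,c_2)\in\red C$ and every pair $(v_1^+,v_2^+)\in R^+_{(c_1,c_2)}$, $(v_1^-,v_2^-)\in R^-_{(c_1,c_2)}$ we have, by Definition \ref{def:relatedtensors}, $\gric_{V_2^+,\div_2}(v_2^+,v_2^-)(c_2)=\gric_{V_1^+,\div_1}(v_1^+,v_1^-)(c_1)=0$. The key point to record is that, since $R$ is a generalised isometry with trivial kernel and cokernel (by \cite[Proposition 4.17]{DeFraja:2023fhe}), the maps $R^\pm_{(c_1,c_2)}\to (V_2^\pm)_{c_2}$ are isomorphisms, so every element of $(V_2^+)_{c_2}\times(V_2^-)_{c_2}$ arises this way; together with surjectivity of the projection $\red C\to\cQ_2$ this yields $\gric_{V_2^+,\div_2}=0$ identically. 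The same argument applied to the function $\mathrm{GR}$ gives $\mathrm{GR}_{V_2^+,\div_2}=0$ from $\mathrm{GR}_{V_1^+,\div_1}=0$, using that $C$-relatedness of functions plus surjectivity of $\mathrm{pr}_2|_{\red C}$ forces the second function to vanish. This proves the first assertion.

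For the statement about generalised Einstein pairs, note that $\overline{\gric}_{V_1^+,\div_1}=\gric_{V_1^+,\div_1}-\gric_{V_1^-,\div_1}$, and the proof of Proposition \ref{cor:relatedric} already records that both $\gric_{V_1^+,\div_1}\sim_R\gric_{V_2^+,\div_2}$ and $\gric_{V_1^-,\div_1}\sim_R\gric_{V_2^-,\div_2}$ (the latter obtained by swapping the roles of $V_i^+$ and $V_i^-$, which is legitimate since $R$ is simultaneously a generalised isometry for the decomposition $E_i=V_i^+\oplus V_i^-$ and for $E_i=V_i^-\oplus V_i^+$). Hence $\overline{\gric}_{V_1^+,\div_1}\sim_R\overline{\gric}_{V_2^+,\div_2}$, and the same surjectivity-plus-isomorphism argument as above shows that vanishing of $\overline{\gric}_{V_1^+,\div_1}$ implies vanishing of $\overline{\gric}_{V_2^+,\div_2}$, i.e.\ $(V_2^+,\div_2)$ is a generalised Einstein pair on $\red E{}_2$.

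I expect no serious obstacle here: the entire content has been pushed into Proposition \ref{cor:relatedric} and Theorem \ref{thm:tdualdivergence}. The one point that deserves a careful sentence rather than a wave of the hand is the passage from $R$-relatedness (a pointwise identity over $\red C$ for vectors lying in $R^\pm$) to an honest identity of tensors on $\red E{}_2$; this is exactly where one invokes the triviality of $\Ker(R)$ and $\Coker(R)$ together with the surjectivity of the projection $\red C\to\cQ_2$, so the proof should explicitly flag these two inputs. Everything else is a direct quotation of earlier results. In the local case, the identical argument applies chart by chart using Remark \ref{rmk:tdualdivlocal} and Theorem \ref{thm:mainlocal}, and the conclusions glue since the notion of (generalised) string background equation is local on $\cQ_2$.
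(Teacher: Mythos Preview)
Your proof is correct and follows essentially the same approach as the paper: invoke Corollary~\ref{cor:Tdualric} (equivalently Proposition~\ref{cor:relatedric}) and then use surjectivity of ${\rm pr}_2\rvert_{\red C}$ onto $\cQ_2$ to pass from $R$-relatedness to honest vanishing on $\red E{}_2$. The paper's proof is terser and leaves the ``every $(v_2^+,v_2^-)$ arises from some $R^\pm$-pair'' step implicit, whereas you spell it out via triviality of $\Ker(R)$ and $\Coker(R)$; this extra detail is welcome and not a deviation in strategy.
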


\begin{proof}
    The first statement follows from Corollary \ref{cor:Tdualric} and Definition \ref{def:relatedtensors}: If \smash{${\rm GRic}_{V_1^+,\div_1}=0$} and \smash{${\rm GR}_{V_1^+,\div_1}=0$} on $\cQ_1$, then the property ${\rm pr}_i(\red C)=\varpi_i(M)=\cQ_i$ for $i=1,2$ ensures that \smash{$\gric_{V^+_2, \div_2}$} and \smash{${\rm GR}_{V^+_2, \div_2}$} vanish on all of $\cQ_2$.  
    
    The second statement follows similarly.
\end{proof}

\begin{example}[\textbf{Poisson-Lie T-Duality}]
    Consider the setting of Example \ref{eg:vysokypoissonlie}, with the related divergence operators \smash{$\div_\sfH \sim_{R_{\sfH,\sfH'}} \div_{\sfH'}$}. Let \smash{$V^+ \subset E$} be a $\sfG$-invariant generalised metric. Its reduction by both the $\frh$-action and the $\frh'$-action yields the generalised metrics $V^+_\sfH$ and $V^+_{\sfH'}$ on $\red E{}_\sfH$ and $\red E{}_{\sfH'}$ respectively. As shown in \cite{Vysoky2020hitchiker}, $R_{\sfH,\sfH'}$ is a generalised isometry between $V^+_\sfH$ and $V^+_{\sfH'}$. It then follows that $\gric_{V^+_\sfH, \div_\sfH} \sim_{R_{\sfH, \sfH'}} \gric_{V^+_{\sfH'}, \div_{\sfH'}}$.
\end{example}

\begin{example}[\textbf{T-Duality for Torus Bundles}]\label{eg:garciastreetsRicci}
Another example is provided by the generalised Ricci tensors for T-duality between torus bundles, as formulated in \cite[Proposition 6.3]{Garcia-Fernandez:2016ofz}. In this case, given two T-dual $\sfT^k$-equivariant exact Courant algebroids $\red E{}_1$ and $\red E{}_2$  endowed with T-dual pairs $(V^+_1, \div_1)$ and $(V^+_2, \div_2)$, respectively, as in Definition \ref{def:garcia-fernandezstreetsTdual}, their generalised Ricci tensors satisfy
\begin{align}
    {\pi_1}_*\, \gric_{V_1^+, \div_1} = \mathscr{R}^*\, {\pi_2}_*\, \gric_{V_2^+, \div_2} \ .
\end{align}
This is equivalent to \smash{$\mathrm{GRic}_{V_1^+, \div_1} \sim_R \mathrm{GRic}_{V_2^+, \div_2}$}, where $R$ is the generalised isometry which appears in Proposition \ref{prop:garciastreetstdual}.
\end{example}

\medskip

\subsubsection{Ricci-Buscher Rules}\label{sec:RicciBuscher}~\\[5pt]
 Let $\cQ_1$ and $\cQ_2$ be T-dual circle bundles over a base $\cB$, with T-duality relation $$R \colon (\IT \cQ_1, H_1, \tau_1) \rel (\IT \cQ_2, H_2, \tau_2)$$ as in Example \ref{rmk:buscher}. 
 For any open cover $U^{(i)}$ trivialising $\cQ_1$, the Kalb-Ramond flux on $\cQ_1$ is given locally by $H_1=\de b_1$ for some two-form $b_1\in \sfOmega^2(U^{(i)})$. If $\tau_1 = (g_1,0)$ for an $\sfS^1$-invariant Riemannian metric $g_1$ on $\cQ_1$, then the T-dual generalised metric $\tau_2=(g_2,b_2)$ is given by the Buscher rules~\eqref{eqn:buscherrules}.
 
On $\cQ_1$ we take the dilaton\footnote{We adhere to the usual Einstein summation convention over repeated upper and lower indices throughout.} $$e =  4\, \de \log \,(g_1)_{\theta\theta} = \frac{4}{(g_1)_{\theta\theta}} \, \partial_\alpha (g_1)_{\theta\theta} \, \de x^{\alpha} \ .$$ This ensures that \smash{$\div{}_1:=\div_{4\log\,(g_1)_{\theta\theta}} \sim_R \div_{\mu_{g_2}}=:\div{}_2$}. 
For this choice $\Fi^+ = \frac12\,e$. 

If $\nabla^+$ is the Bismut connection for $(g_1,H_1)$, then its Christoffel symbols are given by $$\Gamma^{+\,i}{}_{jk} = \Gamma^i{}_{jk} + \tfrac{1}{2}\, (H_1)^i{}_{jk} \ , $$ where $\Gamma^i{}_{jk}$ are the Christoffel symbols for the Levi-Civita connection $\nabla^{g_1}$ of $g_1$, and indices are raised using the metric $g_1$ in the standard way. Hence
\begin{align}
    2\,(g_1)_{\theta\theta}\, \nabla^+ \Fi^+ &=\Big(\partial_{\alpha} \,\partial_{\beta} (g_1)_{\theta\theta} - \frac{1}{(g_1)_{\theta\theta}} \,\partial_\alpha (g_1)_{\theta\theta}\, \partial_\beta (g_1)_{\theta\theta}\Big ) \, \de x^\beta \otimes \de x^\alpha \\
    & \quad \, - \partial_{\alpha} (g_1)_{\theta\theta}\,\Big ( \Gamma^\alpha{}_{ij} \, \de x^i \odot \de x^j - \frac 12\, (H_1)^\alpha{}_{ij} \, \de x^i \wedge \de x^j \Big) \ .
\end{align}

Let ${\rm Ric}_{g_1}$ be the Ricci tensor of~$g_1$, with coordinate frame components $({\rm Ric}_{g_1})_{ij}:={\rm Ric}_{g_1}(\partial_i,\partial_j)$. Then the generalised Ricci tensor of $(\tau_1,\div{}_1)$ is given from Equation~\eqref{eqn:gricclassically} by
\begin{align}
    (\gric_1)_{ij} :\!&= \gric_{g_1,\div{}_1}\big( (\partial_i)^+, (\partial_j)^-\big) \\[4pt]
    &= ({\rm Ric}_{g_1})_{ij} - \frac{1}{4}\, (H_1^2)_{ij} - \frac12\,(\de^*H_1)_{ij} \\
    & \hspace{1cm} \, + \frac{1}{(g_1)_{\theta\theta}}\,\partial_i\, \partial_j (g_1)_{\theta\theta} - \frac{1}{(g_1)_{\theta\theta}^2}\,\partial_i (g_1)_{\theta\theta}\, \partial_j (g_1)_{\theta\theta} \\
    & \hspace{2cm} \, - \frac{1}{2\,(g_1)_{\theta\theta}}\,\partial_\alpha (g_1)_{\theta\theta} \, \Gamma^\alpha{}_{ij} - \frac{1}{4\,(g_1)_{\theta\theta}}\,\partial_\alpha (g_1)_{\theta\theta} \,(H_1)^\alpha{}_{ij} \ ,
\end{align}
where $(H_1^2)_{ij} = (H_1)_{ikl}\,(H_1)_j{}^{kl}$ and $(\de^*H_1)_{ij} = -g_1^{kl}\,\nabla_k^{g_1}(H_1)_{lij}$.

From the local form of the relation $R$ given by Equation \eqref{eqn:Rlocally}, together with Corollary~\ref{cor:Tdualric}, we obtain
\small
\begin{align}
    \gric_{(g_2,b_2), \div{}_{2}}\big((\partial_\alpha - (b_1)_{\alpha \theta} \,\partial_{\theta_2})^+ , (\partial_\beta - (b_1)_{\beta \theta} \,\partial_{\theta_2})^-\big) &= \gric_{g_1, \div{}_1}\big((\partial_\alpha)^+ , (\partial_\beta)^-\big) \ ,\\[4pt]
    \gric_{(g_2,b_2), \div{}_{2}}\big((\partial_\alpha - (b_1)_{\alpha \theta} \,\partial_{\theta_2})^+ , (\partial_{\theta_2})^-\big) &= \gric_{g_1, \div{}_1}\big((\partial_\alpha)^+ , -(\tfrac{1}{(g_1)_{\theta\theta}} \,\partial_{\theta_1})^-\big) \ ,\\[4pt] 
    \gric_{(g_2,b_2), \div{}_{2}}\big((\partial_{\theta_2})^+ , (\partial_\alpha - (b_1)_{\alpha \theta}\, \partial_{\theta_2})^-\big) &= \gric_{g_1, \div{}_1}\big((\tfrac{1}{(g_1)_{\theta\theta}} \,\partial_{\theta_1})^+ , (\partial_\alpha)^-\big) \ ,\\[4pt] 
    \gric_{(g_2,b_2), \div{}_{2}}\big((\partial_{\theta_2})^+ , (\partial_{\theta_2})^-\big) &= \gric_{g_1, \div{}_1}\big((\tfrac{1}{(g_1)_{\theta\theta}}\, \partial_{\theta_1})^+ , -(\tfrac{1}{(g_1)_{\theta\theta}}\, \partial_{\theta_1})^-\big) \ .   
\end{align}
\normalsize

In the coordinate frame $\set{\partial_{\alpha}\,,\, \partial_\theta}_{\alpha=1,\dots,\dim \cB}$, using Equation~\ref{eqn:gricclassically} the Ricci tensor ${\rm Ric}_{g_2}$ of $g_2$ is thus given by
\begin{align}
    ({\rm Ric}_{g_2})_{\theta \theta} &= -\frac{1}{(g_1)_{\theta\theta}^2}\, (\gric_1)_{\theta \theta} +\frac{1}{4}\,(H_2^2)_{\theta\theta} + \frac12\,(\de^*H_2)_{\theta\theta} \ , \\[4pt]
    ({\rm Ric}_{g_2})_{\alpha \theta} &= -\frac{1}{(g_1)_{\theta\theta}}\, (\gric_1)_{\alpha \theta} +\frac{1}{4}\,(H_2^2)_{\alpha \theta} + \frac 12\,(\de^*H_2)_{\alpha \theta} + \frac{(b_1)_{\alpha \theta}}{(g_1)_{\theta\theta}^2} \,(\gric_1)_{\theta \theta} \ , \\[4pt]
    ({\rm Ric}_{g_2})_{\alpha \beta } &= (\gric_1)_{\alpha \beta} +\frac{1}{4}\,(H_2^2)_{\alpha \beta} + \frac 12\,(\de^*H_2)_{\alpha\beta} \\
    &\quad \, + \frac{(b_1)_{\beta \theta}}{(g_1)_{\theta\theta}}\,(\gric_1)_{\alpha\theta} - \frac{(b_1)_{\beta \theta}\, (b_1)_{\alpha \theta}}{(g_1)_{\theta\theta}^2} \, (\gric_1)_{\theta \theta} -\frac{(b_1)_{\alpha\theta}}{(g_1)_{\theta\theta}}\,(\gric_1)_{\theta \beta} \ . 
\end{align}

\medskip

\subsection{Generalised Ricci Flow and T-Duality}~\\[5pt]
 We are now ready to discuss the interplay between geometric T-duality, introduced in Subsection~\ref{subsect:geomTdual}, and the generalised Ricci flow, inspired by the treatment of~\cite{Severa:2018pag}.

Let $E$ be a Courant algebroid over $M$ endowed with a generalised metric $V^+$ and a divergence operator $\div$.   The \emph{generalised Ricci flow} on $E$ is the system of second order non-linear partial differential equations
\begin{align}
    \frac{\partial}{\partial t} \tau = -2 \, \mathrm{GRic}_{\tau,\div}\quad , \quad \frac{\partial}{\partial t} \div = -\cD \, \mathrm{GR}_{\tau,\div} \ ,
\end{align}
where $\tau$ is the automorphism of $E$ whose $+1$-eigenbundle is $V^+$, and $t\in[0,\infty)$ is the flow parameter which we call `time'.

In the setting of geometric T-duality described in Subsections~\ref{subsect:geomTdual} and~\ref{sect:Tdualitydiv}, whereby a generalised isometry $R$ plays the role of the T-duality relation, we consider a $D_1$-invariant pair $(\tau_1, \div{}_1)$ of a generalised metric and divergence on $\red E{}_1$, yielding a pair $(\tau_2, \div{}_2)$ on the T-dual Courant algebroid $\red E{}_2$. These pairs give \emph{a priori} unrelated generalised Ricci flows. We will now show that they are indeed related. 

\begin{theorem} \label{thm:compatibilityRicci}
    Under the assumptions of Theorems \ref{thm:maingeneral} and \ref{thm:tdualdivergence}, suppose that a unique solution $(\tau_1(t), \div{}_1(t))$ of the generalised Ricci flow on $\red E{}_1$ with initial condition $(\tau_1(0), \div{}_1(0)) = (\tau_1, \div{}_1)$ exists for $t \in [0,T),$ for some $T \in \IR_{>0},$  and that the Lie algebra of isometries $\frk{k}_{\tau_1}$ consists of complete vector fields. Then there is a unique T-dual family of pairs $(\tau_2(t), \div{}_2(t))$ that is a solution of the generalised Ricci flow on $\red E{}_2$ for all $t\in[0,T)$. 
\end{theorem}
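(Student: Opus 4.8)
The plan is to show that the T-dual pair $(\tau_2(t),\div{}_2(t))$ obtained from $(\tau_1(t),\div{}_1(t))$ at each time via Theorems~\ref{thm:maingeneral} and~\ref{thm:tdualdivergence} is (a) well-defined for all $t\in[0,T)$, (b) a solution of the generalised Ricci flow on $\red E{}_2$, and (c) the unique such solution. The key point that ties everything together is Proposition~\ref{cor:relatedric}: at each fixed $t$, the T-duality relation $R$ is a generalised isometry between $(\red E{}_1,\tau_1(t))$ and $(\red E{}_2,\tau_2(t))$ which is a Dirac structure, and $\div{}_1(t)\sim_R\div{}_2(t)$, so $\gric_{\tau_1(t),\div{}_1(t)}\sim_R\gric_{\tau_2(t),\div{}_2(t)}$ and $\mathrm{GR}_{\tau_1(t),\div{}_1(t)}\sim_{\red C}\mathrm{GR}_{\tau_2(t),\div{}_2(t)}$.

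First I would address well-definedness of the family $(\tau_2(t),\div{}_2(t))$. The obstruction is that Theorem~\ref{thm:maingeneral} requires $\tau_1(t)$ to remain $D_1$-invariant along the flow, and Theorem~\ref{thm:tdualdivergence} requires $\div{}_1(t)$ to remain (locally) $D_1$-compatible; these are preserved because the generalised Ricci flow vector field $\gric_{\,\cdot\,,\div}$ on $\cM_{\red E{}_1}$ is tangent to the submanifold of $D_1$-invariant generalised metrics (the analogue of \cite[Proposition~5.18]{Severa:2018pag}, which in turn rests on the fact that $\ad^{\red\sigma_1}_{\red X}$ commutes appropriately with the Dorfman bracket, cf.\ Appendix~\ref{app:twistedaction}), and similarly $\cD\,\mathrm{GR}_{\tau_1,\div{}_1}$ is $D_1$-invariant whenever $(\tau_1,\div{}_1)$ is, using Lemma~\ref{lemma:relationlieder} and the completeness of the Killing vector fields $\frk{k}_{\tau_1}$ (completeness is what guarantees the Lie-derivative identities integrate globally and that $D_1$-invariance is not merely infinitesimal). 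Hence for each $t\in[0,T)$ the hypotheses of Theorems~\ref{thm:maingeneral} and~\ref{thm:tdualdivergence} hold for $(\tau_1(t),\div{}_1(t))$, producing a unique $(\tau_2(t),\div{}_2(t))$; smoothness of this family in $t$ follows from the smooth dependence of the construction (the unique splittings $s_1,s_2$ of \cite[Proposition~5.31]{DeFraja:2023fhe} and the module isomorphisms are $t$-independent, so $\tau_2(t)$ and $\div{}_2(t)$ inherit the regularity of $\tau_1(t)$ and $\div{}_1(t)$).

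Next I would verify that $(\tau_2(t),\div{}_2(t))$ solves the generalised Ricci flow. Fix $(e_1,e_2)\in\sfgamma(\red E{}_1\times\overline{\red E}{}_2;R)$ decomposing as $(v_1^+,v_2^+)+(v_1^-,v_2^-)$ in $\sfgamma(R^+)\oplus\sfgamma(R^-)$. Differentiating the $\red C$-relation $\tfrac{\partial}{\partial t}\tau_1\sim_R\tfrac{\partial}{\partial t}\tau_2$ — which holds because the relation $R$ is $t$-independent and $\tau_1(t)\sim_R\tau_2(t)$ for all $t$ (this is the content of $R$ being a generalised isometry, $\tau_1\times\tau_2$ fixing $R$, for every $t$) — and comparing with $-2\,\gric_{\tau_1,\div{}_1}\sim_R-2\,\gric_{\tau_2,\div{}_2}$ from Proposition~\ref{cor:relatedric}, one sees $\tfrac{\partial}{\partial t}\tau_2=-2\,\gric_{\tau_2,\div{}_2}$ on the subbundle $R^+\oplus R^-=R$; since $R$ has trivial kernel and cokernel (it is a generalised isometry, \cite[Proposition~4.17]{DeFraja:2023fhe}) and $\mathrm{pr}_2(\red C)=\cQ_2$, the components of $\tau_2$ are determined pointwise by $R$-related components of $\tau_1$, so the equation holds on all of $\red E{}_2$. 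For the divergence equation, $\tfrac{\partial}{\partial t}\div{}_1=-\cD\,\mathrm{GR}_{\tau_1,\div{}_1}$ and $\mathrm{GR}_{\tau_1,\div{}_1}\sim_{\red C}\mathrm{GR}_{\tau_2,\div{}_2}$; applying $\cD$ and using that $\cD$ is built from the de Rham differential, the anchor and the pairing (all of which intertwine with $R$, the pairing because $R$ is isotropic and maximally so) one gets $\cD\,\mathrm{GR}_{\tau_1,\div{}_1}\sim_R\cD\,\mathrm{GR}_{\tau_2,\div{}_2}$, and then differentiating $\div{}_1(t)\sim_R\div{}_2(t)$ in $t$ (valid since $\div{}_1(t)\sim_R\div{}_2(t)$ for all $t$ by construction via Theorem~\ref{thm:tdualdivergence}, and $R$ is $t$-independent) forces $\tfrac{\partial}{\partial t}\div{}_2=-\cD\,\mathrm{GR}_{\tau_2,\div{}_2}$ on $R$-related sections; by the uniqueness argument of Lemma~\ref{lem:uniquereldiv} this pins down $\tfrac{\partial}{\partial t}\div{}_2$ entirely. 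Finally, the initial condition is $(\tau_2(0),\div{}_2(0))$, the T-dual of $(\tau_1,\div{}_1)$, and uniqueness of the T-dual solution follows from Theorem~\ref{thm:maingeneral}\ref{item:main3} (the T-dual generalised metric is unique) together with Lemma~\ref{lem:uniquereldiv} (the T-dual divergence is unique), applied at each $t$.

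The main obstacle I anticipate is the first step: showing the $D_1$-invariance and $D_1$-compatibility are genuinely preserved along the flow, i.e.\ that the generalised Ricci flow vector field on $\cM_{\red E{}_1}\times\{\text{divergences}\}$ is tangent to the locus of $(D_1\text{-invariant}, D_1\text{-compatible})$ pairs. This is the analogue of \cite[Propositions~5.18 and~5.19]{Severa:2018pag} in the relational/leaf-space setting, and it is where completeness of $\frk{k}_{\tau_1}$ is essential — without it one only controls the flow infinitesimally and cannot conclude that the evolved $\tau_1(t)$ is honestly invariant under the (now possibly incomplete) local isometries, so the hypotheses of the existence-and-uniqueness theorems might fail at later times. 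Once tangency is established, everything else is a diagram-chase through Proposition~\ref{cor:relatedric} and the uniqueness lemmas.
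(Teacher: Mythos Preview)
Your overall structure matches the paper, and your step~(b)---differentiating the $R$-related pairs and invoking Proposition~\ref{cor:relatedric} together with the trivial kernel/cokernel of $R$---is exactly what the paper's terse phrase ``by Corollary~\ref{cor:Tdualric} satisfies the generalised Ricci flow'' unpacks to. There is no issue there.

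The genuine discrepancy is in step~(a), your ``main obstacle.'' The paper does \emph{not} argue by tangency of the generalised Ricci flow vector field to the submanifold of $D_1$-invariant pairs. Instead it uses completeness of $\frk{k}_{\tau_1}$ for a quite different purpose: each $\red X\in\frk{k}_{\tau_1}$ integrates to a one-parameter family of Courant algebroid automorphisms $\boldsymbol\Phi_s$ preserving $(\tau_1,\div{}_1)$; by naturality of $\gric$ and $\mathrm{GR}$, the pullback $(\boldsymbol\Phi_s^*\tau_1(t),\boldsymbol\Phi_s^*\div{}_1(t))$ is another solution of the generalised Ricci flow with the \emph{same} initial condition; uniqueness of the solution then forces $\boldsymbol\Phi_s^*\tau_1(t)=\tau_1(t)$ and $\boldsymbol\Phi_s^*\div{}_1(t)=\div{}_1(t)$ for all $s$ and $t$. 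Differentiating in $s$ gives $\frk{k}_{\tau_1}\subset\frk{k}_{\tau_1(t)}$ and the $D_1$-compatibility of $\div{}_1(t)$ at once. This is the standard ``isometries are preserved under Ricci flow'' argument, and it sidesteps the tangency-plus-closedness issues entirely.

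Your tangency formulation, by contrast, has a gap as written: tangency of a vector field to a submanifold does not by itself keep integral curves in that submanifold (one needs either closedness of the submanifold, or a separate existence result for the restricted flow to which uniqueness can then be compared). You gesture at completeness as the fix, but completeness of the Killing vectors is not what closes that gap---it is what allows one to pass to the group action and run the paper's direct argument instead. So your diagnosis of \emph{where} completeness enters is right, but the \emph{mechanism} you describe (``Lie-derivative identities integrate globally so invariance is not merely infinitesimal'') is not the one that actually works without further input.
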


\begin{proof}
    By assumption, any  Killing vector $X$ for $\tau_1$ integrates to a one-parameter family of Courant algebroid automorphisms $\boldsymbol\Phi_s$ such that $\boldsymbol\Phi_s^* \,\tau_1 = \tau_1$ and $\boldsymbol\Phi^*_s\, \div{}_1 = \div{}_1$.
    If $(\tau_1(t), \div{}_1(t))$ is a solution of the generalised Ricci flow, then by the naturality of the generalised Ricci tensor and generalised scalar curvature, it follows that $(\boldsymbol\Phi_s^*\,\tau_1(t), \boldsymbol\Phi_s^*\,\div{}_1(t))$ is also a solution of the generalised Ricci flow. By the uniqueness of the solution, it follows that $\boldsymbol\Phi_s$ is a one-parameter group of  isometries of $\left( \tau_1(t), \div{}_1(t) \right)$ for all $t\in[0,T)$. 
    Hence the vector field $X$ is a Killing vector for $\tau_1(t)$, and we may take $\frk{k}_{\tau_1(t)} \coloneqq \frk{k}_{\tau_1}$, for which $\div{}_1(t)$ is compatible. By Theorems \ref{thm:maingeneral} and \ref{thm:tdualdivergence}, we may construct the unique T-dual pair $(\tau_2(t), \div{}_2(t))$ for all $t\in[0,T)$, which by Corollary \ref{cor:Tdualric} satisfies the generalised Ricci flow.    
\end{proof}

In this sense, we may therefore say that the T-duality relation $R$ is \emph{compatible} with the generalised Ricci flow. In particular, it follows from Theorem~\ref{cor:Tdualbackground} that fixed points of the generalised Ricci flow, i.e. solutions to the generalised string background equations of Definition~\ref{def:genstringbackground}, are preserved under T-duality.

The proof of Theorem~\ref{thm:compatibilityRicci} assumes the existence and uniqueness of the generalised Ricci flow, together with completeness of the vector fields generating isometries of $\tau_1$, which as shown in \cite{Streets:2024rfo} are guaranteed on a compact manifold. This leads to

\begin{corollary}\label{cor:Riccicompactcase}
   Let $\cQ_1$ be a compact manifold with a $D_1$-invariant pair $(\tau_1,\div{}_1)$. Then there is a unique pair $(\tau_2(t), \div{}_2(t))$ solving the generalised Ricci flow on $\red E{}_2$ which is T-dual to the generalised Ricci flow solution $(\tau_1(t), \div{}_1(t))$ on $\red E{}_1$, for all times $t\in[0,T)$.
\end{corollary}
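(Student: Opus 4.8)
The plan is to obtain Corollary~\ref{cor:Riccicompactcase} as an immediate consequence of Theorem~\ref{thm:compatibilityRicci}, by verifying that its two standing hypotheses hold automatically once $\cQ_1$ is assumed compact.

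First I would invoke the short-time existence and uniqueness result for generalised Ricci flow on compact manifolds proved in \cite{Streets:2024rfo}. Since $\cQ_1$ is compact, the initial value problem for the flow $\frac{\partial}{\partial t}\tau = -2\,\gric_{\tau,\div}$, $\frac{\partial}{\partial t}\div = -\cD\,\mathrm{GR}_{\tau,\div}$ on $\red E{}_1$ with initial data $(\tau_1(0),\div{}_1(0)) = (\tau_1,\div{}_1)$ admits a unique solution $(\tau_1(t),\div{}_1(t))$ on a maximal interval $t\in[0,T)$ with $T\in\IR_{>0}$; this is precisely the divergence-operator formulation of the flow for which \cite{Streets:2024rfo} establishes well-posedness. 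Second, I would note that the isometry algebra $\frk{k}_{\tau_1}\subset\sfgamma(T\cQ_1)$ automatically consists of complete vector fields, since every smooth vector field on a compact manifold is complete.

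With both hypotheses in hand, Theorem~\ref{thm:compatibilityRicci} applies verbatim and produces the desired unique T-dual family $(\tau_2(t),\div{}_2(t))$ on $\red E{}_2$ solving the generalised Ricci flow for $t\in[0,T)$: at each fixed $t$ the flowed isometry algebra remains $\frk{k}_{\tau_1}$, the pair $(\tau_1(t),\div{}_1(t))$ stays $D_1$-invariant, Theorems~\ref{thm:maingeneral} and~\ref{thm:tdualdivergence} construct the unique T-dual pair, and Corollary~\ref{cor:Tdualric} shows it satisfies the flow. The statement involves no genuine obstacle beyond this packaging; the only point meriting a moment's attention is that $T$ is the existence time of the flow on $\red E{}_1$ and that the T-dualisation of Theorem~\ref{thm:tdualdivergence} can be performed consistently for all $t\in[0,T)$ at once, which is exactly what the proof of Theorem~\ref{thm:compatibilityRicci} guarantees by transporting the time-independent algebra $\frk{k}_{\tau_1}$ along the flow.
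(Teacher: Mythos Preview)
Your proposal is correct and matches the paper's own justification: the corollary is stated immediately after the remark that existence and uniqueness of the generalised Ricci flow (from \cite{Streets:2024rfo}) and completeness of the Killing vector fields are both guaranteed by compactness of $\cQ_1$, so Theorem~\ref{thm:compatibilityRicci} applies directly. Your write-up simply spells out these two verifications explicitly, which is exactly the intended argument.
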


\medskip

\subsubsection{Local T-Duality of the Generalised Ricci flow}~\\[5pt]
The proof of Theorem~\ref{thm:compatibilityRicci} requires that the Killing vector fields for $\tau_1$ integrate to a flow on $\red E{}_1$, thus this cannot be done in the local case. We shall now give the appropriate statement of the T-duality of generalised Ricci flows in the local case.  The construction of the T-dual generalised Ricci tensor can be done immediately as a consequence of Theorem \ref{thm:mainlocal} and Remark \ref{rmk:tdualdivlocal}. In the course of the proof of Theorem \ref{thm:compatibilityRicci}, we assumed  that the generators of $\frk{k}_{\tau_1}$ were  given by complete vector fields integrating to isometries, and that the generalised Ricci flow solution is unique, in order to assert that $\frk{k}_{\tau_1(0)}\subset \frk{k}_{\tau_1(t)}$. Dropping the former assumptions and simply assuming the latter condition yields

\begin{theorem}\label{thm:localRicciflow}
    Let $\left( \tau_1(t), \div{}_1(t) \right)$ be a solution of the generalised Ricci flow on $\red E{}_1$ for  $t\in[0,T)$ with initial condition $\tau_1(0)=\tau_1$ and $\div{}_1(0)=\div{}_1$. Let $\set{U^{(i)}}_{i\in I}$ be an open cover of $\cQ_1$ and suppose that $\left( \tau_1(t), \div{}_1(t) \right)$ is locally $D_1$-invariant, where $D_1$ is generated by vector fields \smash{$\set{X_1^{(i)},\dots,X_{r_1}^{(i)}}_{i\in I}$}, for all $t\in [0,T)$. Then there is a unique family $(\tau_2(t), \div{}_2(t))$ which is T-dual to $(\tau_1(t), \div{}_1(t))$ and which is a solution to the generalised Ricci flow on $\red E{}_2$.
\end{theorem}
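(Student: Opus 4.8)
The plan is to mirror the proof of Theorem~\ref{thm:compatibilityRicci}, but to replace the global integration argument by a purely local one, so that the only hypothesis we actually use is the stated inclusion $\frk{k}_{\tau_1(0)}\subseteq\frk{k}_{\tau_1(t)}$ of local isometry algebras. First I would recall that, by Theorem~\ref{thm:mainlocal} and Remark~\ref{rmk:tdualdivlocal}, the local versions of Theorems~\ref{thm:maingeneral} and~\ref{thm:tdualdivergence} hold: given a locally $D_1$-invariant pair $(\tau_1,\div{}_1)$ on $\red E{}_1$ with local generators $\{X_1^{(i)},\dots,X_{r_1}^{(i)}\}_{i\in I}$ of $D_1$ over an open cover $\{U^{(i)}\}_{i\in I}$ of $\cQ_1$, there is a unique generalised metric $\tau_2$ and a unique divergence $\div{}_2$ on $\red E{}_2$ making the T-duality relation $R$ a generalised isometry with $\div{}_1\sim_R\div{}_2$. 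Thus at each fixed time $t\in[0,T)$ one can apply this local construction to $(\tau_1(t),\div{}_1(t))$ and obtain a well-defined pair $(\tau_2(t),\div{}_2(t))$ on $\red E{}_2$.

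The content of the theorem is that this pointwise-in-$t$ family actually solves the generalised Ricci flow. The key step is that the local T-dual construction at each time is legitimate, and for this we need that $(\tau_1(t),\div{}_1(t))$ is locally $D_1$-invariant for \emph{all} $t\in[0,T)$ with respect to the \emph{same} local generators $\{X_k^{(i)}\}$; this is exactly the hypothesis. Given this, Corollary~\ref{cor:Tdualric} (which is itself stated under the assumptions of Theorem~\ref{thm:tdualdivergence}, hence valid in the local case by Theorem~\ref{thm:mainlocal} and Remark~\ref{rmk:tdualdivlocal}) says that at each time $t$ the generalised Ricci tensors and generalised scalar curvatures are T-duality related: $\gric_{\tau_1(t),\div{}_1(t)}\sim_R\gric_{\tau_2(t),\div{}_2(t)}$ and ${\rm GR}_{\tau_1(t),\div{}_1(t)}\sim_C{\rm GR}_{\tau_2(t),\div{}_2(t)}$. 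The flow equations for the T-dual family are then obtained by differentiating the relations $\tau_1(t)\sim_R\tau_2(t)$ and $\div{}_1(t)\sim_R\div{}_2(t)$ in $t$ and using that $(\tau_1(t),\div{}_1(t))$ solves $\partial_t\tau_1=-2\,\gric_{\tau_1,\div{}_1}$, $\partial_t\div{}_1=-\cD\,{\rm GR}_{\tau_1,\div{}_1}$; here one uses that $R$ (being a Dirac structure) intertwines the operators $\cD$ on $\red E{}_1$ and $\red E{}_2$ when restricted to $C$-related functions, together with Proposition~\ref{cor:relatedric}. Uniqueness of $(\tau_2(t),\div{}_2(t))$ follows from the uniqueness clause of the local form of Theorems~\ref{thm:maingeneral} and~\ref{thm:tdualdivergence} applied at each $t$, i.e.~from Lemma~\ref{lem:uniquereldiv} and the uniqueness of the T-dual generalised metric.

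I would write this up roughly as follows.

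\begin{proof}
    By Theorem~\ref{thm:mainlocal} and Remark~\ref{rmk:tdualdivlocal}, Theorems~\ref{thm:maingeneral} and~\ref{thm:tdualdivergence} hold in the local case. Fix $t\in[0,T)$. By hypothesis, $\big(\tau_1(t),\div{}_1(t)\big)$ is locally $D_1$-invariant with respect to the fixed local generators \smash{$\set{X_1^{(i)},\dots,X_{r_1}^{(i)}}_{i\in I}$} of $D_1$ over $\set{U^{(i)}}_{i\in I}$, and $\div{}_1(t)$ is locally compatible with $D_1$. Hence the local versions of Theorems~\ref{thm:maingeneral} and~\ref{thm:tdualdivergence} produce a unique generalised metric $\tau_2(t)$ on $\red E{}_2$ making $R$ a generalised isometry, together with a unique divergence operator $\div{}_2(t)$ on $\red E{}_2$ with $\div{}_1(t) \sim_R \div{}_2(t)$. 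This defines a family $\big(\tau_2(t),\div{}_2(t)\big)$ for $t\in[0,T)$, which is moreover unique by Lemma~\ref{lem:uniquereldiv} applied at each time together with the uniqueness of the T-dual generalised metric.

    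It remains to check that this family solves the generalised Ricci flow on $\red E{}_2$. By Corollary~\ref{cor:Tdualric}, which holds under the assumptions of the local form of Theorem~\ref{thm:tdualdivergence}, for each $t\in[0,T)$ we have
    \begin{align}
        \gric_{\tau_1(t),\div{}_1(t)} \sim_R \gric_{\tau_2(t),\div{}_2(t)} \quad , \quad \mathrm{GR}_{\tau_1(t),\div{}_1(t)} \sim_C \mathrm{GR}_{\tau_2(t),\div{}_2(t)} \ .
    \end{align}
    Differentiating the relations $\tau_1(t)\sim_R\tau_2(t)$ and $\div{}_1(t)\sim_R\div{}_2(t)$ in $t$, and using that $\big(\tau_1(t),\div{}_1(t)\big)$ solves the generalised Ricci flow on $\red E{}_1$ together with the fact that $R$ is a Dirac structure intertwining the operators $\cD$ on $\red E{}_1$ and $\red E{}_2$, it follows from Proposition~\ref{cor:relatedric} that
    \begin{align}
        \frac{\partial}{\partial t}\tau_2 = -2\,\gric_{\tau_2,\div{}_2} \quad , \quad \frac{\partial}{\partial t}\div{}_2 = -\cD\,\mathrm{GR}_{\tau_2,\div{}_2} \ .
    \end{align}
    Hence $\big(\tau_2(t),\div{}_2(t)\big)$ is a solution of the generalised Ricci flow on $\red E{}_2$, T-dual to $\big(\tau_1(t),\div{}_1(t)\big)$.
\end{proof}

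The main obstacle, and the point that should be emphasised, is precisely why the local T-dual family varies smoothly in $t$ and why differentiating the $R$-relations is justified: one must know that the uniqueness in the local form of Theorems~\ref{thm:maingeneral} and~\ref{thm:tdualdivergence} is compatible with smooth dependence on parameters, so that $t\mapsto(\tau_2(t),\div{}_2(t))$ is differentiable and $\partial_t$ commutes with the relation $\sim_R$; this is where the hypothesis that the \emph{same} generators $\set{X_k^{(i)}}$ work for all $t$ is essential, since it guarantees that the local splitting data $s_1,s_2$ and the isomorphism of Proposition~\ref{prop:transitiveTduality} used to build $\div{}_2(t)$ do not jump with $t$.
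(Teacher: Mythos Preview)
Your proof is correct and follows essentially the same approach as the paper: the paper does not give a formal proof of Theorem~\ref{thm:localRicciflow} but rather explains in the preceding discussion that once the inclusion $\frk{k}_{\tau_1(0)}\subset\frk{k}_{\tau_1(t)}$ is taken as a hypothesis (rather than derived from completeness and uniqueness as in Theorem~\ref{thm:compatibilityRicci}), the remainder of that proof goes through verbatim using the local versions of Theorems~\ref{thm:maingeneral} and~\ref{thm:tdualdivergence} provided by Theorem~\ref{thm:mainlocal} and Remark~\ref{rmk:tdualdivlocal}, together with Corollary~\ref{cor:Tdualric}. Your write-up fleshes out this sketch with more detail than the paper itself, including the observation that $R$ intertwines the operators $\cD$ on $C$-related functions (which follows from Lemma~\ref{lemma:relationlieder} and the Dirac condition) and a discussion of smooth dependence on $t$; these additions are welcome but not strictly new ingredients.
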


\medskip

\subsubsection{Correspondence Spaces}~\\[5pt]
The (local) T-duality of the generalised Ricci flow can be formulated in the setting of Subsection~\ref{sec:correspondencediv}, wherein $\cQ_1$ and $\cQ_2$ are fibre bundles over a common base $\cB$. The classic    example is given by T-duality between torus bundles as described in \cite[Theorem~6.5]{Garcia-Fernandez:2016ofz} and \cite[Theorem~1.2]{Streets:2017506}. That is,
we consider the case where $\cQ_1$ and $\cQ_2$ are $\sfT^k$-bundles over the same base $\cB$, with correspondence diagram
\begin{equation}
        \begin{tikzcd}
             & M \arrow[swap]{dl}{\varpi_1} \arrow{dr}{\varpi_2} & \\
            \cQ_1 \arrow[swap]{dr}{\pi_1} &  & \cQ_2 \arrow{dl}{\pi_2} \\
             & \cB & 
        \end{tikzcd}
    \end{equation}
where $M=\cQ_1 \times_{\cB} \cQ_2$. 

Suppose that $\cQ_1$ and $\cQ_2$ are T-dual, with T-duality relation $R \colon \red E{}_1 \rel \red E{}_2$. If $(\tau_1(t), \div{}_1(t))$ is a family of $\sfT^k$-invariant pairs on $\red E{}_1$ then, as seen in Proposition \ref{prop:garciastreetstdual} and Example \ref{eg:garciastreetsRicci}, for each $t$ there is a unique T-dual pair $(\tau_2(t), \div{}_2(t))$ such that $R \colon (\red E{}_1, \tau_1(t)) \rel (\red E{}_2, \tau{}_2(t))$ is a generalised isometry and \smash{$\gric_{\tau_1(t),\div{}_1(t)} \sim_R \gric_{\tau_2(t),\div{}_2(t)}$}. From Theorem \ref{thm:localRicciflow} with $U^{(1)} = \cQ_1$, it follows that if $(\tau_1(t), \div{}_1(t))$ solves the generalised Ricci flow on $\red E{}_1$, then so does $(\tau_2(t), \div{}_2(t))$ on $\red E{}_2$. This is the content of \cite[Theorem~10.19]{Garcia-Fernandez:2020ope}.

In the case that $\cB$ is compact, starting with a $\sfT^k$-invariant pair $(\tau_1, \div{}_1)$, since the vector fields generating $\frk{k}_{\tau_1}=\frt^k$ are complete we may apply Corollary \ref{cor:Riccicompactcase}.

\medskip

\section{Examples} \label{sect:examples}
We have already described the behaviour of the T-duality relation for classic   classes of examples such as principal torus bundles and Poisson-Lie T-duality. In this final section we look at some explicit examples that are not covered by these cases discussed thus far throughout the paper.

\medskip

\subsection{Geometric T-Duality for Hamilton's Cigar Soliton}~\\[5pt]
 Let $\cQ_1 = \IR^2 \setminus \set{0}$ with the cigar metric $$g_1 = \frac{\de x\otimes\de x + \de y\otimes\de y}{1+x^2+y^2} \ , $$ where $(x,y)\in\cQ_1$. In geodesic coordinates this can be written as $$g_1 = \de s \otimes \de s +\tanh^2(s) \, \de \theta \otimes \de \theta \ . $$ Then $\partial_\theta=\frac\partial{\partial\theta}$ is a Killing vector for this metric, and it generates a circular foliation $\red \cF{}_1$ with leaf space $\cQ_1/ \red \cF{}_1 = \IR_{>0}$.

The family of metrics
\begin{align}
     g_1(t) = \frac{\cosh^2(s)}{\e^{4\,t}+\sinh^2(s)}\,\de s \otimes \de s + \frac{\sinh^2(s)}{\e^{4\,t}+\sinh^2(s)} \,\de \theta \otimes \de \theta
\end{align}
on $\cQ_1$ is  a Ricci soliton with initial condition $g_1(0) = g_1$. That is, it is a solution to the  Ricci flow $$\frac\partial{\partial t} g_1(t) = -2\,\ric_{g_1(t)} $$ which is self-similar, evolving the metric $g_1$ purely by pullback under the one-parameter family of diffeomorphisms $\big(\sinh(s),\theta\big)\mapsto\big(\e^{2\,t}\sinh(s),\theta\big)$. This Ricci soliton is called Hamilton's cigar~\cite{Hamilton1982}. In string theory it appears as the Euclidean Witten black hole background, which is a fixed point of the  renormalisation group flow of a gauged Wess-Zumino-Witten model (see e.g.~\cite{Lambert:2012tq}). In this subsection we demonstrate the behaviour under geometric T-duality of this classic solution to the Ricci flow.

Let $\cQ_2 = \IR \times \IR_{>0}$ with coordinates $(z,s)$, and  foliation $\red\cF{}_2$ given by the projection to the first factor, so that $\cQ_2/\red\cF{}_2=\IR_{>0}$. Let $M = \cQ_1 \times_{\IR_{>0}} \cQ_2$, with foliations $\cF_1$ generated by $\partial_z=\frac\partial{\partial z}$ and $\cF_2$ generated by $\partial_\theta$, and take $$B = \de \theta \wedge \de z \ . $$ Denoting $\partial_s = \frac\partial{\partial s}$, we obtain the subbundle $R\subset\IT\cQ_1\times\IT\cQ_2$ whose sections are given by
\begin{align}
     \sfgamma(R) = \text{Span}_{C^\infty(\IR_{>0})}\set{(\partial_s,\partial_s)\,,\, (\de s, \de s)\,,\, (\partial_\theta, \de z)\,,\, (\de \theta, \partial_z)} \ .
\end{align}

Since $\partial_\theta$ is a Killing vector of $g_1(t)$ for all $t$, by Theorem \ref{thm:localRicciflow} we get the generalised isometry
\begin{align}
    R \colon \big(\IT \cQ_1, 0 , g_1(t)\big) \rel \big(\IT \cQ_2, 0 , g_2(t)\big) \ ,
\end{align}
with the T-dual family of metrics $g_2(t)$ on $\cQ_2$ given by
\begin{align}
    g_2(t) &= \frac{\cosh^2(s)}{\e^{4\,t}+\sinh^2(s)}\,\de s \otimes \de s + \frac{\e^{4\,t}+\sinh^2(s)}{\sinh^2(s)}\, \de z\otimes \de z \ ,
\end{align}
and $R$-related divergence operators $\div_{\mu_{g_1(t)}}\sim_R\div_{\phi}(t)$ where
\begin{align}
    \div_\phi(t) &= \div_{\mu_{g_2(t)}} +  \langle \de\phi,\, \cdot \,  \rangle_2 = \div_{\mu_{g_2(t)}} -4\, \langle \Fi^+, \,\cdot\, \rangle_2 \ ,
\end{align}
with $$\phi=- \log \left (\frac{\e^{4\,t}+\sinh^2(s)}{\sinh^2(s)} \right)  \quad , \quad \Fi^+ = -\frac1{4}\,\de\phi =  -\frac{\e^{4\,t}\, \cosh (s)}{2\sinh^{3}(s) + 2\e^{4\, t} \sinh( s)} \, \de s \ . $$

Then the pair $(g_2(t), \div_\phi(t))$ solves the system
\begin{align}\label{eq:cigarflow}
    \frac{\partial}{\partial t}g_2(t) = -2\, \ric_{g_2(t)} - 2\, \nabla^{g_2(t)} \Fi^+\quad , \quad \frac{\partial}{\partial t} \div_\phi(t) = -\de\, \big( \mathrm{R}_{g_2(t)} - 4\e^{\phi}\,\Delta_{g_2(t)} \e^{-\phi} \big) \ .
\end{align}
Analogously to the original cigar soliton, the T-dual solution is a \emph{generalised} Ricci soliton~\cite{Paradiso:2021wuw,Streets:2024rfo}, i.e.~a self-similar solution of the generalised Ricci flow \eqref{eq:cigarflow}, evolving the metric $g_2$ and dilaton field $\phi$ purely by pullback under the one-parameter family of Courant algebroid automorphisms given by the diffeomorphisms $\big(\sinh(s),z\big)\mapsto\big(\e^{2\,t}\sinh(s),z\big)$.

\medskip

\subsection{Three-Dimensional Hyperbolic Space}~\\[5pt]
 In this subsection we shall present a new example of geometric T-duality that will illustrate how the notion of relations between generalised Ricci tensors aids as a computational tool as well, in addition to yielding an explicit T-dual solution of the generalised Ricci flow.

\medskip

\subsubsection{Geometric T-Duality}~\\[5pt]
 Consider the pair of trivial circle bundles
\begin{gather}
    \cQ_1 = \cQ_2 = \sfS^1 \times \IR^2_{>0} = \set{(\vartheta, r, z) \ | \  \vartheta \in [0,2\pi) \ , \ r,z>0 } \ .
\end{gather}
Denote by $\partial_{\theta_1}$ the coordinate vector field on $\cQ_1$ in the $\vartheta$-direction, and similarly for $\partial_{\theta_2}$ on $\cQ_2$. The foliations $\red \cF{}_i$ generated by $\partial_{\theta_i}$ induce a Hausdorff-Morita equivalence between $(\cQ_1, \red \cF{}_1)$ and $(\cQ_2, \red \cF{}_2)$, with correspondence diagram
\begin{equation}
    \begin{tikzcd}
        & M \arrow[swap]{dl}{\varpi_1} \arrow{dr}{\varpi_2} & \\
        \cQ_1 \arrow[swap]{dr}{\pi_1} & & \cQ_2 \arrow{dl}{\pi_2} \\ & \IR^2_{>0} &
    \end{tikzcd}
\end{equation}
where $M = \cQ_1 \times_{\IR_{>0}^2} \cQ_2$.

On $\cQ_1$, take the hyperbolic metric $g_1$ and closed three-form $H_1$ given by
\begin{align}
    g_1 = \lambda^2 \, g_{\sfH^3} := \frac{\lambda^2}{z^2}\,\big(r^2 \, \theta_1 \otimes \theta_1 + \de r \otimes \de r + \de z \otimes \de z\big)  \quad ,  \quad H_1 = h\,\mu_{\sfH^3} := h\,r\, \theta_1 \wedge \de r \wedge \de z \ ,
\end{align} 
where $\theta_i \in \sfOmega_{\rm cl}^1(\cQ_i)$ is dual to $\partial_{\theta_i}$ and $\lambda,h \in \IR$.
We denote by $\widehat\partial_{\theta_i}$ the vector field dual to $\widehat\theta_i:=\varpi_i^* \theta_i$, for $i=1,2$. Then we obtain the foliations $\cF_1$ and $ \cF_2$ of $M$ generated by $\widehat \partial_{\theta_2}$ and $\widehat \partial_{\theta_1}$ respectively. 

Introduce the two-form $$B = \widehat\theta_1 \wedge \big(\,\widehat \theta_2 + \tfrac12\,h\, r{}^2\, \de  z\big)$$ on $M$, such that $\de B = \varpi_1^* H_1$. Then $B$ is non-degenerate as a map from $T \cF_1 \otimes T\cF_2$ to $\IR$.
With $K_1 = T \cF_1$ and $K_2 = \e^{-B}\,(T \cF_2)$, it follows that $K_1\cap K_2 = \set{0}$ and $K_2\cap K_1^\perp \subset K_1$. By~\cite[Theorem~5.8]{DeFraja:2023fhe} we obtain the diagram of standard Courant algebroids
\begin{equation}
    \begin{tikzcd}
        & (\IT M , \varpi_1^*H_1) \arrow[tail,swap]{dl}{Q(K_1)} \arrow[tail]{dr}{Q(K_2)} & \\
        (\IT \cQ_{1} , H_1) \arrow[dashed]{rr}{R} & & (\IT\cQ_{2}, 0)
    \end{tikzcd}
\end{equation}
where we may explicitly write the relation $R$ from its generating sections as the $C^\infty(\IR_{>0}^2)$-module
\small
\begin{align} \label{eqn:relhyp}
    \sfgamma(R) = \text{Span}_{C^\infty(\IR_{>0}^2)} \set{(\partial_r,\partial_r)\,,\, ( \partial_{\theta_1}, \theta_2 + \tfrac{h\,r^2}{2}\, \de z)\,,\, (\partial_z, \partial_z - \tfrac{h\,r^2}{2}\, \partial_{\theta_2})\,,\, ( \de r, \de r)\,,\, (\theta_1,\partial_{\theta_2})\,,\, (\de z, \de z)} \ .
\end{align}
\normalsize

Note that $\partial_{\theta_1}$ is a Killing vector for $g_1$, and \smash{$\pounds_{\widehat\partial_{\theta_1}} B= \pounds_{\widehat \partial_{\theta_2}} B = 0$}. Thus taking the isometry algebra $\frk{k}_{g_1} = \text{Span}_\IR \set{\partial_{\theta_1}}$, it follows that $g_1$ defines a $T \red \cF{}_1$-invariant generalised metric on $\IT\cQ_1$. From Theorem~\ref{thm:maingeneral} it follows that $$R \colon \left( \IT \cQ_1, H_1,g_1 \right) \rel \left( \IT \cQ_2,0, g_2 \right)$$ is a generalised isometry, where the T-dual metric on $\cQ_2$ is given by
\begin{align}\label{eqn:g2hyperbolic}
    g_2 = \frac{z^2}{\lambda^2\, r^2}\,\theta_2\otimes \theta_2 + \frac{\lambda^2}{z^2}\,\de r \otimes \de r +  \frac{h\,z^2}{\lambda^2 }\, \de z \odot \theta_2 + \left (\frac{\lambda^2}{z^2} + \frac{h^2\,r^2\,z^2}{4\,\lambda^2}\right )\, \de z \otimes \de z \ .
\end{align}
By Proposition \ref{prop:tau2isD2invariant}, $g_2$ is then $T\red \cF{}_2$-invariant, with isometry algebra $\frk{k}_{g_2} = \text{Span}_\IR \set{\partial_{\theta_2}}$.

\medskip

\subsubsection{Generalised Ricci Tensor}~\\[5pt]
 The Ricci tensor of $g_1 = \lambda^2\,g_{\sfH^3}$ is $$\ric_{g_1} = -2\,g_{\sfH^3} \ , $$ hence $g_1$ is an Einstein metric. We further find $$H_1^2 = \frac{2\,h^2}{\lambda^4}\,g_{\sfH^3} \quad , \quad \de^* H_1 = 0 \ . $$ 
 
 Then the Ricci tensor for $g_2$ can be calculated from the Ricci-Buscher rules of Subsection~\ref{sec:RicciBuscher} as
\begin{align*}
    (\ric_{g_2})_{rr} = -\frac{h^2}{2\,\lambda^4\, z^2} - \frac{2}{r^2} \quad , & \quad  (\ric_{g_2})_{zz} = -\frac2{r^2} \quad , \quad (\ric_{g_2})_{\theta\theta}  = \frac{h^2\, z^2}{2\, \lambda^8\, r^2} - \frac{2\,z^4}{\lambda^4\, r^4} \ , \\[4pt]
    (\ric_{g_2})_{rz} = \frac{2}{r\, z} - \frac{h^2\, r}{2\,\lambda^4} & \quad , \quad
    (\ric_{g_2})_{r\theta} =  -\frac{h}{\lambda^4\, r} \quad , \quad (\ric_{g_2})_{z\theta} = 0 \ .
\end{align*}

\medskip

\subsubsection{Generalised Ricci Flow}~\\[5pt] 
Consider now the geometry with the dilaton turned off on $\cQ_1$, i.e. \smash{$\div{}_1 = \div_{\mu_{g_1}} = \div_{\mu_{g_{\sfH^3}}}$}. Allowing $ \lambda$ and $h$ to depend on time, the generalised Ricci flow on $\IT\cQ_1$ is the one-parameter family of pairs $(g_1(t),\div{}_1(t))$ with $g_1(t)=\lambda(t)^2\,g_{\sfH^3}$ and $\div{}_1(t)=\div_{\mu_{g_1(t)}}= \div_{\mu_{g_{\sfH^3}}}$ such that
\begin{align}
    \frac{\partial}{\partial t}\lambda(t) = \frac2{\lambda(t)} + \frac{h(t)^2}{2\,\lambda(t)^5} \quad , \quad  \frac\partial{\partial t}h(t) = 0 \quad ,
     \quad \frac{\partial}{\partial t}\, \div{}_1(t)= \de \,\Big(\frac{6}{\lambda(t)^2} + \frac{h(t)^2}{2\,\lambda(t)^6}\Big) = 0  \ .
\end{align}

The metric $g_2(t)$ given by Equation \eqref{eqn:g2hyperbolic} is then the solution of the system 
\begin{align}
    \frac{\partial}{\partial t} g_2(t) = -2\,\ric_{g_2(t)} -2\, \nabla^{g_2(t)}\, \de \log\big({z^2}/{r^2}\big) \quad , \quad \frac{\partial}{\partial t} \div_\phi(t) = -\de\,\Big(\mathrm{R}_{g_2(t)}  - \frac{4\,z^2}{r^2}\, \Delta_{g_2(t)}\big(r^2/z^2\big) \Big) \ ,
\end{align}
with dilaton field $$\phi = \log(z^2/r^2) \ . $$
These contain the same information as the original flow equations.

Thus the generalised Ricci flow becomes the evolution of the scale factor $\lambda(t)$, with constant Kalb-Ramond flux $h$. For large $\lambda(t)$ the solution asymptotes to $\pm\,\sqrt {\lambda_0^2+2\,t}$, where $\lambda_0=\lambda(0)$, and therefore the flow diverges for long times independently of $h$. In particular, the behaviour for large $\lambda(t)$ is essentially constant. For small $\lambda(t)$ the flow diverges as $\pm\,\sqrt[6]{\lambda_0^6+3\,h^2\,t}$. In particular, the flow has no fixed points for any $h\in\IR$, and so the solution does not converge to any generalised string background.

\medskip

\subsection{Klein Bottle}\label{sec:Kleinbottle}~\\[5pt]
In this subsection we consider an example of geometric T-duality between non-principal $\sfS^1$-bundles. 

Consider the Klein bottle $\sfK$ which is the compact non-orientable surface defined by
\begin{align}
    \sfK = \set{(x,y) \in [0,1]\times [0,1]} \big/ \sim \ ,
\end{align}
where $(x,0) \sim (x,1)$ and $(0,y) \sim (1,1-y)$.
We regard $\sfK$ as a circle bundle over $\sfS^1$, with $\sfS^1$ fibres given by $\sfK_{x_0} = \set{(x_0,y)}$, which we may depict as
\small
\begin{center}
\begin{tikzpicture}[scale=0.5,>={Latex},
thick,decoration={
    markings,
    mark=at position 0.58 with {\arrow{>}}}
    ] 
    \draw[postaction={decorate}] (-2,-2)--(2,-2);
    \draw[postaction={decorate}] (2,2)--(2,-2);
    \draw[postaction={decorate}] (-2,2)--(2,2);
    \draw[postaction={decorate}] (-2,-2)--(-2,2);
    \draw[dashed] (-1,-2) node[below] {$x_0$} -- node[right] {$\sfK_{x_0}$} (-1,2);
\end{tikzpicture}    
\end{center}
\normalsize

\medskip

\subsubsection{Correspondence Space}~\\[5pt]
Let $\mathsf{K}_1 = \mathsf{K}_2=\sfK$ be Klein bottles, viewed as $\sfS^1$-bundles over a common circle $\sfS^1$, with projection maps $\pi_i \colon \sfK_i \to \sfS^1$ given by $\pi_i(x,y) = x$. Their fibred product is given by
\begin{align}
    M = \sfK_1 \times_{\sfS^1} \sfK_2 = \set{((x,y)\,,\,(x,z)) \in [0,1]^{\times 4}} \big/ \sim \ ,
\end{align}
with 
\begin{gather}
    \big((x,0)\,,\,(x,z)\big)\sim \big((x,1)\,,\,(x,z)\big) \quad , \quad  \big((x,y)\,,\,(x,0)\big)\sim \big((x,y)\,,\,(x,1)\big) \ , \\[4pt]
    \big((0,y)\,,\,(0,z)\big) \sim \big((1,1-y)\,,\,(0,z)\big) \sim \big((1,1-y)\,,\,(1,1-z)\big) \ .
\end{gather}
This is a $\sfT^2$-bundle over $\sfS^1$, which may be more succinctly described as
\begin{align}
    M = \set{(x,y,z) \in [0,1]^{\times 3}} \big/ \sim
\end{align}
with
\begin{gather}
    (x,0,z)\sim (x,1,z) \quad , \quad  (x,y,0)\sim (x,y,1) \quad , \quad
    (0,y,z) \sim (1,1-y,1-z) \ ,
\end{gather}
and projection map $\widehat\pi:M\to\sfS^1$ given by $\widehat\pi(x,y,z) = x$.

We cover $M$ with open charts given by
\begin{align}
    U^{(1)} = \set{(x,y,z) \in M  \ | \ x \in (0,1)} \quad &, \quad U^{(2)} = \set{(x,y,z) \in M \ | \ x \neq \tfrac 12} \ ,\\[4pt]
    \Fi^{(1)}\colon U^{(1)} \longrightarrow (0,1) \times [0,1]^{\times 2} \quad &, \quad \Fi^{(2)} \colon U^{(2)} \longrightarrow (0,1) \times [0,1]^{\times 2} \ ,
\end{align}
where
\begin{align}
    \Fi^{(1)}(x,y,z) = (x,y,z) \quad , \quad \Fi^{(2)}(x,y,z) = \begin{cases}
        (x + \frac 12, y, z) \quad , \quad x\in[0,\frac 12)\\[4pt]
        (x - \frac 12, 1-y, 1-z) \quad , \quad x \in (\frac 12, 1]
    \end{cases} \ .
\end{align}
On their intersection $$U^{(1)} \cap U^{(2)} = \set{(x,y,z) \in M  \ | \ x\in(0,\tfrac 12)} \, \sqcup \, \set{(x,y,z) \in M  \ | \ x \in (\tfrac 12 ,1)} \eqqcolon W^{(1)} \sqcup W^{(2)}$$ the transition functions are given by
\begin{align}
\arraycolsep=1.4pt
    \Fi^{(2)} \circ \big(\Fi^{(1)}\big)^{-1} = \begin{cases}
          \small \Bigg(\begin{matrix}
            1 & \ 0 & \ 0 \\[-5pt]
            0 & \ 1 & \ 0 \\[-5pt]
            0 & \ 0 & \ 1
        \end{matrix}\Bigg) \normalsize \qquad \text{on} \quad W^{(1)} \\[4pt]
         \small \Bigg(\begin{matrix}
            1 & 0 & 0 \\[-5pt]
            0 & -1 & 0 \\[-5pt]
            0 & 0 & -1 
        \end{matrix}\Bigg) \normalsize \qquad \text{on} \quad W^{(2)}
    \end{cases} \ .
\end{align}
It follows that $M$ is an orientable $\sfT^2$-bundle over $\sfS^1$.

The foliations $\cF_1$ and $ \cF_2$ of $M$ are inherited from the fibration structures of the bundles $\sfK_2$ and $\sfK_1$ respectively, while the quotient maps $\varpi_1$ and $ \varpi_2$ are given by the projections from $M$ to $\sfK_1$ and $ \sfK_2$ respectively. 

\medskip

\subsubsection{Reduction}~\\[5pt]
 Consider the untwisted standard Courant algebroid $(\IT M,0)$ over $M$. Define $K_1 = T \cF_1 \subset\IT M$. Let  $f^{(i)}(x)$ be smooth bump functions with support in $ U^{(i)}$ such that the vector fields $Z^{(1)} = f^{(1)} \, \partial_z$ and $Z^{(2)} = f^{(2)} \, \partial_z$, which can be extended globally as spanning sections of $K_1$, are not simultaneously zero. Similarly, define \smash{$Y^{(i)} = f^{(i)} \, \partial_y$}, and also \smash{$\lambda^{(i)} = f^{(i)} \, \de y$}. Note that $\partial_x$ and $\de x$ are already globally defined. 
 
As global sections of $K_1^\perp$ we then take 
\begin{align}
    \sfgammabas(K_1^\perp) = {\rm Span}_\IR\set{\partial_x, Y^{(1)}, Y^{(2)}, Z^{(1)}, Z^{(2)}, \de x, \lambda^{(1)},\lambda^{(2)}} \ .
\end{align}
One easily sees that these are basic, for instance
\begin{align}
    [Z^{(i)}, \partial_x] &= -(\partial_x f^{(i)})\, \partial_z \  \in \ \sfgamma(K_1) \quad , \quad
    [Z^{(i)},Y^{(j)}] = 0 \ ,\\[4pt]
    \pounds_{Z^{(i)}}\, \de x = \de\, \iota_{Z^{(i)}}\, \de x = 0 \quad &, \quad 
    \pounds_{Z^{(i)}}\, \lambda^{(j)} = \iota_{Z^{(i)}}\, \de (f^{(j)} \, \de x) = Z^{(i)}(f^{(j)})\, \de x - \de f^{(j)}\,( \iota_{Z^{(i)}}\,\de x) = 0 \ .
\end{align}
Thus the basic sections span  $K_1^\perp$ pointwise.

Alternatively, since $H=0$, consider the standard splitting  $\sigma_1 \colon TM \to \IT M$  given by the inclusion $X\mapsto X+0$. Then $\sigma_1(X)$ is $K_1$-basic whenever $X$ is $\cF_1$-projectable, hence $\sigma_1$ gives a $K_1$-adapted splitting. In either case, the assumptions of Theorem \ref{thm:reduction} hold, and we may form the reduced Courant algebroid 
\begin{align}
    \frac{K_1^\perp}{K_1} \Big/ \cF_1 =  (\IT \sfK_1,0) \ .
\end{align}
 In the same way, we can take $K_2 = T \cF_2 \subset\IT M$, and form the reduced Courant algebroid $(\IT \sfK_2,0)$ over $\sfK_2$.

\medskip

\subsubsection{Geometric T-Duality}~\\[5pt]
Consider  the two-form $B$ defined on each chart $U^{(i)}$ as
\begin{align}
    B = \de y \wedge \de z \ .
\end{align}
It satisfies $\big(\Fi^{(2)} \circ (\Fi^{(1)})^{-1}\big)^*B = B$ on $U^{(1)} \, \cap \, U^{(2)}$, and hence gives rise to a globally defined two-form $B\in\sfOmega^2(M)$.

The two-form $B$ is non-degenerate on $T\cF_1 \otimes T\cF_2$, and so it follows that $K_1\cap \e^{-B}\,(K_2) = \set{0}$ and $\e^B\,(K_1) \cap K_2^\perp \subset K_2$. By {\cite[Theorem 5.8]{DeFraja:2023fhe}} we may form the T-duality relation $R$ supported on $\red C=M$. On the coordinate chart $U^{(i)}:=\cV^{({i})}\times\sfT^2 \subset \red C\,$, it is given by
\begin{align}
    \sfgamma\big(\cV^{(i)},R\big) = \text{Span}_{C^{{}^\infty}(\cV^{(i)})}\set{(\partial_x, \partial_x)\,,\, (\de x, \de x)\,,\, (\partial_y, \de z)\,,\, (\de y, \partial_z)} \ .
\end{align}

Consider the flat metric on $\sfK_1$ induced by the standard flat metric $$g = \de x \otimes \de x + \de y \otimes \de y$$ on $\IR^2$, which is given locally by the same formula. On $\cV^{(1)}\times\sfS^1$ and $\cV^{(2)}\times\sfS^1$, the vector field $\partial_y$ is a local Killing vector of $g$ generating the isometry distribution $D_1$. By Theorem \ref{thm:mainlocal}, the relation $$R \colon (\IT \sfK_1,0, g) \rel (\IT \sfK_2,0, g)$$ is a generalised isometry. This gives the geometric self-duality of the Klein bottle. 

\medskip

\subsubsection{Generalised Ricci Flow}~\\[5pt]
The generalised Ricci flow is the same on both manifolds $\sfK_1$ and $\sfK_2$. It is given by the usual Ricci flow of the flat metric, i.e. the trivial stationary flow.

\medskip 

\subsection{Klein Bottle Fibration}~\\[5pt]
In this subsection we extend the example of geometric T-duality for non-principal circle bundles from Subsection~\ref{sec:Kleinbottle} to include Kalb-Ramond flux, and hence topology change. 

Let $$\cQ_1 = \sfK_1 \times \sfS^1 \ , $$ where $\sfK_1=\sfK$ is the Klein bottle as in Subection \ref{sec:Kleinbottle}, and $\sfS^1 = [0,1]/ (0\sim 1)$. We regard $\cQ_1$ as a circle bundle over a torus $\sfT^2$, with open charts
\begin{align}
    U^{(1)} = \set{(x,y,z) \in \cQ_1 \ | \ x \in (0,1)} \quad , & \quad U^{(2)} = \set{(x,y,z) \in \cQ_1 \ | \ x \neq \tfrac 12} \ , \\[4pt]
    \Fi^{(1)}\colon U^{(1)} \longrightarrow (0,1) \times [0,1]^{\times 2} \quad , & \quad \Fi^{(2)} \colon U^{(2)} \longrightarrow (0,1) \times [0,1]^{\times 2} \ ,
\end{align}
where
\begin{align}
    \Fi^{(1)}(x,y,z) = (x,y,z) \quad , \quad 
    \Fi^{(2)}(x,y,z) = \begin{cases}
        (x+\frac 12, y , z) \qquad \text{on} \quad W^{(1)}\\[4pt]
        (x-\frac 12, 1-y , z) \qquad \text{on} \quad W^{(2)}
    \end{cases} \ ,
\end{align}
with $W^{(1)} \sqcup W^{(2)} = U^{(1)} \cap U^{(2)}$, similarly to Subsection~\ref{sec:Kleinbottle}.

Let
\begin{align}
    H_1 = \tfrac{\pi}{2}\sin(\pi\, x)\, \de x \wedge  \de y \wedge \de z
\end{align}
be a local representative of the non-trivial cohomology class in $\sfH^3(\cQ_1, \IZ) = \IZ_2$. 

\medskip

\subsubsection{Topological T-Duality}~\\[5pt]
Let us work out the T-dual manifold $\cQ_2$. We construct the correspondence space $M$ as a circle bundle over $\cQ_1$, with projection map $\varpi_1:M\to\cQ_1$, using $U^{(i)}$ as trivialising charts. The additional circle coordinate is denoted by $\tilde y$. We seek a $B$-field containing the component $\de y \, \wedge \, \de \tilde y$, and satisfying $\de B = \varpi_1^*H_1$. Thus locally we take $$B = \de y \wedge \big(\de \tilde y - \tfrac{1}{2}\cos(\pi\, x)\, \de z\big) \ . $$

This form is indicative of a topology change on $M$, which arises from the splitting of the tangent bundle $TM = {\sf Ver}(M) \oplus {\sf Hor}(M)$ into vertical and horizontal parts: the vertical subbundle ${\sf Ver}(M)$ is given locally by the span of $\partial_{\tilde y}$, while the horizontal subbundle ${\sf Hor}(M)$ is   given by the span of $\partial_x$, $\partial_y$ and $\partial_{z} + \frac{1}{2}\cos(\pi\, x)\, \partial_{\tilde y}$. It suggests charts for $M$ given by
\begin{align}
    \widetilde\Fi{}^{(1)}(x,y, \tilde y,z) = (x,y, \tilde y,z) \quad , \quad 
    \widetilde\Fi{}^{(2)}(x,y, \tilde y,z) = \begin{cases}
        (x+\frac 12, y , \tilde y, z) \qquad \text{on} \quad W^{(1)} \\[4pt]
        (x-\frac 12, 1-y , 1-\tilde y, z+\tilde y) \qquad \text{on} \quad W^{(2)}
    \end{cases} \ ,
\end{align}
which is well-defined provided we identify $$(0, y, \tilde y, z) \sim (1, 1-y, 1- \tilde y, z+ \tilde y)$$ on $M$. This also ensures that the two-form $B$ is well-defined.

Upon quotienting $M$ by the foliation $\cF_2$ generated by $\partial_y$, we obtain the leaf space $$\cQ_2 = [0,1]^{\times 3} \big/ \sim \ , $$ where
\begin{gather}
    (x,0,z)\sim (x,1,z) \quad , \quad  (x,\tilde y,0)\sim (x,\tilde y,1) \quad , \quad
    (0,\tilde y,z) \sim (1,1-\tilde y,z + \tilde y) \ .
\end{gather}
This too is a circle bundle over $\sfT^2$.

We thus get the T-duality relation 
\begin{align}
    R:(\IT\cQ_1,H_1) \rel (\IT\cQ_2,0) \ .
\end{align}
On $U^{(i)} = \cV^{(i)}\times\sfS^1$, it is locally generated  by
\begin{align}
    \sfgamma\big(\cV^{(i)},R\big) = \text{Span}_{C^\infty(\cV^{(i)})} \{& (\partial_x, \partial_x)\,,\, (\partial_y, \de \tilde y - \tfrac{1}{2} \cos(\pi\, x)\, \de z)\,,\, (\partial_z,\partial_z + \tfrac{1}{2}\cos(\pi\, x)\, \partial_{\tilde y})\,,\, \\
    & \, (\de x, \de x)\,,\, (\de y, \partial_{\tilde y})\,,\,(\de z, \de z)\} \ .
\end{align}

The general framework for topological T-duality between affine torus bundles over a common base $\cB$ was considered in \cite{Baraglia:2014nsv} using the language of local coefficient systems. In this setting T-duality is stated after a choice of local coefficient system $\IZ_\xi$, i.e. an element $\xi$ of $\sfH^1(\cB, \IZ_2)$. In our present example, a representative of $\xi$ is equivalent to a choice of charts. Since the torus $\sfT^2$ has twisted cohomology group $\sfH^2(\sfT^2, \IZ_\xi) = \IZ_2$, there are two circle bundles over $\sfT^2$ for this choice of charts. Turning on the Kalb-Ramond flux $H_1$ on $\cQ_1$ (of which there is only one cohomologically non-trivial choice since $\sfH^3(\cQ_1, \IZ) = \IZ_2$), it follows that T-duality passes from the trivial to the non-trivial element of $\sfH^2(\sfT^2, \IZ_\xi)$. This can also be formulated in terms of twisted connections, as considered in \cite{Baraglia:2014nsv} where T-duality for twisted cohomology was treated, whereby the one-form $\de \tilde y - \frac 12 \sin(\pi\, x)\,\de z$ defines a twisted connection on the double cover \smash{$\widetilde \cQ_2$} of $\cQ_2$ with non-trivial curvature. 

Compared to~\cite{Baraglia:2014nsv}, the advantage of the formulation considered here is that it does not consider the affine structure of the torus bundle, nor even the group structure on the fibres, only the fibration structure itself. It also permits consideration of the T-duality of the geometric structures, as we illustrate in Subsection~\ref{sec:KleinfibgeomT} below, which was not considered in the approach of~\cite{Baraglia:2014nsv}.

\medskip

\subsubsection{Geometric T-Duality}\label{sec:KleinfibgeomT}~\\[5pt]
Let $g_1$ be the flat metric on $\cQ_1$ which is locally given by $$g_1 = \de x \otimes \de x +\de y \otimes \de y +\de z \otimes \de z \ . $$ Then the T-duality relation $R$ is a generalized isometry between $g_1$ and the T-dual metric $g_2$ on $\cQ_2$ given by 
\begin{align}
    g_2 = \de x \otimes \de x +  \de\tilde y \otimes \de\tilde y - \cos(\pi\, x)\, \de\tilde y \odot \de z + \left(1+\tfrac 14 \cos^2(\pi\, x) \right ) \de z \otimes \de z \ .
\end{align}

On $\cQ_1$ we further find $$H_1^2 = \tfrac{\pi^2}{2}\sin^2(\pi\, x)\, g_1 \quad , \quad \de^* H_1 = \tfrac{\pi^2}{2} \cos(\pi\, x)\, \de y \wedge \de z \ . $$ Using the Ricci-Buscher rules of Subsection~\ref{sec:RicciBuscher}, from this we can calculate the non-zero components of the Ricci tensor of $g_2$ to be
\begin{align}
    (\ric_{g_2})_{ y y} &=  
    \tfrac{\pi^2}{8}\sin^2(\pi\, x) \ ,\\[4pt]
    (\ric_{g_2})_{z y} &= 
    \tfrac{\pi^2}4 \cos(\pi\, x) +\tfrac{\pi^2}{16}\sin^2(\pi\, x) \cos( \pi\, x) \ ,\\[4pt]
    (\ric_{g_2})_{xx} &= -\tfrac{\pi^2}{8} \sin^2(\pi\, x) \ ,\\[4pt]
    (\ric_{g_2})_{zz} 
    &= -\tfrac{\pi^2}{8} \sin^2(\pi\, x) + \tfrac{\pi^2}{4}\cos^2(\pi\, x) + \tfrac{\pi^2}{32}\, \sin^2(\pi\, x) \cos^2(\pi\, x) \ .
\end{align}

The generalised Ricci flow for this example will be a highly complicated non-linear system due to the trigonometric functions appearing.
We do not attempt to solve it here, as it requires some involved numerical computation which is beyond the scope of this paper. 

\begin{appendix}
\section{The \texorpdfstring{$\sigma$}{Sigma}-Adjoint Action} \label{app:twistedaction}

In this appendix we prove some properties of the $\sigma$-adjoint action that we use in the main text. Recall that it is given by Equation \eqref{eqn:sigmatwistact} as
\begin{align}
    {\sf ad}^\sigma_{e_1}\, e_2 \coloneqq \llbracket e_1 , e_2 \rrbracket - \rho^*\,{\sigma}^*\llbracket e_1 , {\sigma}\,\rho(e_2) \rrbracket \ ,
\end{align}
for $e_1,e_2\in\sfgamma(E)$ and an isotropic splitting $\sigma:TM\to E$ of an exact Courant algebroid $E$ over $M$. This operation was introduced (but not thusly named) in~\cite[Definition 5.21]{DeFraja:2023fhe}.

\begin{lemma}
    $\rho(\mathsf{ad}^\sigma_{e_1}\, e_2) = \pounds_{\rho(e_1)}\, \rho(e_2)$ \ for all $e_1,e_2\in\sfgamma(E)$.
\end{lemma}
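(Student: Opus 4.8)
The plan is to compute $\rho(\mathsf{ad}^\sigma_{e_1}\, e_2)$ directly from its definition and use the fact that the anchor is a bracket homomorphism (Equation~\eqref{eq:rhohomo}) together with the identity $\rho\circ\rho^* = 0$.

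First I would apply $\rho$ to both sides of the defining formula \eqref{eqn:sigmatwistact}:
\begin{align}
    \rho(\mathsf{ad}^\sigma_{e_1}\, e_2) = \rho\bigl(\llbracket e_1 , e_2 \rrbracket\bigr) - \rho\bigl(\rho^*\,\sigma^*\llbracket e_1 , \sigma\,\rho(e_2) \rrbracket\bigr) \ .
\end{align}
The second term vanishes immediately since $\rho\circ\rho^* = 0$. For the first term I would invoke Equation~\eqref{eq:rhohomo}, which gives $\rho\bigl(\llbracket e_1, e_2\rrbracket\bigr) = [\rho(e_1),\rho(e_2)]$, and finally rewrite the Lie bracket of vector fields as a Lie derivative, $[\rho(e_1),\rho(e_2)] = \pounds_{\rho(e_1)}\,\rho(e_2)$. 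Combining these yields $\rho(\mathsf{ad}^\sigma_{e_1}\, e_2) = \pounds_{\rho(e_1)}\,\rho(e_2)$, as claimed.

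This is a short, routine verification with no real obstacle: the only ingredients are the two standard Courant algebroid identities $\rho\circ\rho^* = 0$ and $\rho\bigl(\llbracket\,\cdot\,,\,\cdot\,\rrbracket\bigr) = [\rho(\,\cdot\,),\rho(\,\cdot\,)]$, both recalled in the excerpt. If anything merits a word of care, it is simply noting that the splitting-dependent correction term $\rho^*\,\sigma^*\llbracket e_1,\sigma\,\rho(e_2)\rrbracket$ lies in the image of $\rho^*$ by construction, so it is automatically killed by $\rho$ regardless of the choice of isotropic splitting $\sigma$; this is why the statement carries no hypothesis on $\sigma$ beyond its being an isotropic splitting.
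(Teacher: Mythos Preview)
Your proof is correct and follows exactly the same approach as the paper's: apply $\rho$ to the definition, use $\rho\circ\rho^*=0$ to kill the correction term, and use the bracket homomorphism property \eqref{eq:rhohomo} for the remaining term.
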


\begin{proof}
    This follows immediately from the fact that the anchor map $\rho:E\to TM$ is a bracket homomorphism and $\rho\circ\rho^*=0$.
\end{proof}

\begin{lemma} \label{lem:sigmaadleibniz}
    The $\sigma$-adjoint action satisfies the anchored Leibniz rule 
    \begin{align}
        \ad^\sigma_{e_1} (f\, e_2) = f \, \ad^\sigma_{e_1}\,e_2 + \left( \pounds_{\rho(e_1)} f  \right) e_2 \ ,
    \end{align}
    for all $e_1 , e_2 \in \sfgamma(E)$ and $f \in C^\infty(M).$
\end{lemma}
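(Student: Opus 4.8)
The plan is to unwind the definition of $\ad^\sigma$ and apply the known Leibniz rules for the Dorfman bracket term by term. Recall that $\ad^\sigma_{e_1}(f\,e_2) = \llbracket e_1, f\,e_2\rrbracket - \rho^*\,\sigma^*\llbracket e_1, \sigma(\rho(f\,e_2))\rrbracket$. The first term is handled directly by the anchored Leibniz rule \eqref{eqn:anchorLeibniz}, giving $f\,\llbracket e_1,e_2\rrbracket + (\pounds_{\rho(e_1)}f)\,e_2$. For the second term, I first note $\rho(f\,e_2) = f\,\rho(e_2)$ since $\rho$ is a vector bundle morphism, so $\sigma(\rho(f\,e_2)) = f\,\sigma(\rho(e_2))$ by $C^\infty(M)$-linearity of $\sigma$. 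Then applying \eqref{eqn:anchorLeibniz} again inside the bracket yields $\llbracket e_1, f\,\sigma(\rho(e_2))\rrbracket = f\,\llbracket e_1,\sigma(\rho(e_2))\rrbracket + (\pounds_{\rho(e_1)}f)\,\sigma(\rho(e_2))$.

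Next I would apply $\rho^*\circ\sigma^*$ to this expression. Since $\sigma^*$ and $\rho^*$ are both $C^\infty(M)$-linear bundle maps, they pass through the function coefficient $f$, giving $\rho^*\sigma^*\llbracket e_1, \sigma(\rho(f\,e_2))\rrbracket = f\,\rho^*\sigma^*\llbracket e_1, \sigma(\rho(e_2))\rrbracket + (\pounds_{\rho(e_1)}f)\,\rho^*\sigma^*\sigma(\rho(e_2))$. The key simplification needed here is that $\rho^*\circ\sigma^*\circ\sigma = \tfrac12\,\rho^*\circ\rho^{\mathrm t}\circ\cdots$ — actually the cleanest route is to observe that for an isotropic splitting, $\sigma^*\circ\sigma = 0$ as a map $TM \to T^*M$ (this is precisely the statement that $\Im(\sigma)$ is isotropic: $\langle\sigma(X),\sigma(Y)\rangle = 0$ for all $X,Y$, which by \eqref{eqn:rhostar} and the definition of $\sigma^*$ as the adjoint of $\sigma$ means $\sigma^*\sigma = 0$). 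Hence the second piece $(\pounds_{\rho(e_1)}f)\,\rho^*\sigma^*\sigma(\rho(e_2))$ vanishes identically.

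Putting the two terms together:
\begin{align}
    \ad^\sigma_{e_1}(f\,e_2) &= f\,\llbracket e_1,e_2\rrbracket + (\pounds_{\rho(e_1)}f)\,e_2 - f\,\rho^*\sigma^*\llbracket e_1, \sigma(\rho(e_2))\rrbracket \\[4pt]
    &= f\,\big(\llbracket e_1,e_2\rrbracket - \rho^*\sigma^*\llbracket e_1,\sigma(\rho(e_2))\rrbracket\big) + (\pounds_{\rho(e_1)}f)\,e_2 = f\,\ad^\sigma_{e_1}e_2 + (\pounds_{\rho(e_1)}f)\,e_2 \ ,
\end{align}
which is the claimed anchored Leibniz rule.

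The only genuinely delicate point I anticipate is justifying $\sigma^*\circ\sigma = 0$ cleanly from the setup: one must be careful about what $\sigma^*\colon E^* \to T^*M$ means versus its restriction along the identification $E\cong E^*$ induced by the pairing, and whether the isotropy of $\Im(\sigma)$ is being used in the form $\langle\sigma(X),\sigma(Y)\rangle = 0$. If that identification is not already fixed in the paper's conventions, I would instead argue directly: $\rho^*\sigma^*\sigma(\rho(e_2))$ is characterised by $\langle\rho^*\sigma^*\sigma(\rho(e_2)), e\rangle = \iota_e\rho^{\mathrm t}(\sigma^*\sigma(\rho(e_2)))$, and tracing through the adjunctions this equals $\langle\sigma(\rho(e_2)), \sigma(\rho(e))\rangle = 0$ by isotropy of $\Im(\sigma)$; since $e$ is arbitrary and the pairing is non-degenerate, the element vanishes. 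Everything else is a routine bookkeeping of $C^\infty(M)$-linearity, so this is the one step worth spelling out in full.
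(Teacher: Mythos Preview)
Your proof is correct and follows essentially the same route as the paper's own argument: expand the definition, apply the anchored Leibniz rule \eqref{eqn:anchorLeibniz} to both bracket terms, and use $\sigma^*\circ\sigma = 0$ from isotropy of the splitting to kill the extra piece. In fact you justify the key step $\sigma^*\circ\sigma = 0$ more carefully than the paper, which simply asserts it.
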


\begin{proof}
 For any $e_1 , e_2 \in \sfgamma(E)$ and $f \in C^\infty(M)$ we compute
\begin{align}
\ad^\sigma_{e_1} (f\, e_2) &= \cbrak{e_1, f\, e_2} - \rho^* \, \sigma^*\cbrak{e_1, f\,\sigma\, \rho(e_2)} \\[4pt]
&= f\, \cbrak{e_1, e_2} + \left( \pounds_{\rho(e_1)} f \right) e_2 - \rho^*\, \sigma^* \big( f\, \cbrak{e_1, \sigma\,\rho(e_2)} + (\pounds_{\rho(e_1)} f)\, \sigma\,\rho(e_2) \big)\\[4pt]
&= f \, \ad^\sigma_{e_1}\,e_2 + \left( \pounds_{\rho(e_1)} f\right) \big( e_2 - \rho^*\, \sigma^*\, \sigma\, \rho(e_2) \big) \ ,
\end{align}
and the result now follows from $\sigma^* \circ \sigma = 0$. 
\end{proof}

\begin{lemma}\label{lem:pairingpreserved}
    The pairing $\ip{\,\cdot\,,\, \cdot\,}$ is preserved by $\ad_e^\sigma$, for all $e \in \sfgamma(E)$.
\end{lemma}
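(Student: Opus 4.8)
The statement to prove is that the pairing $\ip{\,\cdot\,,\,\cdot\,}$ on an exact Courant algebroid $E$ is preserved by the $\sigma$-adjoint action $\ad^\sigma_e$ for every $e\in\sfgamma(E)$; that is,
\[
\pounds_{\rho(e)}\ip{e_1,e_2} = \ip{\ad^\sigma_e\, e_1, e_2} + \ip{e_1, \ad^\sigma_e\, e_2}
\]
for all $e,e_1,e_2\in\sfgamma(E)$.

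The plan is to substitute the definition $\ad^\sigma_e\, e_i = \llbracket e, e_i\rrbracket - \rho^*\sigma^*\llbracket e,\sigma\rho(e_i)\rrbracket$ directly into the right-hand side and split the computation into two pieces. The first piece, $\ip{\llbracket e,e_1\rrbracket,e_2} + \ip{e_1,\llbracket e,e_2\rrbracket}$, equals $\pounds_{\rho(e)}\ip{e_1,e_2}$ by invariance of the pairing, i.e. item~\ref{eqn:metric1} of Definition~\ref{def:CourantAlg}. So it remains to show that the correction terms cancel:
\[
\ip{\rho^*\sigma^*\llbracket e,\sigma\rho(e_1)\rrbracket, e_2} + \ip{e_1, \rho^*\sigma^*\llbracket e,\sigma\rho(e_2)\rrbracket} = 0.
\]
To handle these, I would use the adjunction relation $\ip{\rho^*(\alpha),e} = \iota_e\rho^{\rm t}(\alpha)$ from Remark~\ref{rem:Courantcond}, or more directly the fact that $\rho^*$ and $\sigma$ are adjoint with respect to the pairing while $\sigma^*$ and $\rho$ are adjoint (this adjunction is used already in the proof of Lemma~\ref{lem:edivinvariant}): explicitly $\ip{\rho^*\sigma^* u, e_2} = \ip{\sigma^* u, \rho(e_2)}_{\text{nat}} = \ip{u, \sigma\rho(e_2)}$ for $u\in\sfgamma(E)$, where the middle pairing is the natural pairing between $T^*M$ and $TM$. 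Applying this with $u = \llbracket e,\sigma\rho(e_1)\rrbracket$ turns the first correction term into $\ip{\llbracket e,\sigma\rho(e_1)\rrbracket, \sigma\rho(e_2)}$, and symmetrically the second becomes $\ip{\sigma\rho(e_1), \llbracket e,\sigma\rho(e_2)\rrbracket}$.

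The sum of these two is then $\ip{\llbracket e,\sigma\rho(e_1)\rrbracket,\sigma\rho(e_2)} + \ip{\sigma\rho(e_1),\llbracket e,\sigma\rho(e_2)\rrbracket}$, which by invariance~\ref{eqn:metric1} of the pairing once more equals $\pounds_{\rho(e)}\ip{\sigma\rho(e_1),\sigma\rho(e_2)}$. But the image of $\sigma$ is isotropic, so $\ip{\sigma\rho(e_1),\sigma\rho(e_2)} = 0$ identically, and hence its Lie derivative vanishes. This gives the cancellation and completes the proof. The main (and really only) obstacle is getting the adjunction bookkeeping right — keeping straight which of $\sigma,\sigma^*,\rho,\rho^*$ is adjoint to which, and whether one picks up a sign — but since $\sigma$ is isotropic there are no sign subtleties from that side, and the identity $\rho\circ\rho^* = 0$ together with isotropy of $\Im(\sigma)$ makes everything collapse cleanly. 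I would write the proof as a short displayed chain of equalities citing \ref{eqn:metric1} twice and isotropy of $\sigma$ once.

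\begin{proof}
For any $e,e_1,e_2\in\sfgamma(E)$, using the definition \eqref{eqn:sigmatwistact} of the $\sigma$-adjoint action and the adjunction of $\sigma$ with $\rho^*$ and of $\rho$ with $\sigma^*$ with respect to the pairing (as in the proof of Lemma~\ref{lem:edivinvariant}), we compute
\begin{align}
    \ip{\ad^\sigma_e\, e_1, e_2} + \ip{e_1, \ad^\sigma_e\, e_2}
    &= \ip{\llbracket e, e_1\rrbracket, e_2} + \ip{e_1, \llbracket e, e_2\rrbracket} \\[4pt]
    &\quad\, - \ip{\llbracket e, \sigma\,\rho(e_1)\rrbracket, \sigma\,\rho(e_2)} - \ip{\sigma\,\rho(e_1), \llbracket e, \sigma\,\rho(e_2)\rrbracket} \\[4pt]
    &= \pounds_{\rho(e)}\ip{e_1, e_2} - \pounds_{\rho(e)}\ip{\sigma\,\rho(e_1), \sigma\,\rho(e_2)} \\[4pt]
    &= \pounds_{\rho(e)}\ip{e_1, e_2} \ ,
\end{align}
where in the second equality we used invariance~\ref{eqn:metric1} of the pairing in Definition~\ref{def:CourantAlg} twice, and in the last equality we used that the image of $\sigma$ is isotropic, so $\ip{\sigma\,\rho(e_1), \sigma\,\rho(e_2)} = 0$.
\end{proof}
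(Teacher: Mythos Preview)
Your proof is correct and follows essentially the same approach as the paper's: both use the adjunction $\ip{\rho^*\sigma^* u, e_2} = \ip{u, \sigma\rho(e_2)}$, apply invariance~\ref{eqn:metric1} of the pairing twice, and conclude from isotropy of $\Im(\sigma)$. The only difference is cosmetic --- you start from $\ip{\ad^\sigma_e e_1, e_2} + \ip{e_1, \ad^\sigma_e e_2}$ while the paper starts from $\pounds_{\rho(e)}\ip{e_1,e_2}$ --- but the chain of equalities is the same.
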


\begin{proof}
    Let $e, e_1, e_2 \in \sfgamma(E)$. Then
    \begin{align}
        \pounds_{\rho(e)} \ip{ e_1, e_2} &= \ip{\llbracket e, e_1 \rrbracket, e_2} + \ip{e_1, \llbracket e, e_2 \rrbracket} \\[4pt]
        &=\ip{\ad_e^\sigma\,e_1, e_2} + \ip{\rho^*\, \sigma^*\llbracket e, \sigma \,\rho(e_1) \rrbracket, e_2} + \ip{e_1, \ad_e^\sigma\,e_2} + \ip{e_1, \rho^*\, \sigma^*\llbracket e, \sigma \,\rho(e_2) \rrbracket} \\[4pt]
        &=\ip{\ad_e^\sigma\,e_1, e_2} + \ip{\llbracket e, \sigma \,\rho(e_1) \rrbracket, \sigma \,\rho(e_2)}  + \ip{e_1, \ad_e^\sigma\,e_2} + \ip{\sigma\,\rho(e_1), \llbracket e, \sigma \,\rho(e_2) \rrbracket} \\[4pt]
        &=\ip{\ad_e^\sigma\,e_1, e_2} + \ip{\llbracket e, \sigma \,\rho(e_1) \rrbracket, \sigma \,\rho(e_2)}+ \ip{e_1, \ad_e^\sigma\,e_2} \\ & \quad \,  + \pounds_{\rho(e)}\,\ip{\sigma\,\rho(e_1), \sigma \,\rho(e_2)} - \ip{\llbracket e, \sigma\,\rho(e_1) \rrbracket, \sigma\,\rho(e_2)} \\[4pt]
        &=\ip{\ad_e^\sigma\,e_1, e_2} + \ip{e_1, \ad_e^\sigma\,e_2} \ ,
    \end{align}
    as required.
\end{proof}

\begin{lemma} \label{lem:bracketpreserving}
    Under the Courant algebroid isomorphism $E \cong (\IT M,H)$ induced by $\sigma$, the Dorfman bracket $\llbracket\,\cdot\,,\,\cdot\,\rrbracket_H$ is preserved by $\ad^\sigma_{X+\alpha}$ if and only if $\iota_XH$ is a closed two-form, for all $X+\alpha\in\sfgamma(\IT M)$.
\end{lemma}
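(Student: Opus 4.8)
The plan is to work entirely within the standard Courant algebroid $(\IT M, H)$, writing a general section as $e = X + \alpha$ and computing $\ad^\sigma_{X+\alpha}$ explicitly. First I would observe that for the standard splitting $\sigma(Y) = Y + 0$, the correction term $\rho^*\sigma^*\llbracket X+\alpha, \sigma(\rho(e_2))\rrbracket$ depends only on $X$ (not $\alpha$) and only on the vector part $Y$ of $e_2 = Y + \beta$; explicitly $\rho^*\sigma^*\llbracket X+\alpha, Y + 0\rrbracket_H = \iota_{[X,Y]}(\text{something})$ — more precisely, since $\sigma^*$ picks out the one-form component and $\rho^*(\gamma) = 0 + \gamma$, we get $\rho^*\sigma^*\llbracket X + \alpha, Y\rrbracket_H = \iota_Y\iota_X H$ (the one-form part of the bracket with $\alpha = 0$, $\beta = 0$). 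Hence
\begin{align}
\ad^\sigma_{X+\alpha}(Y+\beta) = \llbracket X+\alpha, Y+\beta\rrbracket_H - \iota_Y\iota_X H = [X,Y] + \pounds_X\beta - \iota_Y\de\alpha \ .
\end{align}
This is the key simplification: $\ad^\sigma_{X+\alpha}$ is the Dorfman bracket with the $H$-twist stripped off, which is exactly what Remark 5.23 of \cite{DeFraja:2023fhe} asserts it should be.

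Next I would compute the failure of $\ad^\sigma_{X+\alpha}$ to be a derivation of $\llbracket\,\cdot\,,\,\cdot\,\rrbracket_H$. Set $D := \ad^\sigma_{X+\alpha}$ and $e_i = Y_i + \beta_i$. Using the Jacobi identity (Leibniz property) for the untwisted Dorfman bracket $\llbracket\,\cdot\,,\,\cdot\,\rrbracket_0$, which $D$ \emph{does} satisfy since $D e = \llbracket X + \alpha, e\rrbracket_0$ when restricted appropriately — actually more carefully, $D$ is the adjoint action of $X+\alpha$ for the \emph{untwisted} bracket, so $D\llbracket e_1, e_2\rrbracket_0 = \llbracket De_1, e_2\rrbracket_0 + \llbracket e_1, De_2\rrbracket_0$ holds identically. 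Then the discrepancy is
\begin{align}
D\llbracket e_1, e_2\rrbracket_H - \llbracket De_1, e_2\rrbracket_H - \llbracket e_1, De_2\rrbracket_H = D\big(\iota_{Y_2}\iota_{Y_1}H\big) - \iota_{Y_2}\iota_{[X,Y_1]}H - \iota_{Y_1}\iota_{Y_2}\iota_X\de H \ ,
\end{align}
where I have used the bracket formula \eqref{eqn:Hbracket}, the fact that $D$ on a pure one-form $\gamma = \iota_{Y_2}\iota_{Y_1}H$ acts as $\pounds_X\gamma$ (from the formula above with $Y = 0$), and that $\de H = 0$ makes the last term vanish. So the discrepancy reduces to $\pounds_X(\iota_{Y_2}\iota_{Y_1}H) - \iota_{Y_2}\iota_{[X,Y_1]}H - \iota_{Y_2}\iota_{Y_1}\pounds_X H = \iota_{Y_2}\iota_{Y_1}\pounds_X H - \iota_{Y_2}\iota_{Y_1}\pounds_X H = 0$ via Cartan's formula $\pounds_X\iota_Y - \iota_Y\pounds_X = \iota_{[X,Y]}$ applied twice — wait, that would make it always zero, which is wrong. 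I need to track this more carefully: the point is that $D$ does \emph{not} act as $\pounds_X$ on an arbitrary one-form appearing as $\iota_{Y_2}\iota_{Y_1}H$ because that expression also has the $\iota_Y\de\alpha$ contributions entangled; the honest computation shows the obstruction collapses to $\iota_{Y_1}\iota_{Y_2}\pounds_X(\de\alpha\text{-type terms})$ which vanish, leaving only terms proportional to $\de(\iota_X H)$. The cleanest route: show directly that $D\llbracket e_1,e_2\rrbracket_H - \llbracket De_1,e_2\rrbracket_H - \llbracket e_1,De_2\rrbracket_H = \iota_{Y_1}\iota_{Y_2}\de(\iota_X H)$, so the derivation property holds for all $e_1, e_2$ iff $\de(\iota_X H) = 0$, i.e. $\iota_X H$ is closed.

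The main obstacle I anticipate is bookkeeping: correctly isolating which terms in $\llbracket\,\cdot\,,\,\cdot\,\rrbracket_H$ contribute, since the $H$-dependent piece $\iota_Y\iota_X H$ interacts with both the Lie-derivative and the $\de$-of-one-form pieces. The safest strategy is to write everything in the form $D = \text{(untwisted adjoint action)}$, invoke that the untwisted action is exactly a derivation of $\llbracket\,\cdot\,,\,\cdot\,\rrbracket_0$ (Jacobi identity for $(\IT M, 0)$), and then compute $\llbracket e_1,e_2\rrbracket_H - \llbracket e_1,e_2\rrbracket_0 = \iota_{Y_2}\iota_{Y_1}H$ and track how $D$ acts on this purely-one-form correction versus how $D$ distributes over the correction terms on the right-hand side. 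After the dust settles one finds the obstruction is precisely $\iota_{Y_1}\iota_{Y_2}\,\de(\iota_X H)$, using $\de H = 0$ and Cartan calculus, and since $Y_1, Y_2$ range over all vector fields this vanishes identically iff $\de(\iota_X H) = 0$. I would also remark that by the splitting-independence established in Appendix \ref{app:twistedaction} (Lemmas \ref{lem:sigmaadleibniz} and \ref{lem:pairingpreserved}) together with \cite[Remark~5.25]{DeFraja:2023fhe}, it suffices to verify this in the chosen splitting, so no loss of generality in taking $\sigma$ standard.
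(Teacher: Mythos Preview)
Your overall strategy is sound and close in spirit to the paper's, but there is a computational slip in your key simplification. You claim the correction term $\rho^*\sigma^*\llbracket X+\alpha, Y\rrbracket_H$ equals $\iota_Y\iota_X H$, ``the one-form part of the bracket with $\alpha = 0$''. In fact $\sigma^*$ picks out the full one-form part, so
\[
\rho^*\sigma^*\llbracket X+\alpha, Y\rrbracket_H \;=\; -\iota_Y\,\de\alpha + \iota_Y\iota_X H \ ,
\]
and hence
\[
\ad^\sigma_{X+\alpha}(Y+\beta) \;=\; [X,Y] + \pounds_X\beta \;=\; \llbracket X, Y+\beta\rrbracket_0 \ ,
\]
not $\llbracket X+\alpha, Y+\beta\rrbracket_0$. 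The $\sigma$-adjoint action is independent of $\alpha$ altogether, not merely the correction term. Fortunately this error is harmless for your argument: $\llbracket X,\,\cdot\,\rrbracket_0$ is still a derivation of the untwisted bracket, it still acts as $\pounds_X$ on pure one-forms, and its vector part is still $[X,\,\cdot\,]$, so your obstruction computation
\[
\pounds_X(\iota_{Y_2}\iota_{Y_1}H) - \iota_{Y_2}\iota_{[X,Y_1]}H - \iota_{[X,Y_2]}\iota_{Y_1}H \;=\; \iota_{Y_2}\iota_{Y_1}\,\de(\iota_X H)
\]
goes through unchanged. (Note also that in your first displayed discrepancy you dropped the $\iota_{[X,Y_2]}\iota_{Y_1}H$ term, which is why you briefly got zero.)

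The paper's proof takes a slightly different route: rather than identifying $\ad^\sigma$ with an untwisted adjoint action and invoking Jacobi for $\llbracket\,\cdot\,,\,\cdot\,\rrbracket_0$, it uses the Jacobi identity for $\llbracket\,\cdot\,,\,\cdot\,\rrbracket_H$ directly and tracks the correction terms $\iota_Y\iota_X H$, arriving at
\[
-\iota_{[Y,Z]}\iota_X H - \iota_Z\,\de\,\iota_Y\iota_X H + \pounds_Y\,\iota_Z\iota_X H \;=\; \iota_Z\iota_Y\,\de\,\iota_X H
\]
via Cartan identities. Your approach has the conceptual advantage of making transparent \emph{why} only $\de(\iota_X H)$ can appear (the untwisted piece is automatically preserved); the paper's approach avoids having to first establish the explicit form of $\ad^\sigma$.
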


\begin{proof}
     Under the isomorphism $E \cong (\IT M,H)$ we can identify $\sigma(X) = X$ for all $X \in \sfgamma(TM)$. Then
    \begin{align}
       {\sf ad}^\sigma_{X + \alpha} \llbracket Y+ \eta , Z+ \xi \rrbracket_H\, - &\,\llbracket {\sf ad}^\sigma_{X+ \alpha} ( Y+ \eta ), Z+ \xi \rrbracket_H - \llbracket Y+\eta,  {\sf ad}^\sigma_{X+ \alpha} (Z+ \xi) \rrbracket_H \\[4pt]
       &= -\iota_{[Y,Z]}\, \iota_X H + \llbracket \iota_Y\, \iota_X H, Z+\xi \rrbracket_H + \llbracket Y+\eta, \iota_Z\, \iota_X H \rrbracket_H \\[4pt]
       &= -\iota_{[Y,Z]}\, \iota_X H - \iota_Z\, \de\, \iota_Y\, \iota_X H  + \pounds _Y\, \iota_Z\, \iota_X H\\[4pt]
       &= \iota_Z\, \iota_Y\, \de\, \iota_X H \ ,
    \end{align}
    where in the final step we make use of various Cartan calculus identities. Thus the Dorfman bracket is invariant under this action if and only if $\iota_X H$ is closed.
\end{proof}

Let $Z \in \sfgamma(TE)$ be a vector field on $E$ covering a vector field $\overline{Z} \in \sfgamma(TM)$ on $M$. The \emph{generalised Lie derivative} along $Z $ at $e\in E$ is defined by
\begin{align}
    (\boldsymbol{\pounds}_Z\, v)_e \coloneqq  \frac{\de}{\de s} (\boldsymbol\Phi^*_s v)_e\,\Big|_{s=0} \quad , \quad \boldsymbol\pounds_Z\, f := \pounds_{\overline{Z}}\, f
\end{align}
for all $v \in \sfgamma(E)$ and $ f \in C^\infty(M)$, where $\boldsymbol\Phi_s\in\mathsf{Aut}(E)$ is the flow of $Z$ through $e$. This definition easily extends to tensor powers of $E$ via the Leibniz rule.

\begin{lemma}\label{lem:adpreserve}
    On the space of vector fields $X \in \sfgamma(TM)$ for which $\ad^\sigma_X$ preserves the Dorfman bracket, there are infinitesimal Courant algebroid automorphisms $Z_X\in \sfgamma(TE)$ such that
    \begin{itemize}
        \item $\overline{Z_X} = X$ \ ,\\[-3mm]
        \item $\boldsymbol\pounds_{Z_X} v = \ad^\sigma_X \, v \ , $ \ for all $v\in\sfgamma(E)$,\\[-3mm]
        \item $[Z_X, Z_Y] = Z_{[X,Y]} \ $ and the map $X\mapsto Z_X$ is $\IR$-linear.
    \end{itemize}
\end{lemma}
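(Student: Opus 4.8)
The plan is to construct $Z_X$ explicitly as the infinitesimal generator of a flow of Courant algebroid automorphisms, and then verify the three claimed properties by differentiating at $s=0$. The natural candidate: under the isomorphism $E\cong(\IT M,H)$ induced by $\sigma$, let $\phi^s_X$ denote the flow of the vector field $X$ on $M$ (which we may assume complete, or work on the time interval of existence). Since $\ad^\sigma_X$ preserves the Dorfman bracket, Lemma~\ref{lem:bracketpreserving} tells us $\iota_XH$ is closed, so $\pounds_XH=\de\,\iota_XH=0$, i.e. $H$ is $\phi^s_X$-invariant. Hence the natural lift $\boldsymbol\Phi^s_X := (\phi^s_{X*} + (\phi^{-s}_X)^*)$ on $TM\oplus T^*M$ is a Courant algebroid automorphism of $(\IT M,H)$ for each $s$ (no $B$-field correction needed, precisely because $\iota_XH$ is closed — or more carefully, because $\pounds_XH=0$). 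Then I would set $Z_X\in\sfgamma(TE)$ to be the vector field on $E$ whose flow is $\boldsymbol\Phi^s_X$, transported back to $E$ via the fixed isomorphism $\sigma$.

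With $Z_X$ so defined, the first bullet $\overline{Z_X}=X$ is immediate since $\boldsymbol\Phi^s_X$ covers $\phi^s_X$. For the second bullet, $\boldsymbol\pounds_{Z_X}v = \frac{\de}{\de s}\big|_{s=0}(\boldsymbol\Phi^s_X)^*v$; writing $v=Y+\eta$ one computes $\frac{\de}{\de s}\big|_{s=0}\big(\phi^{-s}_{X*}Y + (\phi^s_X)^*\eta\big) = \pounds_XY + \pounds_X\eta = [X,Y]+\pounds_X\eta$. On the other hand, from the definition~\eqref{eqn:sigmatwistact} with $\sigma$ the standard splitting (so $\sigma(X)=X$, $\rho(Y+\eta)=Y$), we have $\ad^\sigma_X(Y+\eta) = \llbracket X, Y+\eta\rrbracket_H - \rho^*\sigma^*\llbracket X, Y\rrbracket_H = \big([X,Y]+\pounds_X\eta - \iota_Y\iota_XH\big) - \iota_{[X,Y]}\iota_XH\big|_{T^*M\text{-part only}}$... more carefully: $\llbracket X,Y\rrbracket_H = [X,Y] + \iota_Y\iota_XH$, its image under $\rho^*\sigma^*$ picks out the cotangent component $\iota_Y\iota_XH$ — wait, $\rho^*\sigma^*$ applied to a section of $\IT M$ extracts $\sigma^*=\mathrm{pr}_{T^*M}$ then $\rho^*$ re-includes it as a one-form, so $\rho^*\sigma^*(A+\beta)=\beta$. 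Hence $\ad^\sigma_X(Y+\eta) = [X,Y]+\pounds_X\eta-\iota_Y\iota_XH - \iota_Y\iota_XH$? That gives an extra term, so I must be careful: actually $\sigma^*$ annihilates $TM$ and is the projection to $T^*M$, and $\sigma^*\llbracket X,Y\rrbracket_H = \iota_Y\iota_XH$, while $\rho^*$ of this one-form is itself as an element of $\IT M$. So $\ad^\sigma_X(Y+\eta)=[X,Y]+\pounds_X\eta-\iota_Y\iota_XH$, which matches $\boldsymbol\pounds_{Z_X}v$ \emph{iff} $\iota_Y\iota_XH=0$ for all $Y$, i.e. $\iota_XH=0$. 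This suggests I should \emph{not} use the standard splitting's flow but rather need a $B$-field-corrected lift; the precise form of $Z_X$ must be the flow of $\boldsymbol\Phi^s_X\circ\e^{B_s}$ where $B_s$ is chosen so that the infinitesimal generator agrees with $\ad^\sigma_X$, with $\frac{\de}{\de s}\big|_{s=0}B_s = \iota_XH$ (a closed two-form, hence an admissible infinitesimal $B$-transform). Then $\boldsymbol\pounds_{Z_X}(Y+\eta) = [X,Y]+\pounds_X\eta - \iota_Y(\iota_XH)$... hmm, the $B$-transform $\e^{B}(Y+\eta)=Y+\eta+\iota_YB$, so its derivative contributes $+\iota_Y\dot B = +\iota_Y\iota_XH$, still the wrong sign. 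So instead $\frac{\de}{\de s}\big|_{s=0}B_s=-\iota_XH$. The upshot: $Z_X$ is the infinitesimal automorphism generated by the pair $(X, -\iota_XH)$ of a vector field and a closed two-form, and I would verify $\boldsymbol\pounds_{Z_X}v=\ad^\sigma_Xv$ by this explicit computation.

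For the third bullet, linearity of $X\mapsto Z_X$ is clear from the explicit description $(X,-\iota_XH)$. For the bracket identity, one computes the Lie bracket of the two infinitesimal automorphisms: the commutator of $(X,-\iota_XH)$ and $(Y,-\iota_YH)$ as infinitesimal automorphisms of $(\IT M,H)$ is $\big([X,Y],\, -\pounds_X\iota_YH+\pounds_Y\iota_XH\big)$ up to the exact term coming from composing $B$-transforms (which drops out since infinitesimal $B$-transforms commute), and using $\pounds_X\iota_YH - \iota_Y\pounds_XH = \iota_{[X,Y]}H$ together with $\pounds_XH=\pounds_YH=0$ this equals $\big([X,Y], -\iota_{[X,Y]}H\big)$, which is exactly the pair defining $Z_{[X,Y]}$. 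Hence $[Z_X,Z_Y]=Z_{[X,Y]}$.

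\textbf{Main obstacle.} The delicate point is getting the $B$-field correction and all signs exactly right so that the infinitesimal generator of the constructed flow is literally $\ad^\sigma_X$ and not $\ad^\sigma_X$ plus an exact-form ambiguity; this requires a careful unwinding of the definition of the generalised Lie derivative $\boldsymbol\pounds_Z$ and of how $B$-transforms act, and checking that $-\iota_XH$ being closed is precisely what makes $\e^{-s\,\iota_XH}$ (or its infinitesimal version) a legitimate Courant algebroid automorphism of the $H$-twisted bracket. Everything else is routine Cartan calculus of the kind already exercised in Lemma~\ref{lem:bracketpreserving}.
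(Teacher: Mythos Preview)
Your overall strategy is sound and matches the paper's approach (construct $Z_X$ as the generator of an explicit one-parameter family of automorphisms, then differentiate), but you have a computational slip that sends you down an unnecessary detour. When you compute $\llbracket X, Y+\eta\rrbracket_H$ with $\alpha=0$, the formula~\eqref{eqn:Hbracket} still contributes the term $+\iota_Y\iota_XH$:
\[
\llbracket X, Y+\eta\rrbracket_H = [X,Y] + \pounds_X\eta + \iota_Y\iota_XH \ .
\]
You dropped this term, so after subtracting $\rho^*\sigma^*\llbracket X,Y\rrbracket_H = \iota_Y\iota_XH$ you obtained $[X,Y]+\pounds_X\eta-\iota_Y\iota_XH$ instead of the correct
\[
\ad^\sigma_X(Y+\eta) = [X,Y] + \pounds_X\eta \ .
\]
This is precisely the ordinary Lie derivative of $Y+\eta$ along $X$ --- which is the whole point of the $\sigma$-adjoint action, as remarked in the paper: it strips the $H$-contribution from the Dorfman bracket. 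Consequently \emph{no} $B$-field correction is needed: the naive lift $\boldsymbol\Phi^s_X = \phi^s_{X*}+(\phi^{-s}_X)^*$ already has infinitesimal generator $\ad^\sigma_X$, and it is a Courant algebroid automorphism of $(\IT M,H)$ exactly because $\pounds_XH=\de\,\iota_XH=0$, the hypothesis furnished by Lemma~\ref{lem:bracketpreserving}. Your ``main obstacle'' therefore evaporates.

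With this correction, your argument is essentially the paper's: the paper defines $Z_X$ as the tangent at $t=0$ to the curve $(e-t\,\ad^\sigma_X e)_{\gamma(t)}$ in $E$ (the infinitesimal version of your flow), notes that the first two bullets are then immediate, and for the third verifies $[\ad^\sigma_X,\ad^\sigma_Y]=\ad^\sigma_{[X,Y]}$ by Cartan calculus --- which in your corrected picture is just $[\pounds_X,\pounds_Y]=\pounds_{[X,Y]}$. The paper's formulation has the minor advantage of not invoking the flow of $X$ (only its tangent), so no completeness assumption is implicitly used.
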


\begin{proof}
    For a section $e$ of $E$ and an integral curve $\gamma(t)\subset M$ of $X$, the vector field $Z_X$ is given by the tangent to the curve $(e-t\,\ad_X^\sigma\, e)_{\gamma(t)} \subset E$ at $t=0$; a standard argument (cf.~\cite{Severa2015}) shows that $Z_X$ does not depend on the choice of auxiliary section or curve. Then the first two properties follow immediately, while the final property follows from $[\ad^\sigma_X, \ad^\sigma_Y] = \ad^\sigma_{[X,Y]}$, which can be proven using various Cartan identities and the fact that $\iota_{[X,Y]}\,H$ is closed (and hence $\llbracket \llbracket X, Y \rrbracket_H ,\,\cdot \,\rrbracket_H = \llbracket \sigma([X,Y]), \, \cdot \, \rrbracket_H$.)
\end{proof}

\begin{remark}
    If $\sigma_0(X) = X $ and $\ad^{\sigma_0}_{X}$ preserves the Dorfman bracket as above, then $\pounds_X H = \de\, \iota_X H = 0$, and hence $X$ integrates to a Courant algebroid automorphism induced by a diffeomorphism $\Fi \in \mathsf{Diff}(M)$ which preserves $H$, i.e. $\Fi^*H = H$.
\end{remark}

\begin{remark}
    If the infinitesimal Courant algebroid automorphism corresponding to $e \in \sfgamma(E)$ (in the sense of \cite[Proposition~2.7]{Streets:2024rfo}) preserves a generalised metric $V^+\subset E$, i.e. $\llbracket e, v^+ \rrbracket \in \sfgamma(V^+)$ for all $v^+\in\sfgamma(V^+)$, then $V^+$ is also invariant under the $\sigma$-adjoint action ${\sf ad}^\sigma_e$, since $\rho^*\, \sigma^* \llbracket e, \sigma\, \rho(v^+) \rrbracket = 0$ and hence $\ad_e^\sigma$ acts through the usual Dorfman bracket on $\sfgamma(V^+)$. For instance, if $B\in\sfOmega^2(M)$ and $\sigma_B(X) = X+\iota_X B$ is an infinitesimal Courant algebroid isometry for $(g, b)$, then $\pounds_X (g+b) = 0$ and $\iota_X\,\de B = \de\, \iota_X b$, so $\pounds_X H = 0$. In this sense, it is a generalisation of inner automorphisms allowing for $\iota_X H$ to be closed.
\end{remark}

\begin{lemma}\label{cor:invariantsinvolutive}
    In the setting of Definition~\ref{def:invariantsection}, the space of $D$-invariant sections $\sfgamma_{D}( E)$ is involutive.
\end{lemma}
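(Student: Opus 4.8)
The goal is to show that the space $\sfgamma_D(E)$ of $D$-invariant sections is closed under the Dorfman bracket $\llbracket\,\cdot\,,\,\cdot\,\rrbracket$. The natural strategy is to exploit the fact, recorded in Lemma~\ref{lem:adpreserve}, that for each $X$ in the spanning Lie subalgebra $\frk{k}\subset\sfgamma(D)$ the operator $\ad^\sigma_X$ preserves the Dorfman bracket (since $\iota_XH$ is closed for such $X$, as $\ad^\sigma_X$ is a Courant algebroid infinitesimal automorphism in this case -- this is exactly Lemma~\ref{lem:bracketpreserving}). Concretely, for $e_1, e_2 \in \sfgamma_D(E)$ and any $X\in\frk{k}$ I would compute
\begin{align}
    \ad^\sigma_X\,\llbracket e_1, e_2 \rrbracket = \llbracket \ad^\sigma_X\, e_1, e_2 \rrbracket + \llbracket e_1, \ad^\sigma_X\, e_2 \rrbracket \ ,
\end{align}
which vanishes because $\ad^\sigma_X\, e_1 = \ad^\sigma_X\, e_2 = 0$ by definition of $D$-invariance. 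Hence $\llbracket e_1, e_2 \rrbracket \in \sfgamma_D(E)$, as required.

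\textbf{Key steps in order.} First I would invoke the hypothesis of Definition~\ref{def:invariantsection}: there is a splitting $\sigma:TM\to E$ and a Lie subalgebra $\frk{k}\subset\sfgamma(D)$ spanning $D$ pointwise, with respect to which invariance is defined. Second, I need to verify that for every $X\in\frk{k}$ the $\sigma$-adjoint action $\ad^\sigma_X$ actually preserves the Dorfman bracket; by Lemma~\ref{lem:bracketpreserving} this is equivalent to $\iota_XH$ being closed, where $H = H_\sigma$ is the representative of the \v{S}evera class determined by $\sigma$. This is where I must check that the compatibility built into having a well-posed notion of $D$-invariance forces $\de\,\iota_XH = 0$: since $\frk{k}$ is a Lie algebra, $[\ad^\sigma_X,\ad^\sigma_Y] = \ad^\sigma_{[X,Y]}$ (cf.\ the proof of Lemma~\ref{lem:adpreserve}), and the closure condition propagates consistently; alternatively one argues that the existence of $\sigma$ making Definition~\ref{def:invariantsection} sensible already presupposes this. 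Third, with bracket-preservation in hand, I apply it to the derivation property and conclude vanishing of $\ad^\sigma_X\,\llbracket e_1,e_2\rrbracket$ for all $X\in\frk{k}$. Finally, I remark that the local version is identical: replace $\frk{k}$ by the local presentations $\frk{k}^{(i)}$ on the cover $\set{U^{(i)}}_{i\in I}$, and the same computation gives $\ad^\sigma_{X^{(i)}}\,\llbracket e_1,e_2\rrbracket = 0$ for all $X^{(i)}\in\frk{k}^{(i)}$, so $\sfgamma^{(i)}_D(E)$ is involutive too.

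\textbf{Main obstacle.} The only nontrivial point is justifying that $\ad^\sigma_X$ preserves the Dorfman bracket for $X\in\frk{k}$ -- in other words, that the splitting $\sigma$ entering Definition~\ref{def:invariantsection} can be (or has been) chosen so that $\iota_XH_\sigma$ is closed for all generators $X$. If this were not automatic, then the derivation identity above would acquire an anomalous term $\iota_{\rho(e_2)}\,\iota_{\rho(e_1)}\,\de\,\iota_XH$ (compare the computation in the proof of Lemma~\ref{lem:bracketpreserving}), obstructing involutivity. The cleanest resolution is to note that this closure is precisely the content of Lemma~\ref{lem:adpreserve} applied to the vector fields in $\frk{k}$: the hypothesis that $\frk{k}$ is a genuine Lie subalgebra of $\sfgamma(D)$ acting by $\sigma$-adjoint operators, together with $[\ad^\sigma_X,\ad^\sigma_Y]=\ad^\sigma_{[X,Y]}$, forces each $\ad^\sigma_X$ to be a Courant algebroid infinitesimal automorphism, hence bracket-preserving. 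Once that is settled, the rest is a one-line application of the Leibniz/derivation rule for $\ad^\sigma$ from Lemma~\ref{lem:sigmaadleibniz} combined with bracket invariance, and both the global and local statements follow at once.
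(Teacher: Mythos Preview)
Your overall strategy---use that $\ad^\sigma_X$ acts as a derivation of the Dorfman bracket, then evaluate on $D$-invariant sections---is exactly what the paper does: its entire proof is the one-liner ``This is a consequence of Lemma~\ref{lem:bracketpreserving}.'' So the plan and the key computation are right.

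However, your resolution of the ``main obstacle'' is circular. You invoke Lemma~\ref{lem:adpreserve} to conclude that $\ad^\sigma_X$ preserves the bracket, but that lemma is stated \emph{only} for those $X$ for which $\ad^\sigma_X$ already preserves the bracket; it constructs the lifting $Z_X$ under that hypothesis, it does not derive bracket-preservation. Likewise, the commutator identity $[\ad^\sigma_X,\ad^\sigma_Y]=\ad^\sigma_{[X,Y]}$ quoted in its proof is established \emph{using} that $\iota_{[X,Y]}H$ is closed, so it cannot be used to deduce closure. The fact that $\frk{k}$ is a Lie subalgebra of $\sfgamma(D)$ by itself imposes no condition on $H$: one can easily have $\frk{k}$ a Lie algebra of vector fields with $\de\,\iota_XH\neq 0$, in which case the anomaly $\iota_{\rho(e_2)}\iota_{\rho(e_1)}\de\,\iota_XH$ you correctly identified survives and involutivity fails.

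The paper does not supply an independent argument for $\de\,\iota_XH_\sigma=0$ either; it treats this as part of ``the setting of Definition~\ref{def:invariantsection}'' (cf.\ the Remark immediately following that definition, which flags the existence of a suitable splitting $\sigma$ as a nontrivial standing requirement). In the actual applications---the T-duality setups of Theorems~\ref{thm:maingeneral} and~\ref{thm:tdualdivergence}---the adapted splittings and the isometry algebra $\frk{k}_{\tau_1}$ are chosen so that this holds. So you should drop the circular justification and simply state that bracket-preservation of $\ad^\sigma_X$ for $X\in\frk{k}$ (equivalently $\iota_XH_\sigma$ closed, by Lemma~\ref{lem:bracketpreserving}) is part of the standing hypotheses; the rest of your argument then goes through verbatim.
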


\begin{proof}
This is a consequence of Lemma~\ref{lem:bracketpreserving}.
\end{proof}

\begin{lemma}\label{lem:projectioninvariant}
    If $V^+\subset E$ is a $D$-invariant generalised metric and $v$ is a $D$-invariant section of $E$, then the projections $v^\pm\in\sfgamma(V^\pm)$ are also $D$-invariant.
\end{lemma}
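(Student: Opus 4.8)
\textbf{Proof plan for Lemma \ref{lem:projectioninvariant}.}

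The plan is to exploit the two facts already at hand: first, that $\ad^\sigma_X$ preserves the pairing $\ip{\,\cdot\,,\,\cdot\,}$ (Lemma \ref{lem:pairingpreserved}), and second, that $V^+$ being $D$-invariant means $\ad^\sigma_X(\sfgamma(V^+))\subseteq\sfgamma(V^+)$, and likewise $\ad^\sigma_X(\sfgamma(V^-))\subseteq\sfgamma(V^-)$. For the latter I would first record that $\ad^\sigma_X$ also preserves $V^-$: since $V^-=(V^+)^\perp$ and $\ad^\sigma_X$ is a derivation of the pairing, for $w^-\in\sfgamma(V^-)$ and any $v^+\in\sfgamma(V^+)$ we have $\ip{\ad^\sigma_X w^-,v^+}=\pounds_{\rho(\sigma(X))}\ip{w^-,v^+}-\ip{w^-,\ad^\sigma_X v^+}=0$, using that $\ad^\sigma_X v^+\in\sfgamma(V^+)$ so the second term vanishes and the first is the Lie derivative of zero. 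Hence $\ad^\sigma_X$ restricts to both $\sfgamma(V^\pm)$.

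Now take a $D$-invariant section $v\in\sfgamma_D(E)$ and write $v=v^++v^-$ with $v^\pm\in\sfgamma(V^\pm)$. Because $\ad^\sigma_X$ is $\IR$-linear (indeed it satisfies the anchored Leibniz rule of Lemma \ref{lem:sigmaadleibniz}, so in particular is additive),
\begin{align}
0=\ad^\sigma_X v=\ad^\sigma_X v^++\ad^\sigma_X v^- \ ,
\end{align}
for every $X\in\frk{k}$. The first term lies in $\sfgamma(V^+)$ and the second in $\sfgamma(V^-)$ by the preceding paragraph. Since the decomposition $E=V^+\oplus V^-$ is a direct sum of subbundles, the only way a sum of a section of $V^+$ and a section of $V^-$ can vanish is for both summands to vanish; therefore $\ad^\sigma_X v^+=0$ and $\ad^\sigma_X v^-=0$ for all $X\in\frk{k}$, i.e. $v^\pm\in\sfgamma_D(E)$, hence $v^\pm\in\sfgamma_D(V^\pm)$. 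The local version is identical, replacing $\frk{k}$ by the local presentations $\frk{k}^{(i)}$.

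There is no real obstacle here; the only point requiring a line of care is establishing that $\ad^\sigma_X$ preserves $V^-$ and not merely $V^+$, which as indicated follows from Lemma \ref{lem:pairingpreserved} together with $V^-=(V^+)^\perp$ — alternatively one may invoke the remark preceding Lemma \ref{cor:invariantsinvolutive}, since $V^+$ being $\ad^\sigma_X$-invariant is equivalent to invariance of the associated automorphism $\tau$, which commutes with the action and hence preserves both eigenbundles simultaneously. Either route makes the splitting argument above immediate.
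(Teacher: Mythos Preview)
Your proof is correct and uses the same ingredients as the paper --- Lemma \ref{lem:pairingpreserved} together with the $D$-invariance of $V^+$ --- but the argument is organised slightly differently. You first establish the general auxiliary fact that $\ad^\sigma_X$ preserves $V^-$ as well, and then invoke the direct sum decomposition $E=V^+\oplus V^-$ to conclude that both components of $0=\ad^\sigma_X v^+ + \ad^\sigma_X v^-$ vanish separately. The paper instead observes directly that $\ad^\sigma_X v^- = -\ad^\sigma_X v^+ \in \sfgamma(V^+)$, and then uses the pairing-derivation property to show this element is orthogonal to all of $V^+$; since the pairing is positive-definite on $V^+$, it must vanish. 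Your route has the small advantage of recording the general preservation of $V^-$ (a fact worth knowing in its own right), while the paper's route is marginally more economical for this specific lemma; both are equally valid.
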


\begin{proof}
    Since $\ad^{ \sigma}_{X}\, v=0$ and $\ad^{ \sigma}_{X}\, v^+ \in \sfgamma(V^+)$ for all $X\in\mathfrak{k}_\tau$, it follows that $\ad^{ \sigma}_{X}\, v^- \in \sfgamma(V^+)$. Thus for any $w^+ \in \sfgamma(V^+)$, we have
    \begin{align}
        \ip{\ad^{ \sigma}_{X}\, v^-, w^+} = \pounds_X \ip{v^-,w^+} - \ip{ v^-, \ad^{ \sigma}_{X}\, w^+} = 0 \ .
    \end{align}
    Hence $\ad^\sigma_X\,v^-$ must vanish, and so then must $\ad^\sigma_X\,v^+$.
\end{proof}

\end{appendix}

\bibliographystyle{ourstyle}
\bibliography{bibprova3}

\end{document}